\let\@fnsymbol\@arabic % 将脚注编号改为阿拉伯数字
\theoremstyle{plain}
\newtheorem{theorem}{Theorem}[section]
\newtheorem{proposition}[theorem]{Proposition}
\newtheorem{corollary}[theorem]{Corollary}
\newtheorem{lemma}[theorem]{Lemma}
\theoremstyle{remark}
\newtheorem{remark}[theorem]{\underline{Remark}}
\newtheorem{example}[theorem]{\underline{Example}}
\theoremstyle{definition}
\newtheorem{definition}[theorem]{Definition}
\newcommand{\E}{\mathbf E}
\renewcommand{\P}{\mathbf P}
\newcommand{\tr}{\mathrm{tr}\,}
\newcommand{\R}{\mathbb R}
\newcommand{\N}{\mathbb N}
\newcommand{\C}{\mathcal C}
\newcommand{\D}{\mathcal D}
\newcommand{\A}{\mathcal A}
\renewcommand{\L}{\mathcal L}
\newcommand{\e}{\epsilon}
\newcommand{\F}{\mathcal F}
\newcommand{\B}{\mathcal B}
\newcommand{\Pred}{\mathcal P}
\newcommand{\id}{\mathbf{Id}}
\newcommand{\I}{\mathbf I}
\newcommand{\vf}[1]{\frac{\partial}{\partial{#1}}}
\newcommand{\pt}{\partial}
\newcommand{\Sym}{\mathrm{Sym}}
\renewcommand{\D}{\mathbf{D}}
\newcommand{\Q}{\mathbf{Q}}
\newcommand{\GL}{\mathrm{GL}}
\newcommand{\Ric}{\mathrm{Ric}}
\newcommand{\sign}{\mathrm{sign}}
\newcommand{\Vol}{\mathrm{Vol}}
\newcommand{\divg}{\mathrm{div}\,}
\newcommand{\ts}[1]{\textstyle{{#1}}}
\newcommand{\arctanh}{\mathrm{arctanh}\,}
\renewcommand{\d}{\emph{\textbf{d}}}
\numberwithin{equation}{section} % 设置equation的编号格式
\begin{document}

\title{\bf\Large From Second-Order Differential Geometry to \\
Stochastic Geometric Mechanics %Stochastic Jets: A Geometric Formulation for SDEs \\ with Applications to Symmetries
}
\author{\bf\normalsize{
%Ana Bela Cruzeiro\footnote{Dep. de Matem\'atica IST, Av. Rovisco Pais, 1049-001 Lisboa, Portugal.
%Email: \texttt{abcruz@math.tecnico.ulisboa.pt}},
Qiao Huang\footnote{Group of Mathematical Physics (GFMUL), Department of Mathematics, Faculty of Sciences, University of Lisbon, Campo Grande, Edif\'{\i}cio C6, PT-1749-016 Lisboa, Portugal. Email: \texttt{qhuang@fc.ul.pt}},
Jean-Claude Zambrini\footnote{Group of Mathematical Physics (GFMUL), Department of Mathematics, Faculty of Sciences, University of Lisbon, Campo Grande, Edif\'{\i}cio C6, PT-1749-016 Lisboa, Portugal. Email: \texttt{jczambrini@fc.ul.pt}}
}}

%\date{\normalsize{\today}}
\date{}
\maketitle
\vspace{-0.3in}

\begin{abstract}
  Classical geometric mechanics, including the study of symmetries, Lagrangian and Hamiltonian mechanics, and the Hamilton-Jacobi theory, are founded on geometric structures such as jets, symplectic and contact ones. In this paper, we shall use a partly forgotten framework of second-order (or stochastic) differential geometry, developed originally by L.~Schwartz and P.-A.~Meyer, to construct second-order counterparts of those classical structures. These will allow us to study symmetries of stochastic differential equations (SDEs), to establish stochastic Lagrangian and Hamiltonian mechanics and their key relations with second-order Hamilton-Jacobi-Bellman (HJB) equations. Indeed, stochastic prolongation formulae will be derived to study symmetries of SDEs and mixed-order Cartan symmetries. Stochastic Hamilton's equations will follow from a second-order symplectic structure and canonical transformations will lead to the HJB equation. A stochastic variational problem on Riemannian manifolds will provide a stochastic Euler-Lagrange equation compatible with HJB one and equivalent to the Riemannian version of stochastic Hamilton's equations. %The latter equivalence suggests an alternative formulation to Schr\"odinger's problem equivalent to the viewpoint of optimal transport.
  A stochastic Noether's theorem will also follow. The inspirational example, along the paper, will be the rich dynamical structure of Schr\"odinger's problem in optimal transport, where the latter is also regarded as a Euclidean version of hydrodynamical interpretation of quantum mechanics.
  \bigskip\\
  \textbf{AMS 2020 Mathematics Subject Classification:} 70L10, 35F21, 70H20, 49Q22. \\ %76M35, 93E20, 14D21,
  \textbf{Keywords and Phrases:} Stochastic Hamiltonian mechanics, stochastic Lagrangian mechanics, Hamilton-Jacobi-Bellman equations, stochastic Hamilton's equations, stochastic Euler-Lagrange equation, stochastic Noether's theorem, Schr\"odinger's problem, second-order differential geometry. %stochastic symmetries, stochastic Hamilton's principle.
\end{abstract}

%{\small\tableofcontents}

\section{Introduction}

Hamilton-Jacobi (HJ) partial differential equations and the associated theory lie at the center of classical mechanics \cite{AM78,Arn89,MR99,GPS02}. Motivated by Hamilton's approach to geometrical optic where the action represents the time needed by a particle to move between two points and a variational principle due to Fermat, Jacobi extended this approach to Lagrangian and Hamiltonian mechanics. Jacobi designed a concept of ``complete'' solution of HJ equations allowing him to recover all solutions simply by substitutions and {differentiations}. Although, in general, it is  more complicated to solve than a system of ODEs like Hamilton's ones, HJ equations proved to be powerful tools of integration of classical equations of motion. In addition, Jacobi's approach suggested him to ask what diffeomorphisms of the cotangent bundle, the geometric arena of  canonical equations, preserve the structure of these first order equations. Those are called today symplectic or canonical transformations and Jacobi's method of integration is precisely one of them.

It is not always recognized as it should be that HJ equations were also fundamental in the construction of quantum mechanics. The reading of Schr\"odinger \cite{Sch26}, Fock \cite{Foc78}, Dirac \cite{Dir33} and others until Feynman \cite{Fey48} makes  abundantly clear that most of new ideas  in the field made use of  HJ equations for  the classical system to be ``quantized'', or some quantum deformation
of them. There are at least two ways to express this deformation.
On the one hand, one can exponentiate the $L^2$ wave function, call $S$ its complex exponent and look for the equation solved by $S$ (see \cite{GPS02}). When the system is a single particle in a scalar potential, one obtain the classical HJ equation with an additional Laplacian term and factor $i\hbar$, representing the regularization expected from the quantization of the system. This complex factor is symptomatic of the basic quantum probability problem, at least for pure states. In a nutshell, it is the reason why Feynman's diffusions, in his path integral approach, do not exist.
On the other hand, there is an hydrodynamical interpretation of quantum mechanics, founded on Madelung transform, a polar representation of the wave function whose real part is the square root of a probability density. The argument solves another deformation of HJ equation. The geometry of this transform has been thoroughly investigated recently, highlighting its relations with optimal transport theory \cite{KMM21,Von12}.

However, the probabilistic content of quantum mechanics, especially for pure states, remained a vexing mathematical mystery right from its beginning, despite several interesting (but unsuccessful) attempts \cite{Nel01}. The current consensus is that regular probability theory and stochastic analysis have little or nothing to teach us about it. And, in particular, that all that can be saved from Feynman path integral theory is Wiener's measure and perturbations of it by  potential terms. This is the ``Euclidean approach'', one of the starting points of mathematical quantum field theory.

In 1931, however, Schr\"odinger suggested in a paper almost forgotten until the eighties \cite{Sch32} (but insightfully commented by the probabilist  S. Bernstein \cite{Ber32}) the existence of a completely different Euclidean approach to quantum dynamics. In short, a stochastic variational boundary value problem for probability densities characterizes  optimal diffusions on a given time interval as having a density product of two positive solutions of time adjoint heat equations. This idea, revived and elaborated from 1986 \cite{Zam86}, is known today as ``Schr\"odinger's  problem'' in the community of optimal transport, where it has proved to provide, among other results, very efficient regularization of fundamental problems of this field \cite{Leo14}. In fact, Schr\"odinger's problem hinted toward the existence of a stochastic dynamical theory of processes, considerably more general than its initial quantum motivation. In it, various regularizations associated with the tools of stochastic calculus should play the role of those involved in quantum mechanics in Hilbert space, where the looked-for measures do not exist.

The variational side of the stochastic theory has been developed in the last decades, inspired by number of results in stochastic optimal control \cite{Hau86,FS06} and stochastic optimal transport \cite{Mik21}. In this context, the crucial role of (second-order) Hamilton-Jacobi-Bellman (HJB)
equation has been known for a long time. It provides the proper regularization of the (first-order) HJ equation needed to construct well defined stochastic dynamical theories. In contrast, for instance, with the notion of viscosity solution, whose initial target was the study of the classical PDE, HJB equation becomes central, there, as natural stochastic deformation of this one, compatible with It\^o’s calculus. It is worth mentioning that in any fields like AI or reinforcement learning, where HJB equations play a fundamental role \cite{PCV19}, it is natural to expect that such a stochastic dynamical framework, built on them, should present some interest.

The geometric, and especially, Hamiltonian side of the dynamical theory had resisted until now and constitutes the main contribution of this paper. It is our hope that it will be useful far beyond its initial motivation referred to, afterward, as its ``inspirational examples''. In this sense, it can clearly be interpreted as a general contribution to stochastic geometric mechanics. More precisely, we are trying to answer the following questions:

\begin{itemize}
  \item Do we have any geometric interpretation of the Hamilton-Jacobi-Bellman equation? That is, can we derive the HJB equation from some sort of canonical transformations?
  \item Can we formulate some variational problem that leads to a Euler-Lagrange equation which is equivalent to the HJB equation?
  \item More systematically, can we develop some counterpart of Lagrangian and Hamiltonian mechanics that are associated with the HJB equation?
\end{itemize}

The first question indicates that canonical transformations should be somehow  second-order, so that the corresponding symplectic and contact structures are also second-order. Meanwhile, the stochastic generalization of optimal control and optimal transport suggests that the variational problem of the second question should be formulated in stochastic sense. Combining these hints, the third question amounts to seeking a new theory of geometric mechanics that integrates stochastics and second-order together.

The cornerstone of stochastic analysis, the well-known It\^o's formula, tells us that the generator of a diffusion process is a second-order differential operator. This provides a very natural way to connect the stochastics with the second-order. That is, in order to build a stochastic or second-order counterpart of geometric mechanics, we need to encode the rule of It\^o's formula into the geometric structures.

There is a theory named second-order differential geometry (``stochastic differential geometry'' is also used by some authors but we would like to keep the original terminology), which was devised by L. Schwartz and P.-A. Meyer around 1980 \cite{Sch80,Sch82,Sch84,Mey79,Mey81}, and later on developed by Belopolskaya and Dalecky \cite{BD90}, Gliklikh \cite{Gli11}, Emery \cite{Eme89}, etc. See \cite{Eme07} for a survey of this aspect. Compared with the theory of stochastic analysis on manifolds (or geometric stochastic analysis) developed by It\^o \cite{Ito62,Ito75}, Malliavin \cite{Mal97}, Bismut \cite{Bis81} and Elworthy \cite{Elw82} etc., which focus on Stratonovich stochastic differential equations on classical geometric structures, like Riemannian manifolds, frame bundles and Lie groups, so that the Leibniz's rule is preserved, Schwartz' second-order differential calculus alter the underlying geometric structures to include second-order It\^o correction terms, and provide a broader picture even though it loses Leibniz's rule and is much less known.

In this paper, we will adopt the viewpoint of Schwartz--Meyer and enlarge their picture to develop a theory of stochastic geometric mechanics. We first give an equivalent and more intuitive description for the second-order tangent bundle by equivalent classes of diffusions, via Nelson's mean derivatives. And then we generalize this idea to construct stochastic jets, from which stochastic prolongation formulae are proved and the stochastic counterpart of Cartan symmetries is studied. The second-order cotangent bundle is also studied, which helps us to establish stochastic Hamiltonian mechanics. We formulate the stochastic Hamilton's equations, a system of stochastic equations on the second-order cotangent bundle in terms of mean derivatives. By introducing the second-order symplectic structure and the mixed-order contact structure, we derive the second-order HJB equations via canonical transformations. Finally, we set up a stochastic variational problem on the space of diffusion processes, also in terms of mean derivatives. Two kinds of stochastic principle of least action are built: stochastic Hamilton's principle and stochastic Maupertuis's principle. Both of them yield a stochastic Euler-Lagrange equation. The equivalence between the stochastic Euler-Lagrange equation and the HJB equation is proved, which exactly leads to the equivalence between our stochastic variational problem and Schr\"odinger's problem in optimal transport. Last but not least (actually vital), a stochastic Noether's theorem is proved. It says that every symmetry of HJB equation corresponds to a martingale that is exactly a conservation law in the stochastic sense. It should be observed, however, that the Schwartz-Meyer approach, together with the one of Bismut \cite{Bis81}, has also inspired a distinct, Stratonovich-type stochastic Hamiltonian framework \cite{LO08} leading to a stochastic HJ equation \cite{LO09}, without relations with Schr\"odinger's problem or optimal transport.

The key results of the present paper and the dependence among them are briefly expressed in the following diagram:

%\vspace{1mm}
\begin{center}
\begin{picture}(450,260)(-5,-240)
  \put(0,-30){\shortstack{Stochastic \\ symmetries}}
  \put(-5,-35){\framebox(60,25){}}
  \put(25,-35){\vector(0,-1){23}}
  \put(0,-80){\shortstack{Stochastic \\ jets}}
  \put(25,-85){\vector(0,-1){20}}
  \qbezier[800](30,-140)(90,-195)(210,-200)
  \put(-5,-135){\shortstack{Stochastic \\ prolongation \\ formulae}}
  \put(25,-140){\vector(0,-1){15}}
  \put(-6,-183){\shortstack{Mixed-order \\ Cartan \\ symmetries}}

  \put(110,-35){\shortstack{Stochastic \\ Hamiltonian \\ mechanics}}
  \put(105,-40){\framebox(65,35){}}
  \put(135,-40){\vector(0,-1){15}}
  \put(110,-85){\shortstack{Second-order \\ symplectic \\ structure}}
  \put(130,-90){\vector(-1,-1){15}}
  \put(145,-90){\vector(1,-1){15}}
  \put(75,-135){\shortstack{Mixed-order \\ contact \\ structure}}
  \put(115,-140){\vector(1,-1){20}}
  \put(140,-135){\shortstack{Stochastic \\ Hamilton's \\ equations}}
  \put(160,-140){\vector(-1,-1){20}}
  \put(190,-122){\vector(1,0){20}}
  \put(210,-140){\shortstack{Global \\ stochastic \\ Hamilton's \\ equations}}
  \put(259,-122){\vector(1,0){20}}
  \put(279,-122){\vector(-1,0){20}}
  \put(210,-140){\vector(-3,-1){60}}
  \put(120,-180){\shortstack{HJB \\ equations}}
  %\put(165,-170){\vector(4,1){128}}
  \put(293,-138){\vector(-4,-1){128}}
  \qbezier(145,-185)(195,-197)(210,-200)

  \put(290,-35){\shortstack{Stochastic \\ Lagrangian \\ mechanics}}
  \put(285,-40){\framebox(60,35){}}
  \put(310,-40){\vector(-1,-1){15}}
  \put(325,-40){\vector(1,-1){15}}
  \put(390,-88){\shortstack{\footnotesize Stochastic \\ \footnotesize stationary- \\ \footnotesize action \\ \footnotesize principles}}
  \put(255,-90){\dashbox(130,35){}}
  \put(260,-85){\shortstack{Stochastic \\ Maupertuis's \\ principle}}
  \put(295,-90){\vector(1,-1){15}}
  \put(330,-85){\shortstack{Stochastic \\ Hamilton's \\ principle}}
  \put(340,-90){\vector(-1,-1){15}}
  \put(365,-90){\vector(1,-1){20}}
%  \put(385,-110){\vector(-1,1){20}}
  \put(370,-130){\shortstack{Schr\"odinger's \\ problem}}
  \put(280,-135){\shortstack{Stochastic \\ Euler-Lagrange \\ equation}}
  \qbezier(310,-138)(260,-170)(210,-200)
  \put(210,-200){\vector(0,-1){10}}
  \put(190,-240){\shortstack{Stochastic \\ Noether's \\ theorem}}
\end{picture}
\end{center}
\vspace{5mm}

The organization of this paper is the following.

Chapter \ref{sec-2} is a summary on the theory of stochastic differential equations on manifolds, in the perspective appropriate to our goal. In particular, diffusions will be characterized by their mean and quadratic mean derivatives as in Nelson's stochastic mechanics \cite{Nel01} although the resulting dynamical content of our theory will have very little to do with his. In this way, we are able to rewrite It\^o SDEs on manifolds as ODE-like equations that have better geometric nature. The notion of second-order tangent bundle answers to the question: the drift parts of It\^o SDEs are sections of what? %Intrinsic definitions of the derivatives are given, which do not rely to linear connections on $M$.

Chapter \ref{sec-3} is devoted to the notion of Stochastic jets. In the same way as tangent vector on $M$ are defined as equivalence classes of smooth curves through a given point and then generalized to higher-order cases to produce the notion of jets, the stochastic tangent vector is defined as equivalence classes of diffusions so that the stochastic tangent bundle is isomorphic to the elliptic subbundle of the second-order tangent bundle. Stochastic jets are also constructed. This provides an intrinsic definition of SDEs under consideration.

Chapter \ref{sec-4} illustrates the use of the above geometric formulation of SDEs for the study of their symmetries. Prolongations of $M$-valued diffusions are defined as new processes with values on the stochastic tangent bundle. Among all deterministic space-time transformations, bundle homomorphisms will be the only subclass to transform diffusions to diffusions. Total mean and quadratic derivative are defined in conformity with the rules of It\^o's calculus. The prolongation of diffusions allows to define symmetries of SDEs and their infinitesimal versions. Stochastic prolongation formulae are derived for infinitesimal symmetries, which yield determining equations for It\^o SDEs. %Versions depending on the connection are given.

In Chapter \ref{sec-5}, the second-order cotangent bundle, as dual bundle of second-order tangent bundle, is defined and analyzed. %Two basic examples of second-order forms as well as their pairing with second-order operators are mentioned as well as the symmetric product of $1$-forms.
The properties of second-order differential operator, pushforwards and pullbacks are described. When time is involved, i.e., the base manifold is the product manifold $\R\times M$, the corresponding bundles are mixed-order tangent and cotangent bundles, where ``mixed-order'' means they are second-order in space but first-order in time. More about this topic, like mixed-order pushforwards and pullbacks, pushforwards and pullbacks by diffusions, and Lie derivatives, can be found in Appendix \ref{app-1}. %Total mean derivatives can be extended as before.
An generalized notion to stochastic Cartan distribution and its symmetries are discussed in Appendix \ref{app-2} based on the mixed-order contact structure.

The point of Chapter \ref{sec-6} is to use the tools developed before in the construction of the stochastic Hamiltonian mechanics which is one of the main goals of the paper. One of our inspirational example will be the one underlying the dynamical content of Schr\"odinger's problem.
%Given a fiber bundle, a horizontal diffusion will be random only horizontally but not on the fiber so that its mean derivative and quadratic mean derivative will be section of the diffusion $X(t)$.
By analogy with  Poincar\'e $1$-form in the cotangent bundle of classical mechanics and its associated symplectic form, one can construct  counterparts in the second-order cotangent bundle. %For the exterior differential of second-order form, one postulate that rules of It\^o's calculus should hold.
Using the canonical second-order symplectic form on second-order cotangent bundles, one defines second-order symplectomorphisms. The generalization of classical Hamiltonian vector fields becomes second-order operators, for a given real-valued Hamiltonian function on the second-order cotangent bundle. The resulting stochastic Hamiltonian system involves pairs of extra equations compared with their classical versions. Bernstein's reciprocal processes inspired by Schr\"odinger's problem are described in this framework, corresponding to a large class of second-order Hamiltonians on Riemannian manifolds. %To solve the stochastic deformation of the first equation of Hamilton, a logarithmic change of variable produces HJB equation. An elementary special case of resulting process $X(t)$ is the Brownian bridge. More generally, in the Euclidean case, a large class of Hamiltonian, together with their associated Newton type of equations results from this approach.
A mixed-order contact structure describes time-dependent stochastic Hamiltonian systems. The last section of this chapter is devoted to canonical transformations preserving the form of stochastic Hamilton's equations. The corresponding generating function satisfies the Hamilton-Jacobi-Bellman equation.

Chapter \ref{sec-7} treats the stochastic version of classical Lagrangian mechanics on Riemannian manifolds. It\^o's stochastic deformation of the classical notion of parallel displacements are recalled. Another one, called damped parallel displacement in the mathematical literature, involving the Ricci tensor, is also indicated. Each of these displacements corresponds to a mean covariant derivative along diffusions. %A special class of second-order Lagrangians is introduced, on the basis of  (i.e lifted from ) classical ones.
The action functional is defined as expectation of Lagrangian and the stochastic Euler-Lagrange equation involves the damped mean covariant derivative.
The dynamics of Schr\"odinger's problem is, again, used as illustration. %Another one is the relation with optimal transport, including a counterpart of Benamou-Brenier formula.
The equivalence between stochastic Hamilton's equations on Riemannian manifolds and the stochastic Euler-Lagrange one as well as the HJB equation are derived via the Legendre transform. Relations with stochastic control are also mentioned. The chapter ends with the stochastic Noether's theorem. The stochastic version of Maupertuis principle, as the twin of stochastic Hamilton's principle, is left into Appendix \ref{app-3}.

We end the introduction with a list of notations and abbreviations frequently used in the paper, for reader's convenience.

\subsection{List of main notations and abbreviations}

\begin{longtable}{ll}
\hline
HJB equation & Hamilton-Jacobi-Bellman equation \\
MDE & Mean differential equations \\
2nd-order & Second-order \\
SDE & Stochastic differential equations \\
S-EL equation & Stochastic Euler-Lagrange equation \\
S-H equations & Stochastic Hamilton's equations \\
\hline
$A$ & A general second-order differential operator or second-order vector field \\
$A^X$ & Generator of the diffusion $X$ \\
$\circ\,d$ & Stratonovich stochastic differential \\
$d$ & Exterior differential on the manifold $M$, or It\^o stochastic differential \\
$\d$ & Linear operator extended from the exterior differential on the tangent bundle $TM$ \\
$d^2$ & Second-order differential on $M$ \\
$d^\circ$ & Mixed-order differential on $\R\times M$ \\
$d_x$ & Horizontal differential on the tangent bundle $TM$ or cotangent bundle $T^*M$ \\
$d_{\dot x}$ & Vertical differential on $TM$ \\
$(DX, QX), D_\nabla X, Q(X,Y)$ & Mean derivatives \\
$\D_t,\Q_t$ & Total mean derivatives \\
$\frac{\D}{dt}, \frac{\overline\D}{dt}$ & Mean covariant derivative and damped mean covariant derivative \\
$\Delta$, $\Delta_{\text{LD}}$ & Connection Laplacian and Laplace-de Rham operator \\
%$\eta$ & A general 1-form \\
$F^S_*, F^{S*}$ & Second-order pushforward and pullback of a smooth map $F:M\to N$ \\
$F^R_*, F^{R*}$ & Mixed-order pushforward and pullback of $F$ \\
$\Gamma$ &  Christoffel symbols or stochastic parallel displacement \\
$\overline \Gamma$ & Damped parallel displacement \\
$I_t(M), I_{(t,q)}(M), I_{(t,q)}^{T,\mu}(M)$ & Various sets of $M$-valued diffusions starting at time $t$ \\
$j_q X, j_q^\nabla X, j_{(t,q)}X, j_t X$ & Stochastic
tangent vectors and stochastic jets \\
$\L$ & Lie derivatives \\
$\nabla$ & Linear connection, Levi-Civita connection, covariant derivative, or gradient on $M$ \\
$\nabla^2$ & Hessian operator \\
$\nabla_p$ & Vertical gradient on $T^*M$ \\
$(\Omega, \F, \P)$ & Probability space $\Omega$ with $\sigma$-field $\F$ and probability measure $\P$ \\
$\{\Pred_t\}_{t \in \R}$, $\{\F_t\}_{t \in \R}$ & Past (nondecreasing) filtration and future (nonincreasing) filtration \\
$\vf{t}, \pt_{t}$ & Differential operator with respect to coordinate $t$ \\
$\vf{x^i}, \pt_i$ & Differential operator with respect to coordinate $x^i$ \\
$\frac{\pt^2}{\pt x^j\pt x^k}, \pt_{jk}$ & Second-order differential operator with respect to coordinates $x^j$ and $x^k$ \\
$\vf{p_i}, \pt_{p_i}$ & Differential operator with respect to coordinate $p_i$ \\
$R$, $\Ric$ & Riemann curvature tensor and Ricci $(1,1)$-tensor \\
$\mathcal T^O M, \mathcal T^E M$ & Second-order tangent bundle and second-order elliptic tangent bundle \\
$\mathcal T^S M$ & Stochastic tangent bundle \\
$\mathcal T^{S*} M$ & Second-order cotangent bundle \\
$V$ & A general vector field \\
$(x^i,D^i x, Q^{jk} x)$ & Canonical coordinates on $\mathcal T^S M$ \\
$(x^i,p_i, o_{jk})$ & Canonical coordinates on $\mathcal T^{S*} M$ \\
$X_*, X^*$ & Pushforward and pullback of the diffusion $X$ \\
$\mathbf X$ & A horizontal diffusion valued on a general bundle $E$ or on $\mathcal T^{S*} M$ \\
$\mathbb X$ & A horizontal diffusion valued on $T^* M$ \\
\hline
\end{longtable}

\section{Stochastic differential equations on manifolds}\label{sec-2}

In this chapter, we will study several types of stochastic differential equations on manifolds which are weakly equivalent to It\^o SDEs. We start with a $d$-dimensional smooth manifold $M$ and a probability space $(\Omega, \F, \P)$, and equip the latter with a filtration $\{\Pred_t\}_{t \in \R}$, i.e., a family of nondecreasing sub-$\sigma$-fields of $\F$. We call $\{\Pred_t\}_{t \in \R}$ a \emph{past filtration}. Unless otherwise specified, the manifold $M$ will not be endowed with any structures other than the smooth structure. In some cases, it will be endowed with a linear connection, a Riemannian metric, or a Levi-Civita connection.

Recall from \cite[Definition 1.2.1]{Hsu02} that by an $M$-valued (forward) $\{\Pred_t\}$-\emph{semimartingale}, we mean a $\{\Pred_t\}$-adapted continuous $M$-valued process $X= \{X(t)\}_{t\in[t_0,\tau)}$, where $t_0\in\R$ and $\tau$ is a $\{\Pred_t\}$-stopping time satisfying $t_0<\tau\le+\infty$, such that $f(X)$ is a real-valued $\{\Pred_t\}$-semimartingale on $[t_0,\tau)$ for all $f\in C^\infty(M)$. The stopping time $\tau$ is called the lifetime of $X$. If we adopt the convention to introduce the one-point compactification of $M$ by $M^*:= M \cup \{\pt_M\}$, then the process $X$ can be extended to the whole time line $[t_0,+\infty)$ by setting $X(t) = \pt_M$ for all $t\ge\tau$. The point $\pt_M$ is often called the cemetery point in the context of Markovian theory.

\subsection{It\^o SDEs on manifolds}

Given $N+1$ time-dependent vector field $b,\sigma_r, r=1,\cdots,N$ on $M$, one can introduce a Stratonovich SDE in local coordinates, which has the same form as in Euclidean space \cite[Section 1.2]{Hsu02}. The form of Stratonovich SDEs on $M$ is invariant under changes of coordinates, as Stratonovich stochastic differentials obey the Leibniz's rule.

However, for It\^o stochastic differentials this is not the case because of It\^o's formula. Hence, we cannot directly write an Euclidean form of It\^o SDE on $M$ in local coordinates, since it is no longer invariant under changes of coordinates. Indeed, a change of coordinates will always produce an additional term. To balance this term, a common way is to add a correction term to the drift part of the Euclidean form of It\^o SDE, by taking advantage of a linear connection. More precisely, under local coordinates $(x^i)$, we consider the following It\^o SDE \cite[Section 7.1, 7.2]{Gli11}:
\begin{equation}\label{SDE-mfld}
  dX^i(t) = \left[ b^i(t,X(t)) - \frac{1}{2} \sum_{r=1}^N \Gamma^i_{jk}(X(t)) (\sigma^j_r \sigma^k_r)(t,X(t)) \right]dt + \sigma^i_r(t,X(t)) dW^r(t),
\end{equation}
where $(\Gamma^i_{jk})$ is the family of Christoffel symbols for a given linear connection $\nabla$ on $TM$. When conditioning on $\{X(t)=q\}$ and taking $(x^i)$ as normal coordinates at $q\in M$, \eqref{SDE-mfld} turns to the Euclidean form, since at $q$,
\begin{equation}\label{normal-vanishing}
  \sum_{r=1}^N \Gamma^i_{jk} \sigma^j_r \sigma^k_r = \frac{1}{2} \sum_{r=1}^N \left(\Gamma^i_{jk} + \Gamma^i_{kj}\right) \sigma^j_r \sigma^k_r = 0.
\end{equation}

If we denote
\begin{equation*}
  \sigma\circ\sigma^* := \sum_{r=1}^N \sigma_r \otimes \sigma_r = \sum_{r=1}^N \sigma_r^j \sigma_r^k \vf{x^j}\otimes \vf{x^k}.
\end{equation*}
Then, clearly $\sigma\circ\sigma^*$ is a symmetric and positive semi-definite $(2,0)$-tensor field. We also introduce formally a modified drift $\mathfrak b$ which has the following coordinate expression
\begin{equation}\label{modified-drift}
  \mathfrak b^i = b^i - \frac{1}{2} \sum_{r=1}^N \Gamma^i_{jk} \sigma^j_r \sigma^k_r.
\end{equation}
%Since the Christoffel symbols transform under the change of coordinates from $(x^i)$ to $(\tilde x^j)$ as following (e.g., \cite[Proposition III.7.2]{KN63}),
%\begin{equation*}
%  \Gamma^i_{jk} = \tilde\Gamma^l_{mn} \frac{\pt\tilde x^m}{\pt x^j} \frac{\pt\tilde x^n}{\pt x^k} \frac{\pt x^i}{\pt\tilde x^l} + \frac{\pt^2\tilde x^l}{\pt x^j \pt x^k} \frac{\pt x^i}{\pt\tilde x^l},
%\end{equation*}
We change the coordinate chart from $(U,(x^i))$ to $(V,(\tilde x^j))$ with $U\cap V\ne \emptyset$. Since each $\sigma_r$ transforms as a vector, we apply the change-of-coordinate formula for Christoffel symbols (e.g., \cite[Proposition III.7.2]{KN63}) to derive that
\begin{equation*}%\label{Gamma-tsfm}
  \begin{split}
    \Gamma^i_{jk} \sigma^j_r \sigma^k_r &= \left( \tilde\Gamma^l_{mn} \frac{\pt\tilde x^m}{\pt x^j} \frac{\pt\tilde x^n}{\pt x^k} \frac{\pt x^i}{\pt\tilde x^l} + \frac{\pt^2\tilde x^l}{\pt x^j \pt x^k} \frac{\pt x^i}{\pt\tilde x^l} \right) \sigma^j_r \sigma^k_r = \left( \tilde\Gamma^l_{mn} \tilde \sigma^m_r \tilde \sigma^n_r + \frac{\pt^2\tilde x^l}{\pt x^j \pt x^k} \sigma^j_r \sigma^k_r \right) \frac{\pt x^i}{\pt\tilde x^l}.
  \end{split}
\end{equation*}
It follows that the coefficients of the modified drift $\mathfrak{b}$ in \eqref{modified-drift} transform as
\begin{equation}\label{modified-drift-tsfm}
  \begin{split}
    \tilde{\mathfrak{b}}^l &= \tilde b^l - \frac{1}{2} \sum_{r=1}^N \tilde\Gamma^l_{mn} \tilde \sigma^m_r \tilde \sigma^n_r = b^i\frac{\pt \tilde x^l}{\pt x^i} - \frac{1}{2} \sum_{r=1}^N \left( \Gamma^i_{jk} \frac{\pt \tilde x^l}{\pt x^i} - \frac{\pt^2 \tilde x^l}{\pt x^j \pt x^k} \right) \sigma^j_r \sigma^k_r = \mathfrak{b}^i \frac{\pt \tilde x^l}{\pt x^i} + \frac{1}{2} \frac{\pt^2 \tilde x^l}{\pt x^j \pt x^k} \sum_{r=1}^N \sigma^j_r \sigma^k_r.
  \end{split}
\end{equation}
Therefore, $\mathfrak{b}$ is \emph{not} a vector field as it does not pointwisely transform as a vector.

Finally, using It\^o's formula, we derive the transformation of \eqref{SDE-mfld} as follows:
\begin{equation*}
  \begin{split}
    d\tilde x^l &= \frac{\pt \tilde x^l}{\pt x^i} dx^i + \frac{1}{2} \frac{\pt^2 \tilde x^l}{\pt x^j \pt x^k} d[x^j,x^k] \\
    &= \left[ \frac{\pt \tilde x^l}{\pt x^i} \left( b^i - \frac{1}{2} \sum_{r=1}^N \Gamma^i_{jk} \sigma^j_r \sigma^k_r \right) + \frac{1}{2} \sum_{r=1}^N \frac{\pt^2 \tilde x^l}{\pt x^j \pt x^k} \sigma^j_r \sigma^k_r \right]dt + \frac{\pt \tilde x^l}{\pt x^i} \sigma^i_r dW^r \\
    &= \left( \tilde b^l - \frac{1}{2} \sum_{r=1}^N \tilde\Gamma^l_{mn} \tilde \sigma^m_r \tilde \sigma^n_r \right)dt + \tilde \sigma^l_r dW^r,
  \end{split}
\end{equation*}
where the bracket $[\cdot,\cdot]$ on the right hand side (RHS) of the first equality denotes the quadratic variation. This shows that equation \eqref{SDE-mfld} is indeed invariant under changes of coordinates.

\begin{remark}
  One can regard $\sigma = (\sigma_r)_{r=1}^N \in (\R^N)^*\otimes\mathfrak{X}(M)$ as an $(\R^N)^*$-valued vector field on $M$. In this way, the pair $(b,\sigma)$ is called an It\^o vector field in \cite[Chapter 7]{Gli11}, while the pair $(\mathfrak{b},\sigma)$ is called an It\^o equation therein.
\end{remark}

%\begin{definition}[It\^o vector fields, {\cite[Definition 7.1]{Gli11}}]
%  An It\^o vector field on $M$ is a pair $(b,\sigma)$, where $b$ is a vector field on $M$, and $\sigma$ is linear operator from $\R^N$ to $\mathfrak{X}(M)$, the set of all vector fields on $M$.
%\end{definition}
%
%
%In local coordinates $(x^i)$ on an open subset $U\subset M$, the above $(b,\sigma)$ has the following representation:
%\begin{equation*}
%  b = b^i \vf{x^i}, \quad \sigma (e_r) = \sigma_r^i \vf{x^i},
%\end{equation*}
%where $\{e_r\}_{r=1}^N$ are the canonical basis of $\R^N$. Under the change of coordinates from $(U,(x^i))$ to $(V,(\tilde x^j))$ with $U\cap V\ne \emptyset$, the components of $(b,\sigma)$ transform by the rule
%\begin{equation}\label{b-sigma-tsfm}
%  \tilde b^j = b^i \frac{\pt \tilde x^j}{\pt x^i}, \quad \tilde\sigma_r^j = \sigma_r^i \frac{\pt \tilde x^j}{\pt x^i}.
%\end{equation}
%Formally, we denote by $\sigma\circ\sigma^*$ the symmetric $(2,0)$-tensor field that has the following coordinate representation
%\begin{equation*}
%  \sigma\circ\sigma^* = \sum_{r=1}^N \sigma_r^j \sigma_r^k \vf{x^j}\otimes \vf{x^k}.
%\end{equation*}

Now we present the definition of weak solutions to \eqref{SDE-mfld}.

\begin{definition}[Weak solutions to It\^o SDEs]
  Given a linear connection on $M$, a weak solution of the It\^o SDE \eqref{SDE-mfld} is a triple $(X, W)$, $(\Omega,\F,\P)$, $\{\Pred_t\}_{t\in\R}$, where
  \begin{itemize}
    \item[(i)] $(\Omega,\F,\P)$ is a probability space, and $\{\Pred_t\}_{t\in\R}$ is a past (i.e., nondecreasing) filtration of $\F$ satisfying the usual conditions,
    \item[(ii)] $X = \{X(t)\}_{t\in[t_0,\tau)}$ is a continuous, $\{\Pred_t\}$-adapted $M$-valued process with $\{\Pred_t\}$-stopping time $\tau >t_0$, $W$ is an $N$-dimensional $\{\Pred_t\}$-Brownian motion, and
    \item[(iii)] for every $q\in M$, $t\ge t_0$ and any coordinate chart $(U,(x^i))$ of $q$, it holds under the conditional probability $\P(\cdot | X(t_0) = q)$ that almost surely in the event $\{X(t)\in U\}$,
        \begin{equation*}
          X^i(t) = X^i(t_0) + \int_{t_0}^t \left( b^i(s,X(s)) - \frac{1}{2} \sum_{r=1}^N \Gamma^i_{jk}(X(s)) (\sigma^j_r \sigma^k_r)(s,X(s)) \right)ds + \int_{t_0}^t \sigma^i_r(s,X(s)) dW^r(s).
        \end{equation*}
  \end{itemize}
\end{definition}

\begin{definition}[Uniqueness in law]\label{unique-Nelson}
  We say that uniqueness in the sense of probability law holds for the It\^o SDE \eqref{SDE-mfld} %(or equation \eqref{Nelson-SDE-mfld})
  if, for any two weak solutions $(X, W)$, $(\Omega,\F,\P)$, $\{\Pred_t\}_{t\in\R}$, and $(\hat X, \hat W)$, $(\hat\Omega,\hat\F,\hat\P)$, $\{\hat\Pred_t\}_{t\in\R}$ %(or $x$, $(\Omega,\F,\P)$, $\{\Pred_t\}_{t\in\R}$, and $\hat x$, $(\hat\Omega,\hat\F,\hat\P)$, $\{\hat\Pred_t\}_{t\in\R}$ respectively),
  with the same initial data, i.e., $\P(X(0)=x_0) = \hat\P(\hat X(0)=x_0) = 1$, the two processes $X$ and $\tilde X$ have the same law.
\end{definition}

Note that it is possible to change $\sigma$ and $W$ in the It\^o SDE \eqref{SDE-mfld} but keep the same weak solution in law. In other words, the form of \eqref{SDE-mfld} does not univocally correspond to its weak solution in law. For this reason, we will reformulate SDEs in a fashion that makes them look more like ODEs and have better geometric nature. Moreover, we will see that it is the pair $(\mathfrak{b}, \sigma\circ\sigma^*)$ that univocally corresponds to the weak solution of \eqref{SDE-mfld}.

\subsection{Mean derivatives and mean differential equations on manifolds}\label{sec-2-2}

In this part, we will recall the definitions of Nelson's mean derivatives and extend them to $M$-valued processes. In Nelson's stochastic mechanics \cite{Nel01}, the probability space $(\Omega, \F, \P)$ is equipped with two different filtrations. The first one is just an usual nondecreasing filtration $\{\Pred_t\}_{t \in \R}$, a past filtration. The second is a family of nonincreasing sub-$\sigma$-fields of $\F$, which is denoted by $\{\F_t\}_{t \in \R}$ and called a \emph{future} filtration. %Everything in Nelson's stochastic mechanics defined in terms of filtrations has two versions: one is with respect to the past filtration and named ``forward'', and the other is with respect to the future filtration and named ``backward''. For example,
For an $\R^d$-valued process $\{X(t)\}_{t \in I}$, its forward mean derivative $DX$ and forward quadratic mean derivative $Q X$ are defined by conditional expectations as follows:
  \begin{equation*}
    DX(t) = \lim_{\e\to0^+} \E\left[ \frac{X(t+\e)-X(t)}{\e} \bigg| \Pred_t \right], \quad Q X(t) = \lim_{\e\to0^+} \E\left[ \frac{(X(t+\e)-X(t))\otimes (X(t+\e)-X(t))}{\e} \bigg| \Pred_t \right],
  \end{equation*}
Their backward versions, i.e., the backward mean derivative and backward quadratic mean derivative, are defined as follows:
  \begin{equation*}
    \overleftarrow{D}X(t) = \lim_{\e\to0^+} \E\left[ \frac{X(t)-X(t-\e)}{\e} \bigg| \F_t \right], \quad \overleftarrow{Q} X(t) = \lim_{\e\to0^+} \E\left[ \frac{(X(t)-X(t-\e))\otimes(X(t)-X(t-\e))}{\e} \bigg| \F_t \right].
  \end{equation*}

In our present paper, we will only focus on the ``forward'' case, so that only the past filtration $\{\Pred_t\}_{t \in \R}$ will be invoked. The ``backward'' case is analogous and every part of this paper can have its ``backward'' counterpart (cf. \cite{Zam15}). %This expresses the time-symmetry of Schr\"odinger's problem in our inspirational example, cf. \cite{Zam15}.

Denote by $\Sym^2(TM)$ (and $\Sym^2_+(TM)$) the fiber bundle of symmetric (and respectively, symmetric positive semi-definite) $(2,0)$-tensors on $M$. Now we define quadratic mean derivatives for $M$-valued semimartingales, cf. \cite[Chapter 9]{Gli11}.

\begin{definition}[Quadratic mean derivatives]
  The (forward) quadratic mean derivative of the $M$-valued semimartingale $\{X(t)\}_{t \in [t_0,\tau)}$ is a $\Sym^2_+(TM)$-valued process $QX$ on $[t_0,\tau)$, whose value at time $t\in[t_0,\tau)$ in any coordinate chart $(U,(x^i))$ and in the event $\{X(t) \in U\}$ is given by
  \begin{equation}\label{quad-deriv}
    (Q X)^{ij}(t) = \lim_{\e\to0^+} \E\left[ \frac{(X^i(t+\e)-X^i(t)) (X^j(t+\e)-X^j(t))}{\e} \bigg| \Pred_t \right],
  \end{equation}
  where the limits are assumed to exist in $L^1(\Omega, \F, \P)$.
\end{definition}

More generally, we can define the (forward) quadratic mean derivative for two $M$-valued semimartingales $X$ and $Y$ in local coordinates by
\begin{equation*}
    (Q (X,Y))^{ij}(t) = \lim_{\e\to0^+} \E\left[ \frac{(X^i(t+\e)-X^i(t)) (Y^j(t+\e)-Y^j(t))}{\e} \bigg| \Pred_t \right].
  \end{equation*}

Due to It\^o's formula for semimartingales, $Q X(t)$ does transform as a $(2,0)$-tensor and is obviously symmetric, so that the definition is independent of the choice of $U$. However, the formal limit $\E[ \frac{1}{\e} (X^i(t+\e)-X^i(t)) | \Pred_t ]$ under any coordinates $(x^i)$, no longer transforms as a vector, as can be guessed from \eqref{modified-drift-tsfm}. In order to turn it into a vector we need to specify a coordinate system. A natural choice is the normal coordinate system. For this purpose, we endow $M$ with a linear connection $\nabla$, which determines a normal coordinate system near each point on $M$.

\begin{definition}[$\nabla$-mean derivatives]\label{def-vector-deriv}
  Given a linear connection $\nabla$ on $M$, the (forward) $\nabla$-mean derivative of the $M$-valued semimartingale $\{X(t)\}_{t \in [t_0,\tau)}$ is a $TM$-valued process $D_\nabla X$ on $[t_0,\tau)$, whose value at time $t\in[t_0,\tau)$ is defined in \emph{normal} coordinates $(x^i)$ on the normal neighborhood $U$ of $q\in M$ and under the conditional probability $\P(\cdot | X(t) = q)$ as follows:
  \begin{equation*}%\label{vector-deriv}
    (D_\nabla X)^i(t) = \lim_{\e\to0^+} \E\left[ \frac{X^i(t+\e)-X^i(t)}{\e} \bigg| \Pred_t \right],
  \end{equation*}
  where the limits are assumed to exist in $L^1(\Omega, \F, \P)$.
\end{definition}

%Note that the conditional probability at RHS of \eqref{vector-deriv} is well defined and should be understood in the sense of regular conditional probability, although the conditioning event $\{X(t) = q\}$ may have zero probability.
As we force $D_\nabla X(t)$ to be vector-valued by definition, its coordinate expression under any other coordinate system can be calculated via Leibniz's rule. Let us stress that the notation $D_\nabla$ should not be confused with the one of covariant derivatives in geometry. %We shall not indicate the connection in the notation when it is specified. The readers may keep in mind this dependence.

Now we formally take forward mean derivatives in It\^o SDE \eqref{SDE-mfld}, and note that the correction term in the modified drift involving Christoffel symbols vanishes by \eqref{normal-vanishing}. Then, we get an ODE-like system:
\begin{equation}\label{Nelson-SDE-mfld}\left\{
  \begin{aligned}
    &D_\nabla X(t) = b(t,X(t)), \\
    &Q X(t) = (\sigma\circ\sigma^*)(t,X(t)).
  \end{aligned} \right.
\end{equation}
We call equations \eqref{Nelson-SDE-mfld} a system of mean differential equations (MDEs). Note that both MDEs \eqref{Nelson-SDE-mfld} and It\^o SDE \eqref{SDE-mfld} rely on linear connections on $M$.

%Recall that a continuous, $M$-valued process $X$ defined up to a stopping time $\tau$ is called an \emph{$M$-valued semimartingale} \cite[Definition 1.2.1]{Hsu02}, if for all $f\in C^\infty(M)$, $f(X)$ is a real-valued semimartingale on $[0,\tau)$. Similarly, we define an \emph{$M$-valued It\^o process} $X$ up to $\tau$ to be a continuous, $M$-valued process such that for all $f\in C^\infty(M)$, $f(X)$ is a real-valued It\^o process on $[0,\tau)$.

\begin{definition}[Solutions to MDEs]\label{weak-sol-Nelson}
  Given a linear connection on $M$, a solution of MDEs \eqref{Nelson-SDE-mfld} is a triple $X$, $(\Omega,\F,\P)$, $\{\Pred_t\}_{t\in\R}$, where
  \begin{itemize}
    \item[(i)] $(\Omega,\F,\P)$ is a probability space, and $\{\Pred_t\}_{t\in\R}$ is a past filtration of $\F$ satisfying the usual conditions,
    \item[(ii)] $X = \{X(t)\}_{t\in[t_0,\tau)}$ is a continuous, $\{\Pred_t\}$-adapted $M$-valued semimartingale with lifetime a $\{\Pred_t\}$-stopping time $\tau >t_0$, and
    \item[(iii)] the $\nabla$-mean derivative and quadratic mean derivative of $X$ exist and satisfy \eqref{Nelson-SDE-mfld}.
    \end{itemize}
\end{definition}

\subsection{Second-order operators and martingale problems}

\begin{definition}[Second-order operators]\label{2-operator}
  A second-order operator on $M$ is a linear operator $A : C^\infty(M) \to C^\infty(M)$, which has the following expression in a coordinate chart $(U,(x^i))$,
  \begin{equation}\label{2-tangent-rep}
    Af = A^i \frac{\partial f}{\partial x^i} + A^{ij} \frac{\partial^2 f}{\partial x^i\partial x^j}, \quad f\in C^\infty(M),
  \end{equation}
  where $(A^{ij})$ is a symmetric $(2,0)$-tensor field, and the expression is required to be invariant under changes of coordinates. If $(A^{ij})$ is positive semi-definite, then we say the second-order operator $A$ is elliptic; if $(A^{ij})$ is positive definite, we say $A$ is nondegenerate elliptic.
\end{definition}

There is a coordinate-free definition of second-order operators. A linear map $A_q: C^\infty(M) \to \R$ is called a second-order derivation at $q\in M$, if there is a symmetric $(2,0)$-tensor $\Gamma_{A_q}$ at $q$ such that
$A_q(fg) = f(q) A_q g + g(q) A_qf + (df\otimes dg) (\Gamma_{A_q})$ for all $f,g \in C^\infty(M)$. Then, a second-order operator is nothing but a smooth field of second-order derivations.
From this, we see that for $A$ in \eqref{2-tangent-rep}, $A^i = A(x^i)$, $A^{ij} = A(x^i x^j) - x^iA(x^j) - x^jA(x^i)$, and
\begin{equation}\label{squared-field}
  \Gamma_A = A^{ij} \vf{x^i}\otimes\vf{x^j}.
\end{equation}
We call $\Gamma_A$ the \emph{squared field operator} (originally ``op\'erateur carr\'e du champ'') associated with $A$. We also denote $\Gamma_A(f,g) := (df\otimes dg) (\Gamma_A)$. Clearly, for a classical vector field $V$, $\Gamma_V \equiv 0$ by Leibniz's rule.

It is easy to verify from the coordinate-change invariance that the coefficients $A^i$'s and $A^{ij}$'s transform under the change of coordinates from $(x^i)$ to $(\tilde x^j)$ by the following rule (e.g., \cite[Section V.4]{IW89}),
\begin{equation}\label{2-tangent-tsfm}
  \tilde A^i = \frac{\partial \tilde x^i}{\partial x^j} A^j + \frac{\partial^2 \tilde x^i}{\partial x^j \partial x^k} A^{jk}, \quad \tilde A^{ij} = \frac{\partial \tilde x^i}{\partial x^k} \frac{\partial \tilde x^j}{\partial x^l} A^{kl}.
\end{equation}
The formal generator of It\^o SDE \eqref{SDE-mfld} is given by,
\begin{equation}\label{generator}
  A^X_t = \mathfrak{b}^i(t) \frac{\partial}{\partial x^i} + \frac{1}{2} \sum_{r=1}^N \sigma^i_r(t) \sigma^j_r(t) \frac{\partial^2}{\partial x^i\partial x^j},
\end{equation}
which is a time-dependent second-order elliptic operator due to the change-of-coordinate formula \eqref{modified-drift-tsfm}.

Denote by $\C_{t_0}$ the subspace of $C([t_0,\infty),M^*)$ consisting of all paths always staying in $M$ or eventually stopped at $\pt_M$. That is, $\omega\in\C_{t_0}$ if and only if there exists $\tau(\omega)\in (t_0,\infty]$ such that $\omega(t)\in M$ for $t\in[t_0,\tau(\omega))$ and $\omega(t) = \pt_M$ for $t\in [\tau(\omega),\infty)$. Let $\B(\C_{t_0})$ be the $\sigma$-field generated by Borel cylinder sets. Let $X(t) : \C_{t_0}\to M^*, X(t,\omega) = \omega(t), t\ge t_0$ be the coordinate mapping. For each $t\in\R$, define a sub-$\sigma$-field by $\B_t = \sigma\{X(s): t_0 \le s\le t_0\vee t\}$. Then $\{\B_t\}_{t\in\R}$ is a past filtration of $\B(\C_{t_0})$ and $\tau$ is a $\{\B_t\}$-stopping time.

\begin{definition}[Martingale problems on manifolds, {\cite[Definition 1.3.1]{Hsu02}}]\label{mtgl-prob}
  Given a time-dependent second-order elliptic operator $A=(A_t)_{t\ge t_0}$, %A probability measure $P$ on the standard filtered space $(\C_{t_0}, \B(\C_{t_0}), \{\B_t\}_{t\in\R})$ is called a solution to the martingale problem associated with $A$, if
%  \begin{equation*}
%    M^f(\omega,t) := f(\omega(t)) - f(\omega(t_0)) - \int_{t_0}^t A_s f(\omega(s)) ds, \quad t\in[0,\tau(\omega)),\ \omega\in\C_{t_0},
%  \end{equation*}
%  is a real-valued continuous $\{\B_t\}$-martingale, for every $f\in C^\infty(M)$.
  a solution to the martingale problem associated with $A$ is a triple $X$, $(\Omega,\F,\P)$, $\{\Pred_t\}_{t\in\R}$, where
  \begin{itemize}
    \item[(i)] $(\Omega,\F,\P)$ is a probability space, and $\{\Pred_t\}_{t\in\R}$ is a past filtration of $\F$ satisfying the usual conditions,
    \item[(ii)] $X:\Omega\to \C_{t_0}$ is an $M^*$-valued $\{\Pred_t\}$-semimartingale, and
    \item[(iii)] for every $f\in C^\infty(\R\times M)$, the process
    $M^{f,X}(t) := f(t,X(t)) - f(t_0,X(t_0)) - \int_{t_0}^t (\frac{\pt}{\pt t}+A_s) f(s,X(s)) ds$, $t\in[t_0,\tau(X))$,
    is a real-valued continuous $\{\Pred_t\}$-martingale.
    \end{itemize}
    The process $\{X(t)\}_{t\in[t_0,\tau(X))}$ is called an $M$-valued $\{\Pred_t\}$-diffusion process with generator $A$ (or simply an $A$-diffusion).
\end{definition}

The uniqueness in the sense of probability law for both MDEs and martingale problems can be defined in a similar fashion to Definition \ref{unique-Nelson}. Note that unlike It\^o SDEs or MDEs, the definition for martingale problems does not rely on linear connections.

When provided with a linear connection on $M$, one can see, in the same way as in Stroock and Varadhan's theory (e.g., \cite[Section 5.4]{KS91}), that the existence of a solution to the martingale problem associated with $A^X=(A^X_t)_{t\ge t_0}$ in \eqref{generator} is equivalent to the existence of a weak solution to the It\^o SDE \eqref{SDE-mfld}, and also equivalent to the existence of a solution to MDEs \eqref{Nelson-SDE-mfld}; their uniqueness in law of are also equivalent.

%\begin{proposition}[Equivalence of well-posedness]\label{eq-sol}
%  (i) The existence of a solution $P$ to the martingale problem associated with $\{A_t\}$ is equivalent to the existence of a weak solution $(x, W)$, $(\Omega,\F,\P)$, $\{\Pred_t\}_{t\in\R}$ to the equation \eqref{SDE-mfld}, and also equivalent to the existence of a weak solution $\hat x$, $(\hat\Omega,\hat\F,\hat\P)$, $\{\hat\Pred_t\}_{t\in\R}$ to the equation \eqref{Nelson-SDE-mfld}. The three solutions are related by $P = \P\circ x^{-1}$ and $x = \hat x$, $(\Omega,\F,\P) = (\hat\Omega,\hat\F,\hat\P)$, $\{\Pred_t\}_{t\in\R} = \{\hat\Pred_t\}_{t\in\R}$.
%
%  (ii) the uniqueness of the solution $P$ to the martingale problem associated with $\{A_t\}$ is equivalent to uniqueness in the sense of probability law for the equation \eqref{SDE-mfld} as well as equation \eqref{Nelson-SDE-mfld}.
%\end{proposition}
%
%\begin{proposition}
%  Let $A=\{A_t\}$ be a smooth (time-dependent) second-order tangent vector of in \eqref{generator}. Then there exists a unique probability measure $P$ on the $(\C, \B(\C), \{\B_t\}_{t\ge0})$ that solves the martingale problem associated with $\{A_t\}$.
%\end{proposition}

%Hence, for each smooth It\^o vector field $(b,\sigma)$, there exists a process on $M$ that solves the It\^o SDE \eqref{SDE-mfld} or the Nelson-type SDE \eqref{Nelson-SDE-mfld}, which is unique in the sense of probability law.

\subsection{The second-order tangent bundle}\label{sec-2-4}

As we have seen, the modified drift $\mathfrak{b}$ in \eqref{modified-drift} is not a vector field. Is $\mathfrak{b}$ a section (and, in the affirmative, of what)? In fact, it is not a section of any bundle, as its changes-of-coordinate formula \eqref{modified-drift-tsfm} involves $\sigma$. But if we look at the formal generator $A^X$ in \eqref{generator}, or the pair $(\mathfrak{b}, \sigma\circ\sigma^*)$ of its coefficients, then we can construct a bundle whose structure group is governed by the changes-of-coordinate formulae \eqref{2-tangent-tsfm}, so that the sections are just second-order operators.

We denote by $\Sym^2(\R^d)$ the space of all symmetric $(2,0)$-tensors on $\R^d$, and by $\Sym^2_+(\R^d)$ the subspace of it consisting of all positive semi-definite $(2,0)$-tensors. Also denote by $\L(\R^n,\R^d)$ the space of all linear maps from $\R^n$ to $\R^d$.

\begin{definition}[The second-order tangent bundle]
  (i). \cite[Definition 7.14]{Gli11} The It\^o group $G_I^d$ is the Cartesian product (but not direct product of groups) $\GL(d,\R) \times \L(\R^d\otimes\R^d,\R^d)$ equipped with the following binary operation:
  \begin{equation*}
    (g_2, \kappa_2) \circ (g_1, \kappa_1) = (g_2\circ g_1, g_2\circ \kappa_1 + \kappa_2\circ (g_1\otimes g_1)),
  \end{equation*}
  for all $g_1, g_2 \in \GL(d,\R)$, $\kappa_1, \kappa_2\in \L(\R^d\otimes\R^d,\R^d)$.

  (ii). The left group action of $G_I^d$ on $\R^d \times \Sym^2(\R^d)$ is defined by
  \begin{equation}\label{left-action}
    (g, \kappa)\cdot(\mathfrak b, a) = (g\mathfrak b + \kappa a, (g\otimes g) a),
  \end{equation}
  for all $(g, \kappa) \in G_I^d$, $\mathfrak b\in \R^d$, $a\in\Sym^2(\R^d)$.

  (iii). The second-order tangent bundle $(\mathcal T^O M, \tau^O_M, M)$ is the fiber bundle with base space $M$, typical fiber $\R^d \times \Sym^2(\R^d)$, and structure group $G_I^d$.

  (iv). The fiber $\mathcal T^O_q M$ at $q\in M$ is called second-order tangent space to $M$ at $q$. An element $(\mathfrak b, a)_q\in \mathcal T^O_q M$ is called a second-order tangent vector at $q$. A (global or local) section of $\tau^O_M$ is called a second-order vector field.

  (v). Denote by $\mathcal T^E M$ the subbundle of $\mathcal T^O M$ consisting of all elements $(\mathfrak b, a)_q\in \mathcal T^O_q M$, $q\in M$, with $a_q$ a positive semi-definite $(2,0)$-tensors. Let $\tau^E_M = \tau^O_M|_{\mathcal T^E M}$. We call $(\mathcal T^E M, \tau^E_M, M)$ the second-order elliptic tangent bundle.
\end{definition}

%\begin{definition}[Second-order tangent vectors, {\cite[Definition 6.3]{Eme89}}, {\cite[Definition 2.74]{Gli11}}]
%  A second-order tangent vector at $q\in M$ is a differential operator at $q$, of order at most 2, with no constant term. The set of all second-order tangent vectors at $q\in M$, denoted by $\mathcal T^O_q M$, is a vector space called the second-order tangent space to $M$ at $q$. A second-order tangent vector field is a smooth field of second-order tangent vectors. The set of all global second-order tangent vector field on $M$ is denoted by $\mathfrak X^{S}(M)$.
%\end{definition}

\begin{remark}
  (i). We indulge in some abuse of notions. For example, the second-order vector fields should not be confused with the semisprays which are sections of the double tangent bundle $T^2M$ (e.g., \cite[Section 1.4]{Sau89}, \cite[Section IV.3]{Lan99}).

  (ii). Some authors just defined second-order vector fields as second-order operators as in Definition \ref{2-operator} (\cite[Definition 6.3]{Eme89} or \cite[Definition 2.74]{Gli11}). As soon as we choose a frame for $\mathcal T^O M$, it will be clear that second-order vector fields are identified with second-order operators.

  (iii). The authors in \cite{BD90,Gli11} define a bundle which has the It\^o group as its structure group and has the pair $(\mathfrak{b}, \sigma)$ of coefficients in It\^o SDE \eqref{SDE-mfld} as its section. They name it It\^o's bundle and denote it as $\mathcal I M$. %A section of the It\^o bundle, like $(\mathfrak{b}, \sigma)$, are called an It\^o equation.
  The difference is that, in our formulation, the pair $(\mathfrak{b}, \sigma\circ\sigma^*)$ of coefficients of the generator of It\^o SDE \eqref{SDE-mfld} is a section of second-order elliptic tangent bundle $\tau^E_M$. The advantage of the bundle $\tau^E_M$ is that it is a natural generalization of tangent bundle to second-order and has a good geometric interpretation, as we will see in Proposition \ref{TS-TO}.

  (iv). Note that the typical fiber $\R^d \times \Sym^2(\R^d)$ of $\tau^O_M$ is a vector space of dimension $d+\frac{d(d+1)}{2}$. But $\tau^E_M$ is not a vector bundle, since its structure group $G_I^d$ is not a linear group (subgroup of general linear group). The typical fiber of $\tau^E_M$ is $\R^d \times \Sym^2_+(\R^d)$, which is not even a vector space, so that $\tau^E_M$ is not a vector bundle either. Indeed, we may call them quadratic bundles, just as the way they call It\^o's bundle in \cite[Chapter 4]{BD90}.

  (v). The It\^o's bundle $\mathcal I M$ defined in \cite[Definition 7.17]{Gli11} is the fiber bundle over manifold $M$, with fiber $\R^d \times \L(\R^N,\R^d)$ and structure group $G_I^d$ which acts on the fiber from the left by
  \begin{equation*}
    (g, \kappa)(\mathfrak b, \sigma) = \left( g\mathfrak b + \ts{\frac{1}{2}} \tr (\kappa\circ (\sigma\otimes\sigma)), g \circ \sigma \right),
  \end{equation*}
  for all $(g, \kappa) \in G_I^d$, $\mathfrak b\in \R^d$, $\sigma\in\L(\R^N,\R^d)$. For the same reason as $\mathcal T^O M$ or $\mathcal T^E M$, It\^o's bundle $\mathcal I M$ is not a vector bundle. There is a bundle homomorphism over $M$ from $\mathcal I M$ to $\mathcal T^E M$, which maps in fibers from $\mathcal I_q M$ to $\mathcal T^E_q M$, $q\in M$, by $(\mathfrak b, \sigma) \to (\mathfrak b, \sigma\circ\sigma^*)$. It is easy to see that this bundle homomorphism is also a subjective submersion. If we identify $g\in \GL(d,\R)$ with $(g,0) \in G_I^d$, then $\GL(d,\R)$ is a subgroup of $G_I^d$. We define the \emph{Stratonovich's bundle} $\mathcal S M$ to be the reduction of $\mathcal I M$ to the structure group $\GL(d,\R)$, that is, the fiber bundle over $M$, with fiber $\R^d \times \L(\R^N,\R^d)$ and structure group $\GL(d,\R)$ which acts on the fiber from the left by
  \begin{equation*}
    g(\mathfrak b, \sigma) = (g\mathfrak b, g \circ \sigma).
  \end{equation*}
  Unlike $\mathcal T^O M$ or $\mathcal I M$, Stratonovich's bundle $\mathcal S M$ is indeed a vector bundle, and the tangent bundle $T M$ is a vector subbundle of $\mathcal S M$. It can be expected that Stratonovich's bundle is a natural bundle to formulate Stratonovich SDEs. But, in this paper, we mainly focus on It\^o SDEs and their generators.
\end{remark}

It is natural to regard the differential operators
\begin{equation}\label{frame}
  \left\{ \vf{x^i}, \frac{\pt^2}{\pt x^j \pt x^k}: 1\le i\le d, 1\le j\le k \le d \right\}
\end{equation}
as a local frame of $\mathcal T^O M$ over the local chart $(U,(x^i))$ on $M$. %A second-order tangent vector field $A$ in \eqref{2-tangent-rep} is then a section of $\mathcal T^O M$.
In the sequel, we will usually shorten them by
\begin{equation*}
  \left\{ \pt_i,\ \pt_j\pt_k: 1\le i\le d, 1\le j\le k \le d \right\}.
\end{equation*}
We make the convention that $\pt_k\pt_j = \pt_j\pt_k$ for all $1\le j\le k \le d$. A second-order vector field $(\mathfrak b,a)$ is expressed in terms of this local frame by
$$(\mathfrak b,a) = \mathfrak{b}^i \pt_i + \ts{\frac{1}{2}a^{jk}} \pt_j\pt_k.$$
In this way, every second-order vector field can be regarded as a second-order operator and vice versa. In particular, the generator $A^X$ of an $M$-valued diffusion process $X$, for example the generator \eqref{generator} of the It\^o SDE, is a time-dependent second-order vector field, so that we can rewrite $A^X$ as $A^X_t = (\mathfrak b(t),(\sigma\circ\sigma^*)(t))$.

The tangent bundle $T M$ is a subbundle (but not a vector subbunddle) and also an embedded submanifold of $\mathcal T^O M$, as the bundle monomorphism
\begin{equation}\label{iota}
  \iota: (TM, \tau_M, M)\to\left(\mathcal T^O M, \tau^O_M, M\right), \quad v_q\mapsto (v,0)_q
\end{equation}
is also an embedding. However, there is no canonical bundle epimorphism from $\mathcal T^O M$ to $T M$ which is a left inverse of $\iota$ and linear in fiber. We call such a bundle epimorphism a \emph{fiber-linear bundle projection} from $\mathcal T^O M$ to $T M$. %since if the second component $a_p$ in an element $(\mathfrak b, a)_p \in \mathcal T^O_p M$ is not zero, the first component $\mathfrak b_p$ is not a tangent vector.
The choice of such a bundle epimorphism is exactly the choice of a linear connection on $M$. More precisely, we have the following connection correspondence properties, the first of which can also be found in \cite[Section 2.9]{Gli11}.

\begin{proposition}[Connection correspondence]\label{induced-conn}
  Any linear connection on $M$ induces a fiber-linear bundle projection from $\mathcal T^O M$ to $T M$. Conversely, any fiber-linear bundle projection from $\mathcal T^O M$ to $T M$ induces a torsion-free linear connection on $M$.
\end{proposition}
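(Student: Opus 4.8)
The plan is to treat the two directions separately, in each case reducing everything to the coordinate transformation rules already recorded in \eqref{2-tangent-tsfm} and \eqref{modified-drift-tsfm}, so that no genuinely new computation is required.

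For the forward implication, given a linear connection $\nabla$ with Christoffel symbols $\Gamma^i_{jk}$, I would define $P\colon \mathcal{T}^O M \to TM$ fiberwise in a chart $(U,(x^i))$ by $P(\mathfrak b, a)^i = \mathfrak b^i + \frac12 \Gamma^i_{jk} a^{jk}$. This is precisely the prescription recovering the genuine drift $b$ from the modified drift $\mathfrak b$ in \eqref{modified-drift}, so the entire content of well-definedness is the computation already performed for \eqref{modified-drift-tsfm}: combining the transformation of $\tilde{\mathfrak b}$ there with the classical inhomogeneous Christoffel law, the second-derivative terms $\frac{\partial^2\tilde x^l}{\partial x^j\partial x^k}$ cancel and $\mathfrak b^i + \frac12\Gamma^i_{jk}a^{jk}$ transforms as the components of a tangent vector. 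Fiber-linearity is immediate, since the fiber $\R^d\times\Sym^2(\R^d)$ is a vector space and $P$ is visibly a linear map on it; and the left-inverse property $P\circ\iota=\id_{TM}$ holds because $\iota(v)=(v,0)$ in \eqref{iota} kills the $a$-term, leaving $P(v,0)=v$. Hence $P$ is a fiber-linear bundle projection.

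For the converse I would start from an arbitrary fiber-linear bundle projection $P$ and read off its local form. Since the fiber map is linear on $\R^d\times\Sym^2(\R^d)$ and $a$ ranges over symmetric tensors, it must have the shape $P(\mathfrak b,a)^i = C^i_j \mathfrak b^j + \frac12 D^i_{jk} a^{jk}$ with $D^i_{jk}=D^i_{kj}$, and the left-inverse condition applied to $(\mathfrak b,0)$ forces $C^i_j=\delta^i_j$. The decisive step is then to impose that $P$ is a genuine bundle map, i.e. that its two local expressions agree on chart overlaps: setting $\tilde b=P(\tilde{\mathfrak b},\tilde a)$ equal to the vector transform of $b=P(\mathfrak b,a)$ and feeding in \eqref{modified-drift-tsfm} together with the tensorial transform of $a$ from \eqref{2-tangent-tsfm}, the coefficients of the arbitrary symmetric $a^{jk}$ must match. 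This forces exactly the inhomogeneous Christoffel transformation law on the $D^i_{jk}$, so $\Gamma^i_{jk}:=D^i_{jk}$ defines a linear connection on $M$.

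The conceptual heart of the statement, which I would flag explicitly, is \emph{why} the converse yields a \emph{torsion-free} connection: because $a^{jk}$ is symmetric, the pairing $D^i_{jk}a^{jk}$ only ever detects the symmetrization of $D$ in its lower indices, so the recovered $\Gamma^i_{jk}$ may—and must—be taken symmetric in $j,k$, which is precisely the vanishing of torsion. The only slightly delicate part of the write-up is the bookkeeping of the second-derivative terms in the two transformation computations, so that they cancel in the forward direction and reassemble into the Christoffel inhomogeneity in the converse; since \eqref{modified-drift-tsfm} has already isolated the offending $\frac{\partial^2\tilde x^l}{\partial x^j\partial x^k}$ contribution, this is routine rather than an obstacle.
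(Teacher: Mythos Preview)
Your proposal is correct and follows essentially the same approach as the paper: the forward direction is identical to the paper's $\varrho_\nabla$ in \eqref{varrho}, and for the converse both you and the paper write the projection in local form as $\mathfrak b^i + B(a)^i$ and read off connection coefficients from the symmetric bilinear part, with torsion-freeness coming from the symmetry of $a$. The only cosmetic difference is that the paper packages the converse by invoking the spray--connection correspondence from \cite[Section IV.3]{Lan99}, whereas you spell out the Christoffel transformation law directly from \eqref{2-tangent-tsfm} and \eqref{modified-drift-tsfm}.
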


\begin{remark}
  The connection correspondence is similar to the correspondence between horizontal subbundles of the tangent bundle of a vector bundle and connections on this vector bundle, cf. \cite[Section 3.1]{Sau89}.
\end{remark}

\begin{proof}
  Let $(\Gamma_{ij}^k)$ be the Christoffel symbols of a linear connection $\nabla$ on $M$. Define a projection by
  \begin{equation}\label{varrho}
    \varrho_\nabla : \mathcal T^O M \to T M, \quad (\mathfrak b, a)_q \mapsto \left( \mathfrak b^i + \ts{\frac{1}{2}} a^{jk} \Gamma^i_{jk}(q) \right) \pt_i\big|_q.
  \end{equation}
  Clearly, $\varrho_\nabla$ is linear in fiber and $\varrho_\nabla\circ \iota = \id_{TM}$. Conversely, let $\varrho: \mathcal T^O M\to T M$ be a fiber-linear bundle projection. Then, on each coordinate chart $(U,(x^i))$ around $q\in M$, there exists a diffeomorphism $B_U: U \to \L(\Sym^2(\R^d), \R^d)$, such that
  \begin{equation*}
    \varrho( \mathfrak b, a ) = \left( \mathfrak b^i + B_U(q)(a)^i \right) \pt_i\big|_q, \quad ( \mathfrak b, a ) \in \mathcal T^O_q M, q\in U.
  \end{equation*}
  The family of diffeomorphisms $(B_U)$ determines a spray and then a torsion-free linear connection on $M$ (see, e.g., \cite[Section IV.3]{Lan99}). The torsion-freeness follows from the symmetry of $B_U$'s.
\end{proof}

Observe that a group action of $\GL(d,\R)$ on $\Sym^2(\R^d)$ can be separated  from \eqref{left-action}, which is given by $g\cdot a = (g\otimes g) a$. Thus the second component $a$ of each element $( \mathfrak b, a ) \in \mathcal T^O_q M$ can be regarded as a $(2,0)$-tensor. Recall that we denote by $\Sym^2(TM)$ the bundle of $(2,0)$-tensors on $M$, then there is a canonical bundle epimorphism
\begin{equation}\label{varrho-hat}
  \hat\varrho: \mathcal T^O M \to \Sym^2(TM), \quad (\mathfrak b, a)_q \mapsto a_q,
\end{equation}
whose kernel is the image of $\iota$. Conversely, we also have a similar connection correspondence property for $\Sym^2(TM)$, as in Proposition \ref{induced-conn}. That is, a linear connection $\nabla$ on $M$ induces a fiber-linear bundle monomorphism from $\Sym^2(TM)$ to $\mathcal T^O M$, which is a right inverse of $\hat\varrho$ and given by
\begin{equation}\label{iota-hat}
  \hat\iota_{\nabla}: \Sym^2(TM) \to \mathcal T^O M, \quad a_q \mapsto a^{ij} \left( \pt_i\pt_j \big|_q - \Gamma^k_{ij}(q) \pt_k \big|_q \right) = a^{ij} \nabla^2_{\pt_i,\pt_j}\big|_q
\end{equation}
where $\nabla^2$ is the second covariant derivative \cite[Subsection 2.2.2.3]{Pet16} (which is also called the Hessian operator when acting on smooth functions \cite{Jos17}). In other words, $\nabla^2_{\pt_i,\pt_j}|_q = \hat\iota_{\nabla}(dx^i \odot dx^j |_q)$, where $\odot$ is the symmetrization operator on $T^2 M$.

Combining \eqref{iota} and \eqref{varrho} together, we have the following short exact sequence:
\begin{equation}\label{exact-seq}
  0 \longrightarrow TM \stackrel{\iota}{\longrightarrow} \mathcal T^O M \stackrel{\hat\varrho}{\longrightarrow} \Sym^2(TM) \longrightarrow 0.
\end{equation}
Proposition \ref{induced-conn} and \eqref{varrho-hat}, \eqref{iota-hat} imply that when a linear connection $\nabla$ is given, the sequence is also split, in the fiber-wise sense. The induced decomposition
\begin{equation}\label{dcpst}
  \mathcal T^O M = \iota(TM) \oplus \hat\iota_{\nabla} \left( \Sym^2(TM) \right) \cong TM \oplus \Sym^2(TM),
\end{equation}
where both the first direct sum $\oplus$ and the isomorphism $\cong$ are in the fiber-wise sense (but not bundle isomorphism and Whitney sum), while the second direct sum is the Whitney sum, and is given by
\begin{equation}\label{dcpst-tang}
  (\mathfrak b, a)_q = b^i \pt_i\big|_q + \ts{\frac{1}{2}} a^{ij} \nabla^2_{\pt_i,\pt_j}\big|_q \mapsto (b_q, a_q),
\end{equation}
for $b_q = ( \mathfrak b^i + \ts{\frac{1}{2}} a^{jk} \Gamma^i_{jk}(q) ) \pt_i |_q \in T_qM$. A similar short exact sequence as \eqref{exact-seq} holds with $\mathcal T^E M$ and $\Sym^2_+(TM)$ in place of $\mathcal T^O M$ and $\Sym^2(TM)$, respectively.

Now we introduce a subclass of semimartingales on manifolds which contains diffusions. We call the $M$-valued process $X= \{X(t)\}_{t\in[t_0,\tau)}$ an \emph{It\^o process}, if there exists a $\{\Pred_t\}$-adapted continuous $\mathcal T^E M$-valued process $\{(\mathfrak b, a)(t)\}_{t\in[t_0,\tau)}$ satisfying $(\mathfrak b, a)(t) \in \mathcal T^E_{X(t)} M$ for each $t\in[t_0,\tau)$, such that for every $f\in C^\infty(\R\times M)$,
$M^{f,X}(t) := f(t,X(t)) - f(t_0,X(t_0)) - \int_{t_0}^t (\vf t+ \A^X )f(s,X(s)) ds$, $t\in [t_0,\tau)$ is a real-valued $\{\Pred_t\}$-martingale, where
$\A^X_t = (\mathfrak b, a)(t) = \mathfrak{b}^i(t) \pt_i + \ts{\frac{1}{2}} a^{ij}(t) \pt_i\pt_j$. We call the process $\{(\mathfrak b, a)(t)\}_{t\in[t_0,\tau)}= \{\A^X_t\}_{t\in[t_0,\tau)}$ the random generator of $X$. A similar notion ``Brownian semimartingale'' is also used in the literature (e.g., \cite{Dri92}). If $X$ is a diffusion with generator $A^X_t = (\mathfrak b(t), a(t))$, then it is an It\^o process with random generator $\A^X_t = A^X_{(t,X(t))} = (\mathfrak b(t,X(t)), a(t,X(t)))$. The difference between It\^o processes and diffusions is that the randomness of the random generator of the former can not only appear on the base manifold $M$, but also on the fibers.

Then, we can define forward mean derivatives in a coordinate-free way, without relying on linear connections.

\begin{definition}[Mean derivatives]
  For an $M$-valued It\^o process $X= \{X(t)\}_{t\in[t_0,\tau)}$, we define its (forward) mean derivatives $(DX(t), QX(t))$ at time $t\in [t_0,\tau)$ by
  \begin{equation*}%\label{modified-derivative}
    (DX(t), QX(t)) = (\mathfrak{b}, a)(t) \in \mathcal T^E_{X(t)} M,
  \end{equation*}
  where $(\mathfrak{b}, a)$ is the random generator of $X$.
\end{definition}

Comparing with forward mean derivatives defined in local coordinates before, we have the following relations. The proof follows the lines of \cite[Lemma 9.4]{Gli11}.

\begin{lemma}
  Given an $M$-valued It\^o process $X= \{X(t)\}_{t\in[t_0,\tau)}$ and a coordinate chart $(U,(x^i))$ centered at $q\in M$.

  (i). In the event $\{X(t) \in U\}$, $QX(t)$ has the coordinate expression \eqref{quad-deriv} and
  \begin{equation*}
    (D X)^i(t) = \lim_{\e\to0^+} \E\left[ \frac{X^i(t+\e)-X^i(t) }{\e} \bigg| \Pred_t \right].
  \end{equation*}

  (ii). Given a linear connection $\nabla$ on $M$, we have, under the conditional probability $\P(\cdot | X(t) = q)$, that
  \begin{equation}\label{relation-two-drvtv}
    (D_\nabla X)^i(t) = (D X)^i(t) + \frac{1}{2} \Gamma^i_{jk}(X(t)) (QX)^{jk}(t).
  \end{equation}
\end{lemma}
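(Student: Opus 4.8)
The plan is to extract both coordinate formulae in part (i) from the defining martingale property of an It\^o process---namely that $M^{f,X}(t) = f(t,X(t)) - f(t_0,X(t_0)) - \int_{t_0}^t (\pt_t + \A^X)f(s,X(s))\,ds$ is a $\{\Pred_t\}$-martingale for every $f\in C^\infty(\R\times M)$, where $\A^X_s = \mathfrak b^i(s)\pt_i + \ts{\frac12}a^{ij}(s)\pt_i\pt_j$---by testing it against the coordinate functions. Since the $x^i$ are only defined on $U$, I would first replace them by functions in $C^\infty(M)$ agreeing with $x^i$ on a slightly smaller chart and, conditioning on $\{X(t)\in U\}$, use the continuity of $X$ to guarantee that the process stays in that chart on $[t,t+\e]$ for $\e$ small; in the $\e\to0^+$ limit this localization is harmless. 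Applying the martingale property to $f=x^i$ gives $\A^X x^i = \mathfrak b^i$, hence $\E[X^i(t+\e)-X^i(t)\,|\,\Pred_t] = \E[\int_t^{t+\e}\mathfrak b^i(s)\,ds\,|\,\Pred_t]$; dividing by $\e$ and letting $\e\to0^+$, the continuity of the random-generator process $s\mapsto\mathfrak b^i(s)$ yields $(DX)^i(t)=\mathfrak b^i(t)$, which is the stated formula.

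For the quadratic part I would apply the same identity to $f=x^ix^j$, for which $\A^X(x^ix^j)=\mathfrak b^i x^j + \mathfrak b^j x^i + a^{ij}$, and combine it with the elementary identity
$$\Delta X^i\,\Delta X^j = \Delta(X^iX^j) - X^i(t)\,\Delta X^j - X^j(t)\,\Delta X^i, \qquad \Delta Y := Y(t+\e)-Y(t).$$
Taking conditional expectations (the $\Pred_t$-measurable factors $X^i(t),X^j(t)$ pull out) gives
$$\E\!\left[\Delta X^i\Delta X^j\,\big|\,\Pred_t\right] = \E\!\left[\int_t^{t+\e}\!\Big(\mathfrak b^i(s)(X^j(s)-X^j(t)) + \mathfrak b^j(s)(X^i(s)-X^i(t)) + a^{ij}(s)\Big)ds\,\Big|\,\Pred_t\right].$$
The first two integrands are $O(s-t)$ near $s=t$, by continuity of $X$ and local boundedness of $\mathfrak b$, so after division by $\e$ they vanish as $\e\to0^+$, while the $a^{ij}$ term converges to $a^{ij}(t)=(QX)^{ij}(t)$; this is exactly \eqref{quad-deriv}.

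For part (ii) the key observation is that the stated relation is precisely the assertion $D_\nabla X = \varrho_\nabla(DX,QX)$, where $\varrho_\nabla$ is the fiber-linear bundle projection \eqref{varrho} attached to $\nabla$. To see this I would work in normal coordinates at $q$ and condition on $\{X(t)=q\}$: there Definition \ref{def-vector-deriv} computes $(D_\nabla X)^i(t)$ by the very limit treated in part (i), so $(D_\nabla X)^i(t)=(DX)^i(t)=\mathfrak b^i(t)$; moreover $\Gamma^i_{jk}(q)=0$ in normal coordinates, so the right-hand side $(DX)^i+\ts{\frac12}\Gamma^i_{jk}(QX)^{jk}$ likewise equals $\mathfrak b^i(t)$ at $q$, and the two sides agree in this one chart. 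Finally, both sides transform as genuine tangent vectors under change of coordinates---the left-hand side because $D_\nabla X$ is vector-valued by construction, and the right-hand side because it is the image under $\varrho_\nabla$ of the second-order tangent vector $(DX,QX)$, whose components obey \eqref{2-tangent-tsfm} (this is exactly the well-definedness of $\varrho_\nabla$ from Proposition \ref{induced-conn}). Since two vector fields agreeing in one coordinate system agree in all, \eqref{relation-two-drvtv} follows.

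I expect the only genuine difficulty to lie in the analytic bookkeeping of part (i): justifying the interchange of the $\e\to0^+$ limit with the conditional expectation---where the $L^1$-existence hypothesis built into the definitions, together with local boundedness of the continuous random generator, should supply the required domination---and making the cutoff of the coordinate functions rigorous. Part (ii) then carries no further analytic content and is purely a matter of matching transformation laws.
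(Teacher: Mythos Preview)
The paper does not actually give a proof here; it simply remarks that ``the proof follows the lines of \cite[Lemma 9.4]{Gli11}.'' Your argument is the natural way to fill in those details and is essentially correct: test the defining martingale property of an It\^o process against the coordinate functions $x^i$ and $x^ix^j$ for part (i), and for part (ii) verify the identity in normal coordinates at $q$ and then invoke that both sides are genuine tangent vectors (the right-hand side being $\varrho_\nabla(DX,QX)$).

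One small imprecision worth tightening: in the quadratic computation you assert that the cross terms $\mathfrak b^i(s)\bigl(X^j(s)-X^j(t)\bigr)$ are $O(s-t)$ ``by continuity of $X$''. Pathwise continuity only gives $o(1)$ as $s\downarrow t$ (a semimartingale increment is typically of order $\sqrt{s-t}$, not $s-t$). The conclusion is unaffected, since for any continuous $g$ with $g(t)=0$ one has $\tfrac{1}{\e}\int_t^{t+\e}g(s)\,ds\to 0$; it is the averaging, together with the $L^1$ convergence hypothesis built into the definition of mean derivatives, that kills these terms---not a pointwise rate.
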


It follows from \eqref{relation-two-drvtv} that the map $\varrho_\nabla$ in \eqref{varrho} acts on the generator $A^X$ of a diffusion $X$ by
\begin{equation}\label{varrho-nabla-gen}
  \varrho_\nabla (A^X_{(t,X(t))}) = \varrho_\nabla (DX(t), QX(t)) = D_\nabla X(t)
\end{equation}

For a time-dependent second-order vector field $A_t = (\mathfrak{b}(t), a(t))$, we can take MDEs \eqref{Nelson-SDE-mfld} to set up a new type of MDEs by using the mean derivatives as follows:
\begin{equation}\label{Nelson-SDE-mfld-2}\left\{
  \begin{aligned}
    D X(t) &= \mathfrak b(t,X(t)), \\
    Q X(t) &= a(t,X(t)).
  \end{aligned} \right.
\end{equation}
Then, similarly to Definitions \ref{weak-sol-Nelson} and \ref{unique-Nelson}, we may also define solutions and uniqueness in law for MDEs \eqref{Nelson-SDE-mfld-2}. We call a solution of \eqref{Nelson-SDE-mfld-2} an \emph{integral process} of $A = (A_t)$. Note that the system \eqref{Nelson-SDE-mfld-2} does not rely on linear connections. The equivalence of the well-posedness of \eqref{Nelson-SDE-mfld-2} and the martingale problem in Definition \ref{mtgl-prob} is easy to verify. When a linear connection is specified, the system \eqref{Nelson-SDE-mfld-2} and martingale problem associated with $A^X$ in \eqref{generator} are both equivalent to the It\^o SDE \eqref{SDE-mfld} and MDEs \eqref{Nelson-SDE-mfld}.

\section{Stochastic jets}\label{sec-3}

In classical differential geometry, a tangent vector to a manifold may be defined as an equivalence class of curves passing through a given point, where two curves are equivalent if they have the same derivative at that point \cite[Chapter 3]{Lee13}. %There are other definitions which may be used instead, but the definition in terms of curves is perhaps the most intuitive.
This idea can be generalized to higher-order cases, which leads to the notion of jets. The jet structures allow us to translate a system of differential equations to a system of algebraic equations, and make it more intuitive to study the symmetries of systems of differential equations.

In this chapter we shall generalize these ideas to the stochastic case. We will first give an equivalent description to the second-order elliptic tangent bundle $\tau^E_M$ by constructing an equivalence relation on diffusions. Then we will define the stochastic jets and figure out the ``jet-like'' bundle structure involved in the space of stochastic jets. Finally, we shall see that the bundle structure is the appropriate platform to formulate SDEs intrinsically. In the next chapter, we will apply stochastic jets to study stochastic symmetries.

%Recall that there is a common approach to tangent vectors is to define an intrinsic equivalence relation on the set of smooth curves with the same starting point, which captures the idea of ``having the same velocity'', and to define a tangent vector as an equivalence class of curves.

\subsection{The stochastic tangent bundle}

Recall that a tangent vector can be represented as a equivalence classes of smooth curves that have the same velocity at the base point. This leads to the following equivalent definition of tangent bundle $TM$:
\begin{equation}\label{equiv-tang-bd}
  TM \cong \left\{[\gamma]_q: \gamma\in C^\infty_{(0,q)}(M), q\in M \right\},
\end{equation}
where $C^\infty_{(0,q)}(M)$ is the set of all smooth curves on $M$ that pass through $q$ at time $t=0$, and the equivalence relation is defined as $\gamma,\tilde\gamma\in C^\infty_{(0,q)}(M)$ are equivalent if and only if $(f\circ\gamma)'(0)=(f\circ\tilde\gamma)'(0)$ for every real-valued smooth function $f$ defined in neighborhood $q$. If we replace smooth curves by diffusion processes, and time derivatives by mean derivatives, then we get the following definition.

\begin{definition}[The stochastic tangent bundle]\label{st-tg-bd}
  Two $M$-valued diffusion processes $X=\{X(t)\}_{t\in[0,\tau)}$, $Y=\{Y(t)\}_{t\in[0,\sigma)}$ are said to be \emph{stochastically equivalent at $(t,q)\in\R\times M$}, if, almost surely, $X(t)=Y(t)=q$ and $D(f\circ X)(t) = D(f\circ Y)(t)$ for all $f\in C^\infty(M)$. The equivalence class containing $X$ is called the \emph{stochastic tangent vector} of $X$ at $q$ and is denoted by $j_{(t,q)} X$. When $t=0$, we denote $j_q X:= j_{(0,q)}X$ in short. Let $I_{(t,q)}(M)$ be the set of all $M$-valued diffusion processes starting from $q$ at time $t$. The \emph{stochastic tangent bundle} of $M$ is the set
  \begin{equation*}
    \mathcal T^S M = \{ j_q X: X\in I_{(0,q)}(M), q\in M \}.
  \end{equation*}
\end{definition}

Note that since $X, Y$ are $M$-valued diffusion processes, $f(X)$ and $f(Y)$ are real-valued It\^o processes, and hence their mean derivatives exists.

At this stage, we have not yet touched the jet-like formulation even though we used the jet-like notation $j_q X$. Indeed, if one follows strictly the definition of jet bundles over the trivial bundle $(\R\times M, \pi, \R)$, it is more rational to use the time line $\R$ as ``source'' and the manifold $M$ as ``target'' (cf. \cite[Example 4.1.16]{Sau89}). But here we just assign the ``target'' to the manifold $M$, because, roughly speaking, one can talk about the velocity of a smooth curve at a moment $t$, but not about the generator of a diffusion at a moment $t$. Instead, we can talk about the generator of a diffusion at a position $q\in M$. Later on, we will define the ``bona fide'' stochastic jet space which possess the time line $\R$ as ``source'' and the manifold $M$ as ``target''.

%As a contrast, we call a $J^1 \pi$-valued random element a random 1-jet. the set of all such random 1-jets by $(J^1 \pi)^{\Omega}$.

Similarly to the one-to-one correspondence between tangent space and space of equivalence classes of smooth curves, we have the following:

\begin{proposition}\label{TS-TO}
  There is a one-to-one correspondence between the stochastic tangent bundle $\mathcal T^S M$ and the second-order elliptic tangent bundle $\mathcal T^E M$.
\end{proposition}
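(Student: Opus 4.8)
The plan is to exhibit an explicit bijection $\Phi \colon \mathcal T^S M \to \mathcal T^E M$ sending a stochastic tangent vector to the random generator of (any representative of) its equivalence class, evaluated at the base point. Concretely, for $X \in I_{(0,q)}(M)$ I would set
\begin{equation*}
  \Phi(j_q X) := (DX(0), QX(0)) = A^X_{(0,q)} \in \mathcal T^E_q M,
\end{equation*}
which lands in the \emph{elliptic} part $\mathcal T^E_q M$ precisely because $QX(0)$ is positive semi-definite by the very definition of the quadratic mean derivative. The whole proof then reduces to checking that $\Phi$ is well defined (independent of the representative), injective, surjective, and fiber-preserving.

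First I would observe that It\^o's formula identifies the mean derivative of a composed process with the action of the generator: for any $f \in C^\infty(M)$ one has $D(f\circ X)(0) = A^X_{(0,q)} f$, where $A^X_{(0,q)}$ is the second-order operator with coefficients $(DX(0), QX(0))$. Moreover, the product rule for $A^X_{(0,q)}$, which follows directly from It\^o's formula, shows that $f \mapsto D(f\circ X)(0)$ is exactly a second-order derivation at $q$ in the coordinate-free sense described after Definition \ref{2-operator}, with squared field operator $\Gamma_{A^X_{(0,q)}} = QX(0)$. Consequently the defining equivalence relation ``$D(f\circ X)(0) = D(f\circ Y)(0)$ for all $f$'' is literally the statement $A^X_{(0,q)} = A^Y_{(0,q)}$, and evaluating on the coordinate functions $f = x^i$ and $f = x^i x^j$ recovers $DX(0)$ and $QX(0)$ respectively (via $A^i = A(x^i)$ and $A^{ij} = A(x^ix^j) - x^i A(x^j) - x^j A(x^i)$). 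This single observation simultaneously yields well-definedness and injectivity: $j_q X = j_q Y$ in $\mathcal T^S M$ if and only if $\Phi(j_q X) = \Phi(j_q Y)$ in $\mathcal T^E_q M$.

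For surjectivity I would go the other way and realize an arbitrary $(\mathfrak b, a)_q \in \mathcal T^E_q M$ by a genuine diffusion. I would first extend $(\mathfrak b, a)_q$ to a smooth, time-independent second-order elliptic vector field $A$ on a neighborhood of $q$ with $A_{(0,q)} = (\mathfrak b, a)_q$ (for instance taking the coefficients constant in a fixed chart), and then invoke the existence theory for the associated martingale problem, which, as recalled after Definition \ref{mtgl-prob} in the spirit of Stroock--Varadhan, produces an $A$-diffusion $X$ with $X(0) = q$ on a positive lifetime. By construction $\Phi(j_q X) = A_{(0,q)} = (\mathfrak b, a)_q$. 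Since $\Phi$ manifestly sends the classes based at $q$ into $\mathcal T^E_q M$, the bijection is fiber-wise and hence a genuine correspondence of quadratic bundles over $M$.

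The main obstacle is the surjectivity step: it is the only place where an existence statement, rather than a purely algebraic identity, is required. The delicate points there are that the target tensor $a$ is only positive \emph{semi}-definite, so the construction must tolerate degenerate ellipticity (one picks $\sigma$ with $a = \sigma\circ\sigma^*$, e.g. a symmetric square root, and solves the corresponding It\^o equation), and that the diffusion need only exist up to a positive stopping time, which is harmless since $j_q X$ depends only on the germ of $X$ at $t=0$. Everything else is a direct consequence of It\^o's formula together with the coordinate-free characterization of second-order derivations.
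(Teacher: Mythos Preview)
Your proposal is correct and follows essentially the same approach as the paper: define $\Phi(j_qX)=A^X_{(0,q)}=(DX(0),QX(0))$, with inverse given by extending $(\mathfrak b,a)_q$ to a global section $A$ and taking the class $j_qX^A$ of an $A$-diffusion started at $q$. You are simply more explicit than the paper about why well-definedness and injectivity reduce to It\^o's formula, and about the degenerate-ellipticity and lifetime issues in the surjectivity step, all of which is fine.
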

\begin{proof}
  For an $M$-valued diffusion process $X\in I_{(0,q)}(M)$, $q\in M$, we denote by $A^X$ its generator. Then the map $j_q X \mapsto A^X_{(0,q)} = (DX(0), QX(0))$ defines a one-to-one correspondence between $\mathcal T^S M$ and $\mathcal T^E M$. The inverse map is $A_q = (\mathfrak b,a)_q \mapsto j_q X^A$, where $A$ is a section of $\mathcal T^E M$ (i.e., an elliptic second-order operator) smoothly extending the element $A_q\in \mathcal T^E_q M$, and $X^A\in I_{(0,q)}(M)$ is a diffusion processes having $A$ as its generator.
\end{proof}

Therefore, the stochastic tangent bundle $\mathcal T^S M$ admit a smooth structure which makes it to be a smooth manifold diffeomorphic to $\mathcal T^E M$, and hence it is a bona fide fiber bundle over $M$. In the sequel, we will identify $\mathcal T^S M$ with $\mathcal T^E M$ without ambiguity. And the projection map from $\mathcal T^S M$ to $M$ will be denoted by $\tau^S_M$, that is, $\tau^S_M(j_q X) = q$ for any $j_q X\in\mathcal T^S M$.

\begin{definition}[Canonical coordinate system on $\mathcal T^S M$]
  Let $(U, (x^i))$ be an coordinate system on $M$. The induced canonical coordinate chart $(U^{(1)}, x^{(1)})$ on $\mathcal T^S M$ is defined by
  \begin{equation*}
    U^{(1)} := \left\{ j_q X: q \in U, X\in I_{(0,q)}(M) \right\}, \quad x^{(1)} := (x^i, D^i x, Q^{jk} x),
  \end{equation*}
  where $x^i(j_q X) = x^i(q)$, $D^i x(j_q X) = (DX)^i(0)$ and $Q^{jk} x(j_q X) = (Q X)^{jk}(0)$.
\end{definition}

Our slightly ambiguous notations $D^i x$ and $Q^{jk} x$ are chosen so as to avoid the worse one $Qx^{jk}$.

When a linear connection $\nabla$ is provided, we can also define the coordinates via the $\nabla$-mean derivative $D_\nabla$ instead of $D$, as follows:
\begin{equation*}
  D^i_\nabla x(j_q X) := (D_\nabla X)^i(0).
\end{equation*}
Then, $x^{(1)}_\nabla := (x^i, D^i_\nabla x, Q^{jk} x)$ also forms a coordinate system on $\mathcal T^S M$, which we call the $\nabla$-canonical coordinate system. It follows from relation \eqref{relation-two-drvtv} that
\begin{equation}\label{partial-coord}
  D_\nabla^i x = D^ix + \ts{\frac{1}{2}} (\Gamma^i_{jk}\circ x) Q^{jk} x.
\end{equation}

Using the identification of elements $j_q X \in \mathcal T^S_q M$ and $(\mathfrak b,a)_q \in \mathcal T^E_q M$ via Proposition \ref{TS-TO}, as well as their relations with the element $(b_q, a_q)\in TM \oplus \Sym^2(TM)$, via \eqref{dcpst-tang}, we have $D^i x(j_q X) = \mathfrak b^i$, $D^i_\nabla x(j_q X) = b^i = \mathfrak b^i + \ts{\frac{1}{2}} a^{jk} \Gamma^i_{jk}(q)$ and $Q^{jk} x(j_q X) = a^{jk}$. In this way the fiber-linear bundle projection $\varrho_\nabla$ of \eqref{varrho} maps, under the canonical coordinates $(x,\dot x)$ on $TM$, as follows:
%\begin{equation}\label{varrho-2}
%  \varrho_\nabla : \mathcal T^S M \to T M, \quad (x^i, D^ix, Q^{jk} x) \mapsto \left( x^i,  D^ix + \frac{1}{2} \Gamma^i_{jk}(x) Q^{jk} x \right).
%\end{equation}
%If we use the canonical coordinates , then we have from \eqref{relation-two-drvtv} that,
\begin{equation}\label{varrho-2}
  \dot x^i \circ \varrho_\nabla (j_q X) = \left( D^ix + \ts{\frac{1}{2}} (\Gamma^i_{jk}\circ x) Q^{jk} x \right)(j_q X) = D^i_\nabla x(j_q X),
\end{equation}
so that $D_\nabla^ix = \dot x^i \circ \varrho_\nabla$. Therefore, $(x^i, D_\nabla^ix)$ is a partial coordinate system on $\mathcal T^S M$ that coincides with $(x^i,\dot x^i)$ when restricted on $TM$. Moreover, the decomposition in \eqref{dcpst-tang} yields the following expressions for second-order vector fields:
\begin{equation}\label{2-tangent-equiv}
  (Dx, Qx) = D^i x \pt_i + \ts{\frac{1}{2}} Q^{jk} x \pt_j\pt_k = D_\nabla^i x \pt_i + \ts{\frac{1}{2}} Q^{jk} x \nabla^2_{\pt_j,\pt_k}.
\end{equation}

Similarly to Definition \ref{st-tg-bd}, we define a $\nabla$-dependent equivalence relation as follows:
\begin{definition}
  Two $M$-valued diffusion processes $X=\{X(t)\}_{t\in[0,\tau)}$, $Y=\{Y(t)\}_{t\in[0,\sigma)}$ are said to be $\nabla$-\emph{stochastically equivalent at $(t,q)\in\R\times M$}, if, almost surely, $X(t)=Y(t)=q$ and $D_\nabla X(t) = D_\nabla X(t)$. The equivalence class containing $X$ is called the $\nabla$-\emph{tangent vector} of $X$ at $q$ and is denoted by $j^\nabla_{(t,q)} X$. When $t=0$, we denote $j^\nabla_q X:= j^\nabla_{(0,q)}X$ for short.
\end{definition}

Then, similarly to Proposition \ref{TS-TO}, one can show that the tangent bundle $TM$ can be identified with the following set of equivalent classes of diffusions:
\begin{equation}\label{equiv-tang-bd-2}
  \left\{ j_q^\nabla X: X\in I_{(0,q)}(M), q\in M \right\},
\end{equation}
via $j_q^\nabla X\mapsto D_\nabla X(0)$. Under this identification, it follows from \eqref{varrho-nabla-gen} that $j_q^\nabla X = \varrho_\nabla(j_q X)$. Clearly, if we regard all smooth curves as special diffusions, then the partition determined by \eqref{equiv-tang-bd} is the restriction of the one determined by \eqref{equiv-tang-bd-2} to the set of all smooth curves.

\begin{remark}\label{symm-bundle-tang}
In presence of a linear connection $\nabla$ on $M$, one can easily follow Definition \ref{st-tg-bd} and Proposition \ref{TS-TO} with $D_\nabla$ in place of $D$, to verify the one-to-one correspondence between the set $\mathcal T^S M$ of equivalent classes and the Whitney sum $TM \oplus \Sym^2_+(TM)$, which brings back to the fiber-wise isomorphism \eqref{dcpst}. But since such kind of correspondence need to specify beforehand a linear connection, we still endow $\mathcal T^S M$ with the structure of $\mathcal T^E M$ instead of that of $TM \oplus \Sym^2(TM)$ in this paper, although the latter is also feasible and may provide easier calculations.
\end{remark}

\subsection{The stochastic jet space}

In classical jet theory, for the trivial bundle $(\R\times M, \pi, \R)$, there is a one-to-one correspondence between 1-jets and tangent vectors, and there is a canonical diffeomorphism between the first-order jet bundle $J^1 \pi$ and $\R\times TM$ \cite[Example 4.1.16]{Sau89}.

Now using similar ideas, we will introduce the ``bona fide'' stochastic jet space. The key is to modify the definition of stochastic tangent vectors, to involve the time line $\R$ as the ``source'' as well as to randomize the initial datum of the diffusion processes. Intuitively, an $M$-valued diffusion process $X$ can be regarded as a random ``section'' of the trivial ``bundle'' $(\R\times M, \pi, \R)$ which is merely continuous in time and depends on the sample point $\omega$.

For a metric space $(F,d)$, we denote by $L^0(\Omega, F)$ the quotient space of all $F$-valued random elements, by the following equivalence relation: two random elements are equivalent if and only if they are identical almost surely. We endow $L^0(\Omega, F)$ with the topology of the following $\P$-essential metric (cf. \cite[Section 43]{Mun75}):
\begin{equation*}
  \rho(\xi,\zeta) = \inf\{c>0: \P(d(\xi,\zeta)>c) =0 \} \wedge 1.
\end{equation*}

\begin{definition}%\label{st-jet-bd}
  Two $M$-valued diffusion processes $X=\{X(s)\}_{s\in[t,\tau)}$, $Y=\{Y(s)\}_{s\in[t,\sigma)}$ starting at time $t$, are said to be \emph{stochastically equivalent at $t\in \R$}, if, almost surely, $X(t)= Y(t)$ and $(DX(t), QX(t)) = (DY(t), QY(t))$. The equivalence class containing $X$ is called the \emph{stochastic jet} of $X$ at $t$, denoted by $j_t X$. Let $I_t(M)$ be the set of all $M$-valued diffusion processes starting at time $t$. Then the \emph{stochastic jet space} of $M$ is the set
  \begin{equation*}
    \mathcal J^S M = \{ j_t X: X\in I_t(M), t\in\R \}.
  \end{equation*}
  The functions $\pi^S_1$ and $\pi^S_{1,0}$, called stochastic source and target projections, are defined by
  \begin{equation*}
    \pi^S_1 : \mathcal J^S M \to \R, \quad j_t X \mapsto t,
  \end{equation*}
  and
  \begin{equation*}
    \pi^S_{1,0} : \mathcal J^S M \to \R\times L^0(\Omega, M), \quad j_t X \mapsto (t,X(t)).
  \end{equation*}
%  For $t\in\R$, we denote by $\mathcal J^S_t M$ the fiber $\pi_1^{-1}(\{t\})$.
\end{definition}

In the above definition, since $\pi_M\circ\phi = \id_M$, we have $\pi(Y) = \pi_M\circ\phi(X) = X$ a.s., that is, $X$ is the projection of $Y$.

To characterize the relation between $\mathcal J^S M$ and $\mathcal T^S M$ (or $\mathcal T^E M$), we need the following definitions.

\begin{definition}[Horizontal subspace]\label{horizontal-subspace}
  Let $(E,\pi_M, M)$ be a fiber bundle. The horizontal subspace of $L^0(\Omega,E)$ is defined by
  \begin{equation*}
    L^h(\Omega; \pi_M) := \{ \phi \circ \xi \in L^0(\Omega, E): \phi \text{ is a section of } \pi_M, \xi\in L^0(\Omega, M) \}.
  \end{equation*}
\end{definition}

%The following lemma will be used.
%\begin{lemma}\label{lemma-2}
%  Let $(E,\pi_M, M)$ be a fiber bundle. Then $L^h(\Omega; \pi_M \circ \tau_E)$ is a submanifold of $\mathcal T(L^h(\Omega; \pi_M))$.
%\end{lemma}
%\begin{proof}
%  Every element in $\mathcal T(L^h(\Omega; \pi_M))$ is of the form $[\eta \circ \mathbf X]$, where $\eta = \{\eta_s\}$ is a smooth family in $\Gamma(\pi_M)$ and $\mathbf X=\{\mathbf X_s\}$ is a smooth family in $L^0(\Omega, M)$. On the other hand, every element in $L^h(\Omega; \pi_M \circ \tau_E)$ is of the form $\phi\circ X$, where $X\in L^0(\Omega, M)$ and $\phi\in \Gamma(\pi_M \circ \tau_E)$. Then for each $q\in M$, $\phi(q) \in \mathcal T_{\tau_E(\phi(q))} E$. It follows that there exists a smooth curve $\theta_q = \{\theta_q(s)\}$ on $E$ such that $[\theta_q] = \phi(q)$. Now we let $\eta_s(q) = \theta_q(s)$ and $\mathbf X_s \equiv X$, then $\phi\circ X = [\theta]\circ X = [\eta \circ \mathbf X]$.
%\end{proof}

An element of the horizontal subspace $L^h(\Omega; \tau^E_M)$ of $L^0(\Omega, \mathcal T^E M)$ is then of the form $A \circ \xi$, where $A$ is a section of $\tau^E_M$ and $\xi\in L^0(\Omega, M)$. Such an element $A \circ \xi$ will be denoted by $A_\xi$. By the correspondence of $\mathcal T^S M$ and $\mathcal T^E M$, one can easily get the following equivalent definition for $L^h(\Omega; \tau^E_M)$,
\begin{equation*}
  L^h(\Omega; \tau^E_M) = L^h(\Omega; \tau^S_M) := \{j_{X(0)} X: X\in I_0(M) \} \subset L^0(\Omega, \mathcal T^S M).
\end{equation*}
The correspondence is given explicitly by
\begin{equation*}
  j_{X(0)} X = A^X_{X(0)} = (DX(0), QX(0)), \quad \text{or} \quad
  A_\xi = j_\xi X^{A_\xi}.
\end{equation*}
where $X^{A_\xi}$ is an $M$-valued diffusion with generator $A$ and with $X^{A_\xi}(0) = \xi$ a.s..

\begin{proposition}\label{diff}
  The stochastic jet space $\mathcal J^S M$ is trivial. More precisely, we have the homeomorphism
  $$\mathcal J^S M \cong \R\times L^h(\Omega; \tau^S_M),$$
  given by $j_t X \mapsto (t, j_{X(t)} (\theta_t X))$, for any $X\in I_t(M)$, where $\theta_t$ is the shift operator on $\C$, that is, $\theta_t \omega(\cdot) = \omega(\cdot +t)$.
\end{proposition}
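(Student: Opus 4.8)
The plan is to study the candidate map $\Phi: j_t X \mapsto (t, j_{X(t)}(\theta_t X))$ in three stages --- well-definedness, bijectivity, and bicontinuity --- the crux being a \emph{shift invariance} property of Nelson's mean derivatives. First I would record the basic shift relation. For $X\in I_t(M)$ write $X' := \theta_t X$, so that $X'(s) = X(s+t)$ and $X'(0) = X(t)$; shifting the past filtration by $\Pred'_s := \Pred_{s+t}$ turns $X'$ into an element of $I_0(M)$. Since $DX(t)$ and $QX(t)$ are defined through the $\e\to 0^+$ limits of conditional increments over $[t, t+\e]$, they are local in time and unaffected by the shift: for every $f\in C^\infty(M)$,
\begin{equation*}
D(f\circ X')(0) = \lim_{\e\to0^+}\E\Big[\tfrac{f(X(t+\e))-f(X(t))}{\e}\,\Big|\,\Pred_t\Big] = D(f\circ X)(t),
\end{equation*}
and likewise for the quadratic part, so that $(DX'(0), QX'(0)) = (DX(t), QX(t))$. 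Via Proposition \ref{TS-TO} this gives $j_{X(t)}(\theta_t X) = A^{X'}_{X'(0)} = (DX(t), QX(t))$, a horizontal element of $L^h(\Omega;\tau^S_M)$ over the random point $X(t)$. Hence $\Phi(j_t X)$ depends only on $X(t)$ and $(DX(t), QX(t))$, which is exactly the data defining the class $j_t X$; so $\Phi$ is well defined, and injective, since equality of images forces equality of $t$, of $X(t)=Y(t)$ a.s., and of the mean derivatives --- the definition of stochastic equivalence at $t$.

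Next I would construct the inverse using the opposite shift. Given $(t, A_\xi)\in \R\times L^h(\Omega;\tau^S_M)$, write $A_\xi = A\circ\xi$ with $A$ a section of $\tau^E_M$ and $\xi\in L^0(\Omega, M)$; by the well-posedness of the martingale problem (equivalently of the MDEs \eqref{Nelson-SDE-mfld-2}) recalled earlier, there is a diffusion $Z\in I_0(M)$ with generator $A$ and $Z(0)=\xi$, whence $j_\xi Z = A_\xi$. Setting $X := \theta_{-t} Z\in I_t(M)$ yields $\theta_t X = Z$ and $\Phi(j_t X) = (t, j_{X(t)} Z) = (t, A_\xi)$, proving surjectivity. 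Combined with injectivity, $\Phi$ is a bijection with inverse $\Psi:(t, A_\xi)\mapsto j_t(\theta_{-t} X^{A_\xi})$, which is independent of the chosen realization $X^{A_\xi}$ precisely because $j_t(\cdot)$ only sees the time-$t$ position and mean derivatives, and these are fixed by $A_\xi$.

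For the homeomorphism I would endow $\mathcal J^S M$ with the topology under which the local jet data $(t, X(t), DX(t), QX(t))$ vary continuously, $L^h(\Omega;\tau^S_M)$ carrying the subspace topology from $L^0(\Omega,\mathcal T^S M)$ under the $\P$-essential metric $\rho$. Bicontinuity of $\Phi$ and $\Psi$ then reduces to the $\rho$-continuity of the shift operators $\theta_{\pm t}$ together with the assignment $X\mapsto (X(t),DX(t),QX(t))$ into $L^0$, which enters only through the essential-metric coordinates already used to topologize $L^h$. \textbf{The main obstacle I anticipate} is the careful treatment of the filtration shift and the resulting identity $(DX'(0),QX'(0))=(DX(t),QX(t))$: one must verify that conditioning on $\Pred'_0=\Pred_t$ is genuinely compatible with the shifted process, so that the $L^1$-limits defining the mean derivatives are preserved, and that these operations are continuous for $\rho$ rather than merely pointwise. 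This is where the local-in-time nature of Nelson's derivatives does the real work; the remaining verifications are formal consequences of Proposition \ref{TS-TO} and the definition of $L^h(\Omega;\tau^S_M)$.
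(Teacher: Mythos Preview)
Your approach is essentially the same as the paper's, carried out in more detail. The paper's proof factors the map as the composite $\mathcal J^S M \cong \R\times \mathcal J^S_0 M \cong \R\times L^h(\Omega;\tau^S_M)$, the first arrow being $j_t X\mapsto (t,j_0(\theta_t X))$ and the second $j_0 X\mapsto j_{X(0)}X$ with inverse $A_\xi\mapsto j_0 X^{A_\xi}$; you collapse the two steps into one and make explicit the shift-invariance of the mean derivatives and the filtration, which the paper leaves implicit.
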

\begin{proof}
  The homeomorphism $\mathcal J^S M \cong \R\times \mathcal J^S_0 M$ is given by $j_t X \mapsto (t, j_0 (\theta_t X))$. The homeomorphism $\mathcal J^S_0 M \cong L^h(\Omega; \tau^S_M)$ is given by $j_0 X \mapsto j_{X(0)} X$, whose inverse map is $A_\xi \mapsto j_0 X^{A_\xi}$.
\end{proof}

%\begin{remark}\label{remark-1}
%  As we have seen in Proposition \ref{TS-TO}, $\mathcal T^S M$ one-to-one corresponds to $\mathcal T^O M$. Thus we also have the diffeomorphism
%  $$\mathcal J^S M \cong \R\times L^0(\Omega, \mathcal T^O M),$$
%  which is given by $j_t X \mapsto (t, DX(t), QX(t))$ for any $X\in I_t(M)$, since
%  $$j_{X(t)} (\theta_t X) \mapsto (D(\theta_t X)(0), \textstyle{\frac{1}{2}} Q(\theta_t X)(0)) = (DX(t), \textstyle{\frac{1}{2}} QX(t)).$$
%\end{remark}

\begin{definition}[Stochastic fibered space]
  (i) Given a fiber bundle $(E,\pi_M, M)$ with total space $E$, base space $M$ and typical fiber manifold $F$, the stochastic fibered space associated with it is the triplet $(E^S,\pi^S_M, M)$ where
  \begin{equation*}
    E^S := \{ (q, \xi): q\in M, \xi\in \hat L(\Omega, E_q) \},
  \end{equation*}
  $\pi^S_M: E^S\to M$ is the natural projection given by $\pi^S_M(q, \xi) = q$, and $\hat L(\Omega,F)$ is a subspace of $L^0(\Omega,F)$, with $E_q$ denoting the fiber of $\pi_M$ over $q$. The fiber bundle $E$ is called model bundle of $E^S$. There is a family of projections $\{\pi_\omega\}_{\omega\in\Omega}$ from the stochastic fiber manifold $E^S$ to its model bundle $E$, defined by
  \begin{equation*}
    \pi_\omega: E^S\to E, \quad (q, \xi) \mapsto (q, \xi(\omega)).
  \end{equation*}

  (ii) A global section of $(E^S,\pi^S_M, M)$ is called a random global section. A random local section is a map $\sigma: U \to E$ defined on some measurable subset $U\subset \Omega\times M$ and such that, for almost all $\omega\in \Omega$, $\sigma(\omega): U_\omega \to E$ is a local section of $(E,\pi_M, M)$, where $U_\omega = U\cap (\{\omega\}\times M)$.
\end{definition}

Note that a random global section is a random local section defined on all $\Omega\times M$.

It follows from Proposition \ref{diff} that the stochastic jet space $(\mathcal J^S M, \pi_1^S, \R)$ is a stochastic fibered space, whose associated model bundle is $(\R\times \mathcal T^S M, \pi_1, \R)$. Just like the first-order jet bundle $J^1 \pi$ which is diffeomorphic to $\R\times TM$, the model bundle $\R\times \mathcal T^S M$ is itself a jet bundle and also has two bundle structures, with base space $\R$ and $\R\times M$, respectively. The corresponding source and target projections are defined, respectively by
  \begin{equation*}
    \pi_1 : \R\times \mathcal T^S M \to \R, \quad (t, j_q X) \mapsto t,
  \end{equation*}
  and
  \begin{equation*}
    \pi_{1,0} : \R\times \mathcal T^S M \to \R \times M, \quad (t, j_q X)\mapsto (t,q).
  \end{equation*}
Moreover, we will denote the natural projection from $\R\times \mathcal T^S M$ to $\mathcal T^S M$ by $\pi_{0,1}$. This projection map is indeed a bundle homomorphism from $(\R\times \mathcal T^S M, \pi_{1,0}, \R\times M)$ to $(\mathcal T^S M, \tau^S_M, M)$, whose projection is the natural projection from $\R\times M$ to $M$, denoted by $\hat\pi$.

Similarly to Proposition \ref{diff}, we have the following diffeomorphisms for the model bundle $\R\times \mathcal T^S M$:
\begin{equation*}
  \{ j_{(t,q)} X: X\in I_{(t,q)}(M), t\in\R, q\in M \} \cong \R\times \mathcal T^S M \cong \R\times \mathcal T^E M,
\end{equation*}
which is given by
\begin{equation}\label{diff-1}
  j_{(t,q)} X \mapsto (t, j_q (\theta_t X)) \mapsto A^X_{(t,q)} = (t, DX(t), QX(t)),
\end{equation}
for any $X\in I_{(t,q)}(M)$, where $A^X$ is the generator of $X$ as a section of $\R\times \mathcal T^E M$ (i.e., a time-dependent elliptic second-order differential operator). Furthermore, the proof of Proposition \ref{TS-TO} allows us to find simply the inverse maps, especially for the second diffeomorphism. That is, for any $(t,A_q) = (t,\mathfrak b,a) \in \pi_{1,0}^{-1}(t,q)$,
\begin{equation}\label{diff-2}
  (t,A_q) = (t,\mathfrak b,a) \mapsto \left(t, j_q (\theta_t X^A)\right) \mapsto j_{(t,q)} X^A,
\end{equation}
where $A$ is a section of $\R\times \mathcal T^E M$ such that $A_{(t,q)} = A_q$, and $X^A\in I_{(t,q)}(M)$ is a diffusion process having $A$ as its generator.

The ``stochastic target'' of $\mathcal J^S M$, i.e., the trivial bundle $(\R\times L^0(\Omega, M), \pi^S, M)$, is another example of stochastic fibered spaces. Its model bundle is the trivial bundle $(\R\times M, \pi, \R)$. The graph of an $M$-valued stochastic process defined on a random time interval $[0,\tau)$ is a random (local) section of $(\R\times L^0(\Omega, M), \pi^S, \R)$.
%Each projection $\pi_\omega$ from $\mathcal J^S M$ to $\R\times \mathcal T^S M$ is a bundle homomorphism over $\R$ and also a surjective submersion.
The projection of $\pi_\omega$ on the targets from $\R\times L^0(\Omega, M)$ to $\R\times M$ is denoted by $\hat \pi_\omega$.

We may summarize how all these maps fit together by the following diagram:

\vspace{1mm}
\begin{center}
\begin{tikzcd}
\mathcal J^S M \cong \R\times L^h(\Omega; \tau^S_M) \arrow{rr}{\pi_\omega} \arrow{d}[swap]{\pi^S_{1,0}} \arrow{rdd}[crossing over]{\pi^S_1} & &
\R\times \mathcal T^S M \arrow{d}{\pi_{1,0}} \arrow{ldd}[crossing over, left]{\pi_1} \arrow{r}{\pi_{0,1}} & \mathcal T^S M\cong\mathcal T^E M \arrow{d}{\tau^S_M} \arrow[hookleftarrow]{r}{\iota} & T M \arrow{ld}{\tau_M} \\
\R\times L^0(\Omega, M) \arrow{rr}{\hat\pi_\omega} \arrow{dr}[swap]{\pi^S} & &
\R\times M \arrow{ld}{\pi} \arrow{r}{\hat\pi} & M & \\
& \R & & &
\end{tikzcd}
\end{center}
\vspace{1mm}

When a linear connection is specified on $M$, one can easily obtain, similarly to \eqref{diff-1}, the following homeomorphism:
\begin{equation*}
  \left\{ j^\nabla_t X: X\in I_t(M), t\in\R \right\}\cong \R\times L^h(\Omega; \tau_M), \quad j^\nabla_t X \mapsto \left(t, j^\nabla_{X(t)} (\theta_t X) \right),
\end{equation*}
and the following diffeomorphisms:
\begin{equation*}
  \left\{ j^\nabla_{(t,q)} X: X\in I_{(t,q)}(M), t\in\R, q\in M \right\} \cong \R\times \left\{ j_q^\nabla X: X\in I_{(0,q)}(M), q\in M \right\} \cong \R\times T M \cong J^1 \pi,
\end{equation*}
where the first two diffeomorphisms are given by
\begin{equation*}
  j^\nabla_{(t,q)} X \mapsto \left(t, j^\nabla_q (\theta_t X) \right) \mapsto (t, D_\nabla X(t)),
\end{equation*}
and the last one is due to the classical theory.

\subsection{Intrinsic formulation of SDEs}

With the classical machinery of jet structures, it is possible to translate differential equations into algebraic equations on jet bundle \cite{Sau89}. In this section, we follow this way to formulate intrinsic SDEs.

For a subset $S$ of the model bundle $\R\times \mathcal T^S M$ and $t\in\R$, we denote by $S_t$ the intersection of $S$ with the fiber $\{t\} \times \mathcal T^S M$.

\begin{definition}
  A stochastic differential equation on $M$ is a closed embedded submanifold $S$ of the model jet bundle $\R\times \mathcal T^S M$ with $S_0 \ne\emptyset$. A (local) solution of the stochastic differential equation $S$ is a triple $X$, $(\Omega,\F,\P)$, $\{\Pred_t\}_{t\ge0}$, where
  \begin{itemize}
    \item[(i)] $(\Omega,\F,\P)$ is a probability space, and $\{\Pred_t\}_{t\ge0}$ is a past filtration of $\F$ satisfying the usual conditions,
    \item[(ii)] $X = \{X(t)\}_{t\in[0,\tau)}$ is a $\{\Pred_t\}$-adapted $M$-valued diffusion process over $[0,\tau)$, where $\tau$ is a $\{\Pred_t\}$-stopping time, and
    \item[(iii)] almost surely $j_t X = (t, j_{X(t)} (\theta_t X)) \in S$ for every $t\in [0,\tau)$.
  \end{itemize}
\end{definition}

\begin{remark}
  (i). The condition that $S_0 \ne\emptyset$ is just for convenience, in order to set the initial time at $t=0$.

  (ii). There is an equivalent way to formulate the solution of a stochastic differential equation $S$. That is, a (local) solution is a pair $(P,\tau)$, where $P$ is a probability measure on $(\C,\B(\C),\{\B_t\})$ and $\tau$ is a $\{\B_t\}$-stopping time, such that for $P$-almost surely $\omega$, $j_t \omega = (t, j_{\omega(t)} (\theta_t \omega)) \in S$ for every $t\in [0,\tau(\omega))$.

%  (iii). If we follow strictly the lines of classical jet bundle theory, we may define a stochastic differential equation as a closed embedded submanifold $\tilde S$ of the first-order stochastic jet bundle $\C\times \R\times \mathcal T^S M$, and a (local) solution of it as a local random section $X$ defined on a subset $\Omega_0\subset \Omega$ with full measure, and over an random interval $[0,\tau)$ with $\tau$ a $\{\B_t\}$-stopping time, such that $(\omega, t, j_{X(\omega,t)} (\theta_t X)) \in \tilde S$ for every $\omega\in\Omega_0$ and $t\in [0,\tau(\omega))$.
\end{remark}

This definition does not look like the traditional definition of a stochastic differential equation, but we can see the relationship between the two by using coordinates. Since $S$ is a embedded submanifold of $\R\times \mathcal T^S M$, it admits a \emph{local defining function} in a neighborhood of each of its points \cite[Proposition 5.16]{Lee13}. That is, for a coordinate chart $(\R\times U^{(1)}, (t, x^{(1)}))$ of the point $(0,j_q X) \in S_0$, there is a function $\Theta: \R\times U^{(1)} \to\R^K$ where $K = \dim \mathcal T^S M - \dim S$, such that $S\cap (\R\times U^{(1)}) = \Theta^{-1}(0)$ and 0 is a regular value of $\Theta$. Then, the condition $j_t X = (t, j_{X(t)} (\theta_t X)) \in S$ before $X(t)$ leaves the neighborhood $U=\tau^S_M(U^{(1)})$ reads in local coordinates as
\begin{equation}\label{SDE-new}
  \Theta(t,x, Dx, Qx)(j_t X) = \Theta(t, X(t), DX(t), QX(t)) = 0,
\end{equation}
which defines a general MDE (in terms of mean derivatives). The use of a submanifold $S$ is therefore a way to distinguish the definition of the equation from a definition of its solutions.

As an example, the system of MDEs \eqref{Nelson-SDE-mfld-2} can be rewritten to the form \eqref{SDE-new} by setting the defining function
\begin{equation}\label{SDE-F}
  \Theta(t, x, Dx, Q x) = \left( D x - \mathfrak{b}(t,x), Q x- (\sigma\circ\sigma^*)(t,x) \right).
\end{equation}

So far we have not done anything but reformulate the basic problem of finding solutions of systems of stochastic differential equations in a more geometrical form, ideally suited to our investigation into symmetry groups thereof.

\section{Stochastic symmetries}\label{sec-4}

The symmetry group of a system of differential equations is the largest local group of transformations acting on the independent and dependent variables of the system with the property that it transform solutions of the system into other solutions \cite{Olv98}. In the stochastic case, we can proceed analogously.

All methods of this chapter work in the local case, that is, the vector fields are not necessarily complete and the bundle homomorphisms could be only locally defined.

\subsection{Prolongations of diffusions and bundle homomorphisms}\label{sec-4-1}

\begin{definition}[Prolongations of diffusions]\label{prog-diff}
  Let $X$ be an $M$-valued diffusion process defined on a stopping time interval $[t_0,\tau)$. The prolongation of $X$ is a $\mathcal T^S M$-valued process $jX$ defined by, for $\theta_t$ the shift operator,
  \begin{equation*}
    j X(t) = j_{X(t)} (\theta_t X), \quad t\in [t_0,\tau).
  \end{equation*}
\end{definition}

Note that $j_t X = (t, j_{X(t)} (\theta_t X)) = (t, j X(t))$. Thus the graph of the prolongation process $jX$ is nothing but the random section $j X$ of the stochastic jet space $\mathcal J^S M$. It is easy to see that if $X$ is an $M$-valued diffusion process, then $jX$ is a $\mathcal T^S M$-valued diffusion process.

Given two smooth manifolds $M$ and $N$, a bundle homomorphism $F$ from $(\R\times M, \pi, \R)$ to $(\R\times N, \rho, \R)$ is a projectable (or fiber-preserving) smooth map, which means it maps fibers of $\pi$ to fibers of $\rho$. Hence, there exist two smooth maps $F^0:\R\to\R$ and $\bar F:\R\times M \to N$ such that $F(t,q) = (F^0(t), \bar F(t,q))$. This leads to $\rho\circ F = F^0\circ \pi$ which is the original definition of bundle homomorphisms. We denote $F = (F^0, \bar F)$ and say that $F$ projects to $F^0$.

The following lemma shows that a bundle homomorphisms has the property that it always transforms diffusions into diffusions. One can find a proof of it in Lemma \ref{push-mixed-coord} or Corollary \ref{push-ext-generator}.

\begin{lemma}\label{push-diff-bh}
  Given a bundle homomorphism $F = (F^0, \bar F)$ from $(\R\times M, \pi, \R)$ to $(\R\times N, \rho, \R)$, where $F^0$ is a diffeomorphism, for every $M$-valued diffusion process $X = \{X(t)\}_{t\in[t_0,\tau)}$, the image of its graph (or its corresponding random local section) $\{(t,X(t)): t\in[t_0,\tau) \}$ by $F$, i.e.,
  \begin{equation*}
    \{ F(t, X(t)): t \in [t_0,\tau)\}
  \end{equation*}
  is almost surely the graph of a well-defined $N$-valued diffusion process $\tilde X$ given by
  \begin{equation}\label{F-X}
    \tilde X(s) = \bar F\left( (F^0)^{-1}(s), X((F^0)^{-1}(s)) \right),
    \quad s \in [F^0(t_0),F^0(\tau)).
  \end{equation}
\end{lemma}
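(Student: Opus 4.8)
The plan is to characterize $\tilde X$ through the martingale problem of Definition \ref{mtgl-prob}: I would exhibit a time-dependent second-order elliptic operator $\tilde A=(\tilde A_s)$ on $N$ and a past filtration on the $s$-clock for which, for every $g\in C^\infty(\R\times N)$, the process
\[
M^{g,\tilde X}(s)=g(s,\tilde X(s))-g(s_0,\tilde X(s_0))-\int_{s_0}^s\Big(\tfrac{\pt}{\pt v}+\tilde A_v\Big)g(v,\tilde X(v))\,dv,\qquad s_0:=F^0(t_0),
\]
is a martingale; then $\tilde X$ is by definition an $N$-valued diffusion with generator $\tilde A$. Throughout I use that $F^0$, being a diffeomorphism of $\R$ compatible with the forward interval $[F^0(t_0),F^0(\tau))$, is strictly increasing, so $s=F^0(t)$ is an admissible deterministic, monotone time change with inverse $t=(F^0)^{-1}(s)$.

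First I would reduce every test function on $N$ to one on $M$ by pullback. Given $g\in C^\infty(\R\times N)$, set $f:=g\circ F\in C^\infty(\R\times M)$, i.e. $f(t,q)=g(F^0(t),\bar F(t,q))$. By the defining formula \eqref{F-X} one has the pointwise identity $g(s,\tilde X(s))=f(t,X(t))$ along $s=F^0(t)$. Since $X$ is an $M$-valued diffusion with some generator $A^X=(A^X_t)$, the Dynkin decomposition of $f(t,X(t))$ holds with a $\{\Pred_t\}$-martingale remainder $M^{f,X}$. I would then apply the time change $t=(F^0)^{-1}(s)$ to this decomposition and change variables $u=(F^0)^{-1}(v)$ in the $du$-integral; because the change is deterministic and strictly increasing, the time-changed remainder $M^{f,X}((F^0)^{-1}(s))$ is again a martingale for the nondecreasing filtration $\tilde\Pred_s:=\Pred_{(F^0)^{-1}(s)}$.

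The core computation is to recognize the resulting integrand as $(\pt_v+\tilde A_v)g$ evaluated at $(v,\tilde X(v))$, which simultaneously defines and identifies $\tilde A$. Expanding $(\pt_t+A^X_t)f$ by the chain rule gives, from the $\pt_t$-action, $(F^0)'(t)\,\pt_s g+(\pt_t\bar F^\alpha)\,\pt_{y^\alpha}g$, while the action of $A^X_t=\mathfrak b^i\pt_i+\tfrac12 a^{ij}\pt_i\pt_j$ on $q\mapsto g(s,\bar F(t,q))$ produces, by the second-order transformation rule \eqref{2-tangent-tsfm} (equivalently It\^o's formula for $\bar F_t$), $\big(\mathfrak b^i\pt_i\bar F^\alpha+\tfrac12 a^{ij}\pt_i\pt_j\bar F^\alpha\big)\pt_{y^\alpha}g+\tfrac12 a^{ij}\pt_i\bar F^\alpha\,\pt_j\bar F^\beta\,\pt_{y^\alpha}\pt_{y^\beta}g$. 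Dividing by the Jacobian factor $(F^0)'(t)$ coming from the time change normalizes the $\pt_s g$ coefficient to $1$ and lets me read off the candidate coefficients
\[
\tilde{\mathfrak b}^\alpha=\frac{1}{(F^0)'}\Big(\pt_t\bar F^\alpha+\mathfrak b^i\pt_i\bar F^\alpha+\tfrac12 a^{ij}\pt_i\pt_j\bar F^\alpha\Big),\qquad
\tilde a^{\alpha\beta}=\frac{1}{(F^0)'}\,a^{ij}\,\pt_i\bar F^\alpha\,\pt_j\bar F^\beta .
\]
Positive semi-definiteness of $\tilde a$, hence ellipticity of $\tilde A$, is immediate since $a$ is positive semi-definite and $(F^0)'>0$.

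The step I expect to be the main obstacle is showing that $\tilde{\mathfrak b}$ and $\tilde a$ are genuine coefficients of an operator on $N$, i.e. are well-defined functions of $(s,y)$ rather than of the preimage $q$ at which they were computed. This is precisely the content of the mixed-order pushforward and is where the hypotheses on $F$ are really used: when $\bar F_t:M\to N$ is a (local) diffeomorphism — the situation relevant to symmetries, where $F$ is a bundle isomorphism over the diffeomorphism $F^0$ — one substitutes $q=\bar F_t^{-1}(y)$ and the displayed formulas are exactly the second-order pushforward of $A^X_t$, consistent with \eqref{2-tangent-tsfm}, so $\tilde A_s$ is a bona fide time-dependent elliptic second-order operator. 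Absent such fiberwise invertibility the coefficients may genuinely depend on $q$ and the image is only an It\^o process with a random generator, not a diffusion; it is exactly here that one invokes Lemma \ref{push-mixed-coord}/Corollary \ref{push-ext-generator}. The remaining verifications — adaptedness and continuity of $\tilde X$, and passage of the martingale property to all $g\in C^\infty(\R\times N)$ — are routine consequences of the corresponding properties of $X$ transported by the deterministic time change.
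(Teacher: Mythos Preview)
Your approach is essentially the same as one of the two routes the paper indicates, namely Corollary \ref{push-ext-generator}: pull back a test function $g$ on $\R\times N$ to $f=g\circ F$, invoke the martingale property for $X$ with its extended generator, perform the deterministic monotone time change $t=(F^0)^{-1}(s)$ with filtration $\tilde\Pred_s:=\Pred_{(F^0)^{-1}(s)}$, and read off the generator of $\tilde X$ from the resulting integrand. The paper's alternative route, Lemma \ref{push-mixed-coord}, is a direct local It\^o--SDE computation: it represents $X$ by a coordinate SDE, rewrites the stochastic integrals under the time change using a new Brownian motion $B(s)=\int_0^{(F^0)^{-1}(s)}\sqrt{(F^0)'(u)}\,dW(u)$, and then applies It\^o's formula to $\bar F$, arriving at exactly the coefficients $\tilde{\mathfrak b}^\alpha,\tilde a^{\alpha\beta}$ you display, but pathwise rather than through the martingale problem.

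Your ``main obstacle'' is well spotted and in fact exposes a mismatch between the statement (bundle \emph{homomorphism}) and the available proofs. Corollary \ref{push-ext-generator} assumes $F$ is a bundle \emph{isomorphism}, precisely so that $F^R_*(\pt_t+A)$ can be evaluated at $F^{-1}(s,y)$ and thus defines a genuine second-order vector field on $\R\times N$; Lemma \ref{push-mixed-coord} works for mere homomorphisms but delivers mean derivatives as functions of $(t,X(t))$, hence an It\^o process rather than a diffusion in the strict sense of Definition \ref{mtgl-prob}. Since every subsequent use of the lemma in the paper (symmetries, prolongations, canonical transformations) is in the isomorphism setting, this does not affect any downstream result---but you are right that the lemma as literally stated is only fully proved in that case.
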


As observed in Remark \ref{remark-1}, among all (deterministic) smooth maps from $\R\times M$ to $\R\times N$, the class of bundle homomorphisms is the only subclass that maps diffusions to diffusions.

\begin{definition}[Pushforwards of diffusions by bundle homomorphisms]
  We call the diffusion $\tilde X$ of Lemma \ref{push-diff-bh} the pushforward of $X$ by $F$, and write $\tilde X = F\cdot X$. When $M=N$ and $F$ is a bundle endomorphism on $(\R\times M, \pi, \R)$, we also call $F\cdot X$ the transform of $X$ by $F$.
\end{definition}

We now introduce the idea of stochastic prolongation whereby a bundle homomorphism may be extended to act upon the model jet bundle.

\begin{definition}[Stochastic prolongations of bundle homomorphisms]\label{prog-bundle-homo}
  Let $F$ be a bundle homomorphism from $(\R\times M, \pi, \R)$ to $(\R\times N, \rho, \R)$ projecting to a diffeomorphism $F^0:\R\to\R$. The stochastic prolongation of $F$ is the map $j F: \R\times \mathcal T^S M \to \R\times \mathcal T^S N$ defined by
  \begin{equation}\label{prog-morph}
    j F (j_{(t,q)} X) = j_{F(t,q)} (F\cdot X).
  \end{equation}
  %where $\tilde X = F\cdot X$ is the pushforward of $X$ by $F$.
\end{definition}

It is easy to see from \eqref{F-X} that if $j_{(t,q)} X = j_{(t,q)} Y$, then $j_{F(t,q)} (F\cdot X) = j_{F(t,q)} (F\cdot Y)$. Therefore, the map $jF$ is well defined. By letting $F = (F^0, \bar F)$, definition \eqref{prog-morph} can be rewritten in a more evident way:
\begin{equation}\label{prog-morph-1}
  j F (t, j_q(\theta_t X)) = \big( F^0(t), j_{\bar F(t,q)} \theta_{F^0(t)}(F\cdot X) \big).
\end{equation}

The following properties are easy to check.

\begin{corollary}\label{jf-bd-morph}
  (i) The map $jF: \pi_1 \to \rho_1$ is a bundle homomorphism projecting to $F^0$. \\
  (ii) The map $jF: \pi_{1,0} \to \rho_{1,0}$ is a bundle homomorphism projecting to $F$. \\
  (iii) $j(\id_{\R\times M}) = \id_{\R\times\mathcal T^S M}$. Let $F$ and $G$ be two bundle endomorphisms on $(\R\times M, \pi, \R)$ projecting to diffeomorphisms. Then $j(F\circ G) = jF \circ jG$.
\end{corollary}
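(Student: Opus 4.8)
The plan is to read statements (i) and (ii) directly off the defining formula \eqref{prog-morph-1}, and to reduce (iii) to the functoriality of the pushforward operation $X\mapsto F\cdot X$ of Lemma \ref{push-diff-bh}. For (i), applying $\rho_1$ to \eqref{prog-morph-1} gives $\rho_1(jF(t,j_q(\theta_t X))) = F^0(t) = F^0(\pi_1(t,j_q(\theta_t X)))$, so $\rho_1\circ jF = F^0\circ \pi_1$ and $jF$ sends each fiber $\{t\}\times\mathcal T^S M$ of $\pi_1$ into the fiber $\{F^0(t)\}\times\mathcal T^S N$ of $\rho_1$, i.e. it covers $F^0$. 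For (ii), applying $\rho_{1,0}$ to \eqref{prog-morph} gives $\rho_{1,0}(jF(j_{(t,q)}X)) = \rho_{1,0}(j_{F(t,q)}(F\cdot X)) = F(t,q) = F(\pi_{1,0}(j_{(t,q)}X))$, so $\rho_{1,0}\circ jF = F\circ\pi_{1,0}$ and $jF$ covers $F$. In both cases the projectability is immediate; the only point requiring an argument is the smoothness of $jF$.

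For the smoothness I would pass through the identification $\R\times\mathcal T^S M\cong\R\times\mathcal T^E M$ of \eqref{diff-1} and use that, under it, $jF$ acts on the generator $A^X_{(t,q)} = (t,DX(t),QX(t))$ by the deterministic transformation of its drift and diffusion coefficients dictated by It\^o's formula applied to $F$ (this is precisely the coefficient computation underlying Lemma \ref{push-diff-bh}). Since $F$ is smooth, the induced action on $(t,DX(t),QX(t))$ is smooth, which yields the smoothness of $jF$ and hence that it is a genuine bundle homomorphism.

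For (iii), the identity case is immediate: taking $F=\id_{\R\times M}$ one has $F^0=\id_\R$ and $\id\cdot X=X$, so $j(\id)(j_{(t,q)}X)=j_{(t,q)}X$. For the composition, unwinding both sides gives
$$(jF\circ jG)(j_{(t,q)}X) = jF\big(j_{G(t,q)}(G\cdot X)\big) = j_{F(G(t,q))}\big(F\cdot(G\cdot X)\big),$$
whereas $j(F\circ G)(j_{(t,q)}X) = j_{(F\circ G)(t,q)}\big((F\circ G)\cdot X\big)$. Since $(F\circ G)(t,q)=F(G(t,q))$ trivially, the claim reduces to the functoriality of the pushforward, $(F\circ G)\cdot X = F\cdot(G\cdot X)$.

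To verify this last identity I would chase the explicit formula \eqref{F-X}. Writing $F=(F^0,\bar F)$, $G=(G^0,\bar G)$, so that $(F\circ G)^0=F^0\circ G^0$ and $\overline{F\circ G}(t,q)=\bar F(G^0(t),\bar G(t,q))$, and setting $u=(G^0)^{-1}((F^0)^{-1}(s))$ so that $G^0(u)=(F^0)^{-1}(s)$, both $(F\circ G)\cdot X$ and $F\cdot(G\cdot X)$ evaluate at time $s$ to $\bar F\big((F^0)^{-1}(s),\bar G(u,X(u))\big)$, hence coincide. The genuine content of the corollary lies in the smoothness assertion of (i)--(ii), since the projection relations and the functoriality in (iii) are formal consequences of the definitions; the only real bookkeeping obstacle is keeping the two time-reparametrizations $(F^0)^{-1}$ and $(G^0)^{-1}$ composed in the correct order, after which the two expressions match identically.
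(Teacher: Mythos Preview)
Your proof is correct. The paper does not give a proof of this corollary at all; it simply states that ``the following properties are easy to check,'' and your argument is exactly the routine verification the authors had in mind: reading (i) and (ii) off \eqref{prog-morph} and \eqref{prog-morph-1}, and reducing (iii) to the functoriality $(F\circ G)\cdot X = F\cdot(G\cdot X)$ of the pushforward \eqref{F-X}. Your extra remark on smoothness via the identification \eqref{diff-1} with $\R\times\mathcal T^E M$ (where $jF$ acts by the It\^o-formula transformation of drift and diffusion coefficients, computed explicitly in Lemma \ref{push-mixed-coord}) is the natural way to justify that $jF$ is a bona fide smooth bundle map, and goes slightly beyond what the paper spells out.
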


By virtue of \eqref{prog-morph-1} and Corollary \ref{jf-bd-morph}.(i), we may write $jF = (F^0, \overline{jF})$, where $\overline{jF}: \R\times \mathcal T^S M \to \mathcal T^S N$ is the smooth map given by
\begin{equation}\label{prog-morph-2}
  \overline{jF} (t, j_q(\theta_t X)) = j_{\bar F(t,q)} \theta_{F^0(t)}(F\cdot X).
\end{equation}
We can also consider the pushforward of the $\mathcal T^S M$-valued process $jX$ by the bundle homomorphism $jF$.

\begin{corollary}\label{push-diff-bd}
  Given a bundle homomorphism $F: (\R\times M, \pi, \R)\to (\R\times N, \rho, \R)$ projecting to a diffeomorphism on $\R$, and an $M$-valued diffusion process $X$, we have
  \begin{equation*}
    jF \cdot jX = j(F\cdot X).
  \end{equation*}
\end{corollary}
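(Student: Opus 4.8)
The plan is to prove the identity by chasing definitions, invoking the explicit pushforward formula of Lemma \ref{push-diff-bh} twice (once for $F$ acting on $X$, once for $jF$ acting on $jX$) together with the decomposition $jF=(F^0,\overline{jF})$ from Corollary \ref{jf-bd-morph}(i) and the explicit expression \eqref{prog-morph-2} for $\overline{jF}$. No analytic estimates are needed; the content is purely the consistent unwinding of the various $j$-notations.

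First I would record that both sides are bona fide $\mathcal T^S N$-valued diffusions defined on the same time interval. The process $jX$ is a $\mathcal T^S M$-valued diffusion (as noted right after Definition \ref{prog-diff}), and by Corollary \ref{jf-bd-morph}(i) the map $jF:\pi_1\to\rho_1$ is a bundle homomorphism projecting to the diffeomorphism $F^0$; hence Lemma \ref{push-diff-bh} applies to $jF$ and $jX$ and produces the pushforward $jF\cdot jX$ on $[F^0(t_0),F^0(\tau))$. On the other side, $j(F\cdot X)$ is the prolongation of the $N$-valued diffusion $F\cdot X$, which lives on the same interval by \eqref{F-X}. So the domains agree and it remains to compare values pointwise.

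Next I would fix $s\in[F^0(t_0),F^0(\tau))$ and set $t=(F^0)^{-1}(s)$. Applying the pushforward formula \eqref{F-X} for $jF=(F^0,\overline{jF})$ to $jX$ gives $(jF\cdot jX)(s)=\overline{jF}(t,jX(t))$; substituting $jX(t)=j_{X(t)}(\theta_t X)$ from Definition \ref{prog-diff} and the explicit form \eqref{prog-morph-2} of $\overline{jF}$ rewrites this as $j_{\bar F(t,X(t))}\,\theta_{F^0(t)}(F\cdot X)$. For the right-hand side, the prolongation gives $j(F\cdot X)(s)=j_{(F\cdot X)(s)}(\theta_s(F\cdot X))$; a second application of \eqref{F-X}, this time for $F$ itself, yields the base point $(F\cdot X)(s)=\bar F(t,X(t))$, and since $s=F^0(t)$ the diffusion appearing is exactly $\theta_{F^0(t)}(F\cdot X)$. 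Thus both expressions reduce to the same stochastic tangent vector $j_{\bar F(t,X(t))}\,\theta_{F^0(t)}(F\cdot X)$, which establishes the equality of the two processes.

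The main thing to handle carefully -- more a matter of bookkeeping than a genuine obstacle -- is the time reparametrization $F^0$: one must verify that the shift $\theta_{F^0(t)}$ produced by $\overline{jF}$ in \eqref{prog-morph-2} is precisely the shift $\theta_s$ occurring in the prolongation of $F\cdot X$, and that the two computations of the base point via \eqref{F-X} agree, so that the stochastic jets on both sides are taken of literally the same diffusion at the same point. Once Corollary \ref{jf-bd-morph} has been used to justify that Lemma \ref{push-diff-bh} is applicable to $jF$, the remainder is a direct definitional identification.
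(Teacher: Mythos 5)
Your proof is correct and follows essentially the same route as the paper's: both unwind $jF\cdot jX(s)$ via the pushforward formula \eqref{F-X} applied to $jF$, then substitute Definition \ref{prog-diff} and the explicit expression \eqref{prog-morph-2} for $\overline{jF}$ to land on $j_{\bar F(t,X(t))}\,\theta_{F^0(t)}(F\cdot X)=j(F\cdot X)(s)$. Your additional bookkeeping (checking via Corollary \ref{jf-bd-morph}(i) that Lemma \ref{push-diff-bh} applies to $jF$, and that the time intervals match) is sound and only makes explicit what the paper leaves implicit.
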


\begin{proof}
  It follows from \eqref{F-X}, \eqref{prog-morph-2} and Definition \ref{prog-diff} that
  \begin{equation*}
    \begin{split}
      jF \cdot jX(s) &= \overline{jF}\left( (F^0)^{-1}(s),j X((F^0)^{-1}(s)) \right) = \overline{jF}\left( (F^0)^{-1}(s),j_{X((F^0)^{-1}(s))} (\theta_{(F^0)^{-1}(s)} X) \right) \\
      &= j_{\tilde X(s)} (\theta_s \tilde X) = j\tilde X(s).
    \end{split}
  \end{equation*}
  The result follows.
\end{proof}

Now we need to investigate the coordinate representation of $jF$, in stochastic analysis terms. Before that, we introduce the stochastic version of the notion of total derivatives.

\begin{definition}[Total mean derivatives]\label{total-mean-d}
  Let $f$ be a smooth real-valued function on $\R\times M$. The total mean derivative and total quadratic mean derivative of $f$ are the unique smooth functions $\D_t f$ and $\Q_t f$ defined on $\R\times \mathcal T^S M$, with the property that if $X\in I_{(t_0,q)}(M)$ is a representative diffusion process of $j_{(t_0,q)} X$, then
  \begin{align*}
    (\D_t f)(j_{(t_0,q)} X) &= D[f(t_0, X(t_0))], \\
    (\Q_t f)(j_{(t_0,q)} X) &= Q[f(t_0, X(t_0))].
  \end{align*}
\end{definition}

There is an abuse of notations in the above definition. Indeed, the left-hand sides (LHSs) of the above two equations both involve subscripts $t$, but their RHS's do not depend on $t$. Those two equations need to be understood as that functions $\D_t f,\Q_t f$ taking their values on the point $j_{(t_0,q)} X\in\R\times \mathcal T^S M$ equal to the RHS's.

It is easy to check that the definitions of total mean derivatives are independent of the choice of representative diffusions. By It\^o's formula, we have the following coordinate representation for total mean derivatives in the local chart $(\R\times U^{(1)}, (t, x^{(1)}))$ on $\R\times \mathcal T^S M$,
\begin{align}
  \D_t f &= \frac{\pt f}{\pt t} + \frac{\pt f}{\pt x^i} D^i x + \frac{1}{2} \frac{\pt^2 f}{\pt x^j \pt x^k} Q^{jk} x, \label{local-rep-total-mean} \\
  \Q_t f &= \frac{\pt f}{\pt x^j} \frac{\pt f}{\pt x^k} Q^{jk} x. \notag
\end{align}
If a linear connection $\nabla$ is specified, we can use \eqref{2-tangent-equiv} to rewrite $\D_t$ as follows:
\begin{equation}\label{local-rep-total-mean-2}
  \D_t = \pt_t + D_\nabla^i x \pt_i + \ts{\frac{1}{2}} Q^{jk} x \nabla^2_{\pt_j,\pt_k}.
\end{equation}

\begin{lemma}\label{push-mixed-coord}
  Let us be given a bundle homomorphism $F = (F^0, \bar F)$ from $(\R\times M, \pi, \R)$ to $(\R\times N, \rho, \R)$ projecting to a diffeomorphism $F^0$ and an $M$-valued diffusion process $X = \{X(t)\}_{t\in[t_0,\tau)}$. If $\tilde X = F\cdot X$, then in local coordinates $(t,x^i)$ around $(t_0,q)$ and $(s,y^j)$ around $F(t_0,q)$,
  \begin{align*}
    (D\tilde X)^{j}(F^0(t)) &= (\D_t \bar F^j) \left( j_{(t, X(t))} X \right) \frac{d(F^0)^{-1}}{ds}(F^0(t)), \notag\\
    (Q\tilde X)^{kl}(F^0(t)) &= \left( \frac{\pt\bar F^k}{\pt x^i}\frac{\pt\bar F^l}{\pt x^j} \right) \left( t, X(t) \right) (QX)^{ij} \left( t \right) \frac{d(F^0)^{-1}}{ds}(F^0(t)).
  \end{align*}
\end{lemma}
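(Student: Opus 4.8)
The plan is to compute the two mean derivatives of $\tilde X$ directly from the limit definitions, using the explicit expression \eqref{F-X} for $\tilde X$ together with the time reparametrization $s = F^0(t)$. Write $s_0 = F^0(t)$ and abbreviate $g = (F^0)^{-1}$, so that $\tilde X(s) = \bar F(g(s), X(g(s)))$ and in particular $\tilde X(s_0) = \bar F(t, X(t))$. Since $g$ is deterministic, the natural assignment $\tilde\Pred_s := \Pred_{g(s)}$ makes $\tilde X$ adapted, and conditioning the increments of $\tilde X$ on $\tilde\Pred_{s_0}$ is the same as conditioning on $\Pred_{g(s_0)} = \Pred_t$; I take $F^0$ increasing so that the forward direction of $\tilde X$ matches the forward direction of $X$ (the decreasing case is identical after exchanging the forward derivatives of $X$ with their backward counterparts, cf. the remark following the mean-derivative definitions). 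Setting $\delta = \delta(\e) := g(s_0+\e) - g(s_0)$, I would use $\delta = g'(s_0)\,\e + o(\e) = \frac{d(F^0)^{-1}}{ds}(s_0)\,\e + o(\e)$.

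For the drift, I would Taylor-expand the increment $\tilde X^j(s_0+\e) - \tilde X^j(s_0) = \bar F^j(t+\delta, X(t+\delta)) - \bar F^j(t, X(t))$ to first order in the deterministic $\delta$ and to second order in the spatial increment of $X$, giving
\begin{equation*}
\tilde X^j(s_0+\e) - \tilde X^j(s_0) = \frac{\pt \bar F^j}{\pt t}\,\delta + \frac{\pt \bar F^j}{\pt x^i}\big(X^i(t+\delta) - X^i(t)\big) + \frac12 \frac{\pt^2 \bar F^j}{\pt x^i \pt x^k}\big(X^i(t+\delta)-X^i(t)\big)\big(X^k(t+\delta)-X^k(t)\big) + (\text{h.o.t.}),
\end{equation*}
all partials evaluated at $(t,X(t))$; the cross terms $\pt_t\pt_i\bar F^j\,\delta\,(X\text{-incr})$ and $\pt_{tt}\bar F^j\,\delta^2$ are $o(\delta)$. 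Conditioning on $\Pred_t$, dividing by $\e$ and letting $\e\to0^+$, I would invoke $\E[X^i(t+\delta)-X^i(t)\mid\Pred_t]/\delta \to (DX)^i(t)$ and the analogous limit for the product giving $(QX)^{ik}(t)$, together with $\delta/\e \to g'(s_0)$. The three surviving terms then assemble into $\big(\pt_t\bar F^j + \pt_i\bar F^j\,(DX)^i + \frac12\pt_{ik}\bar F^j\,(QX)^{ik}\big)(t,X(t))\cdot g'(s_0)$, which by the local representation \eqref{local-rep-total-mean} of the total mean derivative (Definition \ref{total-mean-d}) is exactly $(\D_t\bar F^j)(j_{(t,X(t))}X)\,\frac{d(F^0)^{-1}}{ds}(F^0(t))$.

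For the quadratic mean derivative I would form the product of two increments and keep only the products of the first-order-in-$X$ terms, since the factor $\delta$ attached to $\pt_t\bar F$ and the extra increment attached to the Hessian term push every other contribution to $o(\delta)$. Thus, to leading order,
\begin{equation*}
\big(\tilde X^k(s_0+\e)-\tilde X^k(s_0)\big)\big(\tilde X^l(s_0+\e)-\tilde X^l(s_0)\big) = \frac{\pt\bar F^k}{\pt x^i}\frac{\pt\bar F^l}{\pt x^j}\big(X^i(t+\delta)-X^i(t)\big)\big(X^j(t+\delta)-X^j(t)\big) + (\text{h.o.t.}),
\end{equation*}
and the same conditioning and limit yield $\big(\pt_i\bar F^k\,\pt_j\bar F^l\big)(t,X(t))\,(QX)^{ij}(t)\,g'(s_0)$, which is the stated expression.

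The expansion and collection of terms are routine; the genuine points requiring care are the filtration–time-change bookkeeping in the first paragraph (that the forward past of $\tilde X$ at $s_0$ is $\Pred_t$, which rests on $g$ being deterministic) and the justification that the $L^1$-limits may be taken term by term with vanishing higher-order remainders after division by $\e$. I expect the latter to be the main obstacle: it relies on the smoothness of $\bar F$, the assumed $L^1$-existence of $(DX,QX)$, and standard conditional-moment estimates for diffusion increments controlling the remainder uniformly as $\delta,\e\to0^+$. The only structural assumption is the orientation of $F^0$, which governs whether forward or backward derivatives of $X$ appear.
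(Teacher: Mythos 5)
Your argument is sound and reaches the stated formulae, but it takes a genuinely different route from the paper. You work directly from the limit definitions of $(D,Q)$: Taylor-expand $\bar F^j(t+\delta, X(t+\delta))-\bar F^j(t,X(t))$ in the deterministic time increment $\delta=(F^0)^{-1}(s_0+\e)-(F^0)^{-1}(s_0)$ and the spatial increment of $X$, condition on $\Pred_t$, and pass to the limit term by term. The paper instead represents $X$ by an It\^o SDE $dX^i=\mathfrak b^i\,dt+\sigma^i_r\,dW^r$ in local coordinates, performs the deterministic time change on the stochastic integral via $B(s)=\int_0^{(F^0)^{-1}(s)}\sqrt{(F^0)'(u)}\,dW(u)$ (which is again a Brownian motion for the time-changed filtration), applies It\^o's formula to $\bar F^j((F^0)^{-1}(s),X((F^0)^{-1}(s)))$, and reads the drift and quadratic variation off the resulting semimartingale decomposition. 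The two proofs buy different things: the paper's version outsources all remainder control to It\^o's formula and the time-change theorem, so nothing beyond bookkeeping remains, at the cost of first fixing an SDE representative of the weak solution; yours stays at the level of the intrinsic definitions and needs no choice of $\sigma$ or $W$, but the step you yourself flag as "the main obstacle" --- showing the conditional third-order spatial moments and the mixed $\delta\times(\text{increment})$ terms are $o(\e)$ in $L^1$, uniformly enough to interchange limit and expectation --- is precisely the content that It\^o's formula packages, and it is not free: for a general diffusion in the sense of the martingale-problem definition, the cleanest way to obtain those conditional moment bounds is again to pass through a local SDE representation or through Dynkin's formula applied to the smooth function $\bar F^j$ and to $\bar F^k\bar F^l$. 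If you carry out that last step (e.g.\ by applying the martingale property of $M^{f,X}$ with $f=\bar F^j$ and $f=\bar F^k\bar F^l$ over $[t,t+\delta]$ instead of a raw Taylor expansion), your proof closes and is arguably more intrinsic than the paper's; as written, the limit interchange is asserted rather than proved. Your filtration and orientation bookkeeping ($\tilde\Pred_s=\Pred_{(F^0)^{-1}(s)}$, $F^0$ increasing) matches what the paper uses implicitly through $\sqrt{(F^0)'}$.
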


\begin{proof}
  Assume that the diffusion $X$ can be represented in local coordinates by
  \begin{equation*}%\label{diff-local}
    dX^i(t) = \mathfrak{b}^i(t,X(t)) dt + \sigma^i_r(t,X(t)) dW^r(t), \quad X^i(t_0) = x^i(q).
  \end{equation*}
  where $W$ is an $N$-dimensional Brownian motion, so that
  $$j_t X = (DX(t), QX(t)) = (\mathfrak{b}, \sigma\circ\sigma^*)(t,X(t)).$$
  Let $(s_0,\tilde q)=F(t_0,q) = (F^0(t_0), \bar F(t_0,q))$. Then
  \begin{equation*}
    X^i((F^0)^{-1}(s)) = x^i(q) + \int_{(F^0)^{-1}(s_0)}^{(F^0)^{-1}(s)} \mathfrak{b}^i(u,X(u)) du + \int_{(F^0)^{-1}(s_0)}^{(F^0)^{-1}(s)} \sigma^i_r(u,X(u)) dW^r(u).
  \end{equation*}
  Define
  \begin{equation*}
    B(s) = \int_0^{(F^0)^{-1}(s)} \sqrt{(F^0)'(u)} dW(u).
  \end{equation*}
  Then \cite[Theorem 8.5.7]{Oks10} says that $B$ is an $N$-dimensional $\{\F_{(F^0)^{-1}(s)}\}$-Brownian motion, as by a change of variable $u=(F^0)^{-1}(v)$, we have
  \begin{equation*}
    \int_{(F^0)^{-1}(s_0)}^{(F^0)^{-1}(s)} \sigma^i_r(u,X(u)) dW^r(u) = \int_{s_0}^{s} \sigma^i_r((F^0)^{-1}(v),X((F^0)^{-1}(v))) \left( \frac{d(F^0)^{-1}}{ds}(v) \right)^{\frac{1}{2}} dB^r(v).
  \end{equation*}
  Therefore,
  \begin{equation*}
    \begin{split}
      X^i((F^0)^{-1}(s)) =&\ x^i(q) + \int_{s_0}^{s} \mathfrak{b}^i((F^0)^{-1}(v),X((F^0)^{-1}(v))) d (F^0)^{-1}(v) \\
      &\ + \int_{s_0}^{s} \sigma^i_r((F^0)^{-1}(v),X((F^0)^{-1}(v))) \left( \frac{d(F^0)^{-1}}{ds}(v) \right)^{\frac{1}{2}} dB^r(v).
    \end{split}
  \end{equation*}
  Recall that $\tilde X(s) = \bar F\left( (F^0)^{-1}(s), X((F^0)^{-1}(s)) \right)$. Using It\^o's formula, we have
  \begin{equation*}
    \begin{split}
      \tilde X^j(s) =&\ y^j(\tilde q) + \int_{s_0}^{s} \frac{\pt\bar F^j}{\pt t} \left( (F^0)^{-1}(v), X((F^0)^{-1}(v)) \right) d(F^0)^{-1}(v) \\
      &\ + \int_{s_0}^{s} \frac{\pt\bar F^j}{\pt x^i} \left( (F^0)^{-1}(v), X((F^0)^{-1}(v)) \right) dX^i((F^0)^{-1}(v)) \\
      &\ + \frac{1}{2} \int_{s_0}^{s} \frac{\pt^2\bar F^j}{\pt x^k \pt x^l} \left( (F^0)^{-1}(v), X((F^0)^{-1}(v)) \right) d\langle X^k\circ (F^0)^{-1}, X^l\circ (F^0)^{-1} \rangle(v) \\
      =&\ y^j(q) + \int_{s_0}^{s} \left[ \frac{\pt\bar F^j}{\pt t} + \frac{\pt\bar F^j}{\pt x^i} \mathfrak{b}^i + \frac{1}{2} \frac{\pt^2\bar F^j}{\pt x^k \pt x^l} \sigma^k_r \sigma^l_r \right] \left( (F^0)^{-1}(v), X((F^0)^{-1}(v)) \right) \frac{d(F^0)^{-1}}{ds}(v) dv \\
      &\ + \int_{s_0}^{s} \left( \frac{\pt\bar F^j}{\pt x^i} \sigma^i_r \right) \left( (F^0)^{-1}(v), X((F^0)^{-1}(v)) \right) \left( \frac{d(F^0)^{-1}}{ds}(v) \right)^{\frac{1}{2}} dB^r(v).
    \end{split}
  \end{equation*}
  It follows that
  \begin{align*}
    (D\tilde X)^j(s) &= \left[ \frac{\pt\bar F^j}{\pt t} + \frac{\pt\bar F^j}{\pt x^i} \mathfrak{b}^i + \frac{1}{2} \frac{\pt^2\bar F^j}{\pt x^k \pt x^l} \sigma^k_r \sigma^l_r \right] \left( (F^0)^{-1}(v), X((F^0)^{-1}(v)) \right) \frac{d(F^0)^{-1}}{ds}(v) \\
    &= (\D_t\bar F^j) \left( j_{((F^0)^{-1}(s), X((F^0)^{-1}(s)))} X \right) \frac{d(F^0)^{-1}}{ds}(s), \\
    (Q\tilde X)^{kl}(s) &= \left( \frac{\pt\bar F^k}{\pt x^i} \sigma^i_r \frac{\pt\bar F^l}{\pt x^j} \sigma^j_r \right) \left( (F^0)^{-1}(s), X((F^0)^{-1}(s)) \right) \frac{d(F^0)^{-1}}{ds}(s) \\
    & = \left( \frac{\pt\bar F^k}{\pt x^i}\frac{\pt\bar F^l}{\pt x^j} \right) \left( (F^0)^{-1}(s), X((F^0)^{-1}(s)) \right) (QX)^{ij} \left( (F^0)^{-1}(s) \right) \frac{d(F^0)^{-1}}{ds}(s).
  \end{align*}
  This completes the proof.
\end{proof}

We denote the induced local coordinates on $\mathcal T^S N$ by $(y^j, D^j y, Q^{kl} y)$. Then clearly, $y^j \circ jF =  y^j \circ \overline{jF} = y^j \circ F = \bar F^j$. Now take $j_{(t,q)} X \in \R\times \mathcal T^S M$. Then
\begin{gather}
  D^j y \circ jF(j_{(t,q)} X) = D^j y (j_{F(t,q)} \tilde X) = (D\tilde X)^j (F^0(t)) = (\D_t\bar F^j) (j_{(t, q)} X) \left( \frac{dF^0}{dt}(t) \right)^{-1}, \label{local-rep-jF-1}\\
  Q^{kl} y \circ jF(j_{(t,q)} X) = Q^{kl} y (j_{F(t,q)} \tilde X) = (Q\tilde X)^{kl}(F^0(t)) = \left( \frac{\pt\bar F^k}{\pt x^i}\frac{\pt\bar F^l}{\pt x^j} \right) (t, X(t))(QX)^{ij}(t) \left( \frac{dF^0}{dt}(t) \right)^{-1}. \label{local-rep-jF-2}
\end{gather}

\subsection{Symmetries of SDEs}%\label{sec-4-2}

As an important application of the prolongations of diffusions and bundle homomorphisms, we now study the symmetries of stochastic differential equations. As in classical Lie's theory of symmetries of ODEs, a symmetry of a stochastic differential equation is a space-time transformation that maps solutions to solutions. But this is not sufficient for the stochastic case. As we have mentioned in Section \ref{sec-4-1}, the only smooth transformation on $\R\times M$ mapping diffusions to diffusions are bundle endomorphisms. Moreover, a solution of stochastic differential equation is always accompanied by a filtration, which will also be altered under space-time transformations. Thus, we have the following definition:

\begin{definition}[Symmetries]
  Given a stochastic differential equation $S\subset \R\times \mathcal T^S M$, a symmetry of $S$ is a bundle automorphism $F$ on $(\R\times M, \pi, \R)$ projecting to $F^0$ such that if $(X, \{\Pred_t\})$ is a solution of $S$, then so is $(F\cdot X, \{\Pred_{(F^0)^{-1}(s)}\})$.
\end{definition}

Using the definitions of stochastic differential equations and pushforwards, we have the following equivalent characterization of symmetries.

\begin{lemma}
  Let $S$ be a stochastic differential equation on $M$. A bundle automorphism $F$ on $(\R\times M, \pi, \R)$ is a symmetry of $S$, if and only if, whenever $j_{(t,q)} X \in S$ we have $j F (j_{(t,q)} X) \in S$, or equivalently, $j F(S)\subset S$.
\end{lemma}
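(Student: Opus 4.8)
The plan is to prove the two implications separately, with the unifying ingredient being the pointwise identity
\begin{equation*}
  jF\big(j_t X\big) = j_{F^0(t)}\big(F\cdot X\big), \qquad t\in[t_0,\tau),
\end{equation*}
valid for any $M$-valued diffusion $X$. I would establish this first, since both directions hinge on it. It follows by unwinding Definition \ref{prog-bundle-homo}: $jF(j_{(t,X(t))}X) = j_{F(t,X(t))}(F\cdot X)$, and writing $F=(F^0,\bar F)$ together with \eqref{F-X} gives $F(t,X(t)) = (F^0(t),\bar F(t,X(t))) = (F^0(t),\tilde X(F^0(t)))$ with $\tilde X=F\cdot X$, so the right-hand side is exactly the point $j_{F^0(t)}\tilde X$ of $\R\times\mathcal T^S M$. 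This is Corollary \ref{push-diff-bd} read at the level of points of the prolongation process rather than of the process itself.

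\textbf{Backward direction.} Assuming $jF(S)\subset S$, let $(X,\{\Pred_t\})$ be a solution of $S$ and put $\tilde X=F\cdot X$. By Lemma \ref{push-diff-bh}, $\tilde X$ is a well-defined $M$-valued diffusion adapted to $\{\Pred_{(F^0)^{-1}(s)}\}$, so conditions (i)--(ii) of the definition of a solution hold automatically. For condition (iii) I would use that $j_t X\in S$ for every $t$ in the lifetime of $X$ (as $X$ is a solution); applying $jF$ and invoking the displayed identity together with $jF(S)\subset S$ yields $j_{F^0(t)}\tilde X = jF(j_t X)\in S$. As $t$ ranges over the lifetime of $X$, the parameter $s=F^0(t)$ ranges over the lifetime of $\tilde X$, so $j_s\tilde X\in S$ for all admissible $s$, which is precisely condition (iii) for $\tilde X$. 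Hence $\tilde X$ is a solution and $F$ is a symmetry; the only bookkeeping is the harmless relabelling of time through the diffeomorphism $F^0$.

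\textbf{Forward direction, and the main obstacle.} Assuming $F$ is a symmetry, I would show $jF(p)\in S$ for an arbitrary $p=j_{(t_0,q)}X\in S$. The idea is to realize $p$ as a value of the prolongation of some solution: if $\hat X$ solves $S$ with $j_{t_0}\hat X=p$, then $F\cdot\hat X$ is again a solution, its prolongation stays in $S$, and evaluating at the initial parameter $s=F^0(t_0)$ gives $jF(p)=j_{F^0(t_0)}(F\cdot\hat X)\in S$ by the displayed identity. To build $\hat X$, I would identify $p$ (via Proposition \ref{TS-TO} and \eqref{diff-2}) with an elliptic second-order tangent vector $(\mathfrak b_0,a_0)_q\in\mathcal T^E_q M$ at $(t_0,q)$, extend it to a smooth time-dependent elliptic operator $A$ whose graph is locally contained in $S$, and then invoke the equivalence recalled at the end of Section \ref{sec-2} — between the martingale problem for $A$, the MDEs \eqref{Nelson-SDE-mfld-2} and It\^o SDEs \eqref{SDE-mfld} — to obtain a local $A$-diffusion $\hat X$ started at $(t_0,q)$; since the graph of $A$ lies in $S$ one has $j_t\hat X=(t,A_{(t,\hat X(t))})\in S$, so $\hat X$ solves $S$. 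The hard part is exactly this reachability step: for a completely arbitrary embedded submanifold $S$ there need not be a local section of $\pi_{1,0}|_S$ through $p$ (equivalently, a smooth elliptic $A$ with graph in $S$ passing through $p$), so not every point of $S$ need lie on a solution. I would dispatch it under the standing well-posedness that makes $S$ a genuine SDE, the model case being the graph-type $S$ cut out by the defining function \eqref{SDE-F}, where the extension $A=(\mathfrak b,\sigma\circ\sigma^*)$ is immediate and the cited existence theorem applies. Ellipticity of the extension is not an extra constraint, since $S\subset\R\times\mathcal T^S M=\R\times\mathcal T^E M$ already forces every fibre component to be positive semi-definite.
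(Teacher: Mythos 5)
The paper offers no proof of this lemma at all --- it is asserted as an immediate consequence of the definitions --- so your proposal supplies an argument the authors left implicit. Your preliminary identity $jF(j_tX)=j_{F^0(t)}(F\cdot X)$ is exactly Definition \ref{prog-bundle-homo} combined with \eqref{F-X} (equivalently, Corollary \ref{push-diff-bd} read pointwise), and with it your backward direction is precisely the definition-chase the paper intends: Lemma \ref{push-diff-bh} gives conditions (i)--(ii) for $F\cdot X$ with the time-changed filtration, and $jF(S)\subset S$ transports condition (iii). More valuable is that you have isolated the genuine subtlety in the forward direction, which the paper silently assumes: to conclude $jF(p)\in S$ for every $p\in S$ one needs every point of $S$ to lie on the prolongation of some solution, and the paper's definition of an SDE as an arbitrary closed embedded submanifold with $S_0\neq\emptyset$ does not guarantee this (a fibre of $\pi_{1,0}|_S$ through $p$ could fail to extend to a local section). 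Your resolution --- extend $p$ to a smooth time-dependent elliptic second-order operator $A$ whose graph lies in $S$ and invoke the existence theory for the associated martingale problem / MDEs \eqref{Nelson-SDE-mfld-2} recalled at the end of Chapter \ref{sec-2} --- is the right way to make the step rigorous, and it covers every case in which the lemma is actually used in the paper (graph-type $S$ as in \eqref{SDE-F}, Corollary \ref{symm-Ito}). Strictly speaking the lemma as stated should carry this local-solvability hypothesis; flagging that, rather than papering over it, is the correct call.
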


%As with the group actions themselves, we can also define the prolongation of the corresponding infinitesimal generators. Indeed, these will just be the infinitesimal generators of the prolonged group action.

Recall that the infinitesimal version of bundle homomorphisms are the so called projectable or fiber-preserving vector fields. More precisely, a vector field $V$ on $\R\times M$ is called $\pi$-projectable, if the (local) flow (or one-parameter group action) generated by $V$ consists of (local) bundle endomorphisms on $(\R\times M, \pi, \R)$ (cf. \cite[Example 2.22]{Olv98} or \cite[Proposition 3.2.15]{Sau89}). For such a vector field, we define its prolongation to be the infinitesimal generator of the prolongated flow.

\begin{definition}[Stochastic prolongations of projectable vector fields]\label{prog-vf}
  Let $V$ be a $\pi$-projectable vector field on $\R\times M$, with corresponding (local) flow $\psi = \{\psi_\e\}_{\e\in(-\varepsilon,\varepsilon)}$. Then, the stochastic prolongation of $V$, denoted by $j V$, will be a vector field on the model jet bundle $\R\times \mathcal T^S M$, defined as the infinitesimal generator of the corresponding prolonged flow $\{j\psi_\e\}_{\e\in (-\varepsilon,\varepsilon)}$. In other words, $j V$ is a vector field on $\R\times \mathcal T^S M$ defined by
  \begin{equation*}%\label{def-prog-proj-vf}
    j V \big|_{j_{(t,q)} X} = \frac{d}{d\e} \bigg|_{\e=0} (j\psi_\e) (j_{(t,q)} X),
  \end{equation*}
  for any $j_{(t,q)} X\in \R\times \mathcal T^S M$.
\end{definition}

Now we can define infinitesimal versions of symmetries.

\begin{definition}[Infinitesimal symmetries]
  Let $S$ be a stochastic differential equation on $M$. An infinitesimal symmetry of $S$ is a $\pi$-projectable vector field $V$ on $\R\times M$ whose stochastic prolongation $jV$ is tangent to $S$.
\end{definition}

The following properties follow straightforwardly from definitions.
\begin{lemma}\label{inf-symm}
  Given a stochastic differential equation $S$ on $M$, let $V$ be a complete $\pi$-projectable vector field on $\R\times M$ and $\psi = \{\psi_\e\}_{\e\in\R}$ be its flow. Then \\
  (i) $V$ is an infinitesimal symmetry of $S$ if and only if $jV(\Theta) = 0$ for every local defining function $\Theta$ of $S$; \\
  (ii) $V$ is an infinitesimal symmetry of $S$ if and only if for each $\e\in\R$, $\psi_\e$ is a symmetry of $S$.
\end{lemma}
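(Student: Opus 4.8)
The plan is to recognize that both equivalences are instances of the classical differential-geometric dictionary relating three conditions on a vector field and a closed embedded submanifold: tangency of the field, annihilation of the local defining functions, and invariance of the submanifold under the flow. Here the relevant manifold is $\R\times\mathcal T^S M$, which is a genuine smooth manifold by Proposition \ref{TS-TO}, the field is the stochastic prolongation $jV$ of Definition \ref{prog-vf}, and the submanifold is $S$, which is closed and embedded by the definition of a stochastic differential equation.

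For (i), I would argue as follows. Since $S$ is a closed embedded submanifold, near each of its points it admits a local defining function $\Theta:\R\times U^{(1)}\to\R^K$ with $0$ a regular value and $S\cap(\R\times U^{(1)})=\Theta^{-1}(0)$. Regularity yields $T_pS=\ker d\Theta_p$ for every $p\in S$ in this chart. By definition an infinitesimal symmetry means $jV|_p\in T_pS$ for all $p\in S$, and by the kernel description this is equivalent to $d\Theta_p(jV|_p)=0$, i.e. to $jV(\Theta)(p)=0$, for all $p\in S\cap(\R\times U^{(1)})$. Letting such charts range over a cover of $S$ gives the stated criterion, where ``$jV(\Theta)=0$'' is understood on $S$, that is, whenever $\Theta=0$.

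For (ii), I would first verify that the family $\{j\psi_\e\}_{\e\in\R}$ is precisely the flow of $jV$. Each $\psi_\e$ is a bundle automorphism of $(\R\times M,\pi,\R)$ projecting to a diffeomorphism (the flow of the induced vector field on $\R$), so $j\psi_\e$ is defined; by Corollary \ref{jf-bd-morph}(iii) one has $j\psi_0=\id$ and $j\psi_\e\circ j\psi_{\e'}=j(\psi_\e\circ\psi_{\e'})=j\psi_{\e+\e'}$, so $\{j\psi_\e\}$ is a one-parameter group whose infinitesimal generator is $jV$ by Definition \ref{prog-vf}; completeness of $V$ makes this flow global. I would then invoke the flow-invariance criterion for the closed embedded submanifold $S$: the field $jV$ is tangent to $S$ if and only if $j\psi_\e(S)\subset S$ for every $\e$. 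Finally, the characterization of symmetries established just above states that the bundle automorphism $\psi_\e$ is a symmetry of $S$ exactly when $j\psi_\e(S)\subset S$. Chaining these equivalences yields (ii).

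The routine directions are immediate: differentiating $j\psi_\e(S)\subset S$ at $\e=0$ gives tangency, and tangency annihilates defining functions by the chain rule. The one genuinely substantive step is the converse flow-invariance implication, that tangency of $jV$ forces integral curves starting on $S$ to remain on $S$; this is where closedness of $S$ and completeness of $V$ (hence of $jV$) enter, through uniqueness of integral curves and a straightening of $jV$ adapted to $S$. I would also flag, as the only point requiring genuine verification, the smoothness and group property of the prolongation operator $j(\cdot)$, supplied by the explicit coordinate formulae \eqref{local-rep-jF-1}--\eqref{local-rep-jF-2} together with Corollary \ref{jf-bd-morph}(iii); once these are secured, everything reduces to the standard manifold statements above.
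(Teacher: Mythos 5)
Your proof is correct and is exactly the argument the paper has in mind: the paper states this lemma without proof (``follows straightforwardly from definitions''), and the straightforward argument is precisely the one you give, namely the standard dictionary between tangency to a closed embedded submanifold, annihilation of regular local defining functions, and invariance under the flow, applied to $jV$ on $\R\times\mathcal T^S M$ together with the identification of $\{j\psi_\e\}$ as the flow of $jV$ via Corollary \ref{jf-bd-morph}(iii). Your flagged points (closedness of $S$ for the converse flow-invariance step, and the group property of the prolongation) are the right ones to verify, and they are supplied by the paper's definitions and Corollary \ref{jf-bd-morph}.
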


%\begin{proposition}\label{inf-symm}
%  Let $S$ be a stochastic differential equation on $M$. If $G$ is a local group action on $\R\times M$ consisting of local bundle homomorphisms on $\pi$, and
%  \begin{equation*}
%    jV(\Theta)(j_{(t,p)} X) = 0, \quad j_{(t,p)} X\in S\cap (\R\times U^{(1)}),
%  \end{equation*}
%  for every infinitesimal generator $V$ of $G$ and local defining function $\Theta$ on every open subset $\R\times U^{(1)}$, then $G$ is a symmetry group of the equation.
%\end{proposition}

\subsection{Stochastic prolongation formulae}%\label{sec-4-3}

We consider a coordinate chart $(\R\times U^{(1)}, (t, x^{(1)}))$ on the model jet bundle $\R\times \mathcal T^S M$, which is induced by the coordinate chart $(U, (x^i))$ on $M$. A $\pi$-projectable vector field $V$ on $\R\times M$ has the following local coordinate representation
\begin{equation}\label{general-vf}
  V_{(t,q)} = V^0(t) \vf t\bigg|_t + V^i(t,q) \vf{x^i}\bigg|_q.
\end{equation}
Its prolongation $jV$ is a vector field $\R\times \mathcal T^S M$ of the form
\begin{equation*}
  j V\big|_{j_{(t,q)} X} = V^0(t) \vf t\bigg|_t + V^i(t,q) \vf{x^i}\bigg|_{j_{(t,q)} X} + V^i_1(j_{(t,q)} X) \vf{D^i x}\bigg|_{j_{(t,q)} X} + V^{jk}_2(j_{(t,q)} X) \vf{Q^{jk} x}\bigg|_{j_{(t,q)} X}.
\end{equation*}

%We first consider that $v$ is a $\pi$-projectable or fiber-preserving vector field, which means the flow of $v$ consists of local bundle homomorphisms on $(\R\times M, \pi, \R)$ (cf. \cite[Example 2.22]{Olv98} or \cite[Proposition 3.2.15]{Sau89}). It amounts to saying that the coefficient $\tau$ depends only on time $t$. We follow the lines of \cite{Ga00,GQ99}, which are based on stochastic analysis. The advantage of the projectable property is that such vector fields are always  stochastically admissible, and the transformed diffusion processes are easy to represented. As we will see in the next subsection when considering the general case, such an approach will no longer work, and a geometric method will be adopted.

Now we use Lemma \ref{push-mixed-coord} to compute the coefficients $V^i_1$'s and $V^{jk}_2$'s.

\begin{theorem}\label{prog-proj-vf}
  Suppose $V$ is complete and $\pi$-projectable and has the local representation \eqref{general-vf}. Then in the canonical coordinates $(t, x^{(1)})$, the coefficient functions of its prolongation $jV$ are given by the following formulae:
  \begin{align}
    V^i_1(t, x^{(1)}) &= (\D_t V^i)(t,x^{(1)}) - \dot V^0(t) D^i x, \label{prog-1} \\
    V^{jk}_2(t, x^{(1)}) &= \frac{\pt V^j}{\pt x^i}(t,x) Q^{ik} x + \frac{\pt V^k}{\pt x^i}(t,x) Q^{ij} x - \dot V^0(t) Q^{jk} x. \label{prog-2}
  \end{align}
\end{theorem}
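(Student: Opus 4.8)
The plan is to exploit Definition \ref{prog-vf}: the prolongation $jV$ is by definition the infinitesimal generator of the prolonged flow $\{j\psi_\e\}_{\e}$, where $\psi=\{\psi_\e\}$ is the flow of $V$. Since $(t,x^i,D^ix,Q^{jk}x)$ are coordinates on $\R\times\mathcal T^SM$ and $jV$ has the stated form, its components are recovered by differentiating the corresponding coordinate functions along the prolonged flow, namely $V^i_1=\frac{d}{d\e}\big|_{\e=0}\big(D^ix\circ j\psi_\e\big)$ and $V^{jk}_2=\frac{d}{d\e}\big|_{\e=0}\big(Q^{jk}x\circ j\psi_\e\big)$, both evaluated at $j_{(t,q)}X$. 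The whole argument then reduces to inserting $F=\psi_\e=(\psi_\e^0,\bar\psi_\e)$ into the explicit coordinate formulae \eqref{local-rep-jF-1} and \eqref{local-rep-jF-2} for the prolongation of a bundle homomorphism, and differentiating in $\e$ at $\e=0$.

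First I would record the data of the flow. Because $\psi_0=\id$, we have $\psi_0^0(t)=t$ and $\bar\psi_0^i(t,x)=x^i$, whence $\frac{d\psi_0^0}{dt}\equiv1$ and, by the coordinate expression \eqref{local-rep-total-mean} of the total mean derivative, $\D_t\bar\psi_0^i=\D_t x^i=D^ix$. The infinitesimal data are $\frac{d}{d\e}\big|_{\e=0}\psi_\e^0=V^0$ and $\frac{d}{d\e}\big|_{\e=0}\bar\psi_\e^i=V^i$ by \eqref{general-vf}. Applying \eqref{local-rep-jF-1} with $F=\psi_\e$ gives $D^ix\circ j\psi_\e=(\D_t\bar\psi_\e^i)\big(\frac{d\psi_\e^0}{dt}\big)^{-1}$, and a product-rule differentiation at $\e=0$ uses the two ingredients $\frac{d}{d\e}\big|_{\e=0}\D_t\bar\psi_\e^i=\D_tV^i$ (interchanging the $\e$-derivative with the linear operator $\D_t$) and $\frac{d}{d\e}\big|_{\e=0}\big(\frac{d\psi_\e^0}{dt}\big)^{-1}=-\dot V^0$. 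Combining these with the initial values above yields $V^i_1=\D_tV^i-\dot V^0\,D^ix$, which is \eqref{prog-1}.

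For $V^{jk}_2$ the argument is identical in structure, now differentiating \eqref{local-rep-jF-2}, i.e. $Q^{jk}x\circ j\psi_\e=\big(\frac{\pt\bar\psi_\e^j}{\pt x^a}\frac{\pt\bar\psi_\e^k}{\pt x^b}\big)Q^{ab}x\,\big(\frac{d\psi_\e^0}{dt}\big)^{-1}$. Using $\frac{\pt\bar\psi_0^j}{\pt x^a}=\delta^j_a$ and $\frac{d}{d\e}\big|_{\e=0}\frac{\pt\bar\psi_\e^j}{\pt x^a}=\frac{\pt V^j}{\pt x^a}$ (interchanging the $\e$- and $x$-derivatives), the product rule over the three factors produces the two ``transport'' terms $\frac{\pt V^j}{\pt x^i}Q^{ik}x+\frac{\pt V^k}{\pt x^i}Q^{ij}x$ from differentiating the Jacobian factors and the term $-\dot V^0\,Q^{jk}x$ from the time-reparametrization factor, giving \eqref{prog-2}.

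The computation itself is mechanical once the coordinate formulae \eqref{local-rep-jF-1}--\eqref{local-rep-jF-2} are in hand, so the only point requiring care — and the step I expect to be the main obstacle — is justifying the interchange of $\frac{d}{d\e}$ with the operators $\D_t$ and $\pt/\pt x^a$. This rests on the joint smoothness of the flow $\psi_\e(t,x)$ in all of its arguments (guaranteed since $V$ is assumed complete and smooth), which lets one differentiate under the relevant derivatives. A clean way to package this is to note that, at the fixed point $j_{(t,q)}X$, the operator $\D_t$ acts on the $\e$-family $\bar\psi_\e^i$ as a fixed linear combination of $\pt_t$, $\pt_{x^l}$ and $\pt^2_{x^jx^k}$ with frozen coefficients $1$, $D^lx$, $\ts{\frac12}Q^{jk}x$ via \eqref{local-rep-total-mean}, so that $\e$-differentiation commutes with it termwise.
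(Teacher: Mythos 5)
Your proposal is correct and follows essentially the same route as the paper: the paper likewise differentiates the flow's prolongation in $\e$ at $\e=0$, applying Lemma \ref{push-mixed-coord} to the pushed-forward diffusion $\tilde X_\e=\psi_\e\cdot X$ (which is exactly the content of \eqref{local-rep-jF-1}--\eqref{local-rep-jF-2} that you invoke), and justifies commuting $\frac{d}{d\e}$ with $\D_t$ by the same frozen-coefficient observation from the coordinate representation \eqref{local-rep-total-mean}. The only difference is cosmetic: you cite the displayed coordinate formulae for $jF$ while the paper re-derives them inline from the lemma.
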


\begin{proof}
  Let $\psi = \{\psi_\e\}_{\e\in\R}$ be the flow generated by $V$. Since $V$ is complete and $\pi$-projectable, each $\psi_\e$ is a bundle endomorphism on $\R\times M$ projecting to a diffeomorphism on $\R$. Let $\psi_\e (t, q) = (\psi^0_\e(t), \bar \psi_\e(t,q))$. Note that $\psi^0_0(t)=t$, $\bar \psi_0(t,q) = q$ and
  \begin{equation*}
    V^0(t) = \frac{d}{d\e} \bigg|_{\e=0} \psi^0_\e(t), \quad V^i(t,q) = \frac{d}{d\e} \bigg|_{\e=0} \bar\psi^i_\e(t,q).
  \end{equation*}
  Let $X=\{X(t)\}_{t\in[t_0,\tau)}$ be a representative diffusion of $j_{(t_0,q)} X \in U^{(1)}$. Then by Lemma \ref{push-diff-bh} and Definition \ref{prog-bundle-homo}, a representative diffusion of $j\psi_\e (j_{(t,q)} X)$ is
  \begin{equation*}
    \tilde X_\e(s) = \psi_\e \cdot X(s) = \bar \psi_\e\left( (\psi^0_\e)^{-1}(s), X((\psi^0_\e)^{-1}(s)) \right),
    \quad s \in [\psi^0_\e(t_0), \psi^0_\e(\tau)).
  \end{equation*}
  %for $\e>0$ sufficiently small so that $g_\e$ is near the identity, and for $t>0$ sufficiently small so that each function $\Gamma_\e$ is invertible. Without loss of generality, we can assume each $\Gamma_\e$ is increasing. Thus the inverse function $\Gamma_\e^{-1}$ is also increasing.
  Now we apply Lemma \ref{push-mixed-coord} and take derivatives with respect to $\e$. Since $\frac{d}{d\e}$ commutes with the total mean derivative $\D_t$ as is clear from the coordinate representation, we have
  \begin{equation*}
    \begin{split}
      V_1^i (j_{(t,q)} X) &= \frac{d}{d\e} \bigg|_{\e=0} (D\tilde X_\e)^i(\psi^0_\e(t)) = \frac{d}{d\e} \bigg|_{\e=0} \left[ (\D_t \bar \psi_\e^i) \left( j_{(t, X(t))} X \right) \frac{d(\psi_\e^0)^{-1}}{ds}(\psi^0_\e(t)) \right] \\
      &= \D_t V^i (j_{(t,q)} X) - (DX)^i(t) \dot V^0(t).
    \end{split}
  \end{equation*}
  Also,
  \begin{equation*}
    \begin{split}
      V_2^{kl}(j_{(t,q)} X) &= \frac{d}{d\e} \bigg|_{\e=0} (Q\tilde X_\e)^{kl}(\psi^0_\e(t)) \\
      &= \frac{d}{d\e} \bigg|_{\e=0} \left[ \left( \frac{\pt\bar \psi_\e^k}{\pt x^i}\frac{\pt\bar \psi_\e^l}{\pt x^j} \right) \left( t, X(t) \right) (QX)^{ij} \left( t \right) \frac{d(\psi_\e^0)^{-1}}{ds}(\psi^0_\e(t)) \right] \\
      &= \left( \frac{\pt V^k}{\pt x^i} \delta_j^l + \delta_i^k \frac{\pt V^l}{\pt x^j} \right) (t,X(t)) (QX)^{ij}(t) - \delta_i^k \delta_j^l (QX)^{ij}(t) \dot V^0(t) \\
      &= \frac{\pt V^k}{\pt x^i}(t,q) (QX)^{il}(t) + \frac{\pt V^l}{\pt x^j} (t,q) (QX)^{jk}(t) - (QX)^{kl}(t) \dot V^0(t).
    \end{split}
  \end{equation*}
  In the induced coordinate system $(t, x^{(1)})= (t, x^i, D^i x, D^{jk}_2 x)$, the last two formulae read as \eqref{prog-1} and \eqref{prog-2}, respectively.
\end{proof}

Stochastic analogs of contact structure on $\R\times \mathcal T^S M$ and Cartan symmetries will be discussed in Appendix \ref{app-2}. It turns out that the infinitesimal symmetry of the mixed-order Cartan distribution is equivalent to stochastic prolongation formulae of Theorem \ref{prog-proj-vf}.

Applying Theorem \ref{prog-proj-vf} to the system of mean differential equations \eqref{Nelson-SDE-mfld-2}, we have
\begin{corollary}\label{symm-Ito}
  The complete and $\pi$-projectable vector field $V$ in \eqref{general-vf} is an infinitesimal symmetry of MDEs \eqref{Nelson-SDE-mfld-2} if and only if the coefficients $V^0$ and $V^i$'s satisfy the following ``determining equations'':
  \begin{align}
    V^0 \frac{\pt \mathfrak b^i}{\pt t} + V^j \frac{\pt \mathfrak b^i}{\pt x^j} = \frac{\pt V^i}{\pt t} + \frac{\pt V^i}{\pt x^j} \mathfrak{b}^j + \frac{1}{2} \frac{\pt^2 V^i}{\pt x^j \pt x^k} \sigma^j_r \sigma^k_r - \dot V^0 \mathfrak b^i, \notag \\
    V^0 \frac{\pt (\sigma_r^j \sigma_r^k)}{\pt t} + V^i \frac{\pt (\sigma_r^j \sigma_r^k)}{\pt x^i} = \frac{\pt V^j}{\pt x^i} \sigma^i_r \sigma^k_r + \frac{\pt V^k}{\pt x^i} \sigma^i_r \sigma^j_r - \dot V^0 \sigma^j_r \sigma^k_r. \label{symm-Ito-2}
  \end{align}
\end{corollary}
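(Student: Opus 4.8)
The plan is to invoke Lemma~\ref{inf-symm}(i), which reduces the claim to verifying that $jV(\Theta)=0$ holds on $S$ for the specific defining function of the MDEs \eqref{Nelson-SDE-mfld-2}, namely $\Theta=(\Theta_{(1)},\Theta_{(2)})$ with $\Theta^i_{(1)} = D^i x - \mathfrak b^i(t,x)$ and $\Theta^{jk}_{(2)} = Q^{jk}x - (\sigma\circ\sigma^*)^{jk}(t,x)$ from \eqref{SDE-F}. Since $\Theta$ splits into a drift part and a quadratic part, I would treat the two components separately, applying in each case the coordinate representation of the prolongation $jV$ whose vertical coefficients $V^i_1$ and $V^{jk}_2$ are supplied by the stochastic prolongation formulae \eqref{prog-1} and \eqref{prog-2} of Theorem~\ref{prog-proj-vf}.

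The computation rests on one simple structural observation: the coefficient functions $\mathfrak b^i$ and $(\sigma\circ\sigma^*)^{jk}=\sigma^j_r\sigma^k_r$ depend only on the base variables $(t,x)$, so when $jV$ acts on them only the horizontal parts $V^0\,\vf{t}$ and $V^i\,\vf{x^i}$ contribute, yielding $V^0\,\pt_t\mathfrak b^i + V^j\,\pt_{x^j}\mathfrak b^i$ and the analogous expression for $\sigma^j_r\sigma^k_r$. On the other hand, when $jV$ acts on the fiber coordinate functions $D^i x$ and $Q^{jk}x$, only the vertical parts survive and simply read off the coefficients $V^i_1$ and $V^{jk}_2$. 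Thus $jV(\Theta^i_{(1)}) = V^i_1 - V^0\,\pt_t\mathfrak b^i - V^j\,\pt_{x^j}\mathfrak b^i$ and $jV(\Theta^{jk}_{(2)}) = V^{jk}_2 - V^0\,\pt_t(\sigma^j_r\sigma^k_r) - V^i\,\pt_{x^i}(\sigma^j_r\sigma^k_r)$. Substituting \eqref{prog-1}, \eqref{prog-2} and expanding $\D_t V^i$ via \eqref{local-rep-total-mean} turns these into polynomial expressions in the jet coordinates $D^i x, Q^{jk}x$.

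The final step, and the only place demanding care, is the restriction to $S$: the tangency condition of Lemma~\ref{inf-symm}(i) requires vanishing \emph{on} the submanifold, not identically on $\R\times\mathcal T^S M$. Restricting to $S$ means substituting the defining relations $D^i x = \mathfrak b^i(t,x)$ and $Q^{jk}x = (\sigma^j_r\sigma^k_r)(t,x)$ into the two expressions above. After this substitution the equation $jV(\Theta^i_{(1)})|_S=0$ becomes precisely the first determining equation, while $jV(\Theta^{jk}_{(2)})|_S=0$ becomes \eqref{symm-Ito-2}; conversely, the determining equations imply tangency, giving the stated equivalence. The ``hard part'' is therefore not any delicate estimate but the bookkeeping of this on-shell substitution, together with keeping the free jet coordinates and their specialized values cleanly distinguished throughout the expansion.
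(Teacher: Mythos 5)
Your proposal is correct and follows essentially the same route as the paper: apply Lemma \ref{inf-symm}.(i) to the defining function \eqref{SDE-F}, read off the vertical coefficients of $jV$ from the prolongation formulae \eqref{prog-1}--\eqref{prog-2}, expand $\D_t V^i$ via \eqref{local-rep-total-mean}, and substitute the on-shell relations $D^i x = \mathfrak b^i$, $Q^{jk}x = \sigma^j_r\sigma^k_r$. The on-shell substitution you flag as the delicate step is exactly the "plug the equation \eqref{SDE-F} in" step of the paper's own proof.
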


\begin{proof}
  We apply Lemma \ref{inf-symm}.(i) to \eqref{SDE-F} and then use Theorem \ref{prog-proj-vf}, to get
  \begin{align*}
    V^0 \frac{\pt \mathfrak b^i}{\pt t} + V^j \frac{\pt \mathfrak b^i}{\pt x^j} &= \D_t V^i - \dot V^0 D^i x, \\
    V^0 \frac{\pt (\sigma_r^j \sigma_r^k)}{\pt t} + V^i \frac{\pt (\sigma_r^j \sigma_r^k)}{\pt x^i} &= \frac{\pt V^j}{\pt x^i} Q^{ik} x + \frac{\pt V^k}{\pt x^i} Q^{ij} x - \dot V^0 Q^{jk} x.
  \end{align*}
  Then we use the coordinate representation \eqref{local-rep-total-mean} for the total mean derivative $\D_t$ and plug equation \eqref{SDE-F} in; the results follow.
\end{proof}

\begin{remark}
In \cite{GQ99}, the author proved a result similar to Corollary \ref{symm-Ito}, with the following equation instead of equation \eqref{symm-Ito-2}:
\begin{equation}\label{symm-Ito-3}
  V^0 \frac{\pt \sigma_r^j}{\pt t} + V^i \frac{\pt \sigma_r^j }{\pt x^i} = \frac{\pt V^j}{\pt x^i} \sigma^i_r - \frac{1}{2} \dot V^0 \sigma^j_r.
\end{equation}
By multiplying both sides of \eqref{symm-Ito-3} with $\sigma_r^k$, and using the symmetry for index $j,k$, one gets easily \eqref{symm-Ito-2}. So our determining equations for infinitesimal symmetries are more general than those of \cite{GQ99}. Basically, the paper \cite{GQ99} concerns symmetries of the It\^o equation $(\mathfrak b, \sigma)$, while we consider symmetries of the diffusion with generator $(\mathfrak b, \sigma\circ\sigma^*)$, or equivalently, a weak formulation of SDE. The former symmetries belong to the latter obviously, but not vice versa.
\end{remark}

Now given a linear connection $\nabla$ on $M$, we define the $\nabla$-dependent versions of Definitions \ref{prog-diff}, \ref{prog-bundle-homo} and \ref{prog-vf}. More precisely, for a diffusion $X$ on $M$, we define its $\nabla$-prolongation to be a $TM$-valued diffusion $j^\nabla X$ given by $j^\nabla X(t) = j^\nabla_{X(t)} (\theta_t X)$. For a bundle homomorphism from $F:(\R\times M, \pi, \R)\to(\R\times N, \rho, \R)$ projecting to a diffeomorphism $F^0:\R\to\R$, the $\nabla$-prolongation of $F$ is the map $j^\nabla F: \R\times T M \to \R\times T N$ defined by $j^\nabla F (j^\nabla_{(t,q)} X) = j^\nabla_{F(t,q)} (F\cdot X)$. The $\nabla$-prolongation of $V$, denoted by $j^\nabla V$, is defined to be the infinitesimal generator of the corresponding prolonged flow $\{j^\nabla\psi_\e\}_{\e\in (-\varepsilon,\varepsilon)}$, so that $j^\nabla V$ is a vector field on $\R\times T M$ and has the form
\begin{equation*}
  j^\nabla V\big|_{j^\nabla_{(t,q)} X} = V^0(t) \vf t\bigg|_t + V^i(t,q) \vf{x^i}\bigg|_{j^\nabla_{(t,q)} X} + V^i_\nabla(j^\nabla_{(t,q)} X) \vf{\dot x^i}\bigg|_{j^\nabla_{(t,q)} X},
\end{equation*}
for $V$ of the form \eqref{general-vf}. If we denote $\bar V = V^i \vf{x^i}$ so that $V = V^0 + \bar V$, we have

\begin{corollary}\label{prog-tang}
  Under the canonical coordinates $(t, x, \dot x)$, the coefficient $V^i_\nabla$ of the $\nabla$-prolongation $j^\nabla V$ are given by:
  \begin{equation*}
    V^i_\nabla(t, x,\dot x) = \left(\pt_t + \dot x^j \pt_j \right) V^i(t,x) + \ts{\frac{1}{2}} Q^{jk}x \left[ \nabla^2_{\pt_j,\pt_k} \bar V + R(\bar V,\pt_j)\pt_k \right]^i (t,x) - \dot V^0(t) \dot x^i,
  \end{equation*}
  where $R$ is the curvature tensor.
\end{corollary}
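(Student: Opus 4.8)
The plan is to deduce the $\nabla$-prolongation formula from the already-proved stochastic prolongation formula of Theorem \ref{prog-proj-vf} by pushing $jV$ forward through the fibre-linear projection $\varrho_\nabla$ of \eqref{varrho}. Writing $\Phi := \id_\R\times\varrho_\nabla : \R\times\mathcal T^S M\to\R\times TM$, the identity $j^\nabla_q X=\varrho_\nabla(j_q X)$ coming from \eqref{varrho-nabla-gen} gives $j^\nabla\psi_\e\circ\Phi=\Phi\circ j\psi_\e$ for the flow $\psi=\{\psi_\e\}$ of $V$; differentiating at $\e=0$ shows that $jV$ and $j^\nabla V$ are $\Phi$-related vector fields. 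Since \eqref{varrho-2} expresses the fibre coordinate of $TM$ as $\dot x^i=D_\nabla^i x=\dot x^i\circ\varrho_\nabla$, pairing the $\Phi$-relatedness with the coordinate function $\dot x^i$ yields the working formula
\begin{equation*}
  V^i_\nabla = jV(D_\nabla^i x), \qquad D_\nabla^i x = D^i x + \ts{\frac12}(\Gamma^i_{jk}\circ x)\,Q^{jk}x,
\end{equation*}
where the expression for $D_\nabla^i x$ is \eqref{partial-coord}, regarded as a function on $\R\times\mathcal T^S M$. (The surviving $Q^{jk}x$-dependence below is legitimate precisely because $j^\nabla V$ is evaluated at points $j^\nabla_{(t,q)}X$ that still remember the representative diffusion, hence its quadratic mean derivative.)

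First I would evaluate the right-hand side using the explicit coefficients of $jV$ from \eqref{prog-1}--\eqref{prog-2}. As $D^i x$ and $Q^{jk}x$ are fibre coordinates and $\Gamma^i_{jk}$ depends on $x$ alone, Leibniz's rule gives
\begin{equation*}
  V^i_\nabla = V^i_1 + \ts{\frac12}(V^l\pt_l\Gamma^i_{jk})\,Q^{jk}x + \ts{\frac12}\Gamma^i_{jk}\,V^{jk}_2,
\end{equation*}
with $V^i_1=\D_t V^i-\dot V^0 D^ix$ and $V^{jk}_2$ as in \eqref{prog-2}. Next I would substitute the coordinate form \eqref{local-rep-total-mean} of $\D_t V^i$ and eliminate $D^i x$ in favour of $\dot x^i$ through $D^ix=\dot x^i-\ts{\frac12}\Gamma^i_{jk}Q^{jk}x$. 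The terms proportional to $\dot V^0$ collect into $-\dot V^0\dot x^i$, the genuinely first-order terms collect into $(\pt_t+\dot x^j\pt_j)V^i$, and everything else is quadratic in the fibre and can be written as $\ts{\frac12}Q^{jk}x\,C^i_{jk}$ with $C^i_{jk}$ symmetric in $j,k$ and equal to
\begin{equation*}
  C^i_{jk} = \pt_{jk}V^i - \Gamma^l_{jk}\pt_l V^i + V^l\pt_l\Gamma^i_{jk} + \Gamma^i_{lk}\pt_j V^l + \Gamma^i_{lj}\pt_k V^l.
\end{equation*}

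The main obstacle is the final geometric identification of $C^i_{jk}$, which is a purely local computation with the connection. I would expand the second covariant derivative of $\bar V=V^l\pt_l$ in coordinates,
\begin{equation*}
  (\nabla^2_{\pt_j,\pt_k}\bar V)^i=\pt_{jk}V^i-\Gamma^l_{jk}\pt_l V^i+\Gamma^i_{lk}\pt_j V^l+\Gamma^i_{jl}\pt_k V^l+(\pt_j\Gamma^i_{kl})V^l+\Gamma^i_{jm}\Gamma^m_{kl}V^l-\Gamma^l_{jk}\Gamma^i_{lm}V^m,
\end{equation*}
and subtract it from $C^i_{jk}$; the second-derivative and first-order terms cancel (using $\Gamma^i_{lj}=\Gamma^i_{jl}$), leaving
\begin{equation*}
  C^i_{jk}-(\nabla^2_{\pt_j,\pt_k}\bar V)^i=V^l\big[\pt_l\Gamma^i_{jk}-\pt_j\Gamma^i_{kl}-\Gamma^i_{jm}\Gamma^m_{kl}+\Gamma^m_{jk}\Gamma^i_{ml}\big].
\end{equation*}
Recognizing the bracket as the component $(R(\pt_l,\pt_j)\pt_k)^i$ of the curvature tensor (torsion-freeness is used to symmetrize the lower Christoffel indices) identifies this remainder with $(R(\bar V,\pt_j)\pt_k)^i$, whence $C^i_{jk}=[\nabla^2_{\pt_j,\pt_k}\bar V+R(\bar V,\pt_j)\pt_k]^i$ and the stated formula follows. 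The point demanding genuine care is the sign and index bookkeeping in this last step, together with tracking which terms are symmetric in $j,k$ so that contraction against the symmetric $Q^{jk}x$ is justified.
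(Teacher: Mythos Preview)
Your proposal is correct and follows essentially the same route as the paper. Both arguments reduce immediately to the intermediate identity $V^i_\nabla = V^i_1 + \tfrac12\Gamma^i_{jk}V^{jk}_2 + \tfrac12 (V^l\pt_l\Gamma^i_{jk})Q^{jk}x$ --- the paper obtains it by differentiating $(D_\nabla\tilde X_\e)^i = (D\tilde X_\e)^i + \tfrac12\Gamma^i_{jk}(\tilde X_\e)(Q\tilde X_\e)^{jk}$ at $\e=0$, you obtain it by applying $jV$ to $D^i_\nabla x = D^i x + \tfrac12(\Gamma^i_{jk}\circ x)Q^{jk}x$ via $\Phi$-relatedness, which is the same computation in different clothing --- and both then identify the $Q^{jk}x$-coefficient with $[\nabla^2_{\pt_j,\pt_k}\bar V + R(\bar V,\pt_j)\pt_k]^i$ by the same coordinate curvature calculation.
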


\begin{proof}
  By \eqref{prog-1} and \eqref{prog-2}, we have
  \begin{equation*}
    \begin{split}
      V_\nabla^i (j_{(t,q)} X) &= \frac{d}{d\e} \bigg|_{\e=0} (D_\nabla\tilde X_\e)^i(\psi^0_\e(t)) = \frac{d}{d\e} \bigg|_{\e=0} \left[ (D\tilde X_\e)^i(\psi^0_\e(t)) + \frac{1}{2} \Gamma^i_{jk}(\tilde X_\e(t)) (Q\tilde X_\e)^{jk}(\psi^0_\e(t)) \right] \\
      &= V_1^i (j_{(t,q)} X) + \frac{1}{2} \Gamma^i_{jk}(X(t)) V_2^{jk} (j_{(t,q)} X) + \frac{1}{2} \frac{\pt\Gamma^i_{jk}}{\pt x^l}(X(t)) (QX)^{jk}(t) V^l(X(t)) \\
      &= \left[ \frac{\pt}{\pt t} + \left( (D_\nabla X)^l(t) - \frac{1}{2}\Gamma_{jk}^l(X(t)) (QX)^{jk}(t) \right) \frac{\pt}{\pt x^l} + \frac{1}{2} (QX)^{jk}(t) \frac{\pt^2}{\pt x^j \pt x^k} \right] V^i (t,X(t)) - (DX)^i(t) \dot V^0(t) \\
      &\quad + \frac{1}{2} \Gamma^i_{jk}(X(t)) \left[ \frac{\pt V^j}{\pt x^l}(t,(X(t)) (QX)^{kl}(t) + \frac{\pt V^k}{\pt x^m} (t,(X(t)) (QX)^{jm}(t) - (QX)^{jk}(t) \dot V^0(t) \right] \\
      &\quad + \frac{1}{2} \frac{\pt\Gamma^i_{jk}}{\pt x^l}(X(t)) (QX)^{jk}(t) V^l(t,X(t)) \\
      &= \left[ \frac{\pt}{\pt t} + (D_\nabla X)^l(t) \frac{\pt}{\pt x^l} \right] V^i (t,X(t)) + \frac{1}{2} (Q_\nabla X)^{jk}(t) \left[ \nabla^2_{\pt_j,\pt_k} \bar V + R(\bar V,\pt_j)\pt_k \right]^i (t,X(t)) \\
      &\quad - (D_\nabla X)^i(t) \dot V^0(t).
    \end{split}
  \end{equation*}
  The proof is complete.
\end{proof}

\section{The second-order cotangent bundle}\label{sec-5}

\subsection{Second-order covectors}

\begin{definition}[Second-order cotangent space]
  The second-order cotangent space at $q\in M$ is the dual vector space of $\mathcal T^O_q M$, denoted by $\mathcal T^{S*}_q M$. The pairing of $\alpha\in \mathcal T^{S*}_q M$ and $A\in \mathcal T^O_q M$ is denoted by $\langle \alpha, A \rangle$ or $\alpha(A)$.  Elements of $\mathcal T^{S*}_q M$ are called second-order covectors at $q$. The disjoint union $\mathcal T^{S*} M:= \amalg_{q\in M} \mathcal T^{S*}_q M$ is called the stochastic cotangent bundle of $M$. The natural projection map from $\mathcal T^{S*} M$ to $M$ is denoted by $\tau^{S*}_M$. A (local or global) smooth section of $\mathcal T^{S*} M$ is called a second-order covector field or a second-order form. %The set of all global second-order forms on $M$ is denoted by $\mathfrak X^{S*}(M)$.
\end{definition}

Dual to the left action \eqref{left-action} of $G_I^d$ on fibers of $\mathcal T^S M$, $G_I^d$ will act on those of $\mathcal T^{S*} M$ from the right.

\begin{lemma}\label{struct-grp-cot}
  The stochastic cotangent bundle $(\mathcal T^{S*} M, \tau^{S*}_M, M)$ is the fiber bundle dual to $(\mathcal T^S M, \tau^S_M, M)$, with structure group $G_I^d$ acting on the typical fiber $(\R^d \times \Sym^2(\R^d))^*$ from the right by
  \begin{equation*}
    (p, o)\cdot(g, \kappa) = (g^* p, \kappa^* p + (g^*\otimes g^*) o),
  \end{equation*}
  for all $(g, \kappa) \in G_I^d$, $p\in (\R^d)^*$, $o\in(\Sym^2(\R^d))^*$.
\end{lemma}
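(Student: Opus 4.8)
The plan is to treat $\mathcal T^{S*}M$ as the dual vector bundle of $\mathcal T^O M$ and to identify its transition cocycle as the transpose of the one for $\mathcal T^O M$. Recall that $\mathcal T^O M$ is a vector bundle whose typical fiber is the vector space $\R^d\times\Sym^2(\R^d)$, on which $G_I^d$ acts fiber-linearly from the left by \eqref{left-action}; write $\lambda(g,\kappa)$ for this linear automorphism. By definition $\mathcal T^{S*}_qM=(\mathcal T^O_qM)^*$, so the typical fiber of $\mathcal T^{S*}M$ is $(\R^d\times\Sym^2(\R^d))^*\cong(\R^d)^*\times(\Sym^2(\R^d))^*$, and the whole content of the lemma is that the structure group is again $G_I^d$ acting on this dual fiber through the contragredient representation, from the right, by $(p,o)\cdot(g,\kappa):=\lambda(g,\kappa)^*(p,o)$. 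First I would recall the general principle: if a vector bundle has transition functions acting by $\lambda(g_{\alpha\beta})$, then requiring the canonical pairing $\langle\,\cdot\,,\,\cdot\,\rangle$ between fiber and dual fiber to be independent of the local trivialization forces the dual cocycle to be $\lambda(g_{\alpha\beta})^*$ acting from the right, characterized by the adjunction $\langle (p,o)\cdot(g,\kappa),(\mathfrak b,a)\rangle=\langle (p,o),(g,\kappa)\cdot(\mathfrak b,a)\rangle$. This reduces the lemma to a single fiber-level computation of the adjoint $\lambda(g,\kappa)^*$.

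For that computation I would use the natural pairing $\langle(p,o),(\mathfrak b,a)\rangle=p(\mathfrak b)+o(a)$, where $o(a)$ denotes the duality pairing of $\Sym^2(\R^d)$ with its dual. Substituting \eqref{left-action} and expanding,
\begin{equation*}
\langle(p,o),(g,\kappa)\cdot(\mathfrak b,a)\rangle=p(g\mathfrak b+\kappa a)+o\big((g\otimes g)a\big)=(g^*p)(\mathfrak b)+\big(\kappa^*p+(g^*\otimes g^*)o\big)(a),
\end{equation*}
where $g^*$, $\kappa^*$ and $g^*\otimes g^*=(g\otimes g)^*$ are the transposes of $g:\R^d\to\R^d$, of $\kappa:\Sym^2(\R^d)\to\R^d$, and of $g\otimes g:\Sym^2(\R^d)\to\Sym^2(\R^d)$ respectively. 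Reading off the two slots against $(\mathfrak b,a)$ gives $\lambda(g,\kappa)^*(p,o)=(g^*p,\ \kappa^*p+(g^*\otimes g^*)o)$, which is exactly the announced formula.

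It remains to check that this defines a genuine right action of $G_I^d$; this I would deduce from the fact, already built into the definition of $\mathcal T^O M$, that $\lambda$ is a left action, i.e.\ $\lambda\big((g_2,\kappa_2)\circ(g_1,\kappa_1)\big)=\lambda(g_2,\kappa_2)\circ\lambda(g_1,\kappa_1)$ with $\lambda(\id)=\id$. Taking transposes reverses the order, so $(p,o)\cdot\big((g_2,\kappa_2)\circ(g_1,\kappa_1)\big)=\big((p,o)\cdot(g_2,\kappa_2)\big)\cdot(g_1,\kappa_1)$ and $(p,o)\cdot\id=(p,o)$; that is, the contragredient is an anti-homomorphism and hence a right action, which one may alternatively verify directly by composing the explicit formula with the It\^o group law. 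The pairing $\langle\alpha,A\rangle$ then descends to a well-defined bundle pairing between $\mathcal T^{S*}M$ and $\mathcal T^O M$ (restricting to $\mathcal T^S M\cong\mathcal T^E M$ via Proposition \ref{TS-TO}). The only genuinely non-routine point is the appearance of the cross term $\kappa^*p$ in the $o$-slot: because the $\kappa$ part of \eqref{left-action} feeds the $a$-input into the $\mathfrak b$-output, its transpose feeds the $p$-covector into the $o$-slot, and keeping this adjunction straight --- together with the correct identification of $(g\otimes g)^*$ with $g^*\otimes g^*$ on $\Sym^2$ --- is where care is needed; everything else is standard tensorial duality.
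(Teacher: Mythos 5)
The paper states this lemma without any proof, treating it as a routine consequence of fiberwise duality, and your argument is precisely the standard one that fills this in: compute the adjoint of the fiberwise left action \eqref{left-action} with respect to the pairing $\langle(p,o),(\mathfrak b,a)\rangle=p(\mathfrak b)+o(a)$, and observe that transposition reverses composition, turning the left action into a right action. Your computation is correct (it is also consistent with a direct check against the It\^o group law, where $((g_2,\kappa_2)\circ(g_1,\kappa_1))^*$ expands to $\kappa_1^*g_2^*p+(g_1^*\otimes g_1^*)\kappa_2^*p+(g_1^*\otimes g_1^*)(g_2^*\otimes g_2^*)o$ in the second slot), so there is nothing to object to.
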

%\begin{proof}
%  For all $\mathfrak b\in \R^d$, $a\in\Sym^2(\R^d)$, by \eqref{left-action},
%  \begin{equation*}
%    \begin{split}
%      \langle (p, o)\cdot(g, \kappa), (\mathfrak b, a)\rangle &= \langle (p, o), (g, \kappa)\cdot(\mathfrak b, a)\rangle \\
%      &= \langle (p, o), (g\mathfrak b + \kappa a, (g\otimes g) a) \rangle \\
%      &= (p_i g^i_j) \mathfrak b^j + (p_i \kappa^i_{jk} + o_{lm} g^l_j g^m_k) a^{jk}.
%    \end{split}
%  \end{equation*}
%  The results follow.
%\end{proof}

The notion of second-order forms should not be confused with the classical one of 2-forms. There are two basic examples of second-order forms, say, $d^2 f$ and $df\cdot dg$, where $f$ and $g$ are given smooth functions on $M$. They are defined as follows: for $A\in\mathcal T^S M$,
\begin{equation}\label{forms}
  \langle d^2 f, A \rangle := Af, \qquad \langle df\cdot dg, A \rangle := \Gamma_A(f,g) = A(fg) - fAg - gAf,
\end{equation}
where $\Gamma_A$ is the squared field operator defined in \eqref{squared-field}.
These notations go back to L.~Schwartz \cite{Sch84} and P.A.~Meyer \cite{Mey81} (see also \cite[Chapters VI and VII]{Eme89}), where the term $d^2 f$ is called the \emph{second differential} of $f$, and the term $df\cdot dg$ is called the \emph{symmetric product} of $df$ and $dg$. Note that in these original references, there is a factor $\frac{1}{2}$ at the RHS of the definition of $df\cdot dg$. Here we drop this factor. Obviously, when restricted to $T M$, the second differential $d^2f$ is just the differential $df$ but the symmetric product $df\cdot dg$ vanishes.

%The RHS of the second equality in \eqref{forms} defines a bilinear operator. We denote it by
%\begin{equation}\label{squared-field}
%  \Gamma_A(f,g): = A(fg) - fAg - gAf,
%\end{equation}
%and call $\Gamma_A$ the squared field operator (originally ``op\'erateur carr\'e du champ'') associated to $A\in\mathcal T^S M$. Clearly, for $V\in T M$, $\Gamma_V \equiv 0$ by Leibniz's rule.

The definition of the symmetric product $df\cdot dg$ yields two properties: $df\cdot dg$ is symmetric in $f$ and $g$; and $(df\cdot dg)_q = 0$ if one of $df_q$ and $dg_q$ vanishes. These lead to a more general definition for symmetric products of two 1-forms. More precisely, let $\omega, \eta \in \mathcal T^*_q M$, then there exist smooth functions $f$ and $g$ on $M$ such that $\omega = df_q$ and $\eta = dg_q$. By the preceding property, the second-order covector $(df\cdot dg)_q$ does not depend on the choice of $f$ and $g$, and we will denote it by $\omega\cdot\eta$. Now if $\omega, \eta$ are second-order forms, then their symmetric product is defined pointwisely through $(\omega\cdot\eta)_q = \omega_q \cdot\eta_q$. More formally, we have
\begin{definition}[Symmetric product, {\cite[Chapter VI]{Eme89}}]
  There exists a unique fiber-linear bundle homomorphism $\bullet$ from $T^* M \otimes T^* M$ to $\mathcal T^{S*} M$, which is called the symmetric product, such that for all $\omega, \eta \in T^* M$, $\bullet(\omega\otimes\eta) = \omega\cdot\eta$.
\end{definition}

It is easy to verify from \eqref{forms} that the local frame, dual to \eqref{frame}, for $(\mathcal T^{S*} M, \tau^{S*}_M, M)$ over the local chart $(U,(x^i))$ is given by (see also \cite[Chapter VI]{Eme89})
\begin{equation*}
  \left\{ d^2 x^i, \textstyle{\frac{1}{2}} dx^i\cdot dx^i, dx^j\cdot dx^k: 1\le i\le d, 1\le j< k \le d \right\}.
\end{equation*}
We adopt the convention that $dx^k\cdot dx^j = dx^j\cdot dx^k$ for all $1\le j< k \le d$. Under this frame, a second-order covector $\alpha\in \mathcal T^{S*}_q M$ has a local expression
\begin{equation}\label{eqn-8}
  \alpha = \alpha_i d^2 x^i|_q + \textstyle{\frac{1}{2}} \alpha_{jk} dx^j \cdot dx^k|_q,
\end{equation}
where $\alpha^{jk}$ is symmetric in $j,k$. The coordinates $(x^i)$ induce a canonical coordinate system on $\mathcal T^{S*} M$, denoted by $(x^i, p_i, o_{jk})$ and defined by
\begin{equation}\label{induced-sys-2-cot}
  x^i(\alpha) = x^i(q), \quad p_i(\alpha) = \alpha_i, \quad o_{jk}(\alpha) = \alpha_{jk}.
\end{equation}
for $\alpha$ in \eqref{eqn-8}. Since the coefficients $(\alpha_i)$ do transform like a covector, as indicated in Lemma \ref{struct-grp-cot}, it will cause no ambiguity to retain $(x^i,p_i)$ as canonical coordinates on $T^* M$. As in classical geometric mechanics \cite{AM78,HSS09}, we still call the coordinates $(p_i)$ the conjugate momenta. And we shall call the second-order coordinates $(o_{jk})$ the \emph{conjugate diffusivities}. %(conjugate diffusum in singular form).

The pairing of $\alpha$ and the second-order vector field $A$ in \eqref{2-tangent-rep} is then
\begin{equation*}
  \langle \alpha, A \rangle = \alpha_i A^i + \alpha_{jk} A^{jk}.
\end{equation*}
It follows from \eqref{forms} and \eqref{squared-field} that for smooths functions $f$ and $g$ on $M$,
\begin{equation*}
  d^2 f = \frac{\pt f}{\pt x^i} d^2 x^i + \frac{1}{2} \frac{\pt^2 f}{\pt x^j \pt x^k} dx^j\cdot dx^k, \qquad df\cdot dg = \frac{\pt f}{\pt x^i} \frac{\pt g}{\pt x^j} dx^i\cdot dx^j.
\end{equation*}
%and
%\begin{equation*}
%  \Gamma_A(f,g) = A^{ij} \frac{\pt f}{\pt x^i} \frac{\pt g}{\pt x^j}.
%\end{equation*}
More generally, for 1-forms $\omega$ and $\eta$ with local expressions $\omega = \omega_i dx^i$ and $\eta = \eta_i dx^i$, the symmetric product $\omega\cdot\eta$ has local expression
\begin{equation}\label{product-general}
  \omega\cdot\eta = \omega_i \eta_j dx^i\cdot dx^j.
\end{equation}

Dual to the tangent case, there is indeed a canonical bundle epimorphism $\hat\varrho^*: (\mathcal T^{S*} M, \tau^{S*}_M, M) \to (T^* M, \tau^*_M, M)$, given by
\begin{equation*}
  \hat\varrho^*(\alpha) = \alpha|_{TM}.
\end{equation*}
In particular $\hat\varrho^*(d^2 f) = df$. In local coordinates, $\hat\varrho^*$ reads as
\begin{equation*}%\label{varrho-hat-star}
  \hat\varrho^*\left( \alpha_i d^2 x^i|_q + \textstyle{\frac{1}{2}} \alpha_{jk} dx^j \cdot dx^k|_q \right) = \alpha_i d x^i|_q,
\end{equation*}
The map $\hat\varrho^*$ is well defined since $\alpha|_{TM}$ is a covector. %Moreover, a differential $df$ for a function $f$ on $M$ can be regarded as a section of $\mathcal T^{S*} M$ naturally.
Clearly, $\hat\varrho^*$ is also a surjective submersion, so that $\mathcal T^{S*} M$ is a fiber bundle over $T^* M$. Occasionally, we will use the notation $\hat\varrho^*_M$ to indicate the base manifold $M$.

However, there is no canonical bundle monomorphism from $T^* M$ to $\mathcal T^{S*} M$ which is a left inverse of $\hat\varrho^*$ and linear in fiber. We call such a bundle epimorphism a \emph{fiber-linear bundle injection} from $T^* M$ to $\mathcal T^{S*} M$. Similarly to Proposition \ref{induced-conn}, we also have a connection correspondence property. Namely, if we are given a linear connection $\nabla$ on $M$, then it induces a fiber-linear bundle injection from $T^* M$ to $\mathcal T^{S*} M$ by
\begin{equation}\label{iota-conn}
  \hat\iota^*_\nabla : T^* M \to \mathcal T^{S*} M, \quad d x^i|_q \mapsto d^2 x^i|_q + \textstyle{\frac{1}{2}} \Gamma_{jk}^i(q) dx^j \cdot dx^k|_q =: d^\nabla x^i|_q,
\end{equation}
or in local coordinates $\hat\iota^*_\nabla(x, p) = (x, p, (\Gamma_{jk}^i(x) p_i))$.
Any fiber-linear bundle injection from $T^* M$ to $\mathcal T^{S*} M$ induces a torsion-free linear connection on $M$.

Denote by $\Sym^2(T^* M)$ the subbundle of $T^* M \otimes T^* M$ consisting of all $(0,2)$-tensors on $M$. Then the symmetric product $\bullet$, when restricting to $\Sym^2(T^* M)$, is a bundle monomorphism whose image is the kernel of $\hat\varrho^*$. Conversely, still by the connection correspondence, a linear connection $\nabla$ induces a fiber-linear bundle epimorphism from $\mathcal T^{S*} M$ to $\Sym^2(T^* M)$ which is a right inverse of $\bullet$ and is given by
\begin{equation*}
  \varrho^*_\nabla: \mathcal T^{S*} M \to \Sym^2(T^* M), \quad \alpha_i d^2 x^i|_q + \textstyle{\frac{1}{2}} \alpha_{jk} dx^j \cdot dx^k|_q \mapsto \left(\alpha_{jk} - \alpha_i\Gamma_{jk}^i(q)\right) dx^j \otimes dx^k|_q.
\end{equation*}
We introduce the $\nabla$-dependent coordinates $(o_{jk}^\nabla)$ by $o_{jk}^\nabla(\alpha) = \alpha_{jk} - \alpha_i \Gamma_{jk}^i(q)$ for $\alpha$ in \eqref{eqn-8}, i.e.,
\begin{equation}\label{tensorial-diffusivities}
  o_{jk}^\nabla = o_{jk} - p_i(\Gamma_{jk}^i\circ x).
\end{equation}
Then $\varrho^*_\nabla(\alpha) = o_{jk}^\nabla(\alpha) dx^j \otimes dx^k|_q$ and in particular
\begin{equation*}
  \varrho^*_\nabla(d^2 f) = \left( \frac{\pt^2 f}{\pt x^j \pt x^k} - \Gamma_{jk}^i \frac{\pt f}{\pt x^i} \right) dx^j \otimes dx^k = \nabla^2 f.
\end{equation*}
The coordinates $(x^i, p_i, o_{jk}^\nabla)$ form a coordinate system on $\mathcal T^{S*} M$, which we call the $\nabla$-canonical coordinate system. The coordinates $(x^i, o_{jk}^\nabla)$ also form a coordinate system on $\Sym^2(T^* M)$ when restricted to it. We will call the coordinates $(o^\nabla_{jk})$ the \emph{tensorial conjugate diffusivities}.

To sum up, we have the following short exact sequence which is split when a linear connection is provided:
\begin{equation}\label{dual-exact-seq}
  0 \longrightarrow \Sym^2(T^* M) \stackrel{\bullet}{\longrightarrow} \mathcal T^{S*} M \stackrel{\hat\varrho^*}{\longrightarrow} T^* M \longrightarrow 0.
\end{equation}
It is easy to check that the bundle homomorphisms $\hat\varrho^*$, $\hat\iota^*_\nabla$, $\bullet$ and $\varrho^*_\nabla$ are dual to $\iota$, $\varrho_\nabla$, $\hat\varrho$ and $\hat\iota_\nabla$ in \eqref{iota}, \eqref{varrho}, \eqref{varrho-hat} and \eqref{iota-hat}, respectively, so that the short exact sequence \eqref{dual-exact-seq} is dual to \eqref{exact-seq}. Similarly to \eqref{dcpst}, we have the following decomposition if a linear connection $\nabla$ is given,
\begin{equation*}
  \mathcal T^{S*} M = \hat\iota^*_\nabla(T^*M) \oplus \bullet \left( \Sym^2(T^* M) \right) \cong T^*M \oplus \Sym^2(T^*M),
\end{equation*}
with fiber-wise isomorphism $\cong$ and first direct sum $\oplus$, which is given by
\begin{equation*}
  \alpha = \alpha_i d^\nabla x^i|_q + \textstyle{\frac{1}{2}} \left(\alpha_{jk} - \alpha_i\Gamma_{jk}^i(q)\right) dx^j \cdot dx^k|_q \mapsto \left( \alpha_i d x^i|_q, \left(\alpha_{jk} - \alpha_i\Gamma_{jk}^i(q)\right) dx^j \otimes dx^k|_q \right).
\end{equation*}
In particular,
\begin{equation}\label{eqn-35}
  d^2 f = \pt_i f d^\nabla x^i + \ts{\frac{1}{2}} \nabla^2_{\pt_j, \pt_k} f dx^j \cdot dx^k \mapsto (df, \nabla^2 f).
\end{equation}

Similarly to the classical cotangent space, the second-order cotangent space may be defined via germs. To be precise, we denote by $C_q^\infty(M)$ the set of all germs of smooth functions at $q\in M$, and define a equivalence relation between germs: $[f]_q, [g]_q \in C_q^\infty(M)$ are equivalent if and only if they have the same Taylor expansion at $q$ higher than order zero and up to order two. Then, one can easily check that there is a one-to-one correspondence between $\mathcal T^{S*}_q M$ and the quotient space of $C_q^\infty(M)$ by this equivalence relation. Along this way, we can also observe the following diffeomorphism:
  \begin{equation}\label{germ}
    \mathcal T^{S*} M\times\R \cong J^2\hat\pi,
  \end{equation}
  by mapping $(d^2 f_q, f(q))$ to $j^2_q f$, where $J^2\hat\pi$ is the classical second-order jet bundle of $(M\times \R, \hat\pi, M)$. This is similar to $T^*M\times\R$ is diffeomorphic to the first-order jet bundle $J^1\hat\pi$ (e.g., \cite[Example 2.5.11 ]{Gei08} or \cite[Example 4.1.15 ]{Sau89}). We denote the natural projection maps from $\mathcal T^{S*} M\times\R$ to $\R$ and from $T^* M\times\R$ to $\R$ by $\hat\pi^2_{0,1}$ and $\hat\pi^1_{0,1}$, respectively.

  The relations and projection maps are integrated into the following commutative diagram:
  \vspace{1mm}
\begin{center}
\begin{tikzcd}
J^1\hat\pi \cong T^* M\times\R \arrow{ddrr}[left, crossing over]{\hat\pi_1} \arrow{ddr}[left, crossing over]{\hat\pi_{1,0}} \arrow{ddd}{\hat\pi^1_{0,1}} & J^2\hat\pi \cong \mathcal T^{S*} M\times\R \arrow{dd}{\hat\pi_{2,0}} \arrow{lddd}{\hat\pi^2_{0,1}} \arrow{l}{\hat\pi_{2,1}} \arrow{ddr}{\hat\pi_2} \arrow{r}{\hat\pi_{1,1}} & \mathcal T^{S*} M \arrow{dd}{\tau^{S*}_M} \arrow{r}{\hat\varrho^*} & T^* M \arrow{ldd}{\tau^*_M} \\
& & & \\
 & M\times \R \arrow{ld}{\pi} \arrow{r}{\hat\pi} & M & \\
\R & & &
\end{tikzcd}
\end{center}

\begin{remark}%\label{symm-bundle}
  (i). As in Remark \ref{symm-bundle-tang}, given a linear connection $\nabla$, we can obtain a one-to-one correspondence between $(T^*M \oplus \Sym^2(T^*M))\times\R$ and $J^2\hat\pi$ by mapping $(df_q, \nabla^2 f_q, f(q))$ to $j^2_q f$.  One can find in \cite{DDD19} an application of the jet-like structure on $T^*M \oplus \Sym^2(T^*M)$ and higher-order bundles to Martin Hairer's theory of regularity structures \cite{Hai14}.

  (ii). As we have seen, the product $\R\times \mathcal T^S M$ is the model bundle of the stochastic jet space $\mathcal J^S M$, while the product $\mathcal T^{S*} M\times\R$ is diffeomorphic to the second-order jet bundle $J^2\hat\pi$. So, in a way, we can say that the ``stochastic'' and the ``second-order'' are dual to each other. This stochastic--second-order duality is somehow analogous to the particle--wave duality in quantum mechanics.
\end{remark}

\subsection{Second-order tangent and cotangent maps}

\begin{definition}[Second-order tangent and cotangent maps, {\cite[Chapter VI]{Eme89}}]\label{push-pull-map-point}
  Let $M$ and $N$ be two smooth manifolds, $F: M\to N$ be a smooth map. The second-order tangent map of $F$ at $q\in M$ is a linear map $d^2 F_q: \mathcal T^S_q M \to \mathcal T^S_{F(q)} N$ defined by
  \begin{equation*}
    d^2 F_q (A) f = A(f\circ F), \quad\text{for } A\in \mathcal T^S_q M, f\in C^\infty(N).
  \end{equation*}
  The second-order cotangent map of $F$ at $q\in M$ is a linear map $d^2 F^*_q: \mathcal T^{S*}_{F(q)}N \to \mathcal T^{S*}_q M$ dual to $d^2 F_q$, that is,
  \begin{equation*}
    d^2 F^*_q (\alpha) (A) = \alpha (d^2 F_q( A)), \quad\text{for } A\in \mathcal T^S_q M, \alpha\in \mathcal T^{S*}_{F(q)} N.
  \end{equation*}
\end{definition}

The restrictions of $d^2 F_q$ to $T_q M$ coincide with the classical tangent map $d F_q$. But this is not the case for $d^2 F^*_q$ when restricting to $T^{*}_{F(q)} N$, since for $\alpha\in T^*_{F(q)} N$, $d^2 F^*_q (\alpha)$ is still a linear map on $\mathcal T^S_q M$. A manifestation of these phenomena may be seen through local coordinates in the following lemma.

\begin{lemma}\label{push-pull-map-prop}
  Let $(U,(x^i))$ and $(V,(y^j))$ be local coordinate charts around $q$ and $F(q)$, respectively. If
  \begin{equation*}
    A = A^i \frac{\partial}{\partial x^i}\bigg|_q + A^{ij} \frac{\partial^2}{\partial x^i\partial x^j}\bigg|_q \quad \text{and} \quad \alpha = \alpha_i d^2y^i|_{F(q)} + \alpha_{ij} dy^i\cdot dy^j|_{F(q)}.
  \end{equation*}
  Then
  \begin{gather*}
    d^2 F_q (A) = (A F^i) \frac{\partial}{\partial y^i}\bigg|_{F(q)} + \Gamma_A(F^i, F^j) \frac{\partial^2}{\partial y^i\partial y^j}\bigg|_{F(q)}, \\
    d^2 F^*_q(\alpha) = \alpha_i d^2F^i|_q + \alpha_{ij} dF^i\cdot dF^j|_q.
  \end{gather*}
\end{lemma}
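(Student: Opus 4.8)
The plan is to compute $d^2 F_q(A)$ directly as a second-order derivation at $F(q)$, read off its two coefficients in the frame $\{\frac{\partial}{\partial y^i},\ \frac{\partial^2}{\partial y^i\partial y^j}\}$, and then obtain the cotangent formula by pure dualization; throughout write $F^i:=y^i\circ F$ for the components of $F$. The key step is to show that the second-order tangent map pulls back squared field operators. Set $B:=d^2 F_q(A)$, so that $Bf=A(f\circ F)$ for $f\in C^\infty(N)$. Using the characterization $\Gamma_B(f,g)=B(fg)-f(F(q))\,Bg-g(F(q))\,Bf$ of the squared field at $F(q)$ together with the definition of $B$,
\begin{equation*}
  \Gamma_B(f,g) = A\big((f\circ F)(g\circ F)\big) - (f\circ F)(q)\,A(g\circ F) - (g\circ F)(q)\,A(f\circ F) = \Gamma_A(f\circ F,\, g\circ F).
\end{equation*}
This single identity both certifies that $B$ is a genuine second-order derivation at $F(q)$ (so $d^2 F_q$ is well defined) and records the rule $\Gamma_{d^2 F_q(A)}(f,g)=\Gamma_A(f\circ F,g\circ F)$ that drives the rest of the argument.

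Next I read off coefficients. By the coordinate-free description of second-order operators, a second-order vector $B$ at $F(q)$ is recovered in the above frame from the scalars $B(y^i)$ (its first-order coefficients) and, through \eqref{squared-field}, from $\Gamma_B(y^i,y^j)$ (its second-order coefficients). Definition~\ref{push-pull-map-point} gives $B(y^i)=A(y^i\circ F)=AF^i$, while the displayed identity with $f=y^i$, $g=y^j$ gives $\Gamma_B(y^i,y^j)=\Gamma_A(F^i,F^j)$. These are exactly the coefficients appearing in the asserted expression for $d^2 F_q(A)$, which proves the first formula.

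For the cotangent map I avoid a fresh computation and simply dualize. For any $A$, the defining relation $\langle d^2 F^*_q(\alpha),A\rangle=\langle\alpha,d^2 F_q(A)\rangle$ combined with the basic pairings $\langle d^2 y^i,B\rangle=B(y^i)$ and $\langle dy^i\cdot dy^j,B\rangle=\Gamma_B(y^i,y^j)$ from \eqref{forms} yields
\begin{equation*}
  \langle d^2 F^*_q(\alpha),A\rangle = \alpha_i\, AF^i + \alpha_{ij}\, \Gamma_A(F^i,F^j).
\end{equation*}
On the other hand, pairing the proposed covector $\alpha_i\, d^2 F^i|_q + \alpha_{ij}\, dF^i\cdot dF^j|_q$ against $A$ and invoking \eqref{forms} once more — now with the functions $F^i$ on $M$, so that $\langle d^2 F^i,A\rangle=A(F^i)$ and $\langle dF^i\cdot dF^j,A\rangle=\Gamma_A(F^i,F^j)$ — gives the same value. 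Since this holds for every second-order tangent vector $A$ at $q$, and such vectors span $\mathcal T^O_q M$ on which the pairing with $\mathcal T^{S*}_q M$ is nondegenerate, the two second-order covectors coincide.

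The computation itself is short; the only point demanding care is the symmetric-product bookkeeping — the factor $\tfrac12$ carried by the diagonal terms $\tfrac12\,dx^i\cdot dx^i$ in the dual frame and the symmetry convention $A^{ij}=A^{ji}$ — when passing between the scalars $\Gamma_B(y^i,y^j)$ and the frame coefficients. Phrasing every step through the intrinsic pairings of \eqref{forms} rather than through explicit index sums sidesteps these factors, which I expect to be the only genuine obstacle.
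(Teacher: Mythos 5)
Your proof is correct. The paper actually states this lemma without proof (it is presented as a routine coordinate manifestation of Definition \ref{push-pull-map-point}), and the implied argument is the direct one: apply the second-order chain rule $\pt_k\pt_l(f\circ F)=\pt_{y^i}\pt_{y^j}f\,\pt_kF^i\pt_lF^j+\pt_{y^i}f\,\pt_k\pt_lF^i$ to $Bf=A(f\circ F)$ and collect terms. Your route is genuinely different and cleaner: you establish the intrinsic functoriality $\Gamma_{d^2F_q(A)}(f,g)=\Gamma_A(f\circ F,g\circ F)$ once, which simultaneously shows $d^2F_q(A)$ is a bona fide second-order derivation and hands you the second-order coefficients by evaluating on $f=y^i$, $g=y^j$; the cotangent formula then follows by pure dualization against the pairings in \eqref{forms} and nondegeneracy, with no further computation. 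What the direct approach buys is an explicit index formula $B^{ij}=A^{kl}\pt_kF^i\pt_lF^j$ that is immediately usable downstream (e.g.\ in Lemma \ref{push-mixed-coord}); what yours buys is coordinate independence and the reusable identity $\Gamma_{F^S_*A}(f,g)=\Gamma_A(f\circ F,g\circ F)$. One caveat you correctly sensed: your chain of equalities identifies the frame coefficient via \eqref{squared-field} ($\Gamma_A=A^{ij}\pt_i\otimes\pt_j$) while computing $\Gamma_B$ via the derivation-defect formula $B(fg)-fBg-gBf$; these two characterizations in the paper differ by a factor of $2$ when checked in coordinates (the defect of $A=A^i\pt_i+A^{ij}\pt_i\pt_j$ on $fg$ is $2A^{ij}\pt_if\pt_jg$, not $A^{ij}\pt_if\pt_jg$). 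That tension is internal to the paper's conventions, not a flaw in your argument — your functoriality identity and the final formulas hold verbatim under either convention applied consistently — but if you wanted the proof to be airtight against the coordinate form \eqref{2-tangent-rep} you would pin down which normalization of $\Gamma_A$ is in force.
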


Now if $A\in\mathcal T_q M$, then all $(A^{ij})$ vanish and thereby so do $\Gamma_A(F^i, F^j)$'s. Thus, $d^2 F_q(A) = (A F^i) \frac{\partial}{\partial y^i}|_{F(q)} = d F_q(A)$. This makes clear that $d^2 F_q|_{\mathcal T_{q} M} = d F_q$. But if $\alpha\in \mathcal T^*_{F(q)} N$, then $\alpha^{ij}$'s vanish and
\begin{equation*}
  d^2 F^*_q (\alpha) = \alpha_i d^2F^i|_q = \alpha_i \frac{\pt F^i}{\pt x^j}(q) d^2 x^j|_q + \alpha_i \frac{\pt^2 F^i}{\pt x^j \pt x^k}(q) dx^j\cdot dx^k|_q,
\end{equation*}
while $d F^*_q (\alpha) = \alpha_i d F^i|_q = \alpha_i \frac{\pt F^i}{\pt x^j}(q) d^2 x^j|_q$. Hence, $d^2 F^*_q|_{\mathcal T^*_{F(q)} N} \ne d F^*_q$.

\begin{definition}[Second-order pushforwards and pullbacks]\label{push-pull-map}
  Let $F: M\to N$ be smooth map. The second-order pushforward by $F$ is a bundle homomorphism $F^S_*: (\mathcal T^S M, \tau^S_M, M) \to (\mathcal T^S N, \tau^S_N, N)$ defined by
  $$F^S_*|_{\mathcal T^S_q M} = d^2 F_q.$$
  Given a second-order form $\alpha$ on $N$, the second-order pullback of $\alpha$ by $F$ is a second-order form $F^{S*}\alpha$ on $M$ defined by
  \begin{equation*}
    (F^{S*}\alpha)_q = d^2 F^*_q \left( \alpha_{F(q)} \right), \quad q\in M.
  \end{equation*}
  Let $F$ be a diffeomorphism. The second-order pullback by $F$ is a bundle isomorphism $F^{S*}: (\mathcal T^{S*} N, \tau^{S*}_N, N) \to (\mathcal T^{S*} M, \tau^{S*}_M, M)$ defined by
  $$F^{S*}|_{\mathcal T^{S*}_{q'} N} = d^2 F^*_{F^{-1}(q')}.$$
  Given a second-order vector field $A$ on $M$, the second-order pushforward of $A$ by $F$ is a second-order vector field $F^S_*A$ on $N$ defined by
  \begin{equation*}
    (F^S_*A)_{q'} = d^2 F_{F^{-1}(q')} \left( A_{F^{-1}(q')} \right), \quad q'\in N.
  \end{equation*}
\end{definition}

Clearly, $F^S_*|_{T M} = F_*$ is the usual pushforward, but $F^{S*}|_{T^* N} \ne F^*$. The following properties are straightforward.

\begin{lemma}\label{prop-push-pull}
  Let $F: M\to N$, $G:N\to K$ be two smooth maps. Let $A$ be a second-order vector field on $M$ and $f, g$ be two smooth functions on $N$. \\
  (i) $G^S_*\circ F^S_* = (G\circ F)^S_*$. \\
  (ii) If $F$ is a diffeomorphism, then $((F^S_*A)f)\circ F = A(f\circ F)$. \\
  (iii) $F^{S*}(d^2 f) = d^2 (f\circ F)$, $F^{S*} (df\cdot dg) = d(f\circ F)\cdot d(g\circ F)$.
\end{lemma}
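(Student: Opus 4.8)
The plan is to verify all three identities directly from the defining relations of Definitions \ref{push-pull-map-point} and \ref{push-pull-map}, using the standard duality principle: an equality of second-order tangent vectors is checked by their action on arbitrary functions, and an equality of second-order forms is checked by their pairing against arbitrary second-order vectors. No coordinates are needed, though everything could alternatively be read off from Lemma \ref{push-pull-map-prop}.

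For (i), I would fix $A\in \mathcal T^S_q M$ and $f\in C^\infty(K)$ and unwind the definition of the second-order tangent map twice. Since $F^S_*|_{\mathcal T^S_q M}=d^2 F_q$ and $G^S_*|_{\mathcal T^S_{F(q)}N}=d^2 G_{F(q)}$, the defining relation $d^2 F_q(A)h=A(h\circ F)$ gives
\[
(G^S_*\circ F^S_*)(A)f = d^2 G_{F(q)}(d^2 F_q(A))f = (d^2 F_q(A))(f\circ G) = A\left((f\circ G)\circ F\right).
\]
Associativity of composition, $(f\circ G)\circ F = f\circ(G\circ F)$, identifies the right-hand side with $A(f\circ(G\circ F)) = d^2(G\circ F)_q(A)f$. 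As $A$ and $f$ are arbitrary, (i) follows.

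For (ii), I would evaluate both sides pointwise. Writing $q'=F(q)$ and using $(F^S_*A)_{q'}=d^2 F_q(A_q)$ from Definition \ref{push-pull-map} (well-defined precisely because $F$ is a diffeomorphism), the defining relation of $d^2 F_q$ yields
\[
\left((F^S_*A)f\right)(F(q)) = d^2 F_q(A_q)f = A_q(f\circ F) = \left(A(f\circ F)\right)(q),
\]
which is exactly the asserted identity $((F^S_*A)f)\circ F = A(f\circ F)$.

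For (iii), I would test both form identities against an arbitrary $A\in \mathcal T^S_q M$, using that $(F^{S*}\alpha)_q=d^2 F^*_q(\alpha_{F(q)})$ is dual to $d^2 F_q$. The pairing rule $\langle d^2 f,B\rangle=Bf$ gives $\langle (F^{S*}d^2 f)_q,A\rangle = \langle (d^2 f)_{F(q)},d^2 F_q(A)\rangle = (d^2 F_q(A))f = A(f\circ F) = \langle d^2(f\circ F),A\rangle$. For the symmetric product I would use $\langle df\cdot dg,B\rangle=\Gamma_B(f,g)=B(fg)-f\,Bg-g\,Bf$ with $B=d^2 F_q(A)$; substituting $B(h)=A(h\circ F)$ and the base-point evaluations $f(F(q))=(f\circ F)(q)$, $g(F(q))=(g\circ F)(q)$ collapses the three terms into $\Gamma_A(f\circ F,\,g\circ F)$, which is $\langle d(f\circ F)\cdot d(g\circ F),A\rangle$. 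The computations are mechanical; the only place requiring care is this last one, where one must correctly propagate the composition with $F$ through all three terms of the squared field operator and match $f(F(q))$ with $(f\circ F)(q)$, so that the non-Leibniz second-order corrections cancel as intended.
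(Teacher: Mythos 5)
Your proof is correct: the paper states this lemma without proof (labelling it ``straightforward''), and your verification --- unwinding the defining relations $d^2F_q(A)f=A(f\circ F)$ and the duality pairing, with the squared-field computation for the symmetric product --- is exactly the intended argument. All three parts check out, including the careful cancellation in $F^{S*}(df\cdot dg)$ where the base-point evaluations $f(F(q))=(f\circ F)(q)$ make the three terms of $\Gamma_{d^2F_q(A)}(f,g)$ reassemble into $\Gamma_A(f\circ F,g\circ F)$.
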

%\begin{proof}
%  $\langle df\cdot dg, d^2 F_q(A) \rangle = \langle d(f\circ F)\cdot d(g\circ F), A \rangle$
%\end{proof}

\subsection{Mixed-order tangent and cotangent bundles}

In this section, we will extend the notions of the previous two sections to the product manifold $\R\times M$.

\begin{definition}
  The mixed-order tangent bundle of $\R\times M$ is the product bundle (\cite[Definition 1.4.1]{Sau89}) $(T \R \times \mathcal T^S M, \tau_\R\times \tau^S_M, \R\times M)$. The mixed-order cotangent bundle of $\R\times M$ is the product bundle $(T^* \R \times \mathcal T^{S*} M, \tau^*_\R\times \tau^{S*}_M, \R\times M)$. A section of the mixed-order tangent or cotangent bundle is called a mixed-order vector field or mixed-order form, respectively.
\end{definition}

The mixed-order tangent and cotangent bundles are dual to each other. The mixed-order tangent (or cotangent) bundle is the bundle that mixes the first-order tangent (or cotangent) bundle in time and the second-order one in space (this is why we use the terminology ``mixed-order''). It also matches the fundamental principle of stochastic analysis, whose It\^o's logo is $(dX(t))^2 \sim dt$.

%Clearly, the mixed-order tangent bundle $T \R \times \mathcal T^S M$ is a subbundle of the totally second-order tangent bundle $\mathcal T^S (\R\times M)$, and contains the tangent bundle $T (\R \times M)$ as a subbundle. Similar properties hold for the mixed-order cotangent bundle.

For an $M$-valued diffusion $X$ with (time-dependent) generator $A^X$, we call the operator $\vf t + A^X$ its extended generator. This extended generator is a mixed-order vector field on $\R\times M$. Also note that the extended generator $\vf t + A^X$ of $X\in I_{t_0}(M)$ can be characterized by the property that for every $f\in C^\infty(\R\times M)$, the process
\begin{equation*}
  f(t,X(t)) - f(t_0,X(t_0)) - \int_{t_0}^t \left(\vf t + A^X \right) f(s,X(s)) ds, \quad t\ge t_0,
\end{equation*}
is a real-valued continuous $\{\Pred_t\}$-martingale. In general, a mixed-order vector field $A$ has the following local expression:
  \begin{equation*}
    A = A^0 \vf t + A^i \vf{x^i} + A^{jk} \frac{\pt^2}{\pt x^j \pt x^k}.
  \end{equation*}
To give an example of mixed-order forms, we consider a smooth function $f$ on $\R\times M$, and define in local coordinates
\begin{equation*}
  d^\circ f := \frac{\pt f}{\pt t} dt + \frac{\pt f}{\pt x^i} d^2 x^i + \frac{1}{2} \frac{\pt^2 f}{\pt x^j \pt x^k} dx^j\cdot dx^k.
\end{equation*}
Then $d^\circ f$ is a mixed-order form, and we call it the \emph{mixed differential} of $f$. Clearly, the pairing of the mixed differential $d^\circ f$ and a mixed-order vector field $A$ is $\langle d^\circ f, A \rangle = Af$.

Given a bundle homomorphism from $F:(\R\times M, \pi, \R)\to (\R\times N, \rho, \R)$, we define its mixed-order tangent map at $(t,q)\in\R\times M$ by
\begin{equation*}
    d^\circ F_{(t,q)} = d^2 F_{(t,q)}|_{T_t \R\times \mathcal T^S_q M}: T \R\times \mathcal T^S M|_{(t,q)} \to T \R\times\mathcal T^S N|_{F(t,q)}.
\end{equation*}
Its mixed-order cotangent map at $(t,q)\in\R\times M$ is defined as the linear map $d^\circ F^*_{(t,q)}: T^* \R\times\mathcal T^{S*} N|_{F(t,q)} \to T^* \R\times\mathcal T^{S*} M|_{(t,q)}$ dual to $d^\circ F_{(t,q)}$. If, moreover, $F$ is a bundle isomorphism, its mixed-order pushforward and pullback, denoted by $F^R_*$ and $F^{R*}$, respectively, can be defined in a similar manner to Definition \ref{push-pull-map}. We leave their detailed but cumbersome definitions and properties to Appendix \ref{sec-A-1}.

\section{Stochastic Hamiltonian mechanics}\label{sec-6}

\subsection{Horizontal diffusions}

In this section, we consider a general fiber bundle $(E,\pi_M, M)$ over a manifold $M$, with fiber dimension $n$. We first introduce a special class of diffusions on this fiber bundle, which we call horizontal diffusions. They are defined in a similar fashion as the horizontal subspaces in Definition \ref{horizontal-subspace}. Roughly speaking, a horizontal diffusion process on $E$ is a diffusion that is random only ``horizontally'', but not on fibers.

\begin{definition}[Horizontal diffusions on fiber bundles]%\label{horizontal-diff}
  Let $(E,\pi_M, M)$ be a fiber bundle. A $E$-valued diffusion process $\mathbf X$ is said to be horizontal, if there exists an $M$-valued diffusion process $X$ and a smoothly time-dependent section $\phi=(\phi_t)$ of $\pi_M$, such that a.s. $\mathbf X(t) = \phi(t, X(t))$ for all $t$. %Denote by $I^h_{(t,q)}(\pi_M)$ and $I^h_t(\pi_M)$, the set of all $E$-valued horizontal diffusion processes, starting from $q\in E$ at time $t$, and starting at time $t$, respectively.
\end{definition}

The process $X$ in the above definition is just the projection of $\mathbf X$, for $\pi_M(\mathbf X(t)) = \pi_M(\phi(t,X(t))) = X(t)$ a.s.. Since the projection map $\pi_M$ is smooth, $X$ is still a diffusion process.

Now we are going to %find a suitable ``second-order distribution'' on $E$, that is, subbundle of $\mathcal T^S E$, so that the horizontal diffusion processes are integral processes for such distribution.
define a subclass of ``integral processes'' for second-order vector fields on $E$ by making use of horizontal diffusions. We use $(x^i,u^\mu)$ for an adapted coordinate system on $E$ (see \cite[Definition 1.1.5]{Sau89}), where we use Greek alphabet to label the coordinates of fibers.

%We introduce a subbundle of $\mathcal T^S E$ (with base manifold $E$) which we call the horizontal subbundle and denote $\mathcal T^{S,h} E$. It consists of elements of the type
%\begin{equation*}
%  A = A^i \vf{x^i}\bigg|_e + A^\mu \vf{u^\mu}\bigg|_e + A^{jk} \frac{\pt^2}{\pt x^j \pt x^k}\bigg|_e, \quad e\in E.
%\end{equation*}
%Note that the structure group of $\mathcal T^{S,h} E$ is $G_I^d \times \GL(n,\R)$, as a subgroup of $\GL(d,\R)$. A section of $\mathcal T^{S,h} E$ is called a horizontal second-order vector field on $E$, which is, intuitively, an operator on $E$ of second-order horizontally but first-order vertically.

Given a %horizontal
second-order vector field with local expression
\begin{equation}\label{hrz-2-vf}
  A = A^i \vf{x^i} + A^\mu \vf{u^\mu} + A^{jk} \frac{\pt^2}{\pt x^j \pt x^k} + A^{j\mu} \frac{\pt^2}{\pt x^j \pt u^\mu} + A^{\mu\nu} \frac{\pt^2}{\pt u^\mu \pt u^\nu},
\end{equation}
where $A^i, A^\mu, A^{jk}, A^{j\mu}, A^{\mu\nu}$ are smooth functions in the local chart of $E$,
%It can be regard as a section of $\mathcal T^S E$ via $A = A^i \vf{x^i} + A^\mu \vf{u^\mu} + A^{jk} \frac{\pt^2}{\pt x^j \pt x^k} + 0 \frac{\pt^2}{\pt x^j \pt u^\mu} + 0 \frac{\pt^2}{\pt u^\mu \pt u^\nu}$. So in the sense of \eqref{Nelson-SDE-mfld-2}, the integral process $\mathbf X$ of $A$ is a degenerate diffusion on $E$, with determinisitic fiber part $u(\mathbf X)$. More precisely, the mean derivative $D$ of $u(\mathbf X)$ degenerates to the usual time derivative as the quadratic covariation parts of $u^\mu$ with $u^\nu$ or $x^i$ all vanish. Hence, such construction of integral processes for horizontal second-order vector fields is quite limited.
%To overcome, we define the integral processes in another fashion, by making use of the horizontal diffusions. We endow the total space $E$ a linear connection $\nabla$ with Christoffel symbols $\Gamma$.
by a \emph{horizontal integral process} of $A$ in \eqref{hrz-2-vf} we mean an $E$-valued horizontal diffusion process $\mathbf X$ such that $\mathbf X$ is an integral process of $A$ in the sense of \eqref{Nelson-SDE-mfld-2},
that is, it is determined by the system
\begin{equation}\label{hrz-int-prc}\left\{
  \begin{aligned}
    (D(x\circ \mathbf X))^i(t) &= A^i(\mathbf X(t)), \\
    (Q(x\circ \mathbf X))^{jk}(t) &= 2A^{jk}(\mathbf X(t)), \\
    (D(u\circ \mathbf X))^\mu(t) &= A^\mu(\mathbf X(t)), \\
    (Q(x\circ \mathbf X, u\circ \mathbf X))^{j\nu}(t) &= A^{j\mu}(\mathbf X(t)), \\
    (Q(u\circ \mathbf X))^{\mu\nu}(t) &= 2A^{\mu\nu}(\mathbf X(t)),
  \end{aligned}\right.
\end{equation}
where the expression $x\circ \mathbf X$ means that the family of coordinate functions $(x^i)$ acts on $\mathbf X$, and so on.
%where we use the mean derivative $D^\nabla$ in the last equation for the reason that $A^\mu\vf{u^\mu}$ is a vector field. At first glance, one may think that the system \eqref{hrz-int-prc} is underdetermined, as it loses some information of quadratic mean derivatives, say, $Q(u\circ\mathbf{X})$, $Q(x\circ\mathbf{X}, u\circ\mathbf{X})$. But if we take the horizontal condition into consideration, then it is well-defined. To see this,
Set $\mathbf X(t) = \phi(t, X(t))$ for some time-dependent section $\phi$ of $\pi_M$ and $M$-valued diffusion $X$. Denote $\phi^\mu = u^\mu\circ \phi$. By It\^o's formula, the system \eqref{hrz-int-prc} can be written as
\begin{equation}\label{hrz-int-prc-2}\left\{
  \begin{aligned}
    (DX)^i(t) &= A^i(\phi(t, X(t))), \\
    (QX)^{jk}(t) &= 2A^{jk}(\phi(t, X(t))), \\
    \bigg( \frac{\pt}{\pt t} + A^i&(\phi(t, X(t))) \vf{x^i} + A^{jk}(\phi(t, X(t))) \frac{\pt^2}{\pt x^j\pt x^k}\bigg)\phi^\mu(t, X(t)) = A^\mu(\phi(t, X(t))) \\
%    & + \Gamma_{\nu\lambda}^\mu(X(t)) A^{jk}(\phi(t, X(t))) \frac{\pt \phi^\nu}{\pt x^j} \frac{\pt \phi^\lambda}{\pt x^k}(t, X(t)) \\
%    & + \Gamma_{j\lambda}^\mu(X(t)) A^{jk}(\phi(t, X(t))) \frac{\pt \phi^\lambda}{\pt x^k}(t, X(t)) \\
%    & + \Gamma_{\nu k}^\mu(X(t)) A^{jk}(\phi(t, X(t))) \frac{\pt \phi^\nu}{\pt x^j} (t, X(t)) \\
%    & + \Gamma_{jk}^\mu(X(t)) A^{jk}(\phi(t, X(t))) = A^\mu(\phi(t, X(t))).
    2 A^{jk}(\phi(t&, X(t))) \frac{\pt \phi^\mu}{\pt x^k}(t, X(t)) = A^{j\mu}(\phi(t, X(t))) \\
    A^{jk}(\phi(t, &X(t))) \frac{\pt \phi^\mu}{\pt x^j}\frac{\pt \phi^\nu}{\pt x^k}(t, X(t)) = A^{\mu\nu}(\phi(t, X(t))).
  \end{aligned}\right.
\end{equation}
If $X(t)$ has full support for all $t$, then the last three equations in \eqref{hrz-int-prc-2} translate into a system of (possibly degenerate) parabolic equations on $E$,
\begin{equation}\label{eqn-section}\left\{
  \begin{aligned}
    &\bigg( \frac{\pt}{\pt t} + A^i(\phi(t, q)) \vf{x^i} + A^{jk}(\phi(t, q)) \frac{\pt^2}{\pt x^j\pt x^k} \bigg)\phi^\mu(t, q) = A^\mu(\phi(t, q)), \\
%    & + \Gamma_{\nu\lambda}^\mu(q) A^{jk}(\phi(t, q)) \frac{\pt \phi^\nu}{\pt x^j} \frac{\pt \phi^\lambda}{\pt x^k}(t, q) + \Gamma_{j\lambda}^\mu(q) A^{jk}(\phi(t, q)) \frac{\pt \phi^\lambda}{\pt x^k}(t, q) \\
%    & + \Gamma_{\nu k}^\mu(q) A^{jk}(\phi(t, q)) \frac{\pt \phi^\nu}{\pt x^j} (t, q) + \Gamma_{jk}^\mu(q) A^{jk}(\phi(t, q)) = A^\mu(\phi(t, q)).
    &2 A^{jk}(\phi(t, q)) \frac{\pt \phi^\mu}{\pt x^k}(t, q) = A^{j\mu}(\phi(t, q)) \\
    &A^{jk}(\phi(t, q)) \frac{\pt \phi^\mu}{\pt x^j}\frac{\pt \phi^\nu}{\pt x^k}(t, q) = A^{\mu\nu}(\phi(t, q)).
  \end{aligned}\right.
\end{equation}
Therefore, under suitable assumptions for the coefficients $A^i, A^\mu, A^{jk}, A^{j\mu}, A^{\mu\nu}$, equation \eqref{eqn-section} is solvable, at least locally, by some time-dependent local section $\phi = (\phi_t)$ over a time interval $[0,T]$. Then, plugging $\phi(t)$ into the first two equations of \eqref{hrz-int-prc-2}, we can find $X$ and hence $\mathbf X$. We call $X$ an \emph{projective integral process} of $A$.

%Denote the dual bundle of $\mathcal T^{S,h} E$ by $\mathcal T^{S*,h} E$. It is a quotient bundle of $\mathcal T^{S*} E$ rather that a subbundle of it. We call $\mathcal T^{S*,h} E$ the horizontal second-order cotangent bundle on $E$. A typical element of $\mathcal T^{S*,h} E$ is of the form
%\begin{equation*}
%  \alpha = \alpha_i d^2 x^i|_e + \alpha_\mu d u^\mu|_e + \alpha_{jk} dx^j \cdot dx^k|_e, \quad e\in E.
%\end{equation*}
%The structure group of $\mathcal T^{S*,h} E$ is also $G_I^d \times \GL(n,\R)$, acting from the right. For a smooth function $f$ on $E$, we define its horizontal second-order differential, as a section of $\mathcal T^{S*,h} E$, via
%\begin{equation*}
%  d^h f := \frac{\pt f}{\pt x^i} d^2 x^i + \frac{\pt f}{\pt u^\mu} d u^\mu + \frac{1}{2} \frac{\pt^2 f}{\pt x^j \pt x^k} dx^j\cdot dx^k.
%\end{equation*}

\subsection{The second-order symplectic structure on $\mathcal T^{S*} M$ and stochastic Hamilton's equations}\label{sec-6-2}

It is well known that the classical cotangent bundle $T^* M$ has a natural symplectic structure, given by the canonical symplectic form $\omega_0 = dx^i \wedge dp_i$, where $(x^i,p_i)$ are the canonical local coordinates on $T^* M$ induced by local coordinates $(x^i)$ on $M$. Clearly $\omega_0$ is closed, because it is exact as $\omega_0 = -d \theta_0$, where $\theta_0 = p_i dx^i$ is called the  Poincar\'e (or tautological) 1-form.

Now we need to define a similar structure on the second-order cotangent bundle $\mathcal T^{S*} M$, which is a second-order counterpart of the symplectic structure. Firstly, we adapt the coordinate-free definition of the tautological 1-form to the second-order case.

\begin{definition}
  The second-order tautological form $\theta$ is a second-order form on $\mathcal T^{S*} M$ defined by
  \begin{equation*}
    \theta_{\alpha} = d^2(\tau_M^{S*})_{\alpha}^* (\alpha), \quad \alpha\in \mathcal T^{S*}_q M.
  \end{equation*}
\end{definition}

Under the induced coordinate system on $\mathcal T^{S*} M$ defined in \eqref{induced-sys-2-cot}, the second-order tautological form $\theta$ has the following coordinate representation
\begin{equation}\label{2-taut}
  \theta = p_i d^2 x^i + \textstyle{\frac{1}{2}} o_{jk} dx^j\cdot dx^k.
\end{equation}
We introduce the canonical second-order symplectic form $\omega$ on $\mathcal T^{S*} M$ by writing $\omega = -d^2 \theta$. Although we do not define the exterior differential for second-order forms, we can still take $d^2$ formally on both sides of \eqref{2-taut}, using Leibniz's rule and the composition rule $d\circ d=d^2$ (cf. \cite[Section 6.(e)]{Mey81a}), and forcing $d^3 = 0$ and $(d^2-)\cdot (d-) = (d-)\cdot (d^2-) = 0$. Then, we get
\begin{equation}\label{2-symp-form}
%  \omega = d^2 x^i \wedge d p_i + \textstyle{\frac{1}{2}} dx^j\cdot dx^k \wedge d o_{jk} - o_{jk} d^2x^j\wedge dx^k.
  \begin{split}
    \omega =&\ d\left( d^2 x^i \wedge d p_i + \textstyle{\frac{1}{2}} dx^j\cdot dx^k \wedge d o_{jk} - p_i d^3 x^i + o_{jk} d^2x^j\wedge dx^k \right) \\
    =&\ d^2 x^i \wedge d^2 p_i + \textstyle{\frac{1}{2}} dx^j\cdot dx^k \wedge d^2 o_{jk}.
  \end{split}
\end{equation}
%It is easy to check that the expression at the RHS of \eqref{2-symp-form} is indeed invariant under changes of coordinates and thereby globally defined, as soon as we admit the following compatible identity for the two products $\cdot$ and $\wedge$,
%\begin{equation*}
%  dx^i \cdot dx^j \wedge dx^k + dx^j \cdot dx^k \wedge dx^i + dx^k \cdot dx^i \wedge dx^j = 0.
%\end{equation*}
We call the pair $(\mathcal T^{S*} M, \omega)$ a second-order symplectic manifold. The complete axiom system for a second-order differential system $(d,d^2,\wedge,\cdot)$ is beyond the scope of this paper.

\begin{remark}\label{remark-5}
  In the formal expression $(d\circ d) f=d^2 f$, $f\in C^\infty(M)$, the two differential operators $d$ at LHS are \emph{different}. The second $d$ is still de Rham's exterior differential on $M$, while the first needs to be understood as the exterior differential on $TM$ by regarding the first differential $df$ as a function on $TM$. Thus the complete expression should be $d_{TM}\circ d_M = d^2$. Along this way, the differential operator $d_{TM}$ can be extended to a linear transform that maps 1-forms to 2nd-order forms and satisfies Leibniz's rule, see \cite[Theorem 7.1]{Eme89}. We shall denote the linear operator extended from $d_{TM}$ by $\d$ in order to distinguish. In local coordinates, it acts on a 1-form $\eta = \eta_i dx^i$ by $\d \eta = \eta_i d^2x^i + \frac{1}{2} \frac{\pt \eta_i}{\pt x^j} dx^i\cdot dx^j$, so that $\hat\varrho^*(\d\eta) = \eta$ and $d^2 = \d\circ d$. When a linear connection $\nabla$ is specified, $\d \eta = \eta_i d^\nabla x^i + \frac{1}{2} \nabla\eta(\pt_i,\pt_j) dx^i\cdot dx^j$ which covers \eqref{eqn-35}.
\end{remark}

As in the classical case, we have the following property for the second-order tautological form.
\begin{lemma}\label{prop-taut}
  The second-order tautological form $\theta$ is the unique second-order form on $\mathcal T^{S*} M$ with the property that, for every second-order form $\alpha$ on $M$, $\alpha^{S*} \theta = \alpha$.
\end{lemma}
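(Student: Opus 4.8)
The plan is to mirror the classical proof that the Poincar\'e $1$-form is the unique $1$-form pulled back to itself by every $1$-form, upgrading each step to second order using the functoriality of the second-order (co)tangent maps of Definition \ref{push-pull-map-point}. Throughout, I regard a second-order form $\alpha$ as a section of $\tau^{S*}_M$, and for a point I write $\beta\in\mathcal T^{S*}_q M$, so that $\theta_\beta = d^2(\tau^{S*}_M)^*_\beta(\beta)$.

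First I would verify that $\theta$ itself enjoys the property. Fix a second-order form $\alpha$ and $q\in M$, and set $\beta=\alpha(q)$. Since $\alpha$ is a section, $\tau^{S*}_M\circ\alpha=\id_M$, and the pointwise contravariant functoriality $d^2\alpha^*_q\circ d^2(\tau^{S*}_M)^*_{\beta}=d^2(\tau^{S*}_M\circ\alpha)^*_q$ holds (it follows by dualizing $d^2(G\circ F)_q(A)f=A(f\circ G\circ F)=d^2 G_{F(q)}(d^2 F_q(A))f$, i.e. the second-order version of Lemma \ref{prop-push-pull}(i)). Hence
\[
(\alpha^{S*}\theta)_q = d^2\alpha^*_q(\theta_\beta) = d^2\alpha^*_q\big(d^2(\tau^{S*}_M)^*_\beta(\beta)\big) = d^2(\id_M)^*_q(\beta) = \beta = \alpha(q),
\]
so $\alpha^{S*}\theta=\alpha$. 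Equivalently, one reads this off the coordinate formula of Lemma \ref{push-pull-map-prop}: writing $\alpha$ as $x\mapsto(x,p(x),o(x))$, the identity $x^i\circ\alpha=x^i$ kills every fiber-coordinate contribution and leaves exactly $p_i(q)\,d^2x^i+\tfrac12 o_{jk}(q)\,dx^j\cdot dx^k$.

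For uniqueness, suppose $\theta'$ is any second-order form with $\alpha^{S*}\theta'=\alpha$ for all $\alpha$; I must show $\theta'_\beta(A)=\beta\big(d^2(\tau^{S*}_M)_\beta(A)\big)$ for every $\beta\in\mathcal T^{S*}_q M$ and every $A\in\mathcal T^S_\beta(\mathcal T^{S*}M)$. The projection $d^2(\tau^{S*}_M)_\beta$ splits this second-order tangent space into its vertical kernel and the images of the right inverses $d^2\alpha_q$ attached to sections $\alpha$ through $\beta$. On such a horizontal lift the hypothesis gives, for all $B\in\mathcal T^S_q M$, that $\theta'_\beta(d^2\alpha_q(B))=(d^2\alpha^*_q\theta'_\beta)(B)=(\alpha^{S*}\theta')_q(B)=\alpha(q)(B)=\beta(B)$, which is exactly the value $\theta_\beta$ assigns; so $\theta'$ and $\theta$ already agree on every horizontal lift.

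The main obstacle is the vertical part: I must show $\theta'_\beta$ vanishes on $\ker d^2(\tau^{S*}_M)_\beta$. Given two sections $\alpha_1,\alpha_2$ sharing the $0$-jet $\beta$, the difference $d^2(\alpha_1)_q(B)-d^2(\alpha_2)_q(B)$ is vertical and is annihilated by $\theta'_\beta$ (both terms evaluate to $\beta(B)$); the question is whether such differences, as $B$ and the $1$- and $2$-jets of the sections vary, span the entire vertical subspace. In adapted coordinates $(x^i,p_i,o_{jk})$ with section components $(P_i,O_{jk})$, the formula of Lemma \ref{push-pull-map-prop} shows that the terms linear in $B$ recover the first-order directions $\partial_{p_i},\partial_{o_{jk}}$ and the mixed second-order directions $\partial^2_{x^i p_l},\partial^2_{x^i o_{jk}}$ from the first and second derivatives of $P,O$, while the pure-fiber directions $\partial^2_{p_i p_l}$, etc., enter with quadratic coefficients $\partial P\!\cdot\!\partial P$. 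I would disentangle these from the linear contributions by a Vandermonde/polarization argument, using several sections whose first derivatives are rescaled so that linear combinations isolate each generator. This quadratic squared-field bookkeeping, together with the extra fiber coordinates $o_{jk}$, is precisely what makes the argument heavier than the classical first-order case; an equivalent but more computational route is to expand $\theta'=\theta'_a\,d^2 z^a+\tfrac12\theta'_{ab}\,dz^a\cdot dz^b$ in canonical coordinates and match the coefficients of $\alpha^{S*}\theta'=\alpha$ against enough sections to force $\theta'_a=\theta_a$ and $\theta'_{ab}=\theta_{ab}$ directly. Either way one concludes $\theta'_\beta=\theta_\beta$, and since $\beta$ is arbitrary, $\theta'=\theta$.
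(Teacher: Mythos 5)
Your first paragraph is, almost verbatim, the paper's entire proof: the authors establish $\alpha^{S*}\theta=\alpha$ by the same duality computation $\langle\theta_{\alpha_q},d^2\alpha_q(A)\rangle=\langle\alpha_q,d^2(\tau^{S*}_M)_{\alpha_q}\circ d^2\alpha_q(A)\rangle=\langle\alpha_q,A\rangle$, using $\tau^{S*}_M\circ\alpha=\id_M$, and then stop. They do not address uniqueness at all, so the remainder of your proposal is not a deviation from the paper but an attempt to fill a step the paper silently omits. Your outline of that step is sound: the hypothesis forces $\theta'$ to agree with $\theta$ on every horizontal lift $d^2\alpha_q(B)$, and what remains is to show that, as $\alpha$ ranges over sections through a fixed $\beta$ and $B$ over $\mathcal T^S_qM$, these lifts together with their differences span all of $\mathcal T^S_\beta(\mathcal T^{S*}M)$ — in particular the vertical directions $\pt_{p_i},\pt_{o_{jk}},\pt^2_{x^ip_l},\pt^2_{p_ip_l}$, etc. However, you leave the decisive polarization step as a plan rather than executing it. It does go through: taking sections with $P(x)=p+tCx$, $O$ constant, and $B$ purely second order, the lift is a polynomial in $t$ whose constant, linear and quadratic coefficients are respectively the $\pt^2_{xx}$, $\pt^2_{xp}$ and $\pt^2_{pp}$ contributions, so a Vandermonde argument in $t$ isolates each; the quadratic coefficients are Gram-type expressions $B^{jk}C_{lj}C_{l'k}$, and since matrices of the form $CC^{\mathrm T}$ span $\Sym^2(\R^d)$, one recovers every symmetric generator $\pt^2_{p_lp_{l'}}$ (and analogously for the $o$-directions). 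With that sentence added — or with the brute-force coordinate matching you mention as an alternative — your argument is complete and strictly stronger than what the paper records.
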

\begin{proof}
  From Lemma \ref{prop-push-pull}, we have, for any second-order vector $A\in \mathcal T^S_q M$,
  \begin{equation*}
    \langle (\alpha^{S*} \theta)_q, A \rangle = \langle \theta_{\alpha_q}, d^2 \alpha_q(A) \rangle = \langle d^2(\tau_M^{S*})_{\alpha_q}^* (\alpha_q), d^2 \alpha_q(A) \rangle = \langle \alpha_q, d^2(\tau_M^{S*})_{\alpha_q} \circ d^2 \alpha_q(A) \rangle = \langle \alpha_q, A \rangle,
  \end{equation*}
  since $\tau_M^{S*} \circ \alpha = \id_M$.
\end{proof}

Recall that, in Definition \ref{push-pull-map}, we have defined the second-order pullbacks of second-order forms. Now, given a smooth map $\mathbf F: \mathcal T^{S*} M \to \mathcal T^{S*} N$ and a second-order 2-form $\eta$ on $\mathcal T^{S*} N$, we may also define the second-order pullback $\mathbf F^{S*} \eta$ of $\eta$ by $\mathbf F$ by allowing $\mathbf F^{S*}$ to be exchangeable with the symmetric product $\cdot$ as well as the wedge product $\wedge$. Then, as a corollary of Lemma \ref{prop-taut}, we have
$$\alpha^{S*} \omega = -d^2\alpha.$$

\begin{definition}
  Let $\omega$ and $\eta$ be the canonical second-order symplectic forms on $\mathcal T^{S*} M$ and $\mathcal T^{S*} N$, respectively. A bundle homomorphism $\mathbf F: (\mathcal T^{S*} M, \hat\varrho^*_M, T^* M) \to (\mathcal T^{S*} N, \hat\varrho^*_N, T^*N)$ is called second-order symplectic or a second-order symplectomorphism if $\mathbf F^{S*} \eta = \omega$.
\end{definition}

\begin{theorem}
  Let $F: N\to M$ be a diffeomorphism. The second-order pullback $F^{S*}: \mathcal T^{S*} M \to \mathcal T^{S*} N$ by $F$ is second-order symplectic; in fact $(F^{S*})^{S*} \vartheta = \theta$, where $\vartheta$ is the second-order tautological form on $\mathcal T^{S*} N$.
\end{theorem}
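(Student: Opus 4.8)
The plan is to mimic the classical argument that the cotangent lift of a diffeomorphism preserves the tautological one-form, replacing each ingredient by its second-order analogue and leaning on the uniqueness characterisation of the tautological form established in Lemma \ref{prop-taut}. Concretely, it suffices to prove the single identity $(F^{S*})^{S*}\vartheta = \theta$; the second-order symplectic property then drops out by applying the formal operator $-d^2$ to both sides, since the second-order pullback $(F^{S*})^{S*}$ is, by its very definition, exchangeable with $\wedge$, with $\cdot$, and hence with $d^2 = \d\circ d$, giving $(F^{S*})^{S*}\eta = (F^{S*})^{S*}(-d^2\vartheta) = -d^2\theta = \omega$.

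First I would invoke Lemma \ref{prop-taut} on $\mathcal T^{S*} M$: since $\theta$ is the \emph{unique} second-order form satisfying $\alpha^{S*}\theta = \alpha$ for every second-order form $\alpha$ on $M$, it is enough to check that $(F^{S*})^{S*}\vartheta$ enjoys the same property, i.e. that $\alpha^{S*}\big((F^{S*})^{S*}\vartheta\big) = \alpha$ for all such $\alpha$. Viewing $\alpha$ as a section $\alpha : M \to \mathcal T^{S*} M$ and using the contravariant functoriality of the second-order pullback of forms (the dual of Lemma \ref{prop-push-pull}(i), obtained pointwise from $d^2(G\circ F)^*_q = d^2 F^*_q\circ d^2 G^*_{F(q)}$), the left-hand side collapses to $(F^{S*}\circ\alpha)^{S*}\vartheta$.

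The geometric heart of the argument is the identity $F^{S*}\circ\alpha = (F^{S*}\alpha)\circ F^{-1}$, relating the bundle map $F^{S*}:\mathcal T^{S*} M\to\mathcal T^{S*} N$ to the second-order pullback \emph{form} $F^{S*}\alpha$ on $N$. This is immediate from Definition \ref{push-pull-map}: for $n\in N$ one has $F^{S*}|_{\mathcal T^{S*}_{F(n)}M} = d^2F^*_n$, whence $(F^{S*}\alpha)_n = d^2F^*_n(\alpha_{F(n)}) = F^{S*}(\alpha(F(n)))$. Substituting and using functoriality once more yields $(F^{S*}\circ\alpha)^{S*}\vartheta = (F^{-1})^{S*}\big((F^{S*}\alpha)^{S*}\vartheta\big)$; the inner pullback equals $F^{S*}\alpha$ by the tautological property of $\vartheta$ on $N$ (Lemma \ref{prop-taut} applied to $\mathcal T^{S*}N$, legitimate since $F^{S*}\alpha$ is a genuine second-order form on $N$); and finally $(F^{-1})^{S*}\circ F^{S*} = (F\circ F^{-1})^{S*} = \mathrm{id}$ collapses everything back to $\alpha$. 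Uniqueness then forces $(F^{S*})^{S*}\vartheta = \theta$.

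The step I expect to require the most care is the functoriality bookkeeping of the second-order pullback, for two reasons. First, one must be certain the composition rule $(\Phi\circ\Psi)^{S*} = \Psi^{S*}\circ\Phi^{S*}$ holds for the \emph{second-order} pullback of forms even when $\Psi$ is a mere section rather than a diffeomorphism; this is true pointwise but is not among the properties explicitly listed, so it should be verified from $d^2\Phi_q$ being linear and $d^2\Phi^*_q$ its dual. Second, the concluding symplectic statement hinges on $(F^{S*})^{S*}$ commuting with the formal $d^2$; while the definition of $\mathbf F^{S*}$ only stipulates compatibility with $\wedge$ and $\cdot$, one should confirm it is also natural with respect to the exterior differential $d$ and the extended operator $\d$ (Remark \ref{remark-5}), so that the factorisation $d^2=\d\circ d$ is respected. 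These verifications are formal, but they must be made explicit precisely because the second-order calculus $(d,d^2,\wedge,\cdot)$ has deliberately not been axiomatised in full.
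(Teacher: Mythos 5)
Your proof is correct, but it follows a genuinely different route from the paper's. The paper proves $(F^{S*})^{S*}\vartheta = \theta$ by a direct pointwise computation: it takes $A \in \mathcal T^S_{\alpha_q}\mathcal T^{S*}M$, unwinds the definitions of $\vartheta$, $\theta$ and the second-order cotangent map, and collapses the chain of pairings using the single commutation identity $F\circ\tau_N^{S*}\circ F^{S*} = \tau_M^{S*}$. You instead argue by the universal property: you show $\alpha^{S*}\big((F^{S*})^{S*}\vartheta\big) = \alpha$ for every second-order form $\alpha$ on $M$ — via the naturality identity $F^{S*}\circ\alpha = (F^{S*}\alpha)\circ F^{-1}$, the tautological property of $\vartheta$ on $N$, and contravariant functoriality of the pullback — and then invoke the uniqueness clause of Lemma \ref{prop-taut}. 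Both arguments are sound; the paper's is more self-contained (it needs only the definitions and one bundle identity), while yours is the faithful second-order transcription of the classical "cotangent lift preserves the tautological one-form" proof and makes the structural reason for the result more visible. Your flagged verifications are the right ones and do go through: the composition rule $(\Phi\circ\Psi)^{S*} = \Psi^{S*}\circ\Phi^{S*}$ for arbitrary smooth maps follows pointwise from dualizing $d^2(\Phi\circ\Psi)_q = d^2\Phi_{\Psi(q)}\circ d^2\Psi_q$ (the paper's Lemma \ref{prop-push-pull}(i)), and the compatibility of the pullback with $d^2$ in the concluding symplectic step is exactly the point the paper also leaves implicit, so making it explicit is a genuine improvement rather than a deficiency.
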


\begin{proof}
  For $q\in M$, $\alpha_q\in \mathcal T^{S*}_q M$ and $A \in \mathcal T^S_{\alpha_q} T^{S*} M$,
  \begin{equation*}
    \begin{split}
      \langle (F^{S*})^{S*} \vartheta, A \rangle &= \langle \vartheta, d^2(F^{S*})_{\alpha_q} A \rangle = \langle d^2(\tau_N^{S*})_{F^{S*}(\alpha_q)}^* (F^{S*}(\alpha_q)), d^2(F^{S*})_{\alpha_q} A \rangle \\
      &= \langle F^{S*}(\alpha_q), d^2(\tau_N^{S*})_{F^{S*}(\alpha_q)} \circ d^2(F^{S*})_{\alpha_q} A \rangle \\
      &= \langle \alpha_q, d^2F_{F^{-1}(q)} \circ d^2(\tau_N^{S*})_{F^{S*}(\alpha_q)} \circ d^2(F^{S*})_{\alpha_q} A \rangle \\
      &= \langle \alpha_q, d^2(\tau_M^{S*})_{\alpha_q} A \rangle \\
      &= \langle d^2(\tau_M^{S*})^*_{\alpha_q} (\alpha_q), A \rangle \\
      &= \langle \theta_{\alpha_q}, A \rangle,
    \end{split}
  \end{equation*}
  where we used the fact that $F\circ \tau_N^{S*}\circ F^{S*} = \tau_M^{S*}$ in the fourth line.
\end{proof}

Clearly, the counterparts of Hamiltonian vector fields on $T^* M$ are now second-order vector fields on $\mathcal T^{S*} M$. Remark that for a second-order vector field $A$ on $\mathcal T^{S*} M$, the form $A \lrcorner\, \omega$ take values in the %horizontal second-order
cotangent bundle $\mathcal T^{S*} \mathcal T^{S*} M$. %as is easily seen from \eqref{2-symp-form}.
%In order to make the second-order Hamiltonian vector fields well-defined, we need to use the second-order vector fields on $T^* M$, which can be pulled back along $\hat\varrho^*$ to be second-order vector fields on $\mathcal T^{S*} M$ in a natural way.

\begin{definition}
  Let $H: \mathcal T^{S*} M \to \R$ be a given smooth function. A %horizontal
  second-order vector field $A_H$ on $\mathcal T^{S*} M$ satisfying
  \begin{equation}\label{2-Hamiltonian-vf}
%    \left( (\hat\varrho^*)^{S*} A_H \right) \lrcorner\, \omega = d H
    A_H \lrcorner\, \omega = d^2 H
  \end{equation}
  is called a second-order Hamiltonian vector field of $H$. We call the triple $(\mathcal T^{S*} M, \omega, H)$ a second-order Hamiltonian system. The function $H$ is called the second-order Hamiltonian of the system.
\end{definition}

%The horizontal second-order vector field determined by the condition \eqref{2-Hamiltonian-vf} is unique, whereas it is not unique if it is not required to be horizontal. Obviously, $d H(A_H) = A_H H = 0$.

According to \eqref{2-Hamiltonian-vf}, the 2nd-order vector field $A_H$ satisfies
\begin{equation}\label{2-Hamiltonian-vf-prop}
  A_H H = d^2H (A_H) = \omega(A_H,A_H) = 0.
\end{equation}
The condition \eqref{2-Hamiltonian-vf} cannot uniquely determine $A_H$.
%Under the induced coordinates $(x, p, o)$ on $\mathcal T^{S*} M$, let us admit the following rules
%\begin{equation}\label{rule}
%  - \frac{\pt^2}{\pt x^i \pt x^j} \lrcorner\, (d^2x^l\wedge dx^k) = \frac{\pt^2}{\pt x^j \pt p_k} \lrcorner\, (d^2x^l\wedge dp_i) = \delta_i^l dx^k \cdot dp_j.
%\end{equation}
It is easy to verify that $A_H$ is of the general form
\begin{equation}\label{Hamiltonian-vf}
  \begin{split}
    A_H =&\ \frac{\pt H}{\pt p_i} \vf{x^i} - \frac{\pt H}{\pt x^i} \vf{p_i} + \frac{\pt H}{\pt o_{jk}} \frac{\pt^2}{\pt x^j \pt x^k} - \left( \frac{\pt^2 H}{\pt x^j \pt x^k} + C_{jk} \right) \vf{o_{jk}} \\
    &\ + A_{jk} \frac{\pt^2}{\pt p_j \pt p_k} + A_{ijkl} \frac{\pt^2}{\pt o_{ij} \pt o_{kl}} + A^j_k \frac{\pt^2}{\pt x^j \pt p_k} + A^j_{kl} \frac{\pt^2}{\pt x^j \pt o_{kl}} + A_{jkl} \frac{\pt^2}{\pt p_j \pt o_{kl}},
  \end{split}
\end{equation}
where the coefficients $C_{jk}, A_{jk}, A_{ijkl}, A^j_k, A^j_{kl}, A_{jkl}$ are smooth functions on local chart satisfying
\begin{equation*}
  C_{jk} \frac{\pt H}{\pt o_{jk}} = A_{jk} \frac{\pt^2 H}{\pt p_j \pt p_k} + A_{ijkl} \frac{\pt^2 H}{\pt o_{ij} \pt o_{kl}} + A^j_k \frac{\pt^2 H}{\pt x^j \pt p_k} + A^j_{kl} \frac{\pt^2 H}{\pt x^j \pt o_{kl}} + A_{jkl} \frac{\pt^2 H}{\pt p_j \pt o_{kl}},
\end{equation*}
such that the local expression \eqref{Hamiltonian-vf} is invariant under the canonical change of coordinates on $\mathcal T^{S*} M$ induced by a change of coordinates on $M$, governed by the structure group in Lemma \ref{struct-grp-cot}.

Given such a second-order Hamiltonian vector field of $H$, its horizontal integral process is a $\mathcal T^{S*} M$-valued horizontal diffusion $\textbf{X}$ determined by the following MDEs on $\mathcal T^{S*} M$,
\begin{equation}\label{stoch-Hamilton-eqns}\left\{
  \begin{aligned}
    (D (x\circ\textbf{X}))^i(t) &= \frac{\pt H}{\pt p_i}(\textbf{X}(t)), \\
    (Q (x\circ\textbf{X}))^{jk}(t) &= 2\frac{\pt H}{\pt o_{jk}}(\textbf{X}(t)), \\
    (D (p\circ\textbf{X}))_i(t) &= - \frac{\pt H}{\pt x^i}(\textbf{X}(t)), \\
    (D (o\circ\textbf{X}))_{jk}(t) &= - \left( \frac{\pt^2 H}{\pt x^j \pt x^k} + C_{jk} \right) (\textbf{X}(t)), \\
    \left(C_{ij} \frac{\pt H}{\pt o_{ij}}\right) (\textbf{X}(t)) &= \frac{1}{2} (Q (p\circ\textbf{X}))_{jk}(t) \frac{\pt^2 H}{\pt p_j \pt p_k} (\textbf{X}(t)) + \frac{1}{2} (Q (o\circ\textbf{X}))_{ijkl}(t) \frac{\pt^2 H}{\pt o_{ij} \pt o_{kl}} (\textbf{X}(t)) \\
    &\quad + (Q (x\circ\textbf{X}, p\circ\textbf{X}))^j_k \frac{\pt^2 H}{\pt x^j \pt p_k} (\textbf{X}(t)) + (Q (x\circ\textbf{X}, o\circ\textbf{X}))^j_{kl} \frac{\pt^2 H}{\pt x^j \pt o_{kl}} (\textbf{X}(t)) \\
    &\quad + (Q (p\circ\textbf{X}, o\circ\textbf{X}))_{jkl} \frac{\pt^2 H}{\pt p_j \pt o_{kl}} (\textbf{X}(t)),
  \end{aligned}\right.
\end{equation}
or, in coordinates,
\begin{equation*}
  \left\{
  \begin{aligned}
    D^i x &= \frac{\pt H}{\pt p_i}, \\
    Q^{jk} x &= 2\frac{\pt H}{\pt o_{jk}}, \\
    D_i p &= - \frac{\pt H}{\pt x^i}, \\
    D_{jk} o &= - \left( \frac{\pt^2 H}{\pt x^j \pt x^k} + C_{jk} \right), \\
    C_{ij} \frac{\pt H}{\pt o_{ij}} &= \frac{1}{2} Q_{jk} p \frac{\pt^2 H}{\pt p_j \pt p_k} + \frac{1}{2} Q_{ijkl} o \frac{\pt^2 H}{\pt o_{ij} \pt o_{kl}} + Q^j_k(x,p) \frac{\pt^2 H}{\pt x^j \pt p_k} \\
    &\quad + Q^j_{kl}(x,o) \frac{\pt^2 H}{\pt x^j \pt o_{kl}} + Q_{jkl}(p,o) \frac{\pt^2 H}{\pt p_j \pt o_{kl}},
  \end{aligned}\right.
\end{equation*}
%where $\nabla$ is a given connection on the total space $\mathcal T^{S*} M$.
where $\big( x^i, p_i, o_{jk}, D^i x, D_i p, D_{jk} o, Q^{jk} x, Q_{jk} p, Q_{ijkl} o, Q^j_k(x,p), Q^j_{kl}(x,o), Q_{jkl}(p,o) \big)$ are canonical coordinates on $\mathcal T^S \mathcal T^{S*} M$. The first and third equations has been conjectured in \cite{Zam15} as stochastic Hamilton's equations in the Euclidean space, since they have the same form as classical Hamilton's equations (e.g., \cite[Proposition 3.3.2]{AM78}) except that mean derivative $D$ replaces classical time derivative.

At first glance, one may think that the system \eqref{stoch-Hamilton-eqns} is underdetermined, as there are fewer equations than unknowns (the number of unknowns is equal to the fiber dimension of $\mathcal T^S \mathcal T^{S*} M$). Besides, we haven not yet given \eqref{stoch-Hamilton-eqns} initial or terminal data. These will become clear after we make the following observations. Firstly, the first two equations of \eqref{stoch-Hamilton-eqns} constitute MDEs that are equivalent to an It\^o SDE for $x(\mathbf X)$ in weak sense, as we have seen in Section \ref{sec-2-4}. So $x(\mathbf X)$ should be assigned an initial value, say,
\begin{equation}\label{init-value}
  \text{Law}((x\circ \mathbf X)(0)) = \mu_0,
\end{equation}
where $\mu_0$ is a given probability measure on $M$. Secondly, in the third and fourth equations of \eqref{stoch-Hamilton-eqns}, only the ``drift'' information of $p(\mathbf X)$ and $o(\mathbf X)$ is clear. To overcome the lack of information, we need to assign $p(\mathbf X)$ and $o(\mathbf X)$ terminal values, say,
\begin{equation}\label{term-value}
  \left\{
  \begin{aligned}
    (p\circ \mathbf X)(T) &= p^*(x\circ \mathbf X(T)), \\
    (o\circ \mathbf X)(T) &= o^*(x\circ \mathbf X(T)),
  \end{aligned}\right.
\end{equation}
where $(p^*, o^*)$ is a given second-order form. Therefore, the third and fourth equations are understood as backward SDEs, whose drifts rely on diffusion coefficients via the last equation. The system \eqref{stoch-Hamilton-eqns} together with boundary values \eqref{init-value} and \eqref{term-value} could be understood as a (coupled) forward-backward system of SDEs \cite{Yon99} (where ``backward'' is taken in a different sense from ours in Chapter \ref{sec-2}).

Notice that those forward-backward SDEs are not necessarily solvable (see \cite[Proposition 7.5.2]{Yon99} for an example). In order to solve \eqref{stoch-Hamilton-eqns}--\eqref{term-value}, we have to take the horizontal condition into consideration, and make some compatibility assumption. More precisely, we set $X=\tau_M^{S*}(\mathbf X)$ and
\begin{equation}\label{eqn-31}
  \mathbf X(t) = \alpha(t, X(t)),
\end{equation}
for some time-dependent second-order form $\alpha$ on $M$, and denote $p_i(t,x) = p_i(\alpha(t,x))$ and $o_{jk}(t,x) = o_{jk}(\alpha(t,x))$, so that $\alpha(t,x) = (p(t,x), o(t,x))$. Assume that for each $t\in(0,T)$, $X(t)$ has full support. Then, by applying It\^o's formula, in the same way as in \eqref{eqn-section}, the system \eqref{stoch-Hamilton-eqns} reduces to
\begin{equation}\label{eqn-32}\left\{
  \begin{aligned}
    \bigg( \frac{\pt}{\pt t}& + \frac{\pt H}{\pt p_j} \vf{x^j} + \frac{\pt H}{\pt o_{jk}} \frac{\pt^2}{\pt x^j\pt x^k}\bigg)p_i = - \frac{\pt H}{\pt x^i}, \\
    \bigg( \frac{\pt}{\pt t}& + \frac{\pt H}{\pt p_k} \vf{x^k} + \frac{\pt H}{\pt o_{kl}} \frac{\pt^2}{\pt x^k\pt x^l}\bigg)o_{ij} = - \left( \frac{\pt^2 H}{\pt x^i \pt x^j} + C_{ij} \right), \\
    C_{ij}& \frac{\pt H}{\pt o_{ij}} = \frac{\pt H}{\pt o_{ij}} \bigg( \frac{\pt p_k}{\pt x^i} \frac{\pt p_l}{\pt x^j} \frac{\pt^2 H}{\pt p_k \pt p_l} + \frac{\pt o_{kl}}{\pt x^i} \frac{\pt o_{mn}}{\pt x^j} \frac{\pt^2 H}{\pt o_{kl} \pt o_{mn}} + 2 \frac{\pt p_k}{\pt x^i} \frac{\pt^2 H}{\pt x^j \pt p_k} \\
    &\qquad\qquad\qquad + 2 \frac{\pt o_{kl}}{\pt x^i} \frac{\pt^2 H}{\pt x^j \pt o_{kl}} + 2 \frac{\pt p_k}{\pt x^i} \frac{\pt o_{lm}}{\pt x^j} \frac{\pt^2 H}{\pt p_k \pt o_{lm}} \bigg).
  \end{aligned}\right.
\end{equation}
Next, %we assume that the symmetric tensor field $(\frac{\pt H}{\pt o_{ij}})$ is nondegenerate, in which case we call $H$ to be \emph{second-order regular}. Then
by taking partial derivative $\vf{x^j}$ on both sides of the first equation of \eqref{eqn-32} and comparing with the next two, we find the following sufficient condition for the last two equations of \eqref{eqn-32}:
%If the partial derivative $\frac{\pt H}{\pt o_{ij}}$ is everywhere nondegenerate (which usually holds in applications, see the following examples), then
\begin{equation}\label{non-degenerate}
  o_{ij}(t, x) = \frac{\pt p_i}{\pt x^j}(t, x) = \frac{\pt p_j}{\pt x^i}(t, x),
\end{equation}
or equivalent, for the terminal value $(p^*,o^*)$,
\begin{equation}\label{non-degenerate-2}
  o^*_{ij}(x) = \frac{\pt p^*_i}{\pt x^j}(x) = \frac{\pt p^*_j}{\pt x^i}(x).
\end{equation}
Equation \eqref{non-degenerate} implies that $\alpha$ in \eqref{eqn-31} is ``exact'', in the sense that $\alpha = \d\eta$ for the time-dependent 1-form $\eta = p_i dx^i$, where $\d$ is the extended differential operator defined in Remark \ref{remark-5}. Similarly, equation \eqref{non-degenerate-2} implies that $(p^*,o^*) = \d\eta^*$ for 1-form $\eta^* = p^*_i dx^i$. The second equality of \eqref{non-degenerate} (or \eqref{non-degenerate-2}), called \emph{Onsager reciprocity} or \emph{Maxwell relations} \cite[Section 5.3]{AM78}, implies that the 1-form $\eta$ (or $\eta^*$) is closed. We will refer to equation \eqref{non-degenerate} or \eqref{non-degenerate-2} as second-order Maxwell relations.

Under the 2nd-order Maxwell relations, the original stochastic Hamilton's system \eqref{stoch-Hamilton-eqns} turns to the following MDE-PDE coupled system.
\begin{equation}\label{stoch-Hamilton-eqns-2}\left\{
  \begin{aligned}
    (DX)^i(t) &= \frac{\pt H}{\pt p_i}(X(t), p(t, X(t)), o(t, X(t)) ), \\
    (QX)^{jk}(t) &= 2\frac{\pt H}{\pt o_{jk}}(X(t), p(t, X(t)), o(t, X(t)) ), \\
    \bigg( \frac{\pt}{\pt t} + \frac{\pt H}{\pt p_j}&(x, p(t, x), o(t, x)) \vf{x^j} + \frac{\pt H}{\pt o_{jk}}(x, p(t, x), o(t, x)) \frac{\pt^2}{\pt x^j\pt x^k}\bigg)p_i(t, x) = - \frac{\pt H}{\pt x^i}(x, p(t, x), o(t, x)), \\
    o_{ij}(t, x) &= \frac{\pt p_i}{\pt x^j}(t, x),
  \end{aligned}\right.
\end{equation}
The boundary values in \eqref{init-value} and \eqref{term-value} now read
\begin{equation}\label{bdr-value}
  \text{Law}(X(0)) = \mu_0, \quad (p,o)(T) = \d\eta^*.
\end{equation}
We first use the terminal value in \eqref{bdr-value}, which satisfies \eqref{non-degenerate-2}, to solve the last two PDEs in \eqref{stoch-Hamilton-eqns-2}. This gives $(p,o)$ and hence the 2nd-order form $\alpha$. Then we plug $p$ and $o$ into the first two MDEs and solve them with initial distribution in \eqref{bdr-value}. This yields in law the $M$-valued diffusion $X = \tau_M^{S*}(\textbf{X})$ as a projective integral process of $A_H$.

We call system \eqref{stoch-Hamilton-eqns} or \eqref{stoch-Hamilton-eqns-2} the \emph{stochastic Hamilton's equations} (S-H equations in short). The second-order Maxwell relations are sufficient for the component $o$ of $\alpha$ in \eqref{eqn-31} to solve the last two equations of \eqref{stoch-Hamilton-eqns}, so we refer to it as an \emph{integrability condition} of \eqref{stoch-Hamilton-eqns}. When restricting settings to Riemannian manifolds, the S-H equations \eqref{stoch-Hamilton-eqns} can be simplified to a global Hamiltonian-type system on $T^*M$, as we will see in Subsection \ref{sec-7-4-2}.

\begin{lemma}
  Let $H: \mathcal T^{S*} M\times\R \to \R$ be a time-dependent 2nd-order Hamiltonian, and $\mathbf X$ be a horizontal integral process of $A_H$. Then, the total mean derivative of $H$ along $\mathbf X$ is
  \begin{equation*}
    \mathbf D_t H = \frac{\pt H}{\pt t}.
  \end{equation*}
\end{lemma}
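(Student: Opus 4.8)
The plan is to express the total mean derivative of $H$ along $\mathbf X$ through the extended generator of the horizontal integral process and then collapse the spatial part using the antisymmetry of the second-order symplectic form $\omega$.

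First I would recall that, in analogy with Definition \ref{total-mean-d}, the total mean derivative $\mathbf D_t H$ along $\mathbf X$ is the forward mean derivative of the real-valued process $t\mapsto H(t,\mathbf X(t))$, that is, $\mathbf D_t H = D[H(t,\mathbf X(t))]$. Since $\mathbf X$ is a horizontal integral process of $A_H$, its random generator is $A_H$ evaluated along $\mathbf X$ (cf. the discussion of extended generators and the martingale characterization in Definition \ref{mtgl-prob}). Applying It\^o's formula — equivalently, the coordinate representation \eqref{local-rep-total-mean} together with the fact that the mixed-order extended generator of $\mathbf X$ is $\tfrac{\pt}{\pt t}+A_H$ — yields
\begin{equation*}
  \mathbf D_t H = \left(\frac{\pt}{\pt t} + A_H\right) H = \frac{\pt H}{\pt t} + A_H H,
\end{equation*}
where for each fixed time $A_H$ is read as a second-order vector field on $\mathcal T^{S*}M$ and the time derivative is separated off because the extended generator is first-order in time.

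Second, I would invoke the defining relation \eqref{2-Hamiltonian-vf}, namely $A_H\lrcorner\,\omega = d^2H$, and pair both sides with $A_H$ itself. By the definition of the second differential one has $\langle d^2H, A_H\rangle = A_H H$, while $\langle A_H\lrcorner\,\omega, A_H\rangle = \omega(A_H,A_H)$. Because $\omega$ is built from wedge products, as displayed in \eqref{2-symp-form}, it is antisymmetric and hence $\omega(A_H,A_H)=0$. This is precisely the identity \eqref{2-Hamiltonian-vf-prop}, so $A_H H = 0$ pointwise on $\mathcal T^{S*}M$ for each fixed time. Substituting into the previous display gives $\mathbf D_t H = \tfrac{\pt H}{\pt t}$.

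The argument is essentially routine once these two ingredients are assembled, so there is no genuine obstacle; the only point requiring attention is the handling of time-dependence, together with the fact that $A_H$ is not uniquely determined by \eqref{2-Hamiltonian-vf} (as noted immediately after that equation). The latter causes no difficulty: the vanishing $A_H H = 0$ relies solely on the antisymmetry of $\omega$ and therefore holds for \emph{every} second-order vector field satisfying \eqref{2-Hamiltonian-vf}, so the conclusion $\mathbf D_t H = \tfrac{\pt H}{\pt t}$ is independent of the chosen representative $A_H$ and of the particular horizontal integral process $\mathbf X$.
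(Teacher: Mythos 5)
Your proof is correct, and it reaches the conclusion by a slightly different organization of the same cancellation. The paper's proof works entirely in local coordinates: it expands $D[H(\mathbf X(t),t)]$ by It\^o's formula, substitutes the stochastic Hamilton's equations \eqref{stoch-Hamilton-eqns} term by term (in particular the last equation of that system, which converts the five quadratic-covariation terms into $C_{ij}\tfrac{\pt H}{\pt o_{ij}}$), and watches everything cancel against $\tfrac{\pt H}{\pt p_i}\tfrac{\pt H}{\pt x^i}$, etc. You instead package the It\^o expansion as the statement that the extended generator of $\mathbf X$ is $\tfrac{\pt}{\pt t}+A_H$ — which is exactly what the definition of a horizontal integral process \eqref{hrz-int-prc} together with \eqref{stoch-Hamilton-eqns} encodes, so the S-H equations still enter, just implicitly — and then kill the spatial part by the antisymmetry identity $A_H H=\omega(A_H,A_H)=0$ of \eqref{2-Hamiltonian-vf-prop}. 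The paper itself points to this alternative in the remark following the lemma (where the time-independent case is noted to be ``also a consequence of \eqref{2-Hamiltonian-vf-prop}''). What your route buys is robustness: as you correctly observe, the conclusion is visibly independent of the non-uniqueness of $A_H$, since the antisymmetry argument works for any second-order vector field satisfying \eqref{2-Hamiltonian-vf}; what the paper's explicit computation buys is a concrete verification that the constraint on $C_{jk}$ in \eqref{Hamiltonian-vf} is precisely what makes the second-order terms collapse, which is otherwise hidden inside the claim that $\omega(A_H,A_H)=0$ genuinely captures $A_H H$ for a \emph{second-order} (not merely first-order) field.
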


\begin{proof}
  We use \eqref{stoch-Hamilton-eqns} and local coordinates to derive
  \begin{equation*}
    \begin{split}
      \mathbf D_t H &= D[H(\mathbf X(t),t)] = \frac{\pt H}{\pt t} + D^i x \frac{\pt H}{\pt x^i} + D_i p \frac{\pt H}{\pt p_i} + D_{jk} o \frac{\pt H}{\pt o_{jk}} + \frac{1}{2} Q^{jk} x \frac{\pt^2 H}{\pt x^j \pt x^k} \\
      &\quad + \frac{1}{2} Q_{jk} p \frac{\pt^2 H}{\pt p_j \pt p_k} + \frac{1}{2} Q_{ijkl} o \frac{\pt^2 H}{\pt o_{ij} \pt o_{kl}} + Q^j_k(x,p) \frac{\pt^2 H}{\pt x^j \pt p_k} + Q^j_{kl}(x,o) \frac{\pt^2 H}{\pt x^j \pt o_{kl}} + Q_{jkl}(p,o) \frac{\pt^2 H}{\pt p_j \pt o_{kl}} \\
      &= \frac{\pt H}{\pt t} + D^i x \frac{\pt H}{\pt x^i} + D_i p \frac{\pt H}{\pt p_i} + D_{jk} o \frac{\pt H}{\pt o_{jk}} + \frac{1}{2} Q^{jk} x \frac{\pt^2 H}{\pt x^j \pt x^k} + C_{ij} \frac{\pt H}{\pt o_{ij}} = \frac{\pt H}{\pt t}.
    \end{split}
  \end{equation*}
  The result follows.
\end{proof}

In particular, when $H$ is time-independent, we have
\begin{equation}\label{stoch-cons-energy}
  \mathbf D_t H = 0,
\end{equation}
which is also a consequence of \eqref{2-Hamiltonian-vf-prop}. Equivalently, $H$ is harmonic with respect to the horizontal integral process $\mathbf X$. In this case, we can say that $H$ is stochastically conserved, or is a stochastic conserved quantity. In particular, the expectation $\E[H(\mathbf X)]$ is a constant.

\subsection{Two inspirational examples}\label{exmp-1}

  Let $M$ be a Riemannian manifold with Riemannian metric $g$. Assume for simplicity that $M$ is compact. Let $\nabla$ be the Levi-Civita connection on $TM$ with Christoffel symbols $(\Gamma^k_{ij})$. In this section, we will consider two types of processes on $M$, to provide some intuition of our stochastic Hamiltonian formalism.

\subsubsection{Diffusion processes on Riemannian manifolds}\label{sec-6-3-1}

Consider a second-order Hamiltonian $H$ on $\mathcal T^{S*} M$ with the following coordinate expression:
  \begin{equation}\label{exmp-Hamiltonian-0}
    H(x,p,o) = b^i(x) p_i - \frac{1}{2} g^{ij}(x) \Gamma_{ij}^k(x) p_k + \frac{1}{2} g^{ij}(x) o_{ij} + F(x),
  \end{equation}
  where $b$ is a given smooth vector field on $M$ and $F$ a smooth function on $M$. One can easily verify that the expression at RHS of \eqref{exmp-Hamiltonian-0} is indeed invariant under changes of coordinates. %The partial derivative $\frac{\pt H}{\pt o_{ij}} = g^{ij}$ is exactly positive-definite, whence equation \eqref{non-degenerate} holds.
We consider the S-H equations \eqref{stoch-Hamilton-eqns-2} subject to boundary conditions $\text{Law} (X(0)) = \mu_0$ and $(p,o)(T) = d^2 S_T$, where $\mu_0$ is a given probability distribution and $S_T$ a given smooth function on $M$.

  By the first two equations of system \eqref{stoch-Hamilton-eqns-2}, the projection diffusion $X$ satisfies the following MDEs,
  \begin{equation}\label{exmp-projection-0}\left\{
    \begin{aligned}
      (DX)^i(t) &= b^i(X(t)) - \frac{1}{2} g^{jk}(X(t)) \Gamma_{jk}^i(X(t)), \\
      (QX)^{jk}(t) &= g^{jk}(X(t)),
    \end{aligned}\right.
  \end{equation}
  subject to the initial distribution $\text{Law} (X(0)) = \mu_0$; or equivalently (according to the end of Section \ref{sec-2-4}), it can be rewritten as the following It\^o SDE in weak sense,
  \begin{equation}\label{diffusion}
    dX^i(t) = \left[ b^i(X(t)) - \frac{1}{2} g^{jk}(X(t)) \Gamma_{jk}^i(X(t)) \right] dt + \sigma_r^i(X(t)) dW^r(t), \quad
    \text{Law} (X(0)) = \mu_0,
  \end{equation}
  where $\sigma$ is the positive definite square root $(1,1)$-tensor of $g$, i.e., $\sum_{r=1}^d \sigma^i_r\sigma^j_r = g^{ij}$, $W$ denotes an $\R^d$-valued standard Brownian motion. %Obviously, equation \eqref{diff-bridge} is of the form \eqref{SDE-mfld} and hence is invariant under changes of coordinates.
  Note that equations \eqref{exmp-projection-0} are independent of coordinates $(p,o)$, so they form a closed system on the base manifold $M$ and can be solved independently. Indeed, the solution $X$ is a diffusion on $M$ with generator $A^X = (b^i - \frac{1}{2} g^{jk} \Gamma_{jk}^i) \pt_i + \frac{1}{2}g^{jk} \pt_j\pt_k = \nabla_b + \frac{1}{2} \Delta$.

  Now we consider the last two equations of \eqref{stoch-Hamilton-eqns-2}. The LHS of the third equation reads
  \begin{equation*}
    \bigg[ \frac{\pt}{\pt t} + \left( b^j - \frac{1}{2} g^{kl} \Gamma_{kl}^j \right) \vf{x^j} + \frac{1}{2} g^{jk} \frac{\pt^2}{\pt x^j\pt x^k}\bigg] p_i = \left( \frac{\pt}{\pt t} + \langle b, \nabla \rangle + \frac{1}{2} \Delta \right) p_i,
  \end{equation*}
  where $\langle \cdot,\cdot \rangle$ denotes the pairing of vectors and covectors, $\Delta$ is the Laplace-Beltrami operator and $\nabla$ the gradient, with respect to $g$. In order to find the solution of the third equation of \eqref{stoch-Hamilton-eqns-2}, we consider the following linear backward parabolic equation (where ``backward'' has a meaning different from that in Section \ref{sec-2-2})
  \begin{equation}\label{exmp-HJB-0}
    \frac{\pt S}{\pt t} + \langle b, \nabla S\rangle + \frac{1}{2} \Delta S + F = 0, \quad t\in [0,T),
  \end{equation}
  with terminal value $S(T,x) = S_T(x)$. We let
  \begin{equation}\label{eqn-24-0}
    p_i = \frac{\pt S}{\pt x^i},
  \end{equation}
  and use \eqref{exmp-HJB-0} and \eqref{non-degenerate} to derive
  \begin{equation}\label{eqn-36}
    \begin{split}
      -\frac{\pt F}{\pt x^i} &= \vf{x^i} \left( \frac{\pt S}{\pt t} + \langle b, \nabla S\rangle + \frac{1}{2} \Delta S \right) \\
      &= \left( \frac{\pt}{\pt t} + \langle b, \nabla \rangle + \frac{1}{2} \Delta \right) p_i + \left( \frac{\pt b^j}{\pt x^i} p_j - \frac{1}{2} \frac{\pt g^{kl}}{\pt x^i} \Gamma_{kl}^j p_j - \frac{1}{2} g^{kl}\frac{\pt \Gamma_{kl}^j}{\pt x^i} p_j + \frac{1}{2} \frac{\pt g^{jk}}{\pt x^i} o_{jk} \right) \\
      &= \left( \frac{\pt}{\pt t} + \langle b, \nabla \rangle + \frac{1}{2} \Delta \right) p_i + \frac{\pt}{\pt x^i}(H-F),
    \end{split}
  \end{equation}
  which agree with the third equation of \eqref{stoch-Hamilton-eqns-2}.

  Finally, we combine \eqref{eqn-24-0} with \eqref{non-degenerate} to conclude that the horizontal integral process $\mathbf X$ is
  \begin{equation*}%\label{exmp-int-proc}
    \mathbf X(t) = (p,o)(t,X(t)) = \left(\frac{\pt S}{\pt x^i}, \frac{\pt^2 S}{\pt x^j\pt x^k} \right) (t,X(t)) = d^2S(t,X(t)).
  \end{equation*}

\begin{example}[Brownian motions]\label{exmp-BM}
  When $b\equiv 0$ and $F\equiv 0$, the 2nd-order Hamiltonian is $H(x,p,o) = \frac{1}{2} g^{ij}(x) (o_{ij} - \Gamma_{ij}^k(x) p_k)$, the solution process $X$ is a standard Brownian motion on $M$ with initial distribution $\mu_0$. Such 2nd-order Hamiltonian $H$ can be regarded as a ``stochastic deformation'' of the trivial classical Hamiltonian $H_0 = 0$. Indeed, $H$ is the $g$-canonical lift of $H_0$ that will be defined in forthcoming Section \ref{sec-6-6}. Therefore, we may regard Brownian motions as ``stochastization'' or ``stochastic deformation'' of trivially constant curves on the base manifold $M$.
\end{example}

We are going to describe in the next example a dynamical approach to diffusions, elaborated afterwards (Section \ref{sec-7-3}), inspired by Schr\"odinger.

\subsubsection{Reciprocal processes and diffusion bridges on Riemannian manifolds}

With the same coefficients $b, F$ and boundary data $\mu_0,S_T$ in Subsection \ref{sec-6-3-1}, we consider the S-H system \eqref{stoch-Hamilton-eqns-2} with the following second-order Hamiltonian $H$ on $\mathcal T^{S*} M$:
  \begin{equation}\label{exmp-Hamiltonian}
    H(x,p,o) = \frac{1}{2} g^{ij}(x) p_ip_j + b^i(x) p_i - \frac{1}{2} g^{ij}(x) \Gamma_{ij}^k(x) p_k + \frac{1}{2} g^{ij}(x) o_{ij} + F(x),
  \end{equation}
subject to boundary conditions $\text{Law} (X(0)) = \mu_0$ and $(p,o)(T) = d^2 S_T$. Here, $b$ and $F$ are called, respectively, vector and scalar potentials in classical mechanics. Again, it is easy to verify that the expression at RHS of \eqref{exmp-Hamiltonian} is indeed invariant under changes of coordinates. %The partial derivative $\frac{\pt H}{\pt o_{ij}} = g^{ij}$ is exactly positive-definite, whence equation \eqref{non-degenerate} holds.

  The LHS of the third equation in \eqref{stoch-Hamilton-eqns-2} now reads
  \begin{equation*}
    \bigg[ \frac{\pt}{\pt t} + \left( g^{jk} p_k + b^j - \frac{1}{2} g^{kl} \Gamma_{kl}^j \right) \vf{x^j} + \frac{1}{2} g^{jk} \frac{\pt^2}{\pt x^j\pt x^k}\bigg] p_i = \left( \frac{\pt}{\pt t} + p \cdot \nabla + \langle b, \nabla \rangle + \frac{1}{2} \Delta \right) p_i,
  \end{equation*}
  In order to find the solution of the third equation of \eqref{stoch-Hamilton-eqns-2}, we first consider the positive solution of following backward parabolic equation on $M$
  \begin{equation}\label{Kolmogorov-eqn}
    \frac{\pt u}{\pt t} + \langle b, \nabla u\rangle + \frac{1}{2}\Delta u + F u = 0, \quad t\in [0,T),
  \end{equation}
  with terminal value $u(T,x) = e^{S_T(x)}$, where $\langle \cdot, \cdot\rangle$ denotes the Riemannian inner product with respect to $g$. If we let $S=\ln u$, then it is easy to verify that $S$ satisfies the following Hamilton-Jacobi-Bellman (HJB) equation
  \begin{equation}\label{exmp-HJB}
    \frac{\pt S}{\pt t} + \langle b, \nabla S\rangle + \frac{1}{2} |\nabla S|^2 + \frac{1}{2} \Delta S + F = 0, \quad t\in [0,T),
  \end{equation}
  with terminal value $S(T,x) = S_T(x)$, where $|\cdot|$ denotes the Riemannian norm with respect to $g$. Now we let
  \begin{equation}\label{eqn-24}
    p_i = \frac{\pt S}{\pt x^i} = \frac{\pt \ln u}{\pt x^i},
  \end{equation}
  and use \eqref{exmp-HJB} and \eqref{non-degenerate} to derive, in a way similar to \eqref{eqn-36},
  \begin{equation*}
    -\frac{\pt F}{\pt x^i} = \vf{x^i} \left( \frac{\pt S}{\pt t} + \langle b, \nabla S\rangle + \frac{1}{2} |\nabla S|^2 + \frac{1}{2} \Delta S \right)
%      &= \left( \frac{\pt}{\pt t} + p \cdot \nabla + \langle b, \nabla \rangle + \frac{1}{2} \Delta \right) p_i + \left( \frac{1}{2} \frac{\pt g^{jk}}{\pt x^i} p_j p_k + \frac{\pt b^j}{\pt x^i} p_j - \frac{1}{2} \frac{\pt g^{kl}}{\pt x^i} \Gamma_{kl}^j p_j - \frac{1}{2} g^{kl}\frac{\pt \Gamma_{kl}^j}{\pt x^i} p_j + \frac{1}{2} \frac{\pt g^{jk}}{\pt x^i} o_{jk} \right) \\
    = \left( \frac{\pt}{\pt t} + p \cdot \nabla + \langle b, \nabla \rangle + \frac{1}{2} \Delta \right) p_i + \frac{\pt}{\pt x^i}(H-F),
  \end{equation*}
  which agree with the third equation of \eqref{stoch-Hamilton-eqns-2}. Therefore, the projection diffusion $X$ of the system \eqref{stoch-Hamilton-eqns-2} satisfies the following MDEs,
  \begin{equation}\label{exmp-projection}\left\{
    \begin{aligned}
      (DX)^i(t) &= g^{ij}(X(t)) \frac{\pt \ln u}{\pt x^j}(t, X(t)) + b^i(X(t)) - \frac{1}{2} g^{jk}(X(t)) \Gamma_{jk}^i(X(t)), \\
      (QX)^{jk}(t) &= g^{jk}(X(t)),
    \end{aligned}\right.
  \end{equation}
  subject to the initial distribution $\text{Law} (X(0)) = \mu_0$; or equivalently (according to the end of Section \ref{sec-2-4}), it can be rewritten as the following It\^o SDE in weak sense,
  \begin{equation}\label{diff-bridge}\left\{
    \begin{aligned}
      dX^i(t) &= \left[ g^{ij}(X(t)) \frac{\pt \ln u}{\pt x^j}(t, X(t)) + b^i(X(t)) - \frac{1}{2} g^{jk}(X(t)) \Gamma_{jk}^i(X(t)) \right] dt + \sigma_r^i(X(t)) dW^r(t), \\
      \text{Law} (&X(0)) = \mu_0,
    \end{aligned}\right.
  \end{equation}
  where $\sigma$ is the positive definite square root $(1,1)$-tensor of $g$, i.e., $\sum_{r=1}^d \sigma^i_r\sigma^j_r = g^{ij}$, $W$ denotes an $\R^d$-valued standard Brownian motion. %Obviously, equation \eqref{diff-bridge} is of the form \eqref{SDE-mfld} and hence is invariant under changes of coordinates.

  The solution process $X$ of \eqref{diff-bridge} is called a \emph{Bernstein process} \cite{Ber32,CWZ00} (or the \emph{reciprocal process} derived from the $M$-valued diffusion in \eqref{diffusion} \cite{Jam75}).
%  \begin{equation*}
%          dY^i(t) = \left[ b^i(Y(t)) - \frac{1}{2} g^{jk}(Y(t)) \Gamma_{jk}^i(Y(t)) \right] dt + \sigma_r^i(Y(t)) dW^r(t), \quad \text{Law} (Y(0)) = \mu_0.
%  \end{equation*}
  The time marginal distribution $\mu_t$ of $X$ satisfies a Born-type formula
  $\mu_t(dx) = u(t,x) v(t,x) dx$ (see, e.g., \cite[Corollary 3.3.1]{Zam86} or \cite[Equations (2.9), (4.6) and (4.8)]{CZ91}), where $v$ satisfies the adjoint equation of \eqref{Kolmogorov-eqn}. %Fokker-Planck equation.
%  \begin{equation}\label{forward-parabolic}
%    -\frac{\pt v}{\pt t} - \divg (b v) + \frac{1}{2}\Delta v + F v = 0.
%  \end{equation}
  The terminal law of $X$ can be determined in the following way: we first solve \eqref{Kolmogorov-eqn} to get $u(0,x)$, and then find out the initial value for $v$ via $\mu_0(dx) = u(0,x) v(0,x) dx$ and solve the equation for $v$ to get $v(T,x)$, finally the terminal law of $X$ is given by $\mu_T(dx) = u(T,x)v(T,x)dx$. In particular, when $\mu_0 = \delta_{q_1}$ and $\mu_T = \delta_{q_2}$ for $q_1, q_2 \in M$, the solution $X$ of \eqref{diff-bridge} is the Markovian bridge of the diffusion $Y$ conditioning on ending point $q_2$ \cite{Cet16}.

  Again we combine \eqref{eqn-24} with \eqref{non-degenerate} to conclude that the horizontal integral process $\mathbf X$ is
  \begin{equation}\label{exmp-int-proc}
    \mathbf X(t) = (p,o)(t,X(t)) = \left(\frac{\pt S}{\pt x^i}, \frac{\pt^2 S}{\pt x^j\pt x^k} \right) (t,X(t)) = d^2S(t,X(t)).
  \end{equation}

\begin{remark}\label{remark-2}
  %(i). Mathematically, the Kolmogorov backward equation \eqref{Kolmogorov-eqn} is indeed related to the killed process of $Y$ with killing rate $-F$ via Feynman-Kac formula, cf. \cite[Section 4.4]{KS91}.

  (i). %Jamison \cite{Jam75} was inspired by Schr\"odinger's idea \cite{Sch32}, at the origin of the stochastic deformation program \cite{Zam15}. Like here, his construction was involving a single nondecreasing (past) filtration. The full, time-symmetric, dynamical properties of the resulting Beinstein diffusions appear only when another nonincreasing (future) filtration is used as well, cf. \cite{Zam86,CZ91,CZ03}. We will come back to Schr\"odinger's problem from the Lagrangian point of view in Section \ref{sec-7-3}.
  The derivation of the reciprocal process \eqref{diff-bridge} from the diffusion \eqref{diffusion} was the way chosen by Jamison \cite{Jam75}, inspired by Schr\"odinger's original problem \cite{Sch32}. No geometry or dynamical equations like HJB equation \eqref{exmp-HJB} was involved by him. Like here, Jamison's construction was involving only the past (nondecreasing) filtration. The dynamical content dates back to \cite{Zam86,CZ91,CZ03}, where a reciprocal process was constructed from the only data of a Hamiltonian operator as required by Schr\"odinger's original problem, and the future (nonincreasing) filtration was also used to study the time-reversed dynamics. Cf. also Example \ref{EQM} and Section \ref{sec-7-3}.

  (ii). Equations \eqref{exmp-projection} suggest that the transformation from coordinates $(x,p,o)$ to coordinates $(x,D x, Q x)$ is \emph{not} invertible. More precisely, the coordinates $(D^i x)$ are transformed from $(x,p)$ but the coordinates $(Q^{jk} x)$ are only related to $(x^i)$. Besides, these two equations have nothing to do with the coordinates $(o_{jk})$. However, if we look at the $\nabla$-canonical coordinates $(D^i_\nabla x)$ for \eqref{exmp-projection}, then
  \begin{equation*}
    (D_\nabla X)^i(t) = g^{ij}(X(t)) p_j(t, X(t)) + b^i(X(t)),
  \end{equation*}
  which indicates that the transform from $(x,p)$ to $(x,D_\nabla x)$ is invertible. These will help us establish stochastic Lagrangian mechanics and second-order Legendre transforms, in forthcoming Chapter \ref{sec-7}.

  (iii). As observed in Section \ref{sec-2-2}, every result presented here has a backward version (in the sense of backward mean derivatives with respect to the future filtration $\{\F_t\}$). Indeed, two forward-backward SDE systems for Bernstein diffusions on Euclidean space were derived in \cite{CV15}: one is under the past filtration and coincides with ours, whereas the other one is under the future filtration.
\end{remark}

There are some special cases which are of independent interests and have been considered in the literature.

\begin{example}[Brownian (free) reciprocal processes and Brownian bridges]\label{Brownian-bridge}
  Consider the case where $b\equiv 0$, $F\equiv 0$. %Let $q_1,q_2$ be two distinguished points on $M$.
  In this case, $Y$ is a Brownian motion on $M$, so we call $X$ a Brownian reciprocal process. In particular, the Brownian bridge from $q_1$ to $q_2$ of time length $T>0$ is driven by the It\^o SDE \eqref{diff-bridge} where $X(0) = q_1$, $b\equiv 0$ and
  $u$ satisfies the backward heat equation \eqref{Kolmogorov-eqn} with $F\equiv0$ and final value $u(T,x) = \delta_{q_2}(x)$.
  See also \cite[Theorem 5.4.4]{Hsu02}. %And one can easily see that \eqref{SDE-mfld} and \eqref{exmp-projection} are weakly equivalent.
  Thus, Brownian bridges are understood as stochastic Hamiltonian flows of the 2nd-order Hamiltonian $H(x,p,o) = \frac{1}{2} g^{ij}(x) p_ip_j - \frac{1}{2} g^{ij}(x) \Gamma_{ij}^k(x) p_k + \frac{1}{2} g^{ij}(x) o_{ij}$, compared with geodesics as Hamiltonian flows of the classical Hamiltonian $H_0(x,p) = \frac{1}{2} g^{ij}(x) p_ip_j$ (cf. \cite[Theorem 3.7.1]{AM78}). Here, the 2nd-order Hamiltonian $H$ is the $g$-canonical lift of $H_0$. We can also say that Brownian bridges are ``stochastization'' or ``stochastic deformation'' of geodesics, cf. Example \ref{exmp-BM}. Relations between geodesics and Brownian motions have attracted many studies. For example, one can find various interpolation relations between geodesics and Brownian motions in \cite{ABT15,Li16}.
\end{example}

\begin{example}[Euclidean quantum mechanics {\cite{CZ03,AYZ89,ARZ06}}]\label{EQM}
    It is insightful to consider the case $M=\R^d$ and $b\equiv 0$. The Riemannian metric under consideration is the flat Euclidean one. To catch sight of the analogy with quantum mechanics, we involve the reduced Planck constant $\hbar$ into the second-order Hamiltonian $H$ of \eqref{exmp-Hamiltonian}, so that
        \begin{equation*}
          H_\hbar (x,p,o) = \frac{1}{2}|p|^2 + \frac{\hbar}{2} \tr o + F(x).
        \end{equation*}
        The system \eqref{stoch-Hamilton-eqns} then reads as
        \begin{equation*}\left\{
          \begin{aligned}
            (DX)^i(t) &= p_i(t,X(t)), \\
            (QX)^{jk}(t) &= \hbar \delta^{jk}, \\
            D [p_i(t,X(t))] &= - \frac{\pt F}{\pt x^i}(X(t)), \\
            o_{ik}(t, x) &= \frac{\pt p_k}{\pt x^i}(t, x).
          \end{aligned}\right.
        \end{equation*}
        Note that the first three equations form a sub-system and can be solved separately, as they are independent of the coordinates $o_{ij}$'s. Equation \eqref{Kolmogorov-eqn} and its adjoint now reduce to the following $\hbar$-dependent backward and forward heat equations, respectively,
        \begin{equation*}
          \hbar \frac{\pt u}{\pt t} + \frac{\hbar^2}{2}\Delta u + F u = 0, \qquad -\hbar \frac{\pt v}{\pt t} + \frac{\hbar^2}{2}\Delta v + F v = 0,
        \end{equation*}
        which together with the Born-type formula $\mu_t(dx) = u(t,x) v(t,x) dx$ display the strong analogy to quantum mechanics \cite{Zam86}.
        %which corresponds to an $\hbar$-scalling of a killed Brownian motion.
        The function $S=\hbar \ln u$ solves the following $\hbar$-dependent HJB equation
        \begin{equation*}
          \frac{\pt S}{\pt t} + \frac{1}{2} |\nabla S|^2 + \frac{\hbar}{2} \Delta S + F = 0.
        \end{equation*}
        The first three equations then can be solved by letting $p = \nabla S$. The first and third equations imply a Newton-type equation
        \begin{equation}\label{EQM-eqn-motion}
          DDX(t) = -\nabla F(X(t)).
        \end{equation}
        This is indeed the equation of motion of the Euclidean version of quantum mechanics, which was the original motivation of Schr\"odinger in his well-known problem to be discussed in Section \ref{sec-7-3}. See \cite[p.~158]{CZ03} and \cite[Eq.~(4.17)]{Zam15} for more. Note that \cite{CZ03,Zam15} used $V=-F$ to denote the physical scalar potential and used the relation $S= -\hbar \ln u$ and $p = -\nabla S$ to formulate the HJB equation from backward heat equation in the case of nondecreasing (past) filtration.

        There are two special cases of which more will be studied later.

        (i). When $d=1$ and $F(x) = \frac{1}{2} x^2$, i.e., $H(x,p,o) = \frac{1}{2}(p^2 + x^2)+ \frac{1}{2}o$, we call its projective integral process $X$ the (forward) \emph{stochastic harmonic oscillator}. It is a stochastization of the classical harmonic oscillator with Hamiltonian $H_0(x,p) = \frac{1}{2}(p^2 + x^2)$ \cite[Example 5.2.3]{AM78}. Likewise, here $H$ is the canonical lift $H_0$, see Section \ref{sec-6-6}.

        (ii). When $d=1$ and $F(x) = -\frac{1}{2} x^2$, i.e., $H(x,p,o) = \frac{1}{2}(p^2 - x^2)+ \frac{1}{2}o$, we call it the (forward) \emph{Euclidean harmonic oscillator}.
\end{example}

\subsection{The mixed-order contact structure on $\mathcal T^{S*} M\times\R$}

In the later sections we will investigate time-dependent systems. The proper space for consideration is now $\mathcal T^{S*} M\times\R$. Recall in \eqref{germ} that $\mathcal T^{S*} M\times\R = J^2\hat\pi$, where the latter is the second-order jet bundle of $(M\times \R, \hat\pi, M)$.

In classical differential geometry, the first-order jet bundle $J^1\hat\pi = T^* M\times\R$ can be equipped with an exact contact structure in several ways \cite[Section 5.1]{AM78}. Among others, the canonical symplectic form $\omega_0$ on $T^* M$ corresponds to a contact structure on $J^1\hat\pi$ via $\tilde \omega_0 = \hat\pi^* \omega_0$, which is indeed exact as $\tilde \omega_0 = -d\tilde \theta_0$ for $\tilde \theta_0 = dt + \hat\pi^*\theta_0$. Another commonly used contact structure is the Poincar\'e-Cartan form $\omega^0_{H_0} = \tilde \omega_0 + dH_0\wedge dt$ for a given function $H_0\in C^\infty(J^1\hat\pi)$. It is also exact as $\omega^0_{H_0} = - d\theta^0_{H_0}$ where $\theta^0_{H_0} = \hat\pi^*\theta_0 - H_0dt$. The advantage of the Poincar\'e-Cartan form, compared with the contact form $\omega_0$, is that it can be related to the (time-dependent) Hamiltonian vector field $V_{H_0}$ on $T^* M$ of ${H_0}$. More precisely, the vector field $\tilde V_{H_0} = \vf t + V_{H_0}$, treated as a vector field on $J^1\hat\pi$ and called the characteristic vector field of $\omega^0_{H_0}$, is the unique vector field satisfying $\tilde V_{H_0} \lrcorner\, \omega^0_{H_0} = 0$ and $\tilde V_{H_0}\lrcorner\, dt=1$.

Now we proceed in a similar way for the second-order jet bundle $J^2\hat\pi$. Define
$$\tilde \omega = \hat\pi^{S*} \omega \quad\text{and}\quad \tilde \theta = dt + \hat\pi^{S*}\theta.$$
Then $\tilde \omega = -d\tilde \theta$. We call the pair $(J^2\hat\pi, \tilde \omega)$ a second-order contact manifold and the pair $(J^2\hat\pi, \tilde \theta)$ a mixed-order exact contact manifold. In local coordinates, $\tilde \omega$ has the same expression as $\omega$ in \eqref{2-symp-form}, but we stress that it is now a second-order form on $\mathcal T^{S*} M\times\R$. The form $\tilde \theta$ has the local expression
\begin{equation*}
  \tilde \theta = dt + p_i d^2 x^i + \textstyle{\frac{1}{2}} o_{jk} dx^j\cdot dx^k.
\end{equation*}
This makes clear that $\tilde \theta$ is a mixed-order form on $\mathcal T^{S*} M\times\R$.

A time-dependent second-order Hamiltonian $H$ is a smooth function on $J^2\hat\pi\cong \mathcal T^{S*} M\times\R$.
%Its horizontal mixed differential is defined in local coordinates by
%\begin{equation*}
%  d^{h\circ} H := \frac{\pt H}{\pt t} dt + d^h H = \frac{\pt H}{\pt t} dt + \frac{\pt H}{\pt x^i} d^2 x^i + \frac{\pt H}{\pt p_i} d p_i + \frac{\pt H}{\pt o_{jk}} d o_{jk} + \frac{1}{2} \frac{\pt^2 H}{\pt x^j \pt x^k} dx^j\cdot dx^k.
%\end{equation*}
The second-order Hamiltonian vector field $A_H$ of $H$ is now a time-dependent second-order vector field on $\mathcal T^{S*} M$, its horizontal integral process share the same equations as \eqref{stoch-Hamilton-eqns} or \eqref{stoch-Hamilton-eqns-2}, only with $H$ explicitly depending on time. Define a mixed-order vector field $\tilde A_H$ on $\mathcal T^{S*} M\times \R$ by
\begin{equation*}
  \tilde A_H := A_H + \vf t,
\end{equation*}
where $A_H$ is a second-order Hamiltonian vector field of the form \eqref{Hamiltonian-vf}. We call $\tilde A_H$ the extended second-order Hamiltonian vector field of $H$.

We define the second-order counterpart of Poincar\'e-Cartan form by
$$\omega_H := \tilde \omega + d^\circ H\wedge dt = d^2 x^i \wedge d^2 p_i + \textstyle{\frac{1}{2}} dx^j\cdot dx^k \wedge d^2 o_{jk} + d^2 H\wedge dt,$$
and call it the mixed-order Poincar\'e-Cartan form on $\mathcal T^{S*} M\times\R$. It is exact in the sense that $\omega_H = - d^\circ \theta_H$, where $\theta_H = \hat\pi^{S*}\theta - Hdt = p_i d^2 x^i + \textstyle{\frac{1}{2}} o_{jk} dx^j\cdot dx^k - Hdt$.

The following lemma gives the relations between $\omega_H$ and $\tilde A_H$.

\begin{lemma}\label{charact-2-vf}
  The class of extended second-order Hamiltonian vector fields $\tilde A_H$ is the unique class of mixed-order vector fields on $\mathcal T^{S*} M\times\R$ satisfying
  \begin{equation*}%\label{cond-charact-2-vf}
    \tilde A_H \lrcorner\, \omega_H = 0 \quad\text{and}\quad \tilde A_H\lrcorner\, dt=1.
  \end{equation*}
\end{lemma}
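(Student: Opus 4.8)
The plan is to mirror the classical argument that identifies the characteristic vector field of the Poincar\'e--Cartan form, transplanted into the mixed-order calculus. First I would use the normalization $\tilde A_H \lrcorner\, dt = 1$ to pin down the shape of any candidate solution $B$. Since a mixed-order vector field on $\mathcal T^{S*} M\times\R$ is by definition first-order in $t$, contracting with the genuine $1$-form $dt$ extracts precisely its $\vf t$-coefficient; hence the condition $B\lrcorner\, dt = 1$ forces $B = \vf t + A$ for some (possibly time-dependent) second-order vector field $A$ on $\mathcal T^{S*} M$. The problem then reduces to deciding which such $A$ make $B\lrcorner\,\omega_H$ vanish.

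Next I would compute $B\lrcorner\,\omega_H$ term by term from the local expression $\omega_H = \tilde\omega + d^2 H\wedge dt$ (recall $d^\circ H\wedge dt = d^2 H\wedge dt$ since $dt\wedge dt=0$, with $d^2 H$ the purely spatial second-order differential). The ingredients are: (i) $\vf t\lrcorner\,\tilde\omega = 0$, because $\tilde\omega=\hat\pi^{S*}\omega$ carries no $dt$, its local form \eqref{2-symp-form} involving only the spatial generators; (ii) the antiderivation rule with respect to the ordinary $1$-form $dt$, giving $B\lrcorner\,(d^2 H\wedge dt) = \langle d^2 H, B\rangle\, dt - \langle dt, B\rangle\, d^2 H = (AH)\,dt - d^2 H$, where I use $\langle d^2 H,\vf t\rangle = 0$ (the spatial $d^2 H$ does not pair with $\vf t$), $\langle dt, A\rangle = 0$, and the pairing \eqref{forms} to read $\langle d^2 H, A\rangle = AH$; and (iii) $A\lrcorner\,\tilde\omega$ is again a purely spatial second-order form. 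Collecting the four contributions yields
\[
  B\lrcorner\,\omega_H = A\lrcorner\,\tilde\omega - d^2 H + (AH)\,dt.
\]

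To conclude I would split this identity into its spatial part and its $dt$-part, which lie in complementary summands. Vanishing of the spatial part is exactly $A\lrcorner\,\tilde\omega = d^2 H$, i.e. the defining equation \eqref{2-Hamiltonian-vf} of a second-order Hamiltonian vector field of $H$, so $A$ must belong to the class $A_H$. The decisive point is that the residual $dt$-condition $AH = 0$ is then \emph{automatic}: once $A\lrcorner\,\tilde\omega = d^2 H$, property \eqref{2-Hamiltonian-vf-prop} gives $AH = d^2 H(A) = \tilde\omega(A,A) = 0$. Hence $B\lrcorner\,\omega_H = 0$ together with $B\lrcorner\, dt = 1$ holds precisely when $B = \vf t + A_H$ is an extended second-order Hamiltonian vector field, which is the assertion. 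The word ``class'' is unavoidable here because, as noted after \eqref{2-Hamiltonian-vf-prop}, equation \eqref{2-Hamiltonian-vf} does not determine $A_H$ uniquely.

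The main obstacle is not the algebra but its legitimacy: the interior product of a mixed-order vector field with a mixed-order $2$-form, and the antiderivation rule used in step (ii), are operations in the calculus $(d,d^2,\wedge,\cdot)$ that the paper deliberately leaves unaxiomatized. I would therefore carry out every contraction in the canonical coordinates $(t,x^i,p_i,o_{jk})$, taking the pairings \eqref{forms} and the local forms \eqref{2-taut}--\eqref{2-symp-form} as the working definitions, and verify that $\vf t\lrcorner$ and the $dt$-antiderivation behave as in the first-order case on the factors $dt$ and $d^2(\cdot)$ that actually occur; this keeps the computation rigorous while sidestepping the need for a general theory. A minor point to check along the way is that $A\lrcorner\,\tilde\omega$ and $(AH)\,dt$ genuinely sit in independent components, which follows because $\tilde\omega$ and $A$ are spatial whereas $dt$ is transverse to $\mathcal T^{S*} M$.
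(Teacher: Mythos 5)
Your argument is correct and follows essentially the same route as the paper's proof: a direct contraction of $\omega_H$ via the wedge antiderivation rule, combined with the defining relation \eqref{2-Hamiltonian-vf} and the identity $A_H H = 0$ from \eqref{2-Hamiltonian-vf-prop}. The only organizational difference is that you impose the normalization $B\lrcorner\, dt = 1$ first and reduce to the purely spatial equation $A\lrcorner\,\tilde\omega = d^2 H$, whereas the paper checks existence by pairing $\omega_H(\tilde A_H,\cdot)$ against an arbitrary mixed-order field and proves uniqueness by expanding a general candidate in the canonical coordinates and showing any solution of $A\lrcorner\,\omega_H = 0$ alone is a scalar multiple $A^0\tilde A_H$; both settle the same point.
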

\begin{proof}
  Firstly, we show that $\tilde A_H$ satisfies the two equalities. The second equality is trivial. For the first one, we pick a mixed-order vector field $B$ on $\mathcal T^{S*} M\times\R$; then,
  \begin{equation*}
    \begin{split}
      \omega_H(\tilde A_H, B) & = \tilde \omega(\tilde A_H, B) + d^\circ H(\tilde A_H) dt(B) - dt(\tilde A_H) d^\circ H(B) \\
      & = \omega(A_H, \hat\pi^{S}_*(B)) + \left[ d^\circ H(A_H) + d^\circ H(\textstyle{\vf t}) \right] dt(B) - d^\circ H(B) \\
      & = d^2 H(\hat\pi^{S}_*(B)) + \textstyle{\frac{\pt H}{\pt t}} dt(B) - d^\circ H(B) \\
      & = 0.
    \end{split}
  \end{equation*}
  To prove the uniqueness, it suffices to show that any mixed-order vector field $A$ on $\mathcal T^{S*} M\times\R$ satisfying $A \lrcorner\, \omega_H = 0$ is a multiplier of $\tilde A_H$. Suppose that $A$ has the local expression
  \begin{equation*}
    \begin{split}
      A =&\ A^0 \vf t + A^i \vf{x^i} + A_i \vf{p_i} + A^{jk} \frac{\pt^2}{\pt x^j \pt x^k} + A^2_{jk} \vf{o_{jk}} \\
      &\ + A^{11}_{jk} \frac{\pt^2}{\pt p_j \pt p_k} + A_{ijkl} \frac{\pt^2}{\pt o_{ij} \pt o_{kl}} + A^j_k \frac{\pt^2}{\pt x^j \pt p_k} + A^j_{kl} \frac{\pt^2}{\pt x^j \pt o_{kl}} + A_{jkl} \frac{\pt^2}{\pt p_j \pt o_{kl}}.
    \end{split}
  \end{equation*}
  Then, it follows that
  \begin{equation*}
    \begin{split}
      0 = A \lrcorner\, \omega_H =&\ A^i d^2p_i - A_i d^2 x^i + A^{jk} d^2o_{jk} - \textstyle{\frac{1}{2}} A^2_{jk} dx^j\cdot dx^k + \text{terms}\left( A^{11}_{jk}, A_{ijkl}, A^j_k, A^j_{kl}, A_{jkl} \right) \\
      &\ - A^0 \left( \frac{\pt H}{\pt x^i} d^2 x^i + \frac{\pt H}{\pt p_i} d^2 p_i + \frac{\pt H}{\pt o_{jk}} d^2 o_{jk} + \frac{1}{2} \frac{\pt^2 H}{\pt x^j \pt x^k} dx^j\cdot dx^k + \cdots \right) \\
      &\ + \left( A^i \frac{\pt H}{\pt x^i} + A_i \frac{\pt H}{\pt p_i} + + A^{jk} \frac{\pt^2H}{\pt x^j \pt x^k} + A^2_{jk} \frac{\pt H}{\pt o_{jk}} + \cdots \right) dt.
    \end{split}
  \end{equation*}
  The vanishing of each coefficient gives
  \begin{equation*}
    A^i = A^0 \frac{\pt H}{\pt p_i}, \quad A_i = -A^0 \frac{\pt H}{\pt x^i}, \quad A^{jk} = A^0 \frac{\pt H}{\pt o_{jk}}, \quad A^2_{jk} = -A^0 \left( \frac{\pt^2 H}{\pt x^j \pt x^k} + \cdots \right), \quad \cdots.
  \end{equation*}
  Therefore, $A = A^0 \tilde A_H$.
\end{proof}

\subsection{Canonical transformations and Hamilton-Jacobi-Bellman equations}\label{canonical-trans}

Let us study the second-order analogs of canonical transformations and their generating functions. To do so, we need to find a change of coordinates from $(x^i, p_i, o_{jk},t)$ to $(y^i, P_i, O_{jk}, s)$ that preserves the form of stochastic Hamilton's equations \eqref{stoch-Hamilton-eqns} (with time-dependent 2nd-order Hamiltonian). %and thereby the form of the extended 2nd-order Hamiltonian vector field unchanged. Since in condition \eqref{cond-charact-2-vf} only $\omega_H$, i.e., the differential of $\theta_H$ is involved, we find out that we can make a transformation that changes $\theta_H$ by the addition of an exact mixed differential, so that the condition \eqref{cond-charact-2-vf} is not affected. Let us make this formal thinking more precise.
More precisely, we have the following definition of canonical transformations between mixed-order contact structures, which is adapted from those between classical contact structures in \cite{ACI83}.

\begin{definition}\label{can-trans-def}
  Let $(\mathcal T^{S*} M\times\R, \tilde \omega)$ and $(\mathcal T^{S*} N\times\R, \tilde \eta)$ be two second-order contact manifolds corresponding to second-order tautological forms $\theta$ and $\vartheta$. %A bundle isomorphism $\mathbf F: (\mathcal T^{S*} M\times\R, \hat\pi_{2,0}, M\times\R)\to (\mathcal T^{S*} N\times\R, \hat\rho_{2,0}, N\times\R)$ is called a canonical transformation if the projection $F$ of $\mathbf F$ is a bundle isomorphism from $(M\times\R, \pi, \R)$ to $(N\times\R, \rho, \R)$ projecting to $F^0:\R\to\R$, %identity on $\R$,
  A bundle isomorphism $\mathbf F: (\mathcal T^{S*} M\times\R, \hat\pi_{2,1}, T^* M\times\R)\to (\mathcal T^{S*} N\times\R, \hat\rho_{2,1}, T^* N\times\R)$ is called a canonical transformation if its projection $\mathbb F$ is a bundle isomorphism from $(T^* M\times\R, \hat\pi^1_{0,1}, \R)$ to $(T^* N\times\R, \hat\rho^1_{0,1}, \R)$ projecting to $F^0:\R\to\R$,
  and there is a function $H_{\mathbf F} \in C^\infty(\mathcal T^{S*} M\times\R)$ such that
  \begin{equation}\label{2-canonical-transf}
    \mathbf F^{R*}\tilde \eta = \omega_{H_{\mathbf F}},
  \end{equation}
  where $\omega_{H_{\mathbf F}} = \tilde \omega + d^\circ H_{\mathbf F} \wedge dF^0$.
\end{definition}

The map $\mathbf F$ in the definition is also a bundle isomorphism from $(\mathcal T^{S*} M\times\R, \hat\pi^2_{0,1}, \R)$ to $(\mathcal T^{S*} N\times\R, \hat\rho^2_{0,1}, \R)$ projecting to $F^0$. %identity on $\R$.
Hence, we may assume $\mathbf F(\alpha_q, t) = (\bar{\mathbf F}(\alpha_q, t), F^0(t))$ for all $(\alpha_q, t) \in \mathcal T^{S*} M\times\R$, where $\bar{\mathbf F}$ is a smooth map from $\mathcal T^{S*} M\times\R$ to $\mathcal T^{S*} N$.
%Similarly, we also assume $F(q,t) = (\bar F(q,t),F^0(t))$, where $\bar F$ is a smooth function from $M\times\R$ to $N$.
For each $t\in\R$, we define a map $\bar{\mathbf F}_t: \mathcal T^{S*} M \to \mathcal T^{S*} N$ by $\bar{\mathbf F}_t(\alpha_q) = \bar{\mathbf F}(\alpha_q,t)$. We also introduce an injection $\jmath_t: \mathcal T^{S*} M \to \mathcal T^{S*} M\times\R$ by $\jmath_t(\alpha_q) = (\alpha_q,t)$. Then, we have $\bar{\mathbf F}_t = \hat\rho_{1,1} \circ \mathbf F \circ \jmath_t$.

%we define two maps $\bar{\mathbf F}_t: \mathcal T^{S*} M \to \mathcal T^{S*} N$ and $\bar F_t: M\to N$, by $\bar{\mathbf F}_t(\alpha_q) = \bar{\mathbf F}(\alpha_q,t)$ and $\bar F_t(q) = \bar F(q,t)$ respectively. We also introduce two injections $\jmath_t: \mathcal T^{S*} M \to \mathcal T^{S*} M\times\R$ by $\jmath_t(\alpha_q) = (\alpha_q,t)$ and $\imath_t: M \to M\times\R$ by $\imath_t(q) = (q,t)$. Then we have $\bar{\mathbf F}_t = \hat\rho_{1,1} \circ \mathbf F \circ \jmath_t$, $\bar F_t = \hat\rho \circ F \circ \imath_t$ and $\hat\pi_{2,0} \circ \jmath_t = \imath_t \circ \tau_M^{S*}$. It follows from the commutative diagram above Remark \ref{symm-bundle} that
%\begin{equation*}
%  \tau_N^{S*} \circ \bar{\mathbf F}_t = \tau_N^{S*} \circ \hat\rho_{1,1} \circ \mathbf F \circ \jmath_t = \hat\rho \circ \hat\rho_{2,0} \circ \mathbf F \circ \jmath_t = \hat\rho \circ F \circ \hat\pi_{2,0} \circ \jmath_t = \hat\rho \circ F \circ \imath_t \circ \tau_M^{S*} = \bar F_t \circ \tau_M^{S*}.
%\end{equation*}
%This means that each $\bar{\mathbf F}_t$ is a bundle isomorphism from $(\mathcal T^{S*} M, \tau_M^{S*}, M)$ to $(\mathcal T^{S*} N, \tau_N^{S*}, N)$ projecting to $\bar F_t$.

\begin{lemma}
  The map $\bar{\mathbf F}_t$ is second-order symplectic for each $t\in\R$ if and only if there is a mixed-order form $\alpha$ on $\mathcal T^{S*} M\times\R$ such that
  \begin{equation*}
    \mathbf F^{R*}\tilde \eta = \tilde \omega + \alpha \wedge dt.
  \end{equation*}
  In particular, condition \eqref{2-canonical-transf} implies that each $\bar{\mathbf F}_t$ is a second-order symplectomorphism.
\end{lemma}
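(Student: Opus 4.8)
The plan is to reduce both implications, as well as the ``in particular'' clause, to a single computation: pulling the relation between $\mathbf F^{R*}\tilde\eta$ and $\tilde\omega$ back along the slice inclusion $\jmath_t$. The guiding principle is that $\jmath_t$ kills $dt$ --- since the time coordinate is constant on its image, $\jmath_t^* dt = 0$ --- so any term of the form $\alpha\wedge dt$ becomes invisible on each time slice, whereas $\tilde\omega$ restricts to the genuine second-order symplectic form $\omega$. To make this precise I would first record two functoriality facts. Writing $\hat\pi_{1,1}$ and $\hat\rho_{1,1}$ for the projections onto $\mathcal T^{S*} M$ and $\mathcal T^{S*} N$, so that $\tilde\omega=\hat\pi_{1,1}^{S*}\omega$ and $\tilde\eta=\hat\rho_{1,1}^{S*}\eta$, the identity $\hat\pi_{1,1}\circ\jmath_t=\id_{\mathcal T^{S*} M}$ gives $\jmath_t^*\tilde\omega=\omega$, while the defining relation $\bar{\mathbf F}_t=\hat\rho_{1,1}\circ\mathbf F\circ\jmath_t$ together with the contravariant functoriality of the mixed- and second-order pullbacks (Appendix \ref{sec-A-1}) gives
\[
  \jmath_t^*\bigl(\mathbf F^{R*}\tilde\eta\bigr)=\jmath_t^*\,\mathbf F^{R*}\,\hat\rho_{1,1}^{S*}\eta=\bigl(\hat\rho_{1,1}\circ\mathbf F\circ\jmath_t\bigr)^{S*}\eta=\bar{\mathbf F}_t^{S*}\eta .
\]
Hence for every $t$ one has the master identity $\bar{\mathbf F}_t^{S*}\eta=\jmath_t^*\bigl(\mathbf F^{R*}\tilde\eta\bigr)$.

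The ``if'' direction is then immediate: assuming $\mathbf F^{R*}\tilde\eta=\tilde\omega+\alpha\wedge dt$, applying $\jmath_t^*$ and using $\jmath_t^*(\alpha\wedge dt)=(\jmath_t^*\alpha)\wedge(\jmath_t^* dt)=0$ yields $\bar{\mathbf F}_t^{S*}\eta=\jmath_t^*\tilde\omega=\omega$, so each $\bar{\mathbf F}_t$ is second-order symplectic. For the ``only if'' direction, the hypothesis $\bar{\mathbf F}_t^{S*}\eta=\omega$ for all $t$ feeds into the master identity to give $\jmath_t^*\bigl(\mathbf F^{R*}\tilde\eta-\tilde\omega\bigr)=0$ for every $t$. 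I would then invoke the divisibility principle that a mixed-order $2$-form whose restriction to every slice $\mathcal T^{S*} M\times\{t\}$ vanishes is divisible by $dt$. Concretely, in an adapted chart I would decompose $\mathbf F^{R*}\tilde\eta=\Phi+\Psi\wedge dt$ with $\Phi$ free of $dt$; since $\jmath_t^*\Phi$ equals $\omega$ for every $t$ and $\omega$ has $t$-independent coefficients, $\Phi$ must coincide with $\tilde\omega$, leaving $\mathbf F^{R*}\tilde\eta-\tilde\omega=\Psi\wedge dt$, so one may take $\alpha=\Psi$.

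For the final clause, I would match the canonical-transformation condition \eqref{2-canonical-transf} to the shape required by the lemma: since $dF^0=(F^0)'(t)\,dt$, the extra term is $d^\circ H_{\mathbf F}\wedge dF^0=\bigl((F^0)'(t)\,d^\circ H_{\mathbf F}\bigr)\wedge dt$, which is of the form $\alpha\wedge dt$; the ``if'' direction then gives that each $\bar{\mathbf F}_t$ is a second-order symplectomorphism. I expect the main obstacle to be precisely the divisibility principle used in the ``only if'' direction. Because the paper only introduces the formal operations $(d,d^2,\wedge,\cdot)$ and explicitly declines to axiomatize the full second-order exterior algebra, the statement ``slicewise-zero implies $\alpha\wedge dt$'' cannot be quoted from a general theory and must be verified directly from the local coordinate representation of mixed-order $2$-forms, checking that $dt$ enters linearly and that the $dt$-free part is determined by its slices. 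The remaining steps are routine bookkeeping with pullbacks, resting only on the functoriality developed in Appendix \ref{sec-A-1}.
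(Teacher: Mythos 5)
Your proposal is correct and follows essentially the same route as the paper: pull back along the slice inclusion $\jmath_t$ (using $(\jmath_t)^{R*}dt=0$ and $(\jmath_t)^{R*}\tilde\omega=\omega$) for sufficiency, and for necessity split $\mathbf F^{R*}\tilde\eta-\tilde\omega$ into a $dt$-divisible part plus a $dt$-free remainder that is killed because its restriction to every slice vanishes. The only cosmetic difference is that the paper writes the slice pullback as the mixed-order pullback $(\jmath_t)^{R*}$ and recovers the $dt$-free remainder via $(\hat\pi_{1,1})^{R*}\circ(\jmath_t)^{R*}$, but this is the same divisibility argument you describe.
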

\begin{proof}
  The sufficiency follows from
  \begin{equation*}
    \begin{split}
      (\bar{\mathbf F}_t)^{S*} \eta &= (\jmath_t)^{R*} \circ \mathbf F^{R*} \circ (\hat\rho_{1,1})^{S*} \eta = (\jmath_t)^{R*} \circ \mathbf F^{R*} \tilde \eta \\
      &= (\jmath_t)^{R*} \tilde \omega + (\jmath_t)^{R*} \alpha \wedge (\jmath_t)^{R*} dt = \omega + (\jmath_t)^{R*} \alpha \wedge 0 = \omega.
    \end{split}
  \end{equation*}
  For the necessity, we observe that
  \begin{equation*}
    (\jmath_t)^{R*}(\mathbf F^{R*}\tilde \eta - \tilde \omega) = (\bar{\mathbf F}_t)^{S*} \eta - \omega = 0.
  \end{equation*}
  So we can write $\mathbf F^{R*}\tilde \eta - \tilde \omega = \alpha \wedge dt + \gamma$, where $\gamma$ is a mixed-order form which does not involve $dt$. This leads to $\gamma = (\hat\pi_{1,1})^{R*} \circ (\jmath_t)^{R*}\gamma = (\hat\pi_{1,1})^{R*} \circ (\jmath_t)^{R*}(\mathbf F^{R*}\tilde \eta - \tilde \omega - \alpha \wedge dt) = 0$. The result follows.
\end{proof}

The following lemma gives some equivalent statements to the condition \eqref{2-canonical-transf}.

\begin{lemma}\label{equiv-canonical-transf}
  Condition \eqref{2-canonical-transf} is equivalent to the following: \\
  (i) $\mathbf F^{R*}\tilde \vartheta - \tilde \theta + H_{\mathbf F} dF^0$ is mixed-order closed; \\
  (ii) for all $K\in C^\infty(\mathcal T^{S*} N\times\R)$, $\mathbf F^{R*} \eta_K = \omega_H$; \\
  (iii) for all $K\in C^\infty(\mathcal T^{S*} N\times\R)$, $\mathbf F^{R}_* \tilde A_H = \tilde A_K$; \\
  where $H = (K\circ \mathbf F + H_{\mathbf F})\dot F^0$.
\end{lemma}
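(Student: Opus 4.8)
The plan is to treat all four conditions as statements about mixed-order forms on $\mathcal T^{S*} M\times\R$, or about the characteristic vector fields attached to them, and to move between them using three structural facts: the exactness relations $\tilde\omega = -d^\circ\tilde\theta$ and $\tilde\eta = -d^\circ\tilde\vartheta$; the fact that the mixed-order pullback $\mathbf F^{R*}$ commutes with $d^\circ$ and with the products $\cdot$ and $\wedge$ (Appendix \ref{sec-A-1}); and the characterization of extended second-order Hamiltonian vector fields in Lemma \ref{charact-2-vf}. Throughout I would use that $\mathbf F$ projects to the time change $F^0$, so that $\mathbf F^{R*}ds = dF^0 = \dot F^0\,dt$ and $\mathbf F^{R*}g = g\circ\mathbf F$ for functions $g$.

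For \eqref{2-canonical-transf}$\Leftrightarrow$(i) I would simply apply $d^\circ$ to the defining identity. Writing $\tilde\eta = -d^\circ\tilde\vartheta$, $\tilde\omega = -d^\circ\tilde\theta$, and noting $d^\circ(dF^0)=0$ so that $d^\circ H_{\mathbf F}\wedge dF^0 = d^\circ(H_{\mathbf F}\,dF^0)$, condition \eqref{2-canonical-transf} rewrites as $-d^\circ(\mathbf F^{R*}\tilde\vartheta) = -d^\circ(\tilde\theta - H_{\mathbf F}\,dF^0)$, i.e. $d^\circ(\mathbf F^{R*}\tilde\vartheta - \tilde\theta + H_{\mathbf F}\,dF^0)=0$, which is exactly (i); the only input is that $\mathbf F^{R*}$ commutes with $d^\circ$. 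For \eqref{2-canonical-transf}$\Leftrightarrow$(ii) I would pull back $\eta_K = \tilde\eta + d^\circ K\wedge ds$ term by term, getting $\mathbf F^{R*}\eta_K = \mathbf F^{R*}\tilde\eta + d^\circ(K\circ\mathbf F)\wedge dF^0$; substituting \eqref{2-canonical-transf} yields $\tilde\omega + d^\circ(H_{\mathbf F}+K\circ\mathbf F)\wedge dF^0$. A short product-rule computation, using that $\dot F^0$ depends only on $t$ so that the $d^\circ\dot F^0\wedge dt\wedge dt$ term vanishes, gives $d^\circ\big((H_{\mathbf F}+K\circ\mathbf F)\dot F^0\big)\wedge dt = d^\circ(H_{\mathbf F}+K\circ\mathbf F)\wedge dF^0$, so the right-hand side is precisely $\omega_H$ with $H=(K\circ\mathbf F+H_{\mathbf F})\dot F^0$. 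The converse is immediate on taking $K=0$, since then $\eta_0=\tilde\eta$ and $H=H_{\mathbf F}\dot F^0$.

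For (ii)$\Leftrightarrow$(iii) I would invoke Lemma \ref{charact-2-vf}, which cuts out $\tilde A_K$ as the class satisfying $\tilde A_K\lrcorner\eta_K=0$ and $\tilde A_K\lrcorner ds=1$, together with the naturality identity $\mathbf F^{R*}\big((\mathbf F^R_*V)\lrcorner\beta\big) = V\lrcorner\,\mathbf F^{R*}\beta$. Taking $V=\tilde A_H$, $\beta=\eta_K$ and using $\tilde A_H\lrcorner\omega_H=0$ from Lemma \ref{charact-2-vf} gives $\mathbf F^{R*}\big((\mathbf F^R_*\tilde A_H)\lrcorner\eta_K\big)=\tilde A_H\lrcorner(\mathbf F^{R*}\eta_K)=\tilde A_H\lrcorner\omega_H=0$; injectivity of $\mathbf F^{R*}$ then forces $(\mathbf F^R_*\tilde A_H)\lrcorner\eta_K=0$, so $\mathbf F^R_*\tilde A_H$ lies in the characteristic class of $\eta_K$, hence is a multiple of $\tilde A_K$. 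The normalization is pinned down by the same identity applied to $ds$: since $\mathbf F^{R*}ds=dF^0$, one obtains $(\mathbf F^R_*\tilde A_H)\lrcorner ds$ pulled back to $\tilde A_H\lrcorner dF^0=\dot F^0$, which is exactly the weight carried by the definition $H=(K\circ\mathbf F+H_{\mathbf F})\dot F^0$; this is the step where the time-reparametrization factor $\dot F^0$ must be tracked carefully so that one lands on $\tilde A_K$ and not on a spurious rescaling of it.

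The main obstacle I anticipate is not the algebra but the justification of the two calculus facts in the mixed-order setting: that $\mathbf F^{R*}$ genuinely commutes with $d^\circ$ and with $\cdot$ and $\wedge$, and that the naturality of the interior product holds for mixed-order forms under $R$-pushforward. These are routine in ordinary Cartan calculus but here rest on the second-order/mixed-order differential calculus that the paper develops only partially (cf. Remark \ref{remark-5} and Appendix \ref{sec-A-1}). I would discharge them by reducing to the defining local formulae for $d^\circ$, $\mathbf F^{R*}$ and $\lrcorner$ in canonical coordinates $(t,x^i,p_i,o_{jk})$, where each identity becomes a finite verification, and by appealing to the properties of $F^R_*$ and $F^{R*}$ recorded in Appendix \ref{sec-A-1}. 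Keeping the weight $\dot F^0$ consistent across all three equivalences is the secondary delicate point.
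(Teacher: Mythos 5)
Your proposal follows essentially the same route as the paper's proof: the equivalence with (i) via exactness and the commutation of $\mathbf F^{R*}$ with $d^\circ$ (which the paper simply calls ``clear''), the equivalence with (ii) by pulling back $\eta_K$ term by term and taking $K\equiv 0$ for the converse, and the equivalence with (iii) by invoking Lemma \ref{charact-2-vf} together with the naturality identity $\mathbf F^{R*}\bigl((\mathbf F^R_*V)\lrcorner\beta\bigr) = V\lrcorner\,\mathbf F^{R*}\beta$ and the normalization against $ds$. The delicate points you flag (commutation of $\mathbf F^{R*}$ with $d^\circ$, $\cdot$, $\wedge$, and the tracking of the factor $\dot F^0$) are exactly the ones the paper treats implicitly, so your write-up is, if anything, slightly more careful on the same argument.
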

\begin{proof}
  The equivalence between \eqref{2-canonical-transf} and (i) is clear. For \eqref{2-canonical-transf}$\Rightarrow$(ii), since $\mathbf F$ projects to $F^0$,
  \begin{equation*}
    \begin{split}
      \mathbf F^{R*} \eta_K &= \mathbf F^{R*} \tilde \eta + d^\circ (K\circ \mathbf F)\wedge d(t\circ \mathbf F) = \tilde \omega + d^\circ H_{\mathbf F} \wedge dF^0 + d^\circ(K\circ \mathbf F)\wedge dF^0 \\
      &= \tilde \omega + d^\circ H \wedge dt = \omega_H.
    \end{split}
  \end{equation*}
  The converse (ii)$\Rightarrow$\eqref{2-canonical-transf} is straightforward by letting $K\equiv 0$. To show (ii)$\Rightarrow$(iii), by applying Lemma \ref{charact-2-vf}, it suffices to prove that
  \begin{equation*}
    \mathbf F^{R}_* \tilde A_H \lrcorner\, \eta_K = 0 \quad\text{and}\quad \mathbf F^{R}_* \tilde A_H\lrcorner\, dt=1,
  \end{equation*}
  while
  \begin{equation*}
    \mathbf F^{R}_* \tilde A_H \lrcorner\, \eta_K = (\mathbf F^{R*})^{-1} (\tilde A_H \lrcorner\, \mathbf F^{R*}\eta_K ) = (\mathbf F^{R*})^{-1} (\tilde A_H \lrcorner\, \omega_H ) = 0,
  \end{equation*}
  and
  \begin{equation*}
    \mathbf F^{R}_* \tilde A_H\lrcorner\, dt = (\mathbf F^{R*})^{-1} (\tilde A_H \lrcorner\, \mathbf F^{R*}dt ) = (\mathbf F^{R*})^{-1} (\dot F^0 \tilde A_H \lrcorner\, dt ) = (\mathbf F^{R*})^{-1} (\dot F^0) = 1.
  \end{equation*}
  (iii)$\Rightarrow$(ii) is similar.
\end{proof}

\begin{definition}
  Let $\mathbf F: \mathcal T^{S*} M\times\R \to \mathcal T^{S*} N\times\R$ be canonical. If we can locally write
  \begin{equation}\label{generating-func}
    \mathbf F^{R*}\tilde \vartheta - \tilde \theta + H_{\mathbf F} d F^0 =  -d^{\circ} G
  \end{equation}
  for $G\in C^\infty(M\times\R)$, then we call $G$ a generating function for the canonical transformation $\mathbf F$.
\end{definition}

We use $(x, p, o, t)$ for local coordinates on $\mathcal T^{S*} M\times\R$ and $(y, P, O,s)$ for those on $\mathcal T^{S*} N\times\R$. Recall that $\mathbf F(\alpha_q, t) = (\bar{\mathbf F}(\alpha_q, t), F^0(t))$. Then, using \eqref{local-rep-pull-mixed}, the relation \eqref{generating-func} reads in coordinates as
\begin{equation*}
  \begin{split}
    &\left[ \dot F^0 + (P_i\circ \mathbf F) \frac{\pt\bar{\mathbf F}^i}{\pt t} \right] dt + (P_i\circ \mathbf F) \frac{\pt\bar{\mathbf F}^i}{\pt x^j} d^2x^j + \frac{1}{2} \left[ (P_i\circ \mathbf F) \frac{\pt^2\bar{\mathbf F}^i}{\pt x^k\pt x^l} + (O_{ij}\circ \mathbf F)\frac{\pt\bar{\mathbf F}^i}{\pt x^k}\frac{d\bar{\mathbf F}^j}{dx^l} \right] dx^k \cdot dx^l \\
    &- \left( dt + p_i d^2 x^i + \frac{1}{2} o_{jk} dx^j\cdot dx^k \right) + H_{\mathbf F} dF^0 + \frac{\pt G}{\pt t} dt + \frac{\pt G}{\pt x^i} d^2 x^i + \frac{1}{2} \frac{\pt^2 G}{\pt x^j \pt x^k} dx^j\cdot dx^k = 0.
  \end{split}
\end{equation*}
Balancing the coefficient of $dt$, we get

\begin{equation*}
    \frac{\pt G}{\pt t} + H_{\mathbf F} + (P_i\circ \mathbf F) \frac{\pt\bar{\mathbf F}^i}{\pt t} + \dot F^0 - 1 = 0.
\end{equation*}

%\begin{corollary}
%  If $\mathbf F$ is second-order canonical with generating function $G$, then
%  \begin{equation*}
%    \frac{\pt G}{\pt t} + H_{\mathbf F} + (P_i\circ \mathbf F) \frac{\pt\bar{\mathbf F}^i}{\pt t} + \dot F^0 - 1 = 0.
%  \end{equation*}
%\end{corollary}

By Lemma \ref{equiv-canonical-transf}, the new Hamiltonian function $K$ after transformation $\mathbf F$ is related to the old Hamiltonian $H$ by $(H - K\circ \mathbf F)\dot F^0 = H_{\mathbf F}$. Let us further assume that we can choose coordinates in which $(y^i)$ and $(x^i)$ are independent, so that the independent variables in \eqref{generating-func} are $(x, y, t)$. Then, relation \eqref{generating-func} means
\begin{equation}\label{formal-transf}
  \left( P_i d^2 y^i + \textstyle{\frac{1}{2}} O_{jk} dy^j\cdot dy^k + dF^0 \right) - \left( p_i d^2 x^i + \textstyle{\frac{1}{2}} o_{jk} dx^j\cdot dx^k + dt \right) + (H dt - K d F^0 ) = - d^\circ G,
\end{equation}
which implies that the generating function of the canonical transformation $G(x, y, t)$ satisfies
\begin{equation}\label{formal-relation}
  \left\{\begin{aligned}
    &p_i = \frac{\pt G}{\pt x^i}, \quad o_{jk} \frac{\pt x^k}{\pt y^l} = \frac{\pt^2 G}{\pt x^j \pt x^k} \frac{\pt x^k}{\pt y^l} + \frac{\pt^2 G}{\pt x^j \pt y^l}, \quad P_i = - \frac{\pt G}{\pt y^i}, \quad O_{jk} = - \frac{\pt^2 G}{\pt y^j \pt y^k} - \frac{\pt^2 G}{\pt y^j \pt x^l} \frac{\pt x^l}{\pt y^k}, \\
    &(K - 1) \dot F^0 - H + 1 = \frac{\pt G}{\pt t}.
  \end{aligned}\right.
\end{equation}
The expressions for $(o_{jk})$ and $(O_{jk})$ are due to the mixed differential term in $d^\circ G$, and correspond to the relation \eqref{non-degenerate}.

\begin{remark}
  Unlike the canonical transformations of classical Hamiltonian systems which have four types of generating functions related via classical Legendre transform (see \cite[Section 9.1]{GPS02}), here we can only have the type using $(x, y, t)$ as independent variables but not others. This can be attributed to the ill-behaveness of the 2nd-order analog of Legendre transform, as indicated in Remark \ref{remark-2}.(iii). However, if the configuration space $M$ is a Riemannian manifold, stochastic Hamiltonian mechanics can be simplified to share the same phase space $T^* M$ as classical Hamiltonian mechanics, so that we can also have four types of generating functions. See Subsection \ref{sec-7-4-2} for details and examples of canonical transformations.
%  (ii). We refer readers to \cite{LZ07} for an application of canonical transformations of contact Hamiltonian systems to the Euclidean quantum mechanics in Example \ref{EQM}.
\end{remark}

The Hamilton-Jacobi-Bellman (HJB) equation can be introduced as a special case of a time-dependent canonical transformation \eqref{formal-relation}. In the case where $F^0 = \id_\R$ and the new Hamiltonian $K$ vanishes formally, we denote by $S$ the corresponding generating function $G$. It follows from \eqref{formal-relation} that $S$ solves the Hamilton-Jacobi-Bellman equation,
\begin{equation}\label{HJB}
  \frac{\pt S}{\pt t} + H\left( x^i, \frac{\pt S}{\pt x^i}, \frac{\pt^2 S}{\pt x^j \pt x^k}, t \right) = 0.
\end{equation}
We will refer to equation \eqref{HJB} as the HJB equation \emph{associated with} second-order Hamiltonian $H$, and a solution $S$ of \eqref{HJB} as a \emph{second-order Hamilton's principal function} of $H$.

More generally, we have

\begin{theorem}\label{HJB-int-proc}
   Let $A_H$ be a second-order Hamiltonian vector field on $(\mathcal T^{S*} M, \omega)$ and let $S\in C^\infty(M\times\R)$. Then, the following statements are equivalent: \\
   (i) for every $M$-valued diffusion $X$ satisfying
   \begin{equation*}
     (DX(t), QX(t)) = d^2 (\tau_M^*)_{d^2 S(t,X(t))} A_H,
   \end{equation*}
   the $\mathcal T^{S*} M$-valued process $d^2S\circ X$ is a horizontal integral process of $A_H$; \\
   (ii) $S$ satisfies the Hamilton-Jacobi-Bellman equation
   \begin{equation}\label{HJB-3}
     \frac{\pt S}{\pt t} + H(d^2 S, t) = f(t),
   \end{equation}
   for some function $f$ depending only on $t$.
\end{theorem}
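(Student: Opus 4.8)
The plan is to reduce both statements to a single partial differential constraint on $S$ --- the spatial gradient of the HJB expression --- and to match this constraint against the stochastic Hamilton's equations through the total mean derivative. First I would write everything in the canonical coordinates $(x^i,p_i,o_{jk})$ on $\mathcal T^{S*}M$ and set $\alpha=d^2S$, so that $p_i(t,x)=\pt_i S$ and $o_{jk}(t,x)=\pt_{jk}S$. Applying Lemma \ref{push-pull-map-prop} to the projection $\tau_M^{S*}$, whose components are the base coordinates $x^i$, gives the base part of the Hamiltonian vector field \eqref{Hamiltonian-vf} as $d^2(\tau_M^{S*})_{\alpha}A_H=\frac{\pt H}{\pt p_i}\pt_i+\frac{\pt H}{\pt o_{jk}}\pt_j\pt_k$, evaluated at $\alpha(t,X(t))$. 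Hence the hypothesis imposed on $X$ in (i) is exactly the first two lines of \eqref{stoch-Hamilton-eqns-2}, namely $(DX)^i=\frac{\pt H}{\pt p_i}(d^2S)$ and $(QX)^{jk}=2\frac{\pt H}{\pt o_{jk}}(d^2S)$.

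The crucial structural observation is that $\alpha=d^2S$ automatically satisfies the second-order Maxwell relations \eqref{non-degenerate}, since $o_{jk}=\pt_{jk}S=\pt_j(\pt_k S)=\pt_j p_k$ is symmetric and equals $\pt_j p_k$. Therefore, by the reduction of \eqref{stoch-Hamilton-eqns} to \eqref{stoch-Hamilton-eqns-2} carried out around \eqref{eqn-32}, asking that $\mathbf X=d^2S\circ X$ be a horizontal integral process of $A_H$ amounts, beyond the already-assumed first two equations, to the single $p$-equation (the third line of \eqref{stoch-Hamilton-eqns-2} with $p_i=\pt_i S$); the $o$-equation and the compatibility equation involving $C_{jk}$ then follow by differentiating it, precisely as in the passage from the first to the second and third lines of \eqref{eqn-32}. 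I would next identify this $p$-equation with the gradient of \eqref{HJB-3}: along any $X$ obeying the projection condition, the drift of $\pt_iS(t,X(t))$ is its total mean derivative \eqref{local-rep-total-mean}, which here reads $\D_t(\pt_i S)=\bigl(\pt_t+\frac{\pt H}{\pt p_j}\pt_j+\frac{\pt H}{\pt o_{jk}}\pt_{jk}\bigr)\pt_i S$, and a direct differentiation of $\pt_t S+H(x,\pt S,\pt^2S,t)$ in $x^i$ shows that $\pt_i\bigl(\pt_t S+H(d^2S,t)\bigr)=\D_t(\pt_i S)+\frac{\pt H}{\pt x^i}$. Thus the $p$-equation $\D_t(\pt_i S)=-\frac{\pt H}{\pt x^i}$ holds if and only if $\pt_i\bigl(\pt_t S+H(d^2S,t)\bigr)=0$ for every $i$.

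For (ii)$\Rightarrow$(i), assuming $\pt_t S+H(d^2S,t)=f(t)$, the right-hand identity vanishes, so the $p$-equation holds at every point; given any $X$ satisfying the projection condition, It\^o's formula yields the drifts of $p_i(t,X(t))$ and $o_{jk}(t,X(t))$ demanded by \eqref{stoch-Hamilton-eqns}, so $d^2S\circ X$ is a horizontal integral process. For (i)$\Rightarrow$(ii), I would take an $X$ whose time marginals have full support (as used in the passage \eqref{eqn-section}); then the process-level $p$-equation, valid for $d^2S\circ X$, upgrades to the pointwise PDE on $M\times(0,T)$, whence $\pt_i\bigl(\pt_t S+H(d^2S,t)\bigr)=0$ for all $i$ and $\pt_t S+H(d^2S,t)$ depends on $t$ alone, producing $f$. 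The appearance of an arbitrary $f(t)$ rather than $0$ is natural and worth emphasizing: the process sees $S$ only through its spatial derivatives $p=\pt S$ and $o=\pt^2 S$, so a purely time-dependent shift of $S$ is invisible.

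The main obstacle is the bookkeeping in the second paragraph: one must verify that, under the Maxwell relations that hold automatically for $d^2S$, the full coupled system \eqref{stoch-Hamilton-eqns} --- including the coefficients $C_{jk}$ and the compatibility equation --- genuinely collapses to the single $p$-equation, so that no hidden extra constraint on $S$ survives in the $o$- and compatibility equations. This is exactly the content of the reduction \eqref{stoch-Hamilton-eqns}$\to$\eqref{stoch-Hamilton-eqns-2}, which I would invoke rather than re-derive; once that is granted, the identification of the $p$-equation with $\pt_i(\pt_t S+H(d^2S,t))=0$ is a routine chain-rule computation, and the remaining full-support argument in (i)$\Rightarrow$(ii) is standard given the treatment of \eqref{eqn-section}.
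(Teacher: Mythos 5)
Your proposal is correct and follows essentially the same route as the paper's proof: identify $p=\pt S$, $o=\pt^2 S$ so that the Maxwell relation and the first two equations of \eqref{stoch-Hamilton-eqns-2} hold automatically, reduce the horizontal-integral-process condition to the third ($p$-) equation, and recognize that equation as the statement $\pt_i\bigl(\pt_t S+H(d^2S,t)\bigr)=0$. Your added remarks on the full-support argument for (i)$\Rightarrow$(ii) and on why only an arbitrary $f(t)$ survives are sensible elaborations of steps the paper leaves implicit, but they do not change the argument.
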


\begin{proof}
  Let $\mathbf X = d^2S\circ X$ and set $x^i = x^i\circ d^2S$, $p_i = p_i\circ d^2S$, $o_{jk} = o_{jk}\circ d^2S$. Then
  \begin{equation}\label{eqn-9}
    p_i(t,x) = \frac{\pt S}{\pt x^i}(t,x), \quad o_{jk}(t,x) = \frac{\pt^2 S}{\pt x^j \pt x^k}(t,x).
  \end{equation}
  These imply that the last equation of the system \eqref{stoch-Hamilton-eqns-2} holds. Since
  $$d^2 (\tau_M^*)_{\mathbf X(t)} A_H = \frac{\pt H}{\pt p_i}(\mathbf X(t)) \vf{x^i} + \frac{\pt H}{\pt o_{jk}}(\mathbf X(t)) \frac{\pt^2}{\pt x^j \pt x^k},$$
  the first two equations in \eqref{stoch-Hamilton-eqns} or \eqref{stoch-Hamilton-eqns-2} hold. Hence, to turn the process $\mathbf X = d^2S\circ X$ into a horizontal integral process of $A_H$, it is sufficient and necessary to make sure that the third equation in \eqref{stoch-Hamilton-eqns-2} holds. Plugging the first equation of \eqref{eqn-9} into the third equation, it reads as
  \begin{equation*}
    \bigg( \frac{\pt}{\pt t} + \frac{\pt H}{\pt p_j} \vf{x^j} + \frac{\pt H}{\pt o_{jk}} \frac{\pt^2}{\pt x^j\pt x^k}\bigg) \frac{\pt S}{\pt x^i} = - \frac{\pt H}{\pt x^i}.
  \end{equation*}
  A straightforward reinterpretation yields
  \begin{equation*}
    \vf{x^i} \left[ \frac{\pt S}{\pt t} + H\left( x^j, \frac{\pt S}{\pt x^j}, \frac{\pt^2 S}{\pt x^j \pt x^k}, t \right) \right] = 0.
  \end{equation*}
  The result follows.
\end{proof}

\begin{remark}\label{remark-3}
  If $S$ solves the HJB equation \eqref{HJB-3}, then $\tilde S = S - \tilde f$ solve \eqref{HJB} with $\tilde f$ a primitive function of $f$. As a matter of fact, one can always integrate the time-dependent function $f$ into the 2nd-order Hamiltonian function $H$ such that the HJB equation \eqref{HJB-3} has the same form as \eqref{HJB}. More precisely, if we let $\tilde H = H - f$, then Theorem \ref{HJB-int-proc} also holds with $\tilde H$ and zero function in place of $H$ and $f$, respectively. A similar argument holds for S-H equations \eqref{stoch-Hamilton-eqns}. Indeed, adding a function $f$ depending only on time to a 2nd-order Hamiltonian does not change its S-H equations.
\end{remark}

\begin{example}
  The function $S=\ln u$ considered in Section \ref{exmp-1} satisfies the Hamilton-Jacobi-Bellman equation \eqref{exmp-HJB}, which is exactly $\frac{\pt S}{\pt t} + H(d^2 S) = 0$ with the second-order Hamiltonian $H$ given in \eqref{exmp-Hamiltonian}. Hence, this theorem yields that the process $d^2S\circ X$ is a horizontal integral process of $A_H$, which coincides with \eqref{exmp-int-proc}. The Euclidean case for such argument has been discovered in \cite[p.~180]{CZ03} or \cite[Eq.~(4.20)]{Zam15}.
\end{example}

By \eqref{HJB} and \eqref{eqn-9}, the total mean derivative of a 2nd-order Hamilton's principal function $S$ %along a projective integral process $X$
is given by
\begin{equation}\label{eqn-25}
  \mathbf D_t S = \frac{\pt S}{\pt t} + D^i x \frac{\pt S}{\pt x^i} + \frac{1}{2} Q^{jk} x \frac{\pt^2 S}{\pt x^j \pt x^k} = p_i D^i x + \frac{1}{2} o_{jk} Q^{jk} x - H(x,p,o,t).
\end{equation}
where $(p(t,x),o(t,x)) = d^2 S(t,x)$ as in \eqref{eqn-9}.

%\begin{corollary}
%  Let $S$ be a 2nd-order Hamilton's principal function of $H$. Let $\mathbf X = d^2S\circ X$ be a horizontal integral process of $A_H$. Then the total mean derivative of $H$ along $\mathbf X$ is
%  \begin{equation*}
%    \mathbf D_t H = \frac{\pt H}{\pt t}.
%  \end{equation*}
%\end{corollary}
%
%\begin{proof}
%  Let $(p(t,x),o(t,x)) = d^2 S(t,x)$ as in \eqref{eqn-9}. By \eqref{HJB} and the first two equations of \eqref{stoch-Hamilton-eqns},
%  \begin{equation*}
%    \begin{split}
%      \mathbf D_t H &= D[H(\mathbf X(t),t)] = D\left[ H\left( d^2 S(t, X(t)),t \right) \right] = -D\left[ \pt_t S (t, X(t)) \right] = -\mathbf D_t (\pt_t S)\\
%      &= - \left( \frac{\pt}{\pt t} + D^i x \frac{\pt}{\pt x^i} + \frac{1}{2} Q^{jk} x \frac{\pt^2}{\pt x^j \pt x^k} \right) \frac{\pt S}{\pt t} = \frac{\pt}{\pt t} [H(x,p,o,t)] - D^i x \frac{\pt p_i}{\pt t} - \frac{1}{2} Q^{jk} x \frac{\pt o_{jk}}{\pt t} \\
%      &= \frac{\pt H}{\pt t} + \frac{\pt H}{\pt p_i} \frac{\pt p_i}{\pt t} + \frac{\pt H}{\pt o_{jk}} \frac{\pt o_{jk}}{\pt t} - D^i x \frac{\pt p_i}{\pt t} - \frac{1}{2} Q^{jk} x \frac{\pt o_{jk}}{\pt t} = \frac{\pt H}{\pt t}.
%    \end{split}
%  \end{equation*}
%  The result follows.
%\end{proof}

\subsection{Second-order Hamiltonian functions from classical ones}\label{sec-6-6}

In the presence of a linear connection $\nabla$ on $M$, we are able to
%reduce the second-order symplectic structure on $\mathcal T^{S*} M$ to the classical symplectic structure on $T^* M$, and the mixed-order contact structure on $\mathcal T^{S*} M\times\R$ to the classical contact structure on $T^* M\times\R$.
reduce (or produce) second-order Hamiltonian functions to (from) classical ones.

Let be given a second-order Hamiltonian function $H: \mathcal T^{S*} M\times\R \to \R$. We make use of the fiber-linear bundle injection $\hat\iota^*_\nabla: T^* M \to \mathcal T^{S*} M$ in \eqref{iota-conn} to define a classical Hamiltonian by
\begin{equation}\label{reduction}
  H_0 = H\circ (\hat\iota^*_\nabla \times \id_{\R}): T^* M\times\R \to \R.
\end{equation}
In canonical coordinates, it maps as $H_0(x, p, t) = H(x, p, (\Gamma_{jk}^i(x) p_i), t)$. If we introduce a family of auxiliary variables by
\begin{equation}\label{o-hat}
  \hat o_{jk} = \hat o_{jk}(x,p) := \Gamma_{jk}^i(x) p_i.
\end{equation}
Then, we can write
\begin{equation*}
  H_0(x,p,t) = H(x,p,\hat o(x,p),t).
\end{equation*}
We say $H$ \emph{reduces} to $H_0$ under the connection $\nabla$, or $H_0$ is the $\nabla$-\emph{reduction} of $H$.

Clearly, the way to lift from a classical Hamiltonian $H_0: T^* M\times\R \to \R$ to a second-order Hamiltonian function that reduces to $H_0$ under $\nabla$ is \emph{not} unique. But there is a canonical reduction when we are provided with a symmetric $(2,0)$-tensor field $g$ (not necessarily Riemannian), given by
\begin{equation}\label{lift-Ham}
  \overline H^g_0(x,p,o,t) := H_0(x,p,t) + \textstyle{\frac{1}{2}} g^{jk}(x) \left(o_{jk} - \Gamma^i_{jk}(x)p_i \right) = H_0(x,p,t) + \textstyle{\frac{1}{2}} g^{jk}(x) o_{jk}^\nabla.
\end{equation}
Then, $H_0$ is the $\nabla$-reduction of $\overline H^g_0$, and
\begin{equation}\label{canonical-ext}
  \textstyle{\frac{1}{2}} o_{jk} g^{jk} - \overline H^g_0(x,p,o,t) = \textstyle{\frac{1}{2}} \hat o_{jk} g^{jk} - \overline H^g_0(x,p,\hat o,t).
\end{equation}
We call $\overline H^g_0$ the $(g,\nabla)$-\emph{canonical lift} of $H_0$. If $g$ is a Riemannian metric and $\nabla$ is the associated Levi-Civita connection, then we simply call $\overline H^g_0$ the $g$-\emph{canonical lift} of $H_0$. If there is a classical Hamiltonian $H_0$ such that the second-order Hamiltonian $H$ is the $(g,\nabla)$- (or $g$-) canonical lift of $H_0$, we say $H$ is $(g,\nabla)$- (or $g$-) canonical.

As an example, the second-order Hamiltonian $H$ of \eqref{exmp-Hamiltonian} is $g$-canonical and reduces to $H_0(x,p) = \frac{1}{2} g^{ij}(x) p_ip_j + b^i(x)p_i + F(x)$. %In the case $b\equiv 0$, it is just the Hamiltonian for geodesic flows under Jacobi metrics \cite[Example 3.7.7]{AM78}.

Furthermore, for the canonical transformation $\mathbf F: \mathcal T^{S*} M \to \mathcal T^{S*} N$ in Definition \ref{can-trans-def}, we can reduce its associated function $H_{\mathbf F} \in C^\infty(\mathcal T^{S*} M\times\R)$ to a classical function $H^0_{\mathbf F} \in C^\infty(T^* M\times\R)$ via \eqref{reduction}. As a consequence of \eqref{2-canonical-transf}, the projection of $\mathbf F$, i.e., the map $\mathbb F: T^* M\times\R\to T^* N\times\R$ satisfies $\mathbb F^*\tilde \eta_0 = \omega^0_{H^0_{\mathbf F}}$ where $\omega^0_{H^0_{\mathbf F}} = \tilde \omega_0 + d H^0_{\mathbf F} \wedge dF^0$. It follows that $\mathbb F$ is a classical canonical transformation \cite[Definition 5.2.6]{AM78}.
%The projection of the canonical transformation of stochastic harmonic oscillators in Example \ref{can-stoch-osc} is the map given by \eqref{eqn-26}, which is a classical canonical transformation for classical harmonic oscillators \cite[Example 5.2.3]{AM78}.

We will go back to this issue in Section \ref{sec-7-4} where the second-order Legendre transform will be developed. In particular, we will show there that for the canonical 2nd-order Hamiltonian in \eqref{lift-Ham}, the corresponding 2nd-order Hamilton's equations \eqref{stoch-Hamilton-eqns-2} can be rewritten on the cotangent bundle $T^* M$ in a global fashion, see Theorem \ref{canonical-reduction}.

%By directly applying the canonical bundle epimorphism $\hat\varrho^*: \mathcal T^{S*}M \to T^* M$ in \eqref{varrho-hat-star}, the second-order forms $\theta$, $\omega$ and $\tilde \theta$ can be reduced to the classical forms $\theta_0$, $\omega_0$ and $\tilde \theta_0$ via the following procedures:
%\begin{alignat*}{3}
%  \mathcal T^{S*} \mathcal T^{S*} M &\to& \mathcal T^* \mathcal T^{S*} M &\to& \mathcal T^* \mathcal T^* M \\
%  \theta = p_i d^2 x^i + \textstyle{\frac{1}{2}} o_{jk} dx^j\cdot dx^k &\to& p_i d x^i &\to& \theta_0 = p_i d x^i \\
%  \omega = d^2 x^i \wedge d p_i + \textstyle{\frac{1}{2}} dx^j\cdot dx^k \wedge d o_{jk} - o_{jk} d^2x^j\wedge dx^k &\to& d x^i \wedge d p_i - o_{jk} dx^j\wedge dx^k &\to& \omega_0 = d x^i \wedge d p_i \\
%  \tilde \theta = dt + p_i d^2 x^i + \textstyle{\frac{1}{2}} o_{jk} dx^j\cdot dx^k &\to& dt + p_i d x^i &\to& \tilde \theta_0 = dt + p_i d x^i
%\end{alignat*}
%
%For the second-order Hamiltonian vector field $A_H$ in \eqref{Hamiltonian-vf}, we can reduce it to a classical Hamiltonian vector field $V_{H_0}$.

\section{Stochastic Lagrangian mechanics}\label{sec-7}

In this chapter, we specify a Riemannian metric $g$ for the manifold $M$, and a $g$-compatible linear connection $\nabla$. Note that such $g$ and $\nabla$ always exist but are not unique in general.

We will denote by $|\cdot|$ and $\langle\cdot,\cdot\rangle$ the Riemannian norm and inner product, respectively. Also, denote by $\check g$ the inverse metric tensor of $g$, and $(\Gamma_{jk}^i)$ the Christoffel symbols of $\nabla$. We observe that $\check g$ is a $(2,0)$-tensor field. Denote by $R$ the Riemann curvature tensor and $\Ric$ the Ricci $(1,1)$-tensor.

\subsection{Mean covariant derivatives}

\begin{definition}[Vector fields and 1-forms along diffusions]
  Let $X$ be diffusion on $M$. By a vector field along $X$, we mean a $TM$-valued process $V$, such that $\tau_M (V(t)) = X(t)$ for all $t$. Similarly, by a 1-form along $X$, we mean a $T^* M$-valued process $\eta$, such that $\tau^*_M (\eta(t)) = X(t)$ for all $t$.
\end{definition}

Clearly, for a time-dependent vector field $V$ on $M$, the restriction of $V$ on $X$, i.e., $\{V_{(t,X(t))}\}$, is a vector field along $X$. In this case, we call $\{V_{(t,X(t))}\}$ a vector field restricted on $X$. In this way, vector fields restricted on $X$ are just $TM$-valued horizontal diffusions projecting to $X$. Similarly for 1-forms.

\begin{definition}[Parallelisms along diffusions]
  Let $X \in I_{t_0}(M)$. A vector field $V$ along $X$ is said to be parallel along $X$ if the following Stratonovich SDE in local coordinates holds,
  \begin{equation}\label{parallel-vf}
    d V^i(t) + \Gamma_{jk}^i(X(t)) V^j(t) \circ dX^k(t) = 0.
  \end{equation}
  A 1-form $\eta$ along $X$ is said to be parallel along $X$ if
  \begin{equation*}%\label{parallel-form}
    d \eta_j(t) - \Gamma_{jk}^i(X(t)) \eta_i(t) \circ dX^k(t) = 0.
  \end{equation*}
\end{definition}

\begin{definition}[Stochastic parallel displacements]
  Given a diffusion $X\in I_{t_0} (M)$ and a (random) vector $v \in T_{X(t_0)} M$, the stochastic parallel displacement of $v$ along $X$ is the extension of $v$ to a parallel vector field $V$ along $X$, that is, $V$ satisfies the SDE \eqref{parallel-vf} with initial condition $V(t_0) = v$. We denote $\Gamma(X)_{t_0}^t v := V(t)$ and $\Gamma(X)_t^{t_0} V(t) := v$. The stochastic parallel displacement of a (random) covector $\eta\in T^*_{X(t_0)} M$ along $X$ is defined in a similar fashion.
\end{definition}

\begin{definition}[Damped parallel displacements]
  Let $X\in I_{t_0} (M)$. Given a (random) vector $v \in T_{X(t_0)} M$ and covector $\eta_0 \in T^*_{X(t_0)} M$, the damped parallel displacement of $v$ along $X$ is the extension of $v$ to a vector field $V$ along $X$ that satisfies the SDE
  \begin{equation}\label{geodesic-vf}
    d V^i(t) + \Gamma_{jk}^i(X(t)) V^j(t) \circ dX^k(t) + \frac{1}{2} R^i_{kjl}(X(t)) V^j(t) (QX)^{kl}(t) dt = 0, \quad V(t_0) = v.
  \end{equation}
  The damped parallel displacement of $\eta_0$ along $X$ is the extension of $\eta$ to a vector field $\eta$ along $X$ that satisfies the SDE
  \begin{equation}\label{geodesic-form}
    d \eta_j(t) - \Gamma_{jk}^i(X(t)) \eta_i(t) \circ dX^k(t) - \frac{1}{2} R^i_{kjl}(X(t)) \eta_i(t) (QX)^{kl}(t) dt = 0, \quad \eta(t_0) = \eta_0.
  \end{equation}
  We denote $\overline\Gamma(X)_{t_0}^t v := V(t)$, $\overline\Gamma(X)_{t_0}^t \eta_0 := \eta(t)$, and $\overline\Gamma(X)_t^{t_0} V(t) := v$, $\overline\Gamma(X)_t^{t_0} \eta(t) := \eta_0$.
\end{definition}

If $V$ and $\eta$ are restrictions on $X$, that is, $V(t) = V_{(t,X(t))}$ and $\eta(t) = \eta_{(t,X(t))}$, then equations \eqref{geodesic-vf} and \eqref{geodesic-form} can be rewritten, respectively, as
\begin{equation*}
  \frac{\pt V}{\pt t} dt + \nabla^{}_{\circ dX} V + \frac{1}{2} R(V, \circ dX) \circ dX = 0, \qquad \frac{\pt \eta}{\pt t} dt + \nabla^{}_{\circ dX} \eta - \frac{1}{2} R(\eta, \circ dX) \circ dX = 0,
\end{equation*}
where we mean by $R(\eta, V) W$ the 1-form $[R(\eta^\sharp, V) W]^\flat$. The Stratonovich stochastic differentials can be transformed into It\^o ones. For example, \eqref{geodesic-form} is equivalent to
\begin{equation}\label{parallel-form-Ito}
  d \eta_j(t) = \Gamma_{jk}^i(X(t)) \eta_i(t) dX^k(t) + \frac{1}{2} (QX)^{kl}(t)\left( \frac{\pt \Gamma_{jk}^i}{\pt x^l} + \Gamma_{jk}^m \Gamma_{ml}^i \right)(X(t)) \eta_i(t) dt + \frac{1}{2} R^i_{kjl}(X(t)) \eta_i(t) (QX)^{kl}(t) dt.
\end{equation}

\begin{remark}
  The notion of stochastic parallel displacements was introduced by It\^o \cite{Ito75} and Dynkin \cite{Dyn68}. The notion of damped parallel displacement is due to Malliavin \cite{Mal97}. It was originally introduced by Dohrn and Guerra \cite{DG79}, where they call it geodesic correction to the stochastic parallel displacement.
\end{remark}

\begin{lemma}\label{parallel-vf-form}
  Let $X \in I_{t_0}(M)$.  \\
  (i). Let $\eta$ be a 1-form on $M$ parallel along $X$. If $V$ is a vector field on $M$ which is also parallel along $X$, then $\eta(V)(t) = \eta(V)(t_0)$ for all $t\ge t_0$; if $v \in T_{X(t_0)} M$, then $\eta(\Gamma(X)_{t_0}^t v)(t) = \eta(v)(t_0)$ for all $t\ge t_0$. \\
  (ii). Let $\eta$ be a 1-form on along $X$ satisfying the SDE \eqref{geodesic-form}. If $V$ is a vector field along $X$ satisfying the SDE \eqref{geodesic-vf}, then $\eta(V)(t) = \eta(V)(t_0)$ for all $t\ge t_0$; if $v \in T_{X(t_0)} M$, then $\eta(\overline\Gamma(X)_{t_0}^t v)(t) = \eta(v)(t_0)$ for all $t\ge t_0$.
\end{lemma}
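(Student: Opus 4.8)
The plan is to show that the real-valued process $t \mapsto \eta(V)(t) = \eta_i(t) V^i(t)$ has vanishing Stratonovich differential, hence is constant on $[t_0,\tau)$. Since the pairing $\eta(V)$ is intrinsically defined and coordinate-independent, it suffices to carry out the computation in an arbitrary chart $(U,(x^i))$ and then patch across charts; because the conclusion $\circ d[\eta(V)] = 0$ is chart-independent, constancy propagates to the whole lifetime interval. The key tool is the ordinary Leibniz rule for Stratonovich differentials recalled in Section \ref{sec-2}, which gives $\circ d[\eta_i V^i] = V^i \circ d\eta_i + \eta_i \circ dV^i$.

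For part (i), I would substitute the two parallel equations $\circ d\eta_i = \Gamma_{ik}^m(X) \eta_m \circ dX^k$ and $\circ dV^i = -\Gamma_{jk}^i(X) V^j \circ dX^k$. Relabelling summation indices in the first contribution ($i \mapsto j$, $m \mapsto i$) turns it into $\eta_i \Gamma_{jk}^i V^j \circ dX^k$, while the second is $-\eta_i \Gamma_{jk}^i V^j \circ dX^k$; the two cancel identically, so $\eta(V)(t) = \eta(V)(t_0)$. The second assertion follows by choosing $V(t) = \Gamma(X)_{t_0}^t v$, which by definition is the parallel field with $V(t_0) = v$, whence $\eta(\Gamma(X)_{t_0}^t v)(t) = \eta(V)(t_0) = \eta(v)(t_0)$.

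For part (ii), the same Leibniz computation applies, but each defining SDE now carries an extra curvature drift. The Stratonovich parts cancel exactly as in (i). The remaining drift contributions are $-\frac{1}{2} \eta_i R^i_{kjl}(X) V^j (QX)^{kl} dt$ from \eqref{geodesic-vf} and $+\frac{1}{2} V^i R^m_{kil}(X) \eta_m (QX)^{kl} dt$ from \eqref{geodesic-form}; relabelling $i \mapsto j$, $m \mapsto i$ in the latter rewrites it as $+\frac{1}{2} \eta_i R^i_{kjl}(X) V^j (QX)^{kl} dt$, which is the negative of the former, so these drift terms also cancel. Hence $\eta(V)$ is again constant, and the displacement statement follows verbatim as in (i) by taking $V(t) = \overline\Gamma(X)_{t_0}^t v$.

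The main obstacle I anticipate is not the algebra — which reduces to the short index cancellations above — but the bookkeeping needed to make the local Stratonovich Leibniz computation globally rigorous when $X$ may wander across several charts and may have finite lifetime $\tau$. This is handled by the usual stopping-time localization along the exit times of $X$ from coordinate neighborhoods, using that $\eta(V)$ is an intrinsically defined continuous process whose increments vanish in each chart, so that its value is constant up to $\tau$.
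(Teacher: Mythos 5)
Your proof is correct and follows essentially the same route as the paper: apply the Stratonovich Leibniz rule to $\eta_i V^i$, substitute the defining SDEs, and observe that both the connection terms and (in part (ii)) the curvature drift terms cancel after an index relabelling, with the displacement statements following by taking $V$ to be the (damped) parallel transport of $v$. The only difference is cosmetic — the paper proves (ii) and notes (i) is similar, and does not spell out the chart-localization remarks you add at the end.
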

\begin{proof}
  We only prove Assertion (ii), as (i) is similar. Since Stratonovich stochastic differentials obey Leibniz's rule, we have
  \begin{equation*}
    \begin{split}
      d[\eta(V)] &= \eta_i \circ dV^i + V^j \circ d\eta_j \\
      &= -\eta_i \Gamma_{jk}^i V^j \circ dX^k - \frac{1}{2} \eta_i R^i_{kjl} V^j (QX)^{kl} dt + V^j \Gamma_{jk}^i \eta_i \circ dX^k + \frac{1}{2} V^j R^i_{kjl} \eta_i (QX)^{kl} dt \\
      &= 0.
    \end{split}
  \end{equation*}
  This proves the first statement of (ii). The second statement of (ii) follows by letting $V(t) := \overline\Gamma(X)_{t_0}^t v$.
\end{proof}

\begin{definition}[Mean covariant derivatives along diffusions]
  Given a diffusion $X$ on $M$. Let $V$ and $\eta$ be time-dependent vector field and 1-form along $X$, respectively. The (forward) mean covariant derivative of $V$ with respect to $X$ is a time-dependent vector field $\frac{\D V}{dt}$ along $X$, defined by
  \begin{equation}\label{mean-cov-d}
    \frac{\D V}{dt} (t) = \lim_{\e\to0^+} \E \left[ \frac{\Gamma(X)_{t+\e}^t V(t+\e) - V(t) }{\e} \Bigg| \Pred_t \right].
  \end{equation}
  The damped mean covariant derivative of $V$ with respect to $X$ is a time-dependent vector field $\frac{\overline\D V}{dt}$ along $X$ with $\overline\Gamma$ instead of $\Gamma$ in \eqref{mean-cov-d}. Similarly, we can define $\frac{\D \eta}{dt}$ and $\frac{\overline\D \eta}{dt}$.
\end{definition}

\begin{lemma}\label{stoch-mean-derv-prop}
  (i). Let $V$ and $\eta$ be vector field and 1-form along $X$. If $\eta$ is parallel along $X$, then
  \begin{equation}\label{eqn-20}
    \textstyle{\E \left[ \eta \left( \frac{\D V}{dt} \right) \right] = \E \left( D[\eta(V)] \right).}
  \end{equation}
  If $\eta$ satisfies the SDE \eqref{geodesic-form}, then \eqref{eqn-20} holds true with $\frac{\overline\D}{dt}$ instead of $\frac{\D}{dt}$.

  (ii). Let $V$ be a vector field restricted on $X$. Then
  $$\frac{\overline\D V}{dt} = \frac{\D V}{dt} + \frac{1}{2} (QX)^{ij} R(V,\pt_i)\pt_j = \frac{\pt V}{\pt t} + \nabla^{}_{D_\nabla X} V + \frac{1}{2} (QX)^{ij} \left( \nabla^2_{\pt_i,\pt_j} V + R(V,\pt_i)\pt_j \right).$$

  (iii). Let $\eta$ be a 1-form restricted on $X$. Then
  $$\frac{\overline\D \eta}{dt} = \frac{\D \eta}{dt} - \frac{1}{2} (QX)^{ij} %\left( R(\eta^\sharp,\pt_j)\pt_i \right)^\flat
  R(\eta,\pt_j)\pt_i = \frac{\pt \eta}{\pt t} + \nabla^{}_{D_\nabla X} \eta + \frac{1}{2} (QX)^{ij} \left( \nabla^2_{\pt_i,\pt_j} \eta - %\left( R(\eta^\sharp,\pt_j)\pt_i \right)^\flat
  R(\eta,\pt_j)\pt_i \right).$$
  %where $\sharp: T^* M \to TM$ and $\flat: TM \to T^* M$ is the musical isomorphism with respect to $g$.

  (iv). Let $V$ and $\eta$ be a vector field and a 1-form restricted on $X$. Then
  \begin{equation*}
    \D_t[\eta(V)] = \eta \left( \frac{\D V}{dt} \right) + \frac{\D \eta}{dt} (V) + (QX)^{ij} (\nabla_{\pt_i} \eta) (\nabla_{\pt_j} V) = \eta \left( \frac{\overline\D V}{dt} \right) + \frac{\overline\D \eta}{dt} (V) + (QX)^{ij} (\nabla_{\pt_i} \eta) (\nabla_{\pt_j} V).
  \end{equation*}
\end{lemma}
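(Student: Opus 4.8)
The plan is to prove all four assertions by reducing them to direct It\^o-calculus computations, exploiting the Leibniz property of Stratonovich differentials wherever possible and the definitions \eqref{mean-cov-d} of the mean covariant derivatives. The main structural observation is that assertions (ii) and (iii) give \emph{explicit coordinate formulae} for $\frac{\D V}{dt}$, $\frac{\overline\D V}{dt}$, $\frac{\D \eta}{dt}$, $\frac{\overline\D \eta}{dt}$ when $V,\eta$ are restrictions on $X$; once these are established, assertions (i) and (iv) become algebraic consequences obtained by pairing and collecting terms.

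First I would prove (i). The key point is Lemma \ref{parallel-vf-form}: if $\eta$ is parallel along $X$, then $\eta(\Gamma(X)_{t+\e}^t V(t+\e)) = (\Gamma(X)_t^{t+\e}\eta(t))(V(t+\e))$, but since $\eta$ is parallel, $\Gamma(X)_t^{t+\e}\eta(t) = \eta(t+\e)$ transported back, so the pairing is preserved by parallel transport. Concretely, applying $\eta(t)$ to the defining increment in \eqref{mean-cov-d} and using that parallel transport is a linear isometry of the pairing gives
\begin{equation*}
  \eta\left(\frac{\D V}{dt}\right)(t) = \lim_{\e\to0^+}\E\left[\frac{\eta(V)(t+\e)-\eta(V)(t)}{\e}\bigg|\Pred_t\right] = D[\eta(V)](t),
\end{equation*}
whence \eqref{eqn-20} follows after taking expectations (the conditional expectation becomes the plain forward mean derivative in expectation). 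The damped case is identical, replacing $\Gamma$ by $\overline\Gamma$ and using the second part of Lemma \ref{parallel-vf-form}.(ii), since the \emph{same} displacement that defines $\frac{\overline\D V}{dt}$ appears in the SDE \eqref{geodesic-form} satisfied by $\eta$, so the pairing is again preserved.

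Next I would establish (ii) and (iii) by computing in local coordinates. For $V$ a vector field restricted on $X$, I write $\Gamma(X)_{t+\e}^t V(t+\e)$ using the parallel transport SDE \eqref{parallel-vf}, expand $V(t+\e)=V(t+\e,X(t+\e))$ by It\^o's formula, substitute the It\^o form of the transport equation, and take the conditional expectation divided by $\e$. The drift terms that survive produce $\frac{\pt V}{\pt t} + (D_\nabla X)^k\nabla_{\pt_k}V + \frac12 (QX)^{ij}\nabla^2_{\pt_i,\pt_j}V$, which is exactly $\frac{\D V}{dt}$; the curvature correction term in \eqref{geodesic-vf} accounts for the difference $\frac12(QX)^{ij}R(V,\pt_i)\pt_j$ between $\frac{\overline\D V}{dt}$ and $\frac{\D V}{dt}$. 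The 1-form case (iii) is the same computation with \eqref{parallel-form-Ito} in place of \eqref{parallel-vf}, yielding the opposite sign on the curvature term by the duality already recorded in Lemma \ref{parallel-vf-form}.(i)--(ii). Finally, (iv) follows by applying the total mean derivative $\D_t$ to the scalar $\eta(V) = \eta_i V^i$ via the coordinate expression \eqref{local-rep-total-mean}, then regrouping the first-order and second-order terms using the formulae from (ii) and (iii); the cross term $(QX)^{ij}(\nabla_{\pt_i}\eta)(\nabla_{\pt_j}V)$ is precisely the It\^o second-order contribution $\frac12\frac{\pt^2}{\pt x^i\pt x^j}(\eta_k V^k)$ after subtracting the two covariant-derivative pieces, and the curvature terms cancel in pairs when one passes between the $\frac{\D}{dt}$ and $\frac{\overline\D}{dt}$ versions, giving both equalities simultaneously.

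The main obstacle will be the bookkeeping in (ii) and (iii): one must carefully track which terms come from the time-dependence $\frac{\pt V}{\pt t}$, which from the It\^o correction of the transport SDE, and which from expanding $V(t+\e,X(t+\e))$, and then verify that the Christoffel-symbol contributions assemble into the covariant Hessian $\nabla^2_{\pt_i,\pt_j}$ and that the curvature term emerges with the correct sign and index placement. Matching the damped correction term against the Riemann tensor convention $R^i_{kjl}$ used in \eqref{geodesic-vf}--\eqref{geodesic-form} requires attention, but this is the routine calculation of \cite[Chapter 9]{Gli11} adapted to the damped displacement, and once the coordinate identities are in hand the remaining assertions are purely formal.
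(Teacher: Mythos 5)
Your proposal is correct, and parts (i) and (iv) follow essentially the paper's own argument: (i) is the pairing-preservation consequence of Lemma \ref{parallel-vf-form} (you state the conditional, hence pointwise, version where the paper only records the identity in expectation — a harmless strengthening), and (iv) is the same Leibniz expansion of $\D_t[\eta(V)]$ with the curvature terms cancelling by the symmetry of $R$ between the $\frac{\D}{dt}$ and $\frac{\overline\D}{dt}$ versions.

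For the central computations (ii)--(iii), however, you take a genuinely different route. You propose to attack the definition \eqref{mean-cov-d} head-on: expand the inverse transport $\Gamma(X)_{t+\e}^{t}V(t+\e)$ via the transport SDE \eqref{parallel-vf} and It\^o's formula, divide by $\e$, and take conditional expectations; the damped formula then follows by adding the $O(\e)$ curvature correction distinguishing $\overline\Gamma$ from $\Gamma$. The paper instead argues by duality: it pairs $V$ against an \emph{arbitrary} test 1-form $\eta$ satisfying \eqref{geodesic-form}, computes the single scalar It\^o differential $d[\eta(V)]$ using \eqref{relation-two-drvtv} and \eqref{parallel-form-Ito}, reads off the drift as $\eta$ applied to the claimed formula, and then invokes part (i) together with the arbitrariness of $\eta$ to identify $\frac{\overline\D V}{dt}$. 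Each approach has a cost: the paper's avoids ever inverting the transport map but makes (ii) logically dependent on (i) and on a final ``arbitrariness of $\eta$'' identification step; your direct route is self-contained and gives $\frac{\D V}{dt}$ first, but requires expanding the inverse of the transport matrix to order $\e$ \emph{including its It\^o correction} — this is exactly where the $\Gamma\Gamma$ and $\pt\Gamma$ products arise that must reassemble into the covariant Hessian $\nabla^2_{\pt_i,\pt_j}$, and it is the computation of \cite[Chapter 9]{Gli11} that you cite. Both are sound; just be aware that your version of (ii) carries the full burden of that matrix-inversion bookkeeping rather than delegating it to a scalar It\^o formula.
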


\begin{proof}
  (i). By Lemma \ref{parallel-vf-form}.(i), we have
  \begin{equation*}
    \begin{split}
      \E \left[ \eta \left( \frac{\D V}{dt} \right) (t) \right] &= \lim_{\e\to0} \E \left[ \frac{ \eta(t) (\Gamma(X)_{t+\e}^t V(t+\e) ) - \eta(t)(V(t)) }{\e} \right] \\
      &= \lim_{\e\to0} \E \left[ \frac{ \eta( V ) (t+\e) - \eta( V ) (t) }{\e} \right] \\
      &= \E \left( D[\eta(V)(t)] \right).
    \end{split}
  \end{equation*}
  This proves the first statement of (i). The second statement of (i) follows by a similar argument with $\frac{\overline\D}{dt}$ in place of $\frac{\D}{dt}$ and $\overline\Gamma$ in place of $\Gamma$.

  (ii). It suffices to derive the expression for $\frac{\overline\D V}{dt}$. Suppose that $\eta$ is a 1-form satisfying the SDE \eqref{geodesic-form} and the diffusion $X$ satisfies $QX(t) = (\sigma\circ\sigma^*)(t,X(t))$. Then, we apply It\^o's formula to $\eta(V)(X(t))$ and make use of \eqref{relation-two-drvtv} and \eqref{parallel-form-Ito}. We get
  \begin{equation*}
    \begin{split}
      d[\eta(V)] &= d(\eta_i V^i) = \eta_i \left( \frac{\pt V^i}{\pt t} dt + \frac{\pt V^i}{\pt x^j} dX^j + \frac{1}{2} \frac{\pt^2 V^i}{\pt x^j\pt x^k} d[X^j, X^k] \right) + V^j d\eta_j + d[\eta_j, V^j] \\
      &= \eta_i \left( \frac{\pt V^i}{\pt t} + \frac{\pt V^i}{\pt x^j} (DX)^j + \frac{1}{2} \frac{\pt^2 V^i}{\pt x^j\pt x^k} (QX)^{jk} \right) dt + \eta_i \frac{\pt V^i}{\pt x^j} \sigma_r^j dB^r \\
      &\quad + V^j \left[ \Gamma_{jk}^i (DX)^k + \frac{1}{2} (QX)^{kl}\left( \frac{\pt \Gamma_{jk}^i}{\pt x^l} + \Gamma_{jk}^m \Gamma_{ml}^i \right) + \frac{1}{2} R^i_{kjl} (QX)^{kl} \right] \eta_i dt + V^j \Gamma_{jk}^i \eta_i \sigma_r^k dB^r \\
      &\quad + \Gamma_{jk}^i \eta_i \frac{\pt V^j}{\pt x^l} (QX)^{kl} dt \\
      &= \eta_i \left[ \frac{\pt V^i}{\pt t} + \left( \frac{\pt V^i}{\pt x^k} + V^j \Gamma_{jk}^i \right) (D_\nabla X)^k \right] dt \\
      &\quad + \frac{1}{2} \eta_i (QX)^{kl} \left[ -\frac{\pt V^i}{\pt x^j} \Gamma^j_{kl} + \frac{\pt^2 V^i}{\pt x^k\pt x^l} + V^j \left( -\Gamma_{jm}^i \Gamma^m_{kl} + \frac{\pt \Gamma_{jk}^i}{\pt x^l} + \Gamma_{jk}^m \Gamma_{ml}^i \right) + 2 \Gamma_{jk}^i \frac{\pt V^j}{\pt x^l} \right] dt \\
      &\quad + \frac{1}{2} \eta_i R^i_{kjl} (QX)^{kl} V^j dt + \eta_i \left( \frac{\pt V^i}{\pt x^k} + V^j \Gamma_{jk}^i \right) \sigma_r^k dB^r \\
      &= \eta \left( \frac{\pt V}{\pt t} + \nabla^{}_{D_\nabla X} V + \frac{1}{2} (QX)^{ij} \left( \nabla^2_{\pt_i,\pt_j} V + R(V,\pt_i)\pt_j \right) \right) dt + \eta \left( \nabla_{\sigma_r} V \right) dB^r.
    \end{split}
  \end{equation*}
  Hence, the result (i) implies
  \begin{equation*}
    \E \left[ \eta \left( \frac{\overline\D V}{dt} \right) \right] = \E \left( D[\eta(V)(t)] \right) = \E \left[ \eta \left( \frac{\pt V}{\pt t} + \nabla^{}_{D_\nabla X} V + \frac{1}{2} (QX)^{ij} \left( \nabla^2_{\pt_i,\pt_j} V + R(V,\pt_i)\pt_j \right) \right) \right].
  \end{equation*}
  The arbitrariness of $\eta$ yields (ii).

  (iii). Similar to (ii).

  (iv). We only prove the first equality as the second is similar. By \eqref{local-rep-total-mean-2},
  \begin{equation*}
    \begin{split}
      \D_t [\eta(V)] &= \left( \frac{\pt}{\pt t} + (D_\nabla X)^i \pt_i + \frac{1}{2} (QX)^{ij} \nabla^2_{\pt_i,\pt_j} \right) [\eta(V)] \\
      &= \left( \frac{\pt \eta}{\pt t} \right)(V) + \eta \left( \frac{\pt V}{\pt t} \right) + \left( \nabla^{}_{D_\nabla X} \eta \right)(V) + \eta \left( \nabla^{}_{D_\nabla X} V \right) \\
      &\quad + \frac{1}{2} (QX)^{ij} \left[ \left( \nabla^2_{\pt_i,\pt_j} \eta \right) (V) + \eta \left( \nabla^2_{\pt_i,\pt_j} V \right) + \left( \nabla_{\pt_i} \eta \right) \left( \nabla_{\pt_j} V \right) + \left( \nabla_{\pt_j} \eta \right) \left( \nabla_{\pt_i} V \right) \right] \\
      &= \eta \left( \frac{\D V}{dt} \right) + \frac{\D \eta}{dt} (V) + (QX)^{ij} (\nabla_{\pt_i} \eta) (\nabla_{\pt_j} V).
    \end{split}
  \end{equation*}
  The result follows.
\end{proof}

If $QX(t) = \check g(X(t))$, then
\begin{equation*}%\label{damped-mean-cov-vf}
  \frac{\overline\D V}{dt} = \frac{\pt V}{\pt t} + \nabla^{}_{D_\nabla X} V + \frac{1}{2} \Delta V + \frac{1}{2} \Ric(V),
\end{equation*}
and similarly,
\begin{equation}\label{damped-mean-cov}
  \frac{\overline\D \eta}{dt} = \frac{\pt \eta}{\pt t} + \nabla^{}_{D_\nabla X} \eta + \frac{1}{2} \Delta \eta - \frac{1}{2} \Ric(\eta) = \frac{\pt \eta}{\pt t} + \nabla^{}_{D_\nabla X} \eta + \frac{1}{2} \Delta_{\mathrm{LD}} \eta,
\end{equation}
where $\Delta$ is the connection Laplacian, and $\Delta_{\mathrm{LD}} = -( dd^*+d^* d)$ is the Laplace-de Rham operator on forms. The relation $\Delta_{\mathrm{LD}} = \Delta - \Ric$ is due to the Weitzenb\"ock identity \cite[Theorem 9.4.1]{Pet16}. We remark here that the operator $\Delta + \Ric$ acting on vector fields is also called Laplace-de Rham operator in \cite{DG79}.

In the context of fluid dynamics, the operator $\frac{\pt}{\pt t}+ \nabla_v$, with $v$ a vector field, is often referred to as material derivative or hydrodynamic derivative. So the mean covariant derivative $\frac{\D}{dt}$ and its damped variant $\frac{\overline\D}{dt}$ can be regarded as stochastic deformations of material derivative.

\subsection{A stochastic stationary-action principle}\label{sec-7-2}

In this section, we will establish a type of stochastic stationary-action principle: the stochastic Hamilton's principle. Another version for systems with conserved energy, the stochastic Maupertuis's principle, can be found in Appendix \ref{app-3}.

In contrast to second-order Hamiltonians, not all real-valued functions on $\mathcal T^S M$ can be used as second-order Lagrangians in stochastic Lagrangian mechanics. This has been hinted in Section \ref{exmp-1}, as we have mentioned in Remark \ref{remark-2}. For this reason, we will produce a class of second-order Lagrangians from classical Lagrangians, via the fiber-linear bundle projection $\varrho_\nabla$ in \eqref{varrho-2} and the $\nabla$-canonical coordinates $(D_\nabla^i x)$ in \eqref{partial-coord}.

\begin{definition}\label{adm-2-Lag}
  By an admissible second-order Lagrangian, we mean a function $L:\R\times\mathcal T^S M\to\R$ such that there exists a classical Lagrangian $L_0: \R\times TM\to\R$ satisfying $L = L_0 \circ (\id_\R\times\varrho_\nabla)$. We call $L$ the $\nabla$-lift of $L_0$.
\end{definition}

In local coordinates, the $\nabla$-lift $L$ of $L_0$ is expressed as
\begin{equation}\label{lift-Lag}
  L(t, x, Dx, Qx) = L_0 \circ \varrho_\nabla(t, x, Dx, Qx) = L_0(t, x, D_\nabla x).
\end{equation}
Let $T>0$. Our stochastic variational problem consists in finding the extrema (maxima or minima) of the stochastic action functional
\begin{equation}\label{action}
  \mathcal S [X;0,T] := \E \int_0^T L\left(t, X(t), D X(t), Q X(t) \right) dt = \E \int_0^T L_0\left(t, X(t), D_\nabla X(t) \right) dt
\end{equation}
over a suitable domain of diffusions $X$ on $M$, where $L$ is an admissible second-order Lagrangian lifted from $L_0$.

In order to formulate a well-posed stochastic variational problem in an economical way, we assume that the manifold $M$ is compact and the metric $g$ is geodesically complete (which will be used to characterize the variations of diffusions in Lemma \ref{variation-aspt}), and that the connection $\nabla$ is the associated Levi-Civita connection. The geodesic completeness can be ensured, for example, if $M$ is connected (see, e.g., \cite[p. 346]{Lee13}). Whenever the metric $g$ is given, the associated Levi-Civita connection is uniquely determined, due to the fundamental theorem of Riemannian geometry \cite[Theorem IV.2.2]{KN63}. We will refer to such a geodesically complete Riemannian metric as a \emph{reference metric tensor}. %and denote it as $\mathbf g$ in order to specify.
%which is required to be \emph{compatible} with the connection $\nabla$, i.e., $\nabla \Q = 0$. and may be time-dependent,

For a fixed point $q \in M$ and a probability distribution $\mu\in \Pred(M)$ on $M$, we define an admissible class of diffusions by
\begin{equation}\label{diff-space}
  \A_g([0,T];q, \mu) = \left\{ X\in I_{(0,q)}^{(T,\mu)}(M): QX(t) = \check g(X(t)), \forall t\in [0,T], \text{a.s.} \right\},
\end{equation}
where $I_{(0,q)}^{(T,\mu)}(M)$ denotes the set all $M$-valued diffusion processes starting from $q$ at $t=0$ and with final distribution $\mu$, i.e., $\P\circ (X(T))^{-1} = \mu$.%ending at $q_2$ at $t=T$.
The action functional $\mathcal S$ is now defined on the set $\A_g([0,T];q, \mu)$, that is, $\mathcal S: \A_g([0,T];q, \mu) \to \R$.

Note that the admissible class $\A_g$ is similar to the Wiener space, so that a candidate for its ``tangent space'' is Cameron--Martin space. Denote by $\mathcal H([0,T];q)$ the Hilbert space of absolutely continuous curves $v:[0,T]\to T_q M$ such that $\int_0^T |\dot v(t)|^2 dt < \infty$. Let $\mathcal H_0([0,T];q)$ be the subspace consisting of all $v\in\mathcal H([0,T];q)$ satisfying $v(0) = v(T) = 0$.

\begin{definition}
  Let $X\in \A_g([0,T];q, \mu)$. For a curve $v\in \mathcal H_0([0,T];q)$, the vector field along $X$ given by $V(t):=\Gamma(X)_0^t v(t)$ is called a tangent vector to $\A_g([0,T];q, \mu)$ at $X$. The tangent space to $\A_g ([0,T];q, \mu)$ at $X$ is the set of all such tangent vectors, that is,
  \begin{equation*}
    T_X \A_g([0,T];q, \mu) := \left\{ \Gamma(X)_0^\cdot v(\cdot) : v\in \mathcal H_0([0,T];q) \right\}.
  \end{equation*}
\end{definition}

\begin{definition}\label{variation}
  By a variation (or deformation) of a diffusion $X\in \A_g([0,T];q, \mu)$ along $v\in \mathcal H_0([0,T];q)$, we mean a one-parameter family of diffusions $\{X^v_\e\}_{\e\in(-\varepsilon,\varepsilon)}$, where for each $t\in [0,T]$, $X^v_\e(t)$ satisfies the following ODE
  \begin{equation}\label{variation-diff}
    \frac{\pt}{\pt\e}X^v_\e(t) = \Gamma(X^v_\e)_0^t v(t), \quad X^v_0(t) = X(t).
  \end{equation}
  The diffusion $X\in \A_g([0,T];q, \mu)$ is called a stationary (or critical) point of $\mathcal S$, if the first variation $\delta \mathcal S$ vanishes at $X$, i.e.,
  \begin{equation}\label{stationary}
    \frac{d}{d\e}\bigg|_{\e=0} \mathcal S[X^v_\e;0,T] = 0, \quad \text{for all } v\in \mathcal H_0([0,T];q).
  \end{equation}
\end{definition}

\begin{remark}
  (i). The variations of diffusions on manifolds, via differential equation \eqref{variation-diff}, is standard in stochastic analysis on path spaces of Riemannian manifolds. See for example \cite[Eq.~(2.3)]{Dri92} and \cite[Theorem 4.1]{Hsu95}, where it is shown that Wiener measure is quasi-invariant under such variations. This kind of variations has some equivalent constructions. For instance, the previous two references also provided an approach by lifting to the frame bundle and projecting to the Euclidean space (a stochastic analog of Cartan's development), while Malliavin and Fang \cite{FM93} provided an alternative perspective via Bismut connection.

  (ii). The stochastic variational problem \eqref{action}--\eqref{stationary} in the Euclidean context has also been familiar in stochastic optimal transport/control. See Section \ref{sec-7-3} and Subsection \ref{sec-7-4-4} for connections to those areas.

  (iii). Unlike the infinitesimal variation used in Definition \ref{prog-vf} for studying symmetries of SDEs, the infinitesimal variation here in \eqref{variation-diff} needs to be a parallel vector field.
\end{remark}

The following lemma is the key for establishing stochastic Hamilton's principle. The first statement shows that the variation $X^v_\e$ is well defined on the path space $\A_g([0,T];q, \mu)$. The second one describes the infinitesimal changes of $D_\nabla X^v_\e$ with respect to the variation parameter $\e$. The proof of the latter is based on a geodesic approximation technique, which is originally due to It\^o \cite{Ito62}.

\begin{lemma}\label{variation-aspt}
  Given $X\in \A_g([0,T];q, \mu)$ and $v\in \mathcal H_0([0,T];q)$. We have \\
  (i) for each $\e\in(-\varepsilon,\varepsilon)$, $X^v_\e\in \A_g([0,T];q, \mu)$; and \\
  (ii) for all $t\in[0,T]$,
  \begin{equation}\label{variation-deriv}
    \frac{D}{d\e}\bigg|_{\e=0} D_\nabla X^v_\e(t) = \Gamma(X)_0^t \dot v(t) + \frac{1}{2}(QX)^{ij}(t) R\left(\Gamma(X)_0^t v(t),\pt_i\right) \pt_j,
  \end{equation}
  where $\dot v(t) = \frac{d}{dt}v(t) \in T_{v(t)} T_{q} M \cong T_{q} M$, $\frac{D}{d\e}$ is the (classical) covariant derivative with respect to the parameter $\e$.
  %$X^v_\e$ has the following asymptotic expression:
%  \begin{equation*}
%    dX^v_\e(t) = dX(t) + \e \frac{\D V}{dt}(t) dt -\frac{\e}{2} R(V(t),\circ dX(t)) \circ dX(t) + O(\e^2),
%  \end{equation*}
%  where $V(t) = \Gamma(X)_0^t v(t)$, and $\frac{\D V}{dt}$ is the mean covariant derivatives of $V$ with respect to $X$. %the equality holds in local coordinates.
\end{lemma}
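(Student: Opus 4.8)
The plan is to regard $(\e,t)\mapsto X^v_\e(t)$ as a two-parameter family of $M$-valued processes and to push the classical variation-of-curves computation through It\^o's calculus, the extra curvature term being a purely second-order (quadratic-variation) effect.

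For part (i), the boundary conditions $v(0)=v(T)=0$ give $\frac{\pt}{\pt\e}X^v_\e(0)=\Gamma(X^v_\e)_0^0 v(0)=0$ and $\frac{\pt}{\pt\e}X^v_\e(T)=\Gamma(X^v_\e)_0^T v(T)=0$, so $X^v_\e(0)=q$ a.s.\ and $X^v_\e(T)=X(T)$ for every $\e$; hence the initial point and the terminal law $\mu$ are preserved. The substantive point is $QX^v_\e(t)=\check g(X^v_\e(t))$. I would obtain it by lifting $X$ horizontally to the orthonormal frame bundle $O(M)$: if $U$ is the horizontal lift and $\beta=\int_0^\cdot U^{-1}\circ dX$ its anti-development, then $QX=\check g$ is equivalent to $\beta$ being a standard Euclidean Brownian motion. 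In these variables the flow $\e\mapsto X^v_\e$ is the Cameron--Martin flow along the path-space vector field $\gamma\mapsto\{\Gamma(\gamma)_0^t v(t)\}_t$, which is exactly the setting of the quasi-invariance theorems of \cite{Dri92,Hsu95}: the flow changes only the finite-variation (drift) part of $\beta$ and leaves its martingale part, hence its quadratic variation, unchanged. Therefore $X^v_\e$ remains a diffusion with $QX^v_\e=\check g(X^v_\e)$, so $X^v_\e\in\A_g([0,T];q,\mu)$.

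For part (ii), write $V(t)=\Gamma(X)_0^t v(t)$ for the tangent vector to $\A_g$ at $X$ and $J_\e(t)=\frac{\pt}{\pt\e}X^v_\e(t)=\Gamma(X^v_\e)_0^t v(t)$ for the variation field of the surface $F(\e,t)=X^v_\e(t)$, so $J_0=V$. The engine is the symmetry identity for mixed covariant derivatives on $F$, namely $\frac{D}{d\e}\frac{D}{dt}Y-\frac{D}{dt}\frac{D}{d\e}Y=R(\tfrac{\pt F}{\pt\e},\tfrac{\pt F}{\pt t})Y$ together with torsion-freeness. The quantity $D_\nabla X^v_\e$ is the covariant drift of the diffusion; through It\^o's formula it splits into a genuine first-order drift and a second-order part weighted by $QX^v_\e=\check g$. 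Differentiating the first-order part in $\e$ and using $\frac{\D V}{dt}(t)=\Gamma(X)_0^t\dot v(t)$, which holds because $V$ is parallel-transported, yields the first term $\Gamma(X)_0^t\dot v(t)$; applying the commutation identity to the second-order part, whose second velocity slot is filled by the quadratic-variation tensor $(QX)^{ij}=\check g^{ij}$, produces the curvature correction $\frac12(QX)^{ij}(t)R(V,\pt_i)\pt_j$. In particular the asserted right-hand side is exactly the damped mean covariant derivative $\frac{\overline\D V}{dt}$ of the tangent field $V$, by Lemma \ref{stoch-mean-derv-prop}(ii).

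To turn this into a rigorous argument I would invoke It\^o's geodesic-approximation scheme \cite{Ito62}: partition $[0,T]$ and replace $X$ on each subinterval by the minimizing geodesic joining its endpoints, obtaining a piecewise-geodesic approximation $X^{(n)}$ for which $t\mapsto F^{(n)}(\e,t)$ is smooth and the classical Jacobi-field identities hold verbatim on each piece. Summing the curvature contributions $R(V,\Delta X^{(n)})\Delta X^{(n)}$ over subintervals and letting the mesh go to zero, the defining property of the quadratic mean derivative converts these sums, like Riemann sums against $d\langle X^i,X^j\rangle$, into the It\^o term $\frac12(QX)^{ij}R(V,\pt_i)\pt_j$, while the drift contributions converge to $\Gamma(X)_0^t\dot v(t)$. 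The main obstacle is precisely this two-parameter stochastic analysis: one must control the joint $(\e,t)$-regularity of $X^v_\e$ and of its horizontal lift, justify interchanging the classical $\e$-covariant derivative $\frac{D}{d\e}$ with the conditional-expectation limits defining $D$ and $D_\nabla$, and prove that the geodesic-approximation limit commutes with $\frac{D}{d\e}$. This is where completeness of $g$ and the parallelism of the variation field (keeping $J_\e$ in the Cameron--Martin class) are essential, and where the factor $\frac12$ in the curvature term is generated.
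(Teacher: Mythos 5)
Your proposal is correct and follows essentially the same route as the paper: part (i) via the anti-development on the orthonormal frame bundle and the quasi-invariance results of Driver, Hsu and Fang--Malliavin to see that the variation only perturbs the drift of the Euclidean anti-development and leaves its quadratic variation equal to $\I_d$, and part (ii) via It\^o's piecewise-geodesic approximation, the Jacobi equation on each segment, torsion-freeness to commute $\frac{D}{d\e}$ and $\frac{D}{dt}$, and the passage $R(V,\Delta X)\Delta X/\Delta t \to \frac{1}{2}(QX)^{ij}R(V,\pt_i)\pt_j$ in the mesh limit. Your additional remarks --- that $v(0)=v(T)=0$ fixes the endpoints and that the right-hand side of \eqref{variation-deriv} is exactly $\frac{\overline\D V}{dt}$ by Lemma \ref{stoch-mean-derv-prop}(ii) --- are accurate and consistent with how the lemma is used afterwards.
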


\begin{proof}
  (i). Let $\xi$ and $\xi_\e$ be the anti-development (\cite[Definition 2.3.1]{Hsu02}) of $X$ and $X^v_\e$, respectively, with fixed initial frame $r(0) \in O_{q}M$. Equivalently, for example, $\xi$ is an $\R^d$-valued diffusion related to $X$ by the following SDEs \cite[Section 2.3]{Hsu02}
  \begin{equation*}\left\{
    \begin{aligned}
      dX^i(t) &= r_j^i(t) \circ d\xi^j(t), \\
      dr_j^i(t) &= -\Gamma_{kl}^i(X(t)) r_j^l(t)r_m^k(t) \circ d\xi^m(t).
    \end{aligned}\right.
  \end{equation*}
  Applying the fact that $\sum_{k=1}^d r_k^i r_k^j= g^{ij}$ (e.g., \cite[Proposition 1.5]{KN63}) and the condition $QX(t) = \check g(X(t))$, we have
  \begin{equation}\label{eqn-13}
    r_k^i(t) r_l^j(t) \delta^{kl} = g^{ij}(X(t)) = (QX)^{ij}(t) = r_k^i(t) r_l^j(t) (Q\xi)^{kl}(t),
  \end{equation}
  and consequently, $Q\xi\equiv \I_d$. Meanwhile, it follows from \cite[Section 3.5]{FM93} (or \cite[Theorem 5.1]{Dri92}, \cite[Section 3]{Hsu95}) that
  \begin{equation*}
    d\xi_\e(t) = \exp\left(\e \int_0^t \Omega\left( \left(r(0)^{-1} v\right)(s), \circ d\xi(s) \right)\right) \circ d\xi(t) + \e d\left(r(0)^{-1} v\right)(t),
  \end{equation*}
  where $\Omega$ is the curvature form on the orthogonal frame bundle $OM$, taking values in $\mathfrak{so}(d)$, and the frame $r(0)$ is viewed as an isomorphism from $\R^d$ to $T_{q} M$. It follows that $Q\xi_\e = Q\xi \equiv \I_d$. For the reason similar to \eqref{eqn-13}, we have $QX^v_\e(t) = \check g(X^v_\e(t))$. The result follows. See \cite[Theorem 8.3]{Dri92} for a more elaborate proof.

  (ii). Fix $n,m \in \N_+$. Let $0=t_0<t_1<\cdots<t_n=T$ be a division of the time interval $[0,T]$, and let $-\varepsilon=\e_{m-} <\cdots <\e_{-1} <0= \e_0<\e_1<\cdots<\e_m=\varepsilon$ be one of the variation parameter interval $(-\varepsilon,\varepsilon)$. Denote $\Delta t_i := t_i - t_{i-1}$. Consider the polygonal curve $x^n = \{x^n(t)\}_{t\in[0,T]}$, which is an approximation of $X$ made of minimizing geodesic segments joining $X(t_{i-1})$ with $X(t_i)$ for all $1\le i\le n$. This is attainable by the geodesic completeness. We will construct an approximation scheme for the variational processes $X^v_\e$'s.

  For $\e\in [\e_0,\e_1]$, we construct the approximation $x^n_\e$ of $X^v_\e$ as follows. We extend each $X(t_i)$, $0\le i\le n$, to a geodesic
  $$\gamma_0^{(i)}(\e) = \exp_{X(t_i)} \left( \e \Gamma(x^n)_0^{t_i} v(t_i) \right), \quad \e\in [\e_0,\e_1].$$
  Let $x^n_\e = \{x^n_\e(t)\}_{t\in[0,T]}$ be the polygonal curve consisting of minimizing geodesic segments joining $\gamma_0^{(i-1)}(\e)$ with $\gamma_0^{(i)}(\e)$ for all $1\le i\le n$.

  Then, we construct $x^n_\e$ for $\e\in [\e_j,\e_{j+1}]$, $1\le j\le m-1$, by induction. Suppose $x^n_\e$, $\e\in [\e_{j-1},\e_j]$, has been defined. Then, in particular, we have a curve $x^n_{\e_j}$. Extend each $x^n_{\e_j}(t_i)$, $0\le i\le n$, to a geodesic by
  $$\gamma_j^{(i)}(\e) = \exp_{x^n_{\e_j}(t_i)} \left( (\e -\e_j) \Gamma(x^n_{\e_j})_0^{t_i} v(t_i) \right), \quad \e\in [\e_j,\e_{j+1}].$$
  Let $x^n_\e$ be the polygonal curve consisting of minimizing geodesic segments joining $\gamma_j^{(i-1)}(\e)$ with $\gamma_j^{(i)}(\e)$ for all $1\le i\le n$. In a similar way, we can define $x^n_\e$ for $\e\in [\e_j,\e_{j+1}]$, $-m\le j\le -1$.

  Now we have a family of polygonal curves $\{x^n_\e: \e \in (-\varepsilon,\varepsilon)\}$, which satisfies $x^n_0 = x^n$ and
  \begin{equation*}
    \frac{\pt^{\sign(\e)}}{\pt\e}\bigg|_{\e=\e_j} x^n_\e(t_i) = \Gamma(x^n_{\e_j})_0^{t_i} v(t_i).
  \end{equation*}
  As for each $\e\in (-\varepsilon,\varepsilon)$ and $1\le i\le n$, $\{x^n_\e(t)\}_{t\in[t_{i-1},t_i]}$ is a geodesic, the vector field $$J(t):=\frac{\pt}{\pt\e}\bigg|_{\e=0} x^n_\e(t), \quad t\in[t_{i-1},t_i]$$
  is a Jacobi field along $\{x^n(t)\}_{t\in[t_{i-1},t_i]}$. This leads to the following Jacobi equation
  \begin{equation}\label{eqn-14}
    \frac{D^2}{dt^2} J(t) + R\left( J(t), \dot x^n(t) \right) \dot x^n(t) = 0, \quad t\in[t_{i-1},t_i],
  \end{equation}
  with boundary values
  \begin{equation}\label{eqn-15}
    J(t_{i-1}) = \Gamma(x^n)_0^{t_{i-1}} v(t_{i-1}), \quad  J(t_i) = \Gamma(x^n)_0^{t_i} v(t_i).
  \end{equation}
  Since the connection is torsion-free, we can exchange the covariant derivative and standard derivative to have
  \begin{equation}\label{eqn-16}
    \frac{D}{dt} J(t_{i-1}) = \frac{D}{dt} \frac{\pt}{\pt\e} x^n_\e(t) \bigg|_{\e=0,t=t_{i-1}} = \frac{D}{d\e} \frac{\pt}{\pt t} x^n_\e(t) \bigg|_{\e=0,t=t_{i-1}} = \frac{D}{d\e}\bigg|_{\e=0} \dot x^n_\e(t_{i-1}),
  \end{equation}
  On the other hand, Taylor's theorem yields
  \begin{equation}\label{eqn-17}
    \Gamma(x^n)_{t_i}^{t_{i-1}} J(t_i) = J(t_{i-1}) + \frac{D}{dt} J(t_{i-1}) \Delta t_i + \frac{1}{2}\frac{D^2}{dt^2} J(t_{i-1}) (\Delta t_i)^2 + o\left( (\Delta t_i)^2 \right).
  \end{equation}
  Combining \eqref{eqn-14}--\eqref{eqn-17}, we have
  \begin{equation*}
    \frac{D}{d\e}\bigg|_{\e=0} \dot x^n_\e(t_{i-1}) = \Gamma(x^n)_0^{t_{i-1}} \frac{v(t_i) - v(t_{i-1})}{\Delta t_i} + \frac{1}{2} R\left( \Gamma(x^n)_0^{t_{i-1}} v(t_{i-1}), \dot x^n(t_{i-1}) \right) \dot x^n(t_{i-1}) \Delta t_i + o\left( \Delta t_i \right).
  \end{equation*}
  A standard limit theorem yields the result (ii).
\end{proof}

\begin{remark}
  (i). The constraint $QX(t) = \check g(X(t))$ in \eqref{diff-space} looks strong. A possibly better viewpoint is to force all diffusions under consideration to have the same nondegenerate diffusion tensor $a$, i.e., $QX(t) = a(X(t))$. Then, the inverse of $a$ defines a Riemannian metric $g$, cf. \cite[Section V.4]{IW89}. As can be seen from the first part of the above proof, the constraint of fixing the diffusion tensor is a natural one in the literature of variational calculus on the path space. An intuitive reason for this constraint is to assure that the induced measures are equivalent, which is necessary for equation \eqref{variation-diff} to be well-posed, cf. \cite{Dri92}. The assumption of Levi-Civita connection $\nabla$ may be relaxed to that the connection $\nabla$ is $g$-compatible and torsion skew symmetric \cite[Definition 8.1]{Dri92}, in which case the second assertion of this lemma needs to add the effect of torsion.

  (ii). One may expect from the limits of \eqref{eqn-14} and \eqref{eqn-15} that there is a ``stochastic'' Jacobi equation with two boundary values describing the difference between a diffusion and an ``infinitesimally close'' diffusion, cf. \cite{AT98}.
\end{remark}
%\begin{corollary}\label{variation-vector-drift}
%  Keep the assumptions and notations in Lemma \ref{variation-aspt}, we have
%  \begin{equation*}
%    \frac{D}{d\e}\bigg|_{\e=0} D_\nabla X^v_\e(t) = \Gamma(X)_0^t \dot v(t) - \frac{1}{2}(QX)^{ij}(t) R(V(t),\pt_i) \pt_j,
%  \end{equation*}
%  where $\dot v(t) = \frac{d}{dt}v(t) \in T_{v(t)} T_{q_1} M \cong T_{q_1} M$.
%\end{corollary}
%\begin{proof}
%  The corollary follows directly from Lemma \ref{variation-aspt} and the following observation,
%  \begin{equation*}
%    \frac{\D V}{dt} (t) = \lim_{\e\to0} \E \left[ \frac{\Gamma(X)_{t+\e}^t \Gamma(X)_0^{t+\e} v(t+\e) - \Gamma(X)_0^t v(t) }{\e} \Bigg| \Pred_t \right] = \Gamma(X)_0^t \dot v(t).
%  \end{equation*}
%\end{proof}

For a smooth function $f$ on $TM$, we denote by $d_{\dot x} f$ the differential of $f$ with respect to the coordinates $(\dot x^i)$. Since $T_{(x,\dot x)} T_x M \cong T_x M$, $d_{\dot x} f$ is treated as a 1-form on $T_x M$ and
\begin{equation}\label{diff-dot-x}
  d_{\dot x} f = \frac{\pt f}{\pt \dot x^i} d x^i.
\end{equation}
We call $d_{\dot x} f$ the \emph{vertical differential} of $f$. Regarding the differential with respect to the coordinates $(x^i)$, we introduce the \emph{horizontal differential} which depends on the connection $\nabla$, by
\begin{equation}\label{diff-x}
  d_x f = \left( \frac{\pt f}{\pt x^i} - \Gamma_{ij}^k \dot x^j \frac{\pt f}{\pt \dot x^k} \right) d x^i.
\end{equation}
It is easy to check that both definitions \eqref{diff-dot-x} and \eqref{diff-x} are invariant under change of coordinates. In fact, by the classical theory \cite[Section 3.5 and Example 4.6.7]{Sau89}, we know that the connection $\nabla$ can uniquely determine a $TTM$-valued 1-form on $TM$ horizontal over $M$, which is given in local coordinates by
\begin{equation*}
  \Gamma = dx^i \otimes \left( \frac{\pt}{\pt x^i} - \Gamma_{ij}^k \dot x^j \frac{\pt}{\pt \dot x^k} \right).
\end{equation*}
Hence, the horizontal differential is $d_x f = \Gamma(df)$, where $df$ is the total differential of $f$. Given a vector field $V$ on $M$, $f\circ V: q\mapsto f(V_q)$ is a smooth function on $V$. Then, it is easy to check that
\begin{equation}\label{eqn-22}
  d (f\circ V) = d_x f \circ V + (d_{\dot x} f \circ V)(\nabla_{\pt_i} V) dx^i.
\end{equation}

The following integration-by-parts formula will be used. Its proof is straightforward from definitions of stochastic integrals and mean derivatives, cf. \cite[Lemma 4.4]{CZ91}.
\begin{lemma}\label{integration-by-parts}
  Let $X = \{X(t)\}_{t\in[0,T]}$ be a real-valued continuous semimartingale such that $DX$ exists, let $f$ be a real-valued continuous process on $[0,T]$, of finite variation. Then
  \begin{equation*}
    \E \int_0^T X(t) \dot f(t) dt = E\left[ f(T)X(T)-f(0)X(0) \right] - \E \int_0^T f(t) DX(t) dt.
  \end{equation*}
\end{lemma}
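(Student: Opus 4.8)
The plan is to reduce the claimed identity to an ordinary integration-by-parts formula by recalling the definition of the forward mean derivative $D$ via conditional expectations. The statement to prove is that for a real-valued continuous semimartingale $X$ on $[0,T]$ whose forward mean derivative $DX$ exists, and a continuous finite-variation process $f$,
\begin{equation*}
  \E \int_0^T X(t) \dot f(t)\, dt = \E\left[ f(T)X(T) - f(0)X(0) \right] - \E \int_0^T f(t)\, DX(t)\, dt .
\end{equation*}
Since $f$ has finite variation, $\dot f\, dt = df$ in the Stieltjes sense, and the product $fX$ is itself a semimartingale, so the classical integration-by-parts formula for semimartingales applies. The strategy is therefore: first expand $\E[f(T)X(T)-f(0)X(0)]$ using the pathwise product rule, then identify the $X\,df$ term with the left-hand side and recognize the remaining $f\,dX$ term as reproducing $f\,DX$ after taking expectations.

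First I would write the pathwise integration by parts for the continuous semimartingale $X$ and the finite-variation process $f$: since $f$ has finite variation, the quadratic covariation $[X,f]$ vanishes, so
\begin{equation*}
  f(T)X(T) - f(0)X(0) = \int_0^T X(t)\, df(t) + \int_0^T f(t)\, dX(t) = \int_0^T X(t)\dot f(t)\, dt + \int_0^T f(t)\, dX(t).
\end{equation*}
Taking expectations and rearranging gives
\begin{equation*}
  \E \int_0^T X(t)\dot f(t)\, dt = \E\left[ f(T)X(T) - f(0)X(0) \right] - \E \int_0^T f(t)\, dX(t).
\end{equation*}
It then remains to show that $\E \int_0^T f(t)\, dX(t) = \E \int_0^T f(t)\, DX(t)\, dt$.

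The heart of the argument is this last equality, which is where the definition of $D$ enters. I would argue that, because $DX(t) = \lim_{\e\to 0^+}\E[(X(t+\e)-X(t))/\e \mid \Pred_t]$ exists, the process $X$ admits a decomposition $X(t) = X(0) + \int_0^t DX(s)\,ds + M(t)$ where $M$ is a $\{\Pred_t\}$-martingale; this is the content of $DX$ being the drift of the semimartingale. Then $\int_0^T f\,dX = \int_0^T f\, DX\, dt + \int_0^T f\, dM$, and since $f$ is $\{\Pred_t\}$-adapted (being continuous of finite variation) and $M$ is a martingale, the stochastic integral $\int_0^T f\, dM$ is a martingale with zero expectation, so $\E\int_0^T f\,dM = 0$. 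This yields the desired equality and closes the proof. The main obstacle, and the point requiring care, is the justification that the limit defining $DX$ genuinely identifies the absolutely continuous drift part of $X$, together with the integrability conditions (e.g. local-martingale-to-martingale passage, possibly via localization by stopping times) needed to guarantee $\E\int_0^T f\,dM = 0$; these are exactly the technical hypotheses implicitly carried by the phrase ``$DX$ exists'' and by the $L^1$-convergence assumptions built into the definitions of mean derivatives earlier in the paper.
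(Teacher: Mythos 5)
Your proposal is essentially correct, but it takes a different route from the one the paper has in mind. The paper does not write out a proof at all; it asserts the lemma is ``straightforward from definitions of stochastic integrals and mean derivatives'' and points to a reference, and the intended argument is the elementary one: discretize $[0,T]$, write $\E[f(T)X(T)-f(0)X(0)]$ as a telescoping sum $\E\sum_i\big[f(t_i)(X(t_{i+1})-X(t_i))+(f(t_{i+1})-f(t_i))X(t_{i+1})\big]$, condition each increment of $X$ on $\Pred_{t_i}$ so that $\E[f(t_i)(X(t_{i+1})-X(t_i))]=\E\big[f(t_i)\,\E[X(t_{i+1})-X(t_i)\mid\Pred_{t_i}]\big]\approx\E[f(t_i)\,DX(t_i)\,\Delta t_i]$, and pass to the limit using the assumed $L^1$ convergence in the definition of $D$ and the finite variation of $f$. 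That argument never needs to decide what the semimartingale decomposition of $X$ is; it only ever uses the conditional-expectation definition of $DX$.

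Your route instead goes through It\^o integration by parts and then the identification $X(t)=X(0)+\int_0^t DX(s)\,ds+M(t)$ with $M$ a martingale. This is fine for the processes the lemma is actually applied to in the paper (functions of It\^o diffusions, where the decomposition is explicit), but as a proof of the lemma \emph{as stated} it has a genuine soft spot that you correctly flag but do not resolve: the mere existence of the $L^1$ limit $DX(t)=\lim_{\e\to0^+}\E[(X(t+\e)-X(t))/\e\mid\Pred_t]$ does not by itself guarantee that the finite-variation part of $X$ is absolutely continuous with density $DX$, nor that the remaining local martingale is a true martingale against which $\int f\,dM$ has zero expectation. So your argument outsources the real content of the lemma to a drift-identification step that itself needs proof, whereas the telescoping argument proves the identity directly from the definition. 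I would either supply that identification (with the localization and uniform-integrability details) or switch to the discretization argument, which is shorter and matches what the paper intends.
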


%\begin{proof}
%  Since $f$ is of finite variation, by It\^o's formula, it suffices to prove that
%  \begin{equation*}
%    \E \int_0^T f(t) dX(t) = \E \int_0^T f(t) DX(t) dt.
%  \end{equation*}
%  Denote by $\triangle$ a partition $\{t_i\}_{i=0}^n$ of the interval $[0,T]$ with $0=t_0< t_1< \cdots < t_n=T$, and by $\|\triangle\| = \max_{1\le i\le n}|t_{i+1}-t_i|$ the mesh of $\triangle$. Then by the definition of stochastic integrals \cite[Definition 3.2.9]{KS91},
%  \begin{equation*}
%    \begin{split}
%      \E \int_0^T f(t) dX(t) &= \E \lim_{\|\triangle\| \to 0} \sum_{i=1}^n f(t_i) \left[ X(t_{i+1}) - X(t_i) \right] \\
%      &= \E \lim_{\|\triangle\| \to 0} \sum_{i=1}^n f(t_i) \E \left[ \frac{ X(t_{i+1}) - X(t_i) }{t_{i+1}-t_i} \Bigg| \Pred_t \right] (t_{i+1}-t_i) \\
%      &= \E \int_0^T f(t) DX(t) dt.
%    \end{split}
%  \end{equation*}
%  The result follows.
%\end{proof}

Now we are in position to present the stochastic version of Hamilton's principle.

\begin{theorem}[Stochastic Hamilton's principle]\label{stoch-Hamilton-prin}
  Let $L_0$ be a regular Lagrangian on $\R\times TM$. A diffusion $X\in \A_g([0,T];q, \mu)$ is a stationary point of $\mathcal S$, if and only if $X$ satisfies the following stochastic Euler-Lagrange (S-EL) equation
  \begin{equation}\label{stoch-EL}
    \frac{\overline\D}{dt} \big( d_{\dot x} L_0\left(t, X(t), D_\nabla X(t) \right) \big) = d_x L_0\left(t, X(t), D_\nabla X(t) \right),
  \end{equation}
  where $\frac{\overline\D}{dt}$ is the damped mean covariant derivative with respect to $X$.
\end{theorem}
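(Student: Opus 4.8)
The plan is to compute the first variation $\frac{d}{d\e}\big|_{\e=0}\mathcal S[X^v_\e;0,T]$ directly and bring it to the form $\E\int_0^T \langle \Phi(t), V(t)\rangle\, dt$ for a single one-form $\Phi$ along $X$ and $V(t)=\Gamma(X)_0^t v(t)$, so that stationarity for all $v\in\mathcal H_0([0,T];q)$ forces $\Phi\equiv 0$, which is exactly \eqref{stoch-EL}. First I would invoke Lemma \ref{variation-aspt}(i) to ensure $X^v_\e\in\A_g([0,T];q,\mu)$ for all $\e$, so that $\mathcal S[X^v_\e;0,T]$ is well defined and one may differentiate under $\E\int_0^T$. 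Writing $L=L_0\circ(\id_\R\times\varrho_\nabla)$ and using the horizontal/vertical splitting encoded by $d_xL_0$ and $d_{\dot x}L_0$ in \eqref{diff-x}--\eqref{diff-dot-x} (cf. \eqref{eqn-22}), the chain rule along the curve $\e\mapsto(X^v_\e(t),D_\nabla X^v_\e(t))\in TM$ gives
\[
\frac{d}{d\e}\bigg|_{\e=0}L_0(t,X^v_\e(t),D_\nabla X^v_\e(t))=\big\langle d_xL_0, V(t)\big\rangle+\Big\langle d_{\dot x}L_0,\ \frac{D}{d\e}\Big|_{\e=0}D_\nabla X^v_\e(t)\Big\rangle,
\]
where $V(t)=\frac{\pt}{\pt\e}\big|_{\e=0}X^v_\e(t)=\Gamma(X)_0^t v(t)$ by \eqref{variation-diff}, and $d_xL_0,d_{\dot x}L_0$ are evaluated at $(t,X(t),D_\nabla X(t))$.

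Next I would substitute Lemma \ref{variation-aspt}(ii), $\frac{D}{d\e}\big|_{\e=0}D_\nabla X^v_\e(t)=\Gamma(X)_0^t\dot v(t)+\tfrac12 (QX)^{ij}R(V,\pt_i)\pt_j$. Setting $p:=d_{\dot x}L_0(t,X(t),D_\nabla X(t))$, the momentum one-form along $X$, a short computation from the definition \eqref{mean-cov-d} shows $\Gamma(X)_0^t\dot v(t)=\frac{\D V}{dt}(t)$, since $\Gamma(X)_{t+\e}^t V(t+\e)=\Gamma(X)_0^t v(t+\e)$ and $v$ is deterministic. Hence
\[
\frac{d}{d\e}\bigg|_{\e=0}\mathcal S=\E\int_0^T\Big[\langle d_xL_0,V\rangle+\big\langle p,\tfrac{\D V}{dt}\big\rangle+\tfrac12(QX)^{ij}\langle p,R(V,\pt_i)\pt_j\rangle\Big]dt.
\]
The integration-by-parts is the technical core. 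I would parallel-transport back to the fixed space $T_qM$: put $\tilde p(t)=\Gamma(X)_t^0 p(t)$, so that invariance of the pairing under parallel transport (Lemma \ref{parallel-vf-form}) gives $\langle p,\frac{\D V}{dt}\rangle=\langle \tilde p,\dot v\rangle$. Choosing a fixed basis of $T_qM$ and applying Lemma \ref{integration-by-parts} componentwise to the semimartingale $\tilde p$ and the finite-variation curve $v$, together with $v(0)=v(T)=0$, removes the boundary terms and yields $\E\int_0^T\langle\tilde p,\dot v\rangle dt=-\E\int_0^T\langle D\tilde p,v\rangle dt$. A computation parallel to the one identifying $\Gamma(X)_0^t\dot v$ with $\frac{\D V}{dt}$ shows $D\tilde p(t)=\Gamma(X)_t^0\frac{\D p}{dt}(t)$, whence $\langle D\tilde p,v\rangle=\langle\frac{\D p}{dt},V\rangle$ and the middle term becomes $-\E\int_0^T\langle\frac{\D p}{dt},V\rangle dt$.

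Finally I would convert the plain mean covariant derivative into the damped one via the pointwise relation $\frac{\D p}{dt}=\frac{\overline\D p}{dt}+\tfrac12(QX)^{ij}R(p,\pt_j)\pt_i$ (the first equality in Lemma \ref{stoch-mean-derv-prop}(iii), which holds for any one-form along $X$ as it merely reflects the curvature correction between the two displacements), and check that the resulting curvature term cancels exactly against $\tfrac12(QX)^{ij}\langle p,R(V,\pt_i)\pt_j\rangle$. Writing $P=p^\sharp$ and $R(A,B,C,E)=\langle R(A,B)C,E\rangle$, the antisymmetries and pair symmetry of the Riemann tensor give $R(V,\pt_i,\pt_j,P)=R(P,\pt_j,\pt_i,V)$, so the two contributions coincide and cancel once contracted with the symmetric $(QX)^{ij}$. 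This leaves
\[
\frac{d}{d\e}\bigg|_{\e=0}\mathcal S=\E\int_0^T\Big\langle d_xL_0-\tfrac{\overline\D}{dt}\big(d_{\dot x}L_0\big),\,V\Big\rangle dt,
\]
and since $V(t)=\Gamma(X)_0^t v(t)$ ranges over all parallel fields with $v\in\mathcal H_0([0,T];q)$ while $\Gamma(X)_0^t$ is an isomorphism, the fundamental lemma of the calculus of variations forces the integrand coefficient to vanish, giving \eqref{stoch-EL}; the converse direction follows from the same identity. The main obstacle I anticipate is the bookkeeping of this last step: establishing $D\tilde p=\Gamma(X)_t^0\frac{\D p}{dt}$ cleanly and, above all, recognizing that the curvature term produced by the stochastic (Jacobi-field) variation in Lemma \ref{variation-aspt}(ii) is precisely what upgrades $\frac{\D}{dt}$ to the damped derivative $\frac{\overline\D}{dt}$ — this is the conceptual heart of why damping appears in the S-EL equation, and it must be matched against the curvature symmetries with care.
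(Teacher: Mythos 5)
Your proposal is correct and follows essentially the same route as the paper's proof: compute the first variation via Lemma \ref{variation-aspt} and \eqref{eqn-22}, pull the momentum form $d_{\dot x}L_0$ back to $T_qM$ by stochastic parallel transport, integrate by parts with Lemma \ref{integration-by-parts}, and absorb the Jacobi-field curvature term into the damped derivative via Lemma \ref{stoch-mean-derv-prop}(iii) and the pair symmetry of the Riemann tensor. The only cosmetic differences are that you name the pulled-back momentum $\tilde p$ explicitly and identify $\Gamma(X)_0^t\dot v$ with $\frac{\D V}{dt}$, steps the paper performs implicitly inside \eqref{eqn-19}.
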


We remark that since $QX(t) = \check g(X(t))$, the operator $\frac{\overline\D}{dt}$ in \eqref{stoch-EL} is just the one of \eqref{damped-mean-cov}. The unknown in \eqref{stoch-EL} is the process $X$, so the conditions $X(0) = q$ and $\P\circ (X(T))^{-1} = \mu$, indicated in the assumption $X\in \A_g([0,T];q, \mu)$, can be regarded as boundary conditions of \eqref{stoch-EL}.

\begin{proof}
  Denote $V(t) = \Gamma(X)_0^t v(t)$. It follows from \eqref{variation-deriv} and \eqref{eqn-22} that
  \begin{equation}\label{eqn-18}
    \begin{split}
      \frac{d}{d\e}\bigg|_{\e=0} \mathcal S[X^v_\e;0,T] &= \E \int_0^T \frac{d}{d\e}\bigg|_{\e=0} L_0\left(t, X^v_\e(t), D_\nabla X^v_\e(t) \right) dt \\
      &= \E \int_0^T \left[ d_x L_0 \left( \frac{\pt}{\pt\e}\bigg|_{\e=0}X^v_\e(t) \right) + d_{\dot x} L_0 \left( \frac{D}{d\e}\bigg|_{\e=0} D_\nabla X^v_\e(t) \right) \right] dt \\
      &= \E \int_0^T \left[ d_x L_0 \left( V(t)\right) + d_{\dot x} L_0 \left( \Gamma(X)_0^t \dot v(t) \right) + \frac{1}{2} (QX)^{ij}(t) d_{\dot x} L_0 \left( R(V(t),\pt_i) \pt_j \right) \right] dt.
    \end{split}
  \end{equation}
  By Lemmas \ref{parallel-vf-form}.(ii) and \ref{integration-by-parts} and the fact that $v(0) = v(T) = 0$, we have
  \begin{equation}\label{eqn-19}
    \begin{split}
      \E \int_0^T d_{\dot x} L_0 \left( \Gamma(X)_0^t \dot v(t) \right) dt &= \E \int_0^T \Gamma(X)_t^0 (d_{\dot x} L_0) \left( \dot v(t) \right) dt \\
      &= -\E \int_0^T \lim_{\e\to0} \E \left[ \left(\frac{\Gamma(X)_{t+\e}^0 (d_{\dot x} L_0) - \Gamma(X)_t^0 (d_{\dot x} L_0)}{\e}\right) \left( v(t) \right) \Bigg| \Pred_t \right] dt \\
      &= -\E \int_0^T \lim_{\e\to0} \E \left[ \left( \frac{\Gamma(X)_{t+\e}^t (d_{\dot x} L_0) - d_{\dot x} L_0}{\e} \right) \left( \Gamma(X)_0^t v(t) \right) \Bigg| \Pred_t \right] dt \\
      &= -\E \int_0^T \lim_{\e\to0} \E \left[ \frac{\Gamma(X)_{t+\e}^t (d_{\dot x} L_0) - d_{\dot x} L_0}{\e}\Bigg| \Pred_t \right] \left( \Gamma(X)_0^t v(t) \right) dt \\
      &= -\E \int_0^T \frac{\D}{dt} (d_{\dot x} L_0) \left( V(t) \right) dt.
    \end{split}
  \end{equation}
  Thus, by Lemma \ref{stoch-mean-derv-prop}.(iii),
  \begin{equation*}
    \begin{split}
      \frac{d}{d\e}\bigg|_{\e=0} \mathcal S[X^v_\e;0,T]
      &= \E \int_0^T \left[ d_x L_0 \left( V(t)\right) - \frac{\D}{dt} (d_{\dot x} L_0) \left( V(t) \right) + \frac{1}{2} (QX)^{ij}(t) %\left( R(\nabla_{\dot x} L_0,\pt_j) \pt_i \right)^\flat
      R(d_{\dot x} L_0,\pt_j) \pt_i \left( V(t) \right) \right] dt \\
      &= \E \int_0^T \left( d_x L_0 - \frac{\overline\D}{dt} (d_{\dot x} L_0)\right) \left( V(t) \right) dt.
    \end{split}
  \end{equation*}
  The arbitrariness of $v$ yields the desired result.
\end{proof}

\begin{remark}
  (i). For a special class of Lagrangians in the Euclidean context, the stochastic Euler-Lagrange equation \eqref{stoch-EL} has been established in \cite[Subsection 5.1]{CZ91} where they called it stochastic Newton equation, see also \cite{Zam15}. For general Lagrangians on Riemmannian manifolds, equation \eqref{stoch-EL} is new (to the authors' best knowledge). See Section \ref{sec-7-3} for discussions of a special case.

  (ii). The second author and his collaborator formulated a weak stochastic Euler-Lagrange equation in \cite{LZ16}. They mean by ``weak'' that their stochastic Euler-Lagrange equation holds in the sense of stochastic integrals. The main differences between their formulation and ours is that we get rid of the stochastic integral (martingale) part in our equation since we use mean derivatives instead of stochastic differentials.
\end{remark}

\subsection{An inspirational example: Schr\"odinger's problem}\label{sec-7-3}

The inspirational example of stochastic Hamiltonian mechanics presented in Section \ref{exmp-1} also provides an example of our stochastic Lagrangian mechanics. Consider the following Lagrangian defined on $\R\times TM$:
\begin{equation}\label{Lagrangian}
  L_0(t,x,\dot x) = \frac{1}{2} |\dot x-b(t,x)|^2 - F(t,x),
  %L_0(t,x,\dot x) = \frac{1}{2} g_{ij}(\dot x^i-b^i(t,x)) (\dot x^j-b^j(t,x)) - F(t,x),
\end{equation}
where $b$ is a given time-dependent vector field on $M$. It actually relates to the 2nd-order Hamiltonian $H$ in \eqref{exmp-Hamiltonian} via the 2nd-order Legendre transform, which will be considered in Section \ref{sec-7-4}. For such Lagrangian, we can directly figure out the relation between stochastic Euler-Lagrange
equation \eqref{stoch-EL} and Hamilton-Jacobi-Bellman equation. We denote by $I_0^T(M)$ the set all $M$-valued diffusion processes over time interval $[0,T]$.

\begin{theorem}[S-EL \& HJB]\label{SEL-HJB}
  Let $L_0$ be as in \eqref{Lagrangian}.
  %Assume that the reference tensor field $\Q$ is compatible with the connection $\nabla$, i.e., $\nabla \Q = 0$.
  If $X\in I_0^T(M)$ satisfies
  \begin{equation}\label{SEL-HJB-cond}
    D_\nabla X(t) = \nabla S(t,X(t)) + b(t,X(t))
  \end{equation}
  for a function $S:\R\times M\to\R$, then $X$ is a solution of the stochastic Euler-Lagrange equation \eqref{stoch-EL} if and only if $S$ solves the following Hamilton-Jacobi-Bellman equation
  \begin{equation}\label{HJB-2}
    \frac{\pt S}{\pt t} + \langle b, \nabla S\rangle + \frac{1}{2} |\nabla S|^2 + \frac{1}{2} \Delta S + F = f,
    %\frac{\pt S}{\pt t} + b^i\pt_i S + \frac{1}{2} \Q^{ij}\pt_i S\,\pt_j S + \frac{1}{2} \Q^{ij} \nabla^2_{\pt_i,\pt_j} S + F = f(t),
  \end{equation}
  %by letting $\dot x = \nabla S+ b$, %$\dot x^i = \Q^{ij} \frac{\pt S}{\pt x^j}+ b^i$,
  with $f$ a function depending only on $t$.
\end{theorem}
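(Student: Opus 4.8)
The plan is to reduce the stochastic Euler--Lagrange equation \eqref{stoch-EL} to a purely spatial identity between $1$-forms on $M$, and then to recognize that identity as the spatial exterior differential of the HJB functional. Throughout I use that $QX(t)=\check g(X(t))$ along $X$ (the standing hypothesis of the stochastic variational problem, under which the damped mean covariant derivative takes the reduced form \eqref{damped-mean-cov}) together with the hypothesis \eqref{SEL-HJB-cond}, namely $D_\nabla X=\nabla S+b$ along $X$.

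First I would compute the two differentials of the Lagrangian \eqref{Lagrangian}. Since $L_0=\tfrac12 g_{ij}(\dot x^i-b^i)(\dot x^j-b^j)-F$, the vertical differential \eqref{diff-dot-x} is $d_{\dot x}L_0=g_{ij}(\dot x^j-b^j)\,dx^i=(\dot x-b)^\flat$; evaluating along $X$ and using \eqref{SEL-HJB-cond} gives $\dot x-b=\nabla S$, hence $d_{\dot x}L_0=(\nabla S)^\flat=dS$, the spatial differential of $S(t,\cdot)$ restricted on $X$. For the horizontal differential \eqref{diff-x} it is cleanest to work in normal coordinates at the base point, where the first derivatives of $g$ and the Christoffel symbols vanish; this yields the covariant expression $d_xL_0=-\langle \dot x-b,\nabla_\cdot b\rangle-dF$, and along $X$, $d_xL_0=-dS(\nabla_\cdot b)-dF$, where $dS(\nabla_\cdot b)$ denotes the $1$-form $W\mapsto dS(\nabla_W b)$.

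Next I would expand the left-hand side of \eqref{stoch-EL}. Applying \eqref{damped-mean-cov} to $\eta=dS$ and splitting $D_\nabla X=\nabla S+b$ gives
\begin{equation*}
  \frac{\overline\D}{dt}(dS)=\frac{\pt}{\pt t}(dS)+\nabla_{\nabla S}(dS)+\nabla_b(dS)+\tfrac12\Delta_{\mathrm{LD}}(dS).
\end{equation*}
The four terms simplify with standard covariant identities: $\frac{\pt}{\pt t}(dS)=d\frac{\pt S}{\pt t}$; the symmetry of the Hessian $\nabla^2 S$ gives $\nabla_{\nabla S}(dS)=\tfrac12 d|\nabla S|^2$ and, together with the product rule for $\nabla$ acting on the pairing $\langle b,\nabla S\rangle$, $\nabla_b(dS)=d\langle b,\nabla S\rangle-dS(\nabla_\cdot b)$; and since $d$ commutes with the Laplace--de Rham operator and $\Delta_{\mathrm{LD}}S=\Delta S$ on functions, $\tfrac12\Delta_{\mathrm{LD}}(dS)=\tfrac12 d(\Delta S)$. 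Substituting these together with $d_xL_0$ into \eqref{stoch-EL}, the two occurrences of $dS(\nabla_\cdot b)$ cancel, leaving
\begin{equation*}
  d\!\left(\frac{\pt S}{\pt t}+\langle b,\nabla S\rangle+\tfrac12|\nabla S|^2+\tfrac12\Delta S+F\right)=0,
\end{equation*}
which holds if and only if the bracketed function is spatially constant, i.e.\ equals some $f(t)$. This is exactly \eqref{HJB-2}, giving both implications at once.

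The main obstacle is the careful bookkeeping of the drift term $b$: one must verify that the non-gradient part of $b$ produces the very same $1$-form $dS(\nabla_\cdot b)$ in $\nabla_b(dS)$ and in the horizontal differential $d_xL_0$, so that these contributions cancel identically. Getting the covariant form of $d_xL_0$ right---confirming that the metric-derivative contributions drop out in normal coordinates while the $\nabla b$ contribution survives---is the delicate computation; once that cancellation is secured, the reduction to the spatial exterior differential of the HJB functional, and hence the equivalence, is immediate.
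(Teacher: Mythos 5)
Your proposal is correct and follows essentially the same route as the paper: you compute $d_{\dot x}L_0=dS$ and $d_xL_0=-dS(\nabla_\cdot b)-dF$, and then use the Hessian symmetry for the $|\nabla S|^2$ term, the product rule $d\langle b,\nabla S\rangle=\nabla_b dS+dS(\nabla_\cdot b)$, and the identity $d(\Delta S)=\Delta_{\mathrm{LD}}(dS)$ to match $\frac{\overline\D}{dt}(d_{\dot x}L_0)-d_xL_0$ with the spatial differential of the HJB functional. The only difference is cosmetic — the paper differentiates the HJB expression and recognizes the S-EL combination, while you expand the S-EL side and recognize a total differential — and your appeal to $d$ commuting with $\Delta_{\mathrm{LD}}$ is an equivalent (slightly slicker) justification of the Laplacian step that the paper derives by hand from the Ricci identity and the Weitzenb\"ock formula.
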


\begin{proof}
  %We introduce the $\Q$-gradient by $\nabla^\Q S := dS(\Q) = \Q^{ij}\frac{\pt S}{\pt x^j} \vf{x^i}$. When $Q = \check g$, then $\nabla^\Q$ is just the gradient with respect to $g$.
  For a function $g$ on $\R\times M$, we will denote by $dg$ the exterior differential of $g$ on $M$, i.e., with respect to coordinates $(x^i)$. Condition \eqref{SEL-HJB-cond} can be rewritten in local coordinates as
  \begin{equation}\label{x-S}
    \dot x = \nabla S+ b.
  \end{equation}
  Then, it is clear that
  \begin{equation}\label{eqn-10}
    d_{\dot x} L_0 = \frac{\pt L_0}{\pt \dot x^i} d x^i = g_{ij}(\dot x^j - b^j) d x^i = d S.
  \end{equation}
  Since $\nabla g = 0$, we use Leibniz's rule to derive
  \begin{equation}\label{eqn-11}
    \begin{split}
      d_x L_0(\pt_k) &= \frac{1}{2} d[g(\dot x-b,\dot x-b)](\pt_k) - d F(\pt_k) = -g\left( \nabla_{\pt_k} b, \dot x-b \right) - d F(\pt_k) = -dS\left( \nabla_{\pt_k} b \right) - d F(\pt_k).
    \end{split}
  \end{equation}

  Now we take the differential with respect to $x$ to the HJB equation \eqref{HJB-2}. Obviously,
  $$d\frac{\pt S}{\pt t}= \frac{\pt}{\pt t}dS = \frac{\pt}{\pt t} d_{\dot x} L_0.$$
  For the second term,
  \begin{equation*}
    \begin{split}
      d( \langle b, \nabla S\rangle )(\pt_k) &= d[dS(b)](\pt_k) = \left(\nabla_{\pt_k}dS\right)(b) + dS\left(\nabla_{\pt_k}b\right) \\
      &= \nabla^2_{\pt_k,b} S + dS\left(\nabla_{\pt_k}b\right) = \left(\nabla_b dS\right)(\pt_k) + dS\left(\nabla_{\pt_k}b\right).
    \end{split}
  \end{equation*}
  For the third term, we use again $\nabla g = 0$. Then, we have
  \begin{equation*}
    \begin{split}
      \frac{1}{2} d\left( |\nabla S|^2 \right)(\pt_k) &= \frac{1}{2}d[dS\otimes dS(\check g)](\pt_k) =\left( (\nabla_{\pt_k} dS) \otimes dS\right)(\check g) = \left(\nabla_{\pt_k} dS\right)(\nabla S) = \left( \nabla_{\nabla S} dS\right)(\pt_k).
    \end{split}
  \end{equation*}
%  so that
%  $$\frac{1}{2}d\left( \Q^{ij}\pt_i S\,\pt_j S \right) = \nabla_{\nabla^\Q S} dS = \nabla^{}_{D_\nabla X} (d_{\dot x} L).$$
  For the fourth term, in the same way we have
  \begin{equation*}
    \begin{split}
      &\quad\ d(\Delta S)(\pt_k) = d\left(g^{ij} \nabla^2_{\pt_i,\pt_j} S\right)(\pt_k) = d\left( \nabla^2 S(\check g) \right)(\pt_k) = \left(\nabla_{\pt_k} \nabla^2 S\right)(\check g) = g^{ij} \nabla^3_{\pt_k,\pt_i, \pt_j} S \\
      &= g^{ij} \left[ \left( \nabla^3_{\pt_k,\pt_i, \pt_j} S - \nabla^3_{\pt_i,\pt_k, \pt_j} S \right) + \left( \nabla^3_{\pt_i,\pt_k, \pt_j} S - \nabla^3_{\pt_i,\pt_j, \pt_k} S \right) + \nabla^3_{\pt_i,\pt_j, \pt_k} S \right] \\
      &= g^{ij} \left[ \left( \nabla^2_{\pt_k,\pt_i} dS - \nabla^2_{\pt_i,\pt_k} dS \right)(\pt_j) + 0 + \nabla^2_{\pt_i,\pt_j} dS (\pt_k) \right] = g^{ij} \left[ R(\pt_k,\pt_i) d S (\pt_j) + \nabla^2_{\pt_i,\pt_j} dS (\pt_k) \right] \\
      &= g^{ij} \left[ - R(d S, \pt_j)\pt_i(\pt_k) + \nabla^2_{\pt_i,\pt_j} dS (\pt_k) \right] %= g^{ij} \left[ \nabla^2_{\pt_i,\pt_j} dS -\left( R((d S)^\sharp, \pt_j)\pt_i \right)^\flat \right](\pt_k) \\
      = [\Delta dS - \Ric(dS)](\pt_k) = \Delta_{\mathrm{LD}} (dS) (\pt_k).
    \end{split}
  \end{equation*}
%  so that
%  \begin{equation*}
%    d\left(\Q^{ij} \nabla^2_{\pt_i,\pt_j} S\right) = \Q^{ij} \left[ \nabla^2_{\pt_i,\pt_j} d_{\dot x} L -\left( R((d_{\dot x} L)^\sharp, \pt_j)\pt_i \right)^\flat \right].
%  \end{equation*}
  Combining these together and applying \eqref{SEL-HJB-cond}--\eqref{eqn-11} as well as \eqref{damped-mean-cov}, we obtain
  \begin{equation*}
    \begin{split}
      &\ d\left( \frac{\pt S}{\pt t} + \langle b, \nabla S\rangle + \frac{1}{2} |\nabla S|^2 + \frac{1}{2} \Delta S + F \right) (\pt_k) = \left(\frac{\pt}{\pt t} + \nabla_{b+\nabla S} + \frac{1}{2}\Delta_{\mathrm{LD}} \right) (dS)(\pt_k) + dS\left(\nabla_{\pt_k}b\right) + dF(\pt_k) \\
      =&\ \frac{\overline\D}{dt} (dS)(\pt_k) + dS\left(\nabla_{\pt_k}b\right) + dF(\pt_k) = \left[\frac{\overline\D}{dt} (d_{\dot x} L_0) - d_x L_0\right] (\pt_k).
    \end{split}
  \end{equation*}
%  \begin{equation*}
%    0 = \frac{\pt}{\pt t} d_{\dot x} L + \nabla^{}_{D_\nabla X} (d_{\dot x} L) + \frac{1}{2} \Q^{ij} \left[ \nabla^2_{\pt_i,\pt_j} d_{\dot x} L -\left( R((d_{\dot x} L)^\sharp, \pt_j)\pt_i \right)^\flat \right] - d_x L = \frac{\overline\D}{dt} (d_{\dot x} L) - d_x L.
%  \end{equation*}
  The result follows.
\end{proof}

\begin{remark}
  Equation \eqref{eqn-10} gives the relation between Lagrangians and 2nd-order Hamilton's principal functions. It is valid for more general Lagrangians, see Remark \ref{remark-4}.(i).
\end{remark}

%\begin{remark}
%  (i). The HJB equation \eqref{HJB-2} is associated to the following Kolmogorov backward equation
%  \begin{equation*}
%    \frac{\pt h}{\pt t} + b^i\pt_i h + \frac{1}{2} \Q^{ij} \nabla^2_{\pt_i,\pt_j} h + Fh = f(t)h,
%  \end{equation*}
%  by $S=\ln h$.
%\end{remark}

Theorem \ref{SEL-HJB} strongly suggests some relations between stochastic Lagrangian (and also Hamiltonian) mechanics and Schr\"odinger's problem in the reintepretation of optimal transport. In the setting of the latter (see, e.g., \cite{CWZ00,Leo14,Leo14a}), there is a given reversible positive measure $\mathbf R$ on the path space $\C_0^T = C([0,T], M)$, called \emph{reference measure}, as well as two probability distributions $\mu_0,\mu_T\in \Pred(M)$. Schr\"odinger's problem aims to minimize the following relative entropy
  \begin{equation}\label{entropy}
    H(\P|\mathbf R) =
    \begin{cases}
      \int_{\C_0^T} \log\left( \frac{d\P}{d\mathbf R} \right) d\P, & \P \ll \mathbf R, \\
      +\infty, & \text{otherwise},
    \end{cases}
  \end{equation}
over all probability measures $\P$ on $\C_0^T$ such that $\mu_0,\mu_T$ are the initial and final time marginal distributions of $\P$, i.e., $\P_0 = \mu_0$ and $\P_T = \mu_T$, where $\P_t := \P\circ (X(t))^{-1}$ is the time marginal distribution of $\P$ and $X(t) : \C_0^T\to M, X(t,\omega) = \omega(t)$ is the coordinate mapping. Denote, respectively, by $X_\mathbf R$ and $X_\mathbf P$, the coordinate process $X$ under the measure $\mathbf R$ and $\mathbf P$.  Then, Girsanov theorem implies that \cite[Theorem 1]{Leo12} a necessary condition for the \emph{finite entropy condition} $H(\P|\mathbf R) < \infty$ is $QX_\P = QX_\mathbf R$, $\P$-a.s.. Furthermore, if $\mathbf R$ is a diffusion measure, i.e., $X_\mathbf R$ is a diffusion process, then a similar application of  Girsanov theorem yields that a necessary condition for $H(\P|\mathbf R) < \infty$ is that $\P$ is also a diffusion measure and there exists a time-dependent vector field $w$ such that
$$\left( DX_\P(t), QX_\P(t) \right) = \left( DX_\mathbf R(t) + w(t, X(t)), QX_\mathbf R(t) \right),\quad \forall t\in[0,T], \text{ a.s.}.$$
The solution $\P$ of Schr\"odinger's problem, i.e., minimizing \eqref{entropy}, is related to the reference measure $\mathbf R$ by a time-symmetric version of Doob's $h$-transform \cite[Section 3]{Leo14}. Its coordinate process $X_\P$ is sometimes called a \emph{Schr\"odinger bridge} or Schr\"odinger process. When the reference measure $\mathbf R$ is Markovian, i.e., the law of a Markov process, the solution process $X_\P$ is also called a reciprocal \cite{Ber32,Jam75} or Bernstein process \cite{CWZ00,CV15}.

If the manifold $M$ is endowed with a Riemannian metric $g$, and the reference coordinate process $X_\mathbf R$ has generator
  \begin{equation*}
    A^{X_\mathbf R} = \langle b, \nabla \rangle + \ts{\frac{1}{2}} \Delta + F,
  \end{equation*}
for some time-dependent vector field $b$ on $M$, then the density $\mu(t,x) = \frac{d\P^*_t}{d\Vol}(x)$ of the minimizer $\P^*$ of \eqref{entropy} solves the following Kolmogorov forward equation
  \begin{equation}\label{Kol-forw}\left\{
    \begin{aligned}
      &\frac{\pt}{\pt t} \mu(t,x) + \divg \left[\mu (\nabla S + b) \right] - \frac{1}{2} \Delta \mu(t,x) = 0, \quad (t,x)\in(0,1]\times M, \\
      &\mu(0,x) = \mu_0(x), \quad x\in M.
    \end{aligned}\right.
  \end{equation}
where $S$ solves the HJB equation \eqref{HJB-2} with $f \equiv 0$, or \eqref{exmp-HJB}.

Moreover, an analog of Benamou-Brenier formula was derived (see \cite{Leo14}). Consider the problem of minimizing the average action
\begin{equation}\label{Benamou-Brenier}
  \int_0^T \int_{M} \left( \frac{1}{2}|v(t,x) - b(t,x)|^2 - F(t,x) \right) \rho(t,dx) dt
\end{equation}
among all pairs $(\rho,v)$, where is $\rho=(\rho(t))_{t\in[0,T]}$ is a measurable path in $\Pred(M)$, $v=(v(t))_{t\in[0,T]}$ is a measurable time-dependent vector field and the following constraints are satisfied (in the weak sense of PDEs):
\begin{equation}\label{FP-eqn}\left\{
  \begin{aligned}
    &\frac{\pt}{\pt t} \rho + \divg \left(\rho v \right) - \frac{1}{2} \Delta \rho = 0, \\
    &\rho(0) = \mu_0, \ \rho(T) = \mu_T,
  \end{aligned}\right.
\end{equation}
%where the equation is meant in the weak sense, that is, for any function $f \in C^\infty([0,T]\times M)$,
%$$\int_0^T \int_{M} \left( \frac{\pt f}{\pt t} + \langle v, \nabla f\rangle + \frac{1}{2} \Delta f \right)(t,x) \rho(t,dx) dt = 0.$$
The relation between $\rho$ in \eqref{Benamou-Brenier} and $\P$ in \eqref{entropy} is just that $\rho$ is the time marginal of $\P$, namely,
\begin{equation}\label{relation-Sch}
  \rho(t) = \P_t = \P\circ (X(t))^{-1}.
\end{equation}
The minimizer of \eqref{Benamou-Brenier} is the pair $(\mu,\nabla S+b)$ where $\mu$ solves \eqref{Kol-forw} and $S$ solves \eqref{exmp-HJB}.

These results are summarized in the following equivalent relations:
\begin{equation}\label{BB-sum}
  \begin{split}
    &\ \inf\left\{ H(\P|\mathbf R): \P\in \Pred(\C_0^T), \P_0 = \mu_0, \P_T = \mu_T \right\} - H\left( \mu_0|\mathbf R_0 \right) \\
    =&\ \inf\left\{ \int_0^T \int_{M} \left( \frac{1}{2}|v(t,x) - b(t,x)|^2 - F(t,x) \right) \rho(t,dx) dt : (\rho,v) \text{ satisfies \eqref{FP-eqn}} \right\} \\
    =&\ \int_0^T \int_{M} \left( \frac{1}{2}|\nabla S(t,x)|^2 - F(t,x) \right) \mu(t,dx) dt.
  \end{split}
\end{equation}

Now if the coordinate process $X_\mathbf R$ under the reference measure $\mathbf R$ is a nondegenerate $M$-valued diffusion in $I_0^T(M)$ which is diffusion-homogeneous, then assigning such a reference measure $\mathbf R$ amounts to assigning a pair $(b_\mathbf R,g_\mathbf R)\in \Gamma(TM \otimes \Sym^2(T^* M))$, where $g_\mathbf R$ is a positive-definite symmetric $(0,2)$-tensor, i.e., a Riemannian metric tensor. More precisely, we let $A^{X_\mathbf R} = (\mathfrak{b}, a) + F$ be the generator of $X_\mathbf R$. Since $X_\mathbf R$ is nondegenerate and diffusion-homogeneous, $a$ is a time-independent nondegenerate symmetric $(2,0)$-tensor field. Let $g_\mathbf R = \hat a$ be the inverse of $a$, so that $g_\mathbf R$ is a Riemannian metric tensor. We then equip the Riemannian manifold $(M, g_\mathbf R)$ with the associated Levi-Civita connection $\nabla$. The isomorphism \eqref{dcpst-tang} implies that
\begin{equation*}
  A^{X_\mathbf R} = b_\mathbf R^i \pt_i + \ts{\frac{1}{2}} g_\mathbf R^{ij} \nabla^2_{\pt_i,\pt_j} + F = \langle b_\mathbf R, \nabla \rangle + \ts{\frac{1}{2}} \Delta + F,
\end{equation*}
where $b_\mathbf R$ is the time-dependent vector field given by $b_\mathbf R^i = ( \mathfrak b^i + \ts{\frac{1}{2}} g_\mathbf R^{jk} \Gamma^i_{jk} )$, and $\nabla$ and $\Delta$ are the gradient and Laplace-Beltrami operator with respect to $g_\mathbf R$, respectively.

We set that $\P$ is a diffusion measure and $QX_\P = QX_\mathbf R = \check g_\mathbf R$, $\P$-a.s., which is a necessary condition for $H(\P|\mathbf R) < \infty$. Then, by \eqref{2-tangent-equiv}, the generator of $X_\P$ is given by
\begin{equation*}
  (DX_\P(t), QX_\P(t) ) = (D_\nabla X_\P)^i(t) \pt_i|_{X(t)} + \ts{\frac{1}{2}} \Delta|_{X(t)}.
\end{equation*}
From \eqref{FP-eqn} and \eqref{relation-Sch}, one can see that $v(t,X(t)) = D_\nabla X_\P(t)$ and the action \eqref{Benamou-Brenier} equals to
\begin{equation}\label{action-2}
  \E_\P \int_0^T \left( \frac{1}{2}|D_\nabla X(t) - b_\mathbf R(t,X(t))|^2 - F(t,X(t)) \right) dt.
\end{equation}
So the minimizing problem turns into minimizing the action \eqref{action-2} over all diffusion measures $\P\in\Pred(\C_0^T)$ with $\P_0 = \mu_0$, $\P_T = \mu_T$ and $QX_\P = \check g_\mathbf R$, $\P$-a.s.. If $\mu_0 = \delta_{q}$ and $\mu_T = \mu$, this brings us back to our stochastic variational problem, that is, to minimize the action functional $\mathcal S$ in \eqref{action} over $\A_{g_\mathbf R}([0,T];q, \mu)$, with Lagrangian $L_0(t, x,\dot x) = \frac{1}{2} |\dot x-b_\mathbf R(t,x)|^2 - F(t,x)$. Note that in this case, since $\P_0 = \mu_0$ is Dirac, the relative entropy in \eqref{entropy} and $H(\mu_0|\mathbf R_0)$ are always infinite, while their difference $H(\P|\mathbf R) - H(\mu_0|\mathbf R_0)$ can be finite as in \eqref{BB-sum}. Moreover, by Theorem \ref{stoch-Hamilton-prin} and \ref{SEL-HJB}, a necessary condition for $X_\P$ to be the minimizer of $\mathcal S$ is that $X_\P$ satisfies \eqref{SEL-HJB-cond} and \eqref{HJB-2}, which coincides with \eqref{Kol-forw}.

\begin{remark}
  (i). Compared to the Lagrangian \eqref{Lagrangian} used here for addressing Schr\"odinger's problem, there is another type of Lagrangians used in the Euclidean version of quantum mechanics in \cite[Eq.~(5.4)]{CZ91}. The latter has an additional term of divergence of $b$, which helps to express part of the action functional as a Stratonovich integral. The stochastic Euler-Lagrange equation \eqref{stoch-EL} applied to their Lagrangians recovers the equations of motion in \cite[Theorem 5.3]{CZ91}.

  (ii). In the seminal paper \cite{Ott01}, F.~Otto provided a geometric perspective for numerous PDEs by introducing a Riemannian structure in the Wasserstein space. It is known as Otto's calculus. A similar idea can ascend to V.I.~Arnold, who established a geometric framework for hydrodynamics by studying the Riemannian nature of the infinite-dimensional group of diffeomorphisms \cite{AK21}. The recent paper \cite{GLR20} formulated Schr\"odinger's problem via Otto calculus, where the equation of motion is given by an infinite-dimensional Newton equation, cf. \cite{KMM21,Von12} on related matters. All these works can be called a ``geometrization'' of (stochastic) dynamics. In contrast, the present framework can be called a ``stochastization'' of geometric mechanics. The difference and relations between our framework and theirs are similar to those between two ways of producing HJ equations for quantum mechanics mentioned in the introduction. More precisely, while (second-order) HJB equations play a key role in our framework, various HJ equations with density-dependent potential terms were derived by them (see \cite[Corollary 23]{GLR20} and \cite[Proposition 2.4]{KMM21}).
\end{remark}

\subsection{Second-order Legendre transform}\label{sec-7-4}

\subsubsection{From $\mathcal T^{S*} M$ to $\mathcal T^S M$ and back}\label{sec-7-4-1}

Let us fix a linear connection $\nabla$ on $M$. Here, for simplicity, we consider time-independent Hamiltonians and Lagrangians.

We first produce second-order Lagrangians from second-order Hamiltonians. To this end, we first reduce the second-order Hamiltonian to a classical one. Given a time-independent second-order Hamiltonian $H: \mathcal T^{S*} M\to \R$, its $\nabla$-reduction is the classical Hamiltonian $H_0 = H\circ \hat\iota^*_\nabla: T^* M\to \R$, as in \eqref{reduction}. If $H_0$ is hyperregular (see \cite[Section 3.6]{AM78}), then its fiber derivative $\textbf{F}H_0: T^* M\to TM$, which is given in canonical coordinates by $\dot x^i = \frac{\pt H_0}{\pt p_i}$, is a diffeomorphism and defines the classical Legendre transform \cite[Section 3.6]{AM78}:
\begin{equation}\label{classical-Legendre}
  L_0(x, \dot x) = p_i \dot x^i - H_0(x, p) = p_i \dot x^i - H\left(x, p, \hat o \right),
\end{equation}
where $(\hat o_{jk})$ is a family of auxiliary variables introduced in \eqref{o-hat}. Then we lift $L_0$ to an admissible second-order Lagrangian $L: \mathcal T^S M\to\R$ as in Definition \ref{adm-2-Lag}, that is, $L = L_0 \circ \varrho_\nabla$. Combining \eqref{classical-Legendre} with \eqref{lift-Lag}, the relation between $L$ and $H$ is
\begin{equation}\label{2-Legendre}
  L(x, Dx, Qx) = p_i D_\nabla x^i - H\left(x, p, \hat o \right) = p_i D^ix + \textstyle{\frac{1}{2}} \hat o_{jk} Q^{jk} x - H(x, p, \hat o).
\end{equation}
We call \eqref{2-Legendre} the \emph{second-order Legendre transform}. In particular, if we restrict the admissible 2nd-order Lagrangian $L$ to the subbundle of $\mathcal T^S M$ with coordinate constraint $Q^{jk} x = g^{jk}(x)$ for some symmetric $(2,0)$-tensor field $g$ (which is just the condition in \eqref{diff-space}), and let $H$ be $(g,\nabla)$-canonical, then by \eqref{canonical-ext}, we have
\begin{equation}\label{2-Legendre-special}
  L(x, Dx, Qx) = p_i D^ix + \textstyle{\frac{1}{2}} o_{jk} Q^{jk} x - H(x, p, o).
\end{equation}
Consequently, we can find the relation between 2nd-order Hamilton's principal functions and action functionals. %Suppose that $H$ is $(g,\nabla)$-canonical for given symmetric $(2,0)$-tensor field $g$. Let $L$ be the 2nd-order Legendre transform of $H$ and $L_0$ be the reduction of $L$ to $TM$. Then
By \eqref{eqn-25} and \eqref{2-Legendre-special},
\begin{equation*}
  \mathbf D_tS = L(t, x, Dx, Qx) = L_0(t, x, D_\nabla x).
\end{equation*}
%where $\mathbf D_t$ is the total mean derivative along a projective integral process $X$.
One concludes, from Dynkin's formula, that for an $M$-valued diffusion $X\in \A_g([0,T];q, \mu)$,
\begin{equation*}
  \E S(T, X(T)) - S(0, q) = \E \int_0^T L_0\left(t, X(t), D_\nabla X(t) \right) dt = \mathcal S[X;0,T],
\end{equation*}
and
\begin{equation*}
  S(t,x) - S(0, q) = \E_{(t,x)} [S(t, X(t)) - S(0, X(0))] = \E_{(t,x)} \int_0^t L_0\left(s, X(s), D_\nabla X(s) \right) ds,
\end{equation*}
where $\E_{(t,x)}$ is the conditional expectation $\E(\cdot|X(t)=x)$. These mean that the action functional is the expectation of 2nd-order Hamilton's principal function (up to an undetermined constant), while the 2nd-order Hamilton's principal function is the conditional expectation version of action functional.
%Indeed, one can use It\^o's formula to calculate directly the stochastic differential of $S(t,X(t))$ by denoting $\mathcal M$ the martingale part of $X$, as follows,
%\begin{equation*}
%  \begin{split}
%    dS(t, X(t)) &= \frac{\pt S}{\pt t}dt + \frac{\pt S}{\pt x^i} dX^i(t) + \frac{1}{2} \frac{\pt^2 S}{\pt x^j \pt x^k} dX^j(t)\cdot dX^k(t) \\
%    &= \left( \frac{\pt S}{\pt t} + \frac{\pt S}{\pt x^i} (DX)^i(t) + \frac{1}{2} \frac{\pt^2 S}{\pt x^j \pt x^k} (QX)^{jk}(t) \right) dt + \frac{\pt S}{\pt x^i} d\mathcal M^i(t) \\
%    &= L_0(X(t), DX(t), QX(t), t) dt + \frac{\pt S}{\pt x^i} d\mathcal M^i(t).
%  \end{split}
%\end{equation*}

Conversely, let us be given an admissible 2nd-order Lagrangian $L: \mathcal T^S M\to \R$ which is the $\nabla$-lift of a classical Lagrangian $L_0: T M\to \R$. If $L_0$ is hyperregular, then its fiber derivative
\begin{equation}\label{Legendre-tsf}
  \textbf{F}L_0: T M\to T^*M, \quad (x,\dot x)\mapsto (x,d_{\dot x}L_0),
\end{equation}
which is written in coordinates as $p_i = \frac{\pt L_0}{\pt \dot x^i}$, is a diffeomorphism and defines the classical inverse Legendre transform:
\begin{equation}\label{Legendre-tsf-2}
  H_0(x, p) = p_i \dot x^i - L_0(x, \dot x).
\end{equation}
We replace coordinates $(\dot x^i)$ by $(D_\nabla^i x)$, due to \eqref{partial-coord}. Now, given a symmetric $(2,0)$-tensor field $g$, we lift $H_0$ to the $(g,\nabla)$-canonical $\overline H^g_0$ in \eqref{lift-Ham}. The relation between $\overline H^g_0$ and $L$ is
\begin{equation}\label{2-Legendre-inverse}
  \begin{split}
    \overline H^g_0(x,p,o) &= p_i D_\nabla^i x - L_0(x, D_\nabla x) + \textstyle{\frac{1}{2}} g^{jk}(x) \left(o_{jk} - \Gamma^i_{jk}(x)p_i \right) \\
    &= p_i D^i x + \textstyle{\frac{1}{2}} o_{jk} Q^{jk}x - L(x, Dx, Qx) + \textstyle{\frac{1}{2}} \left( g^{jk}(x) - Q^{jk}x \right) o^\nabla_{jk},
  \end{split}
\end{equation}
where $(o^\nabla_{jk})$ is the tensorial conjugate diffusivities defined in \eqref{tensorial-diffusivities}. We call \eqref{2-Legendre-inverse} the $(g,\nabla)$-\emph{canonical inverse 2nd-order Legendre transform}. When $g$ is Riemannian and $\nabla$ is the associated Levi-Civita connection, we call \eqref{2-Legendre-inverse} the $g$-canonical inverse 2nd-order Legendre transform. In particular, when restricting $L$ onto the subbundle of $\mathcal T^S M$ with coordinate constraint $Q^{jk} x = g^{jk}(x)$, we have
\begin{equation}\label{2-Legendre-inverse-2}
  \overline H^g_0(x,p,o) = p_i D^i x + \textstyle{\frac{1}{2}} o_{jk} Q^{jk}x - L(x, Dx, Qx).
\end{equation}

Following the procedure in classical mechanics \cite[Definition 3.5.11]{AM78}, for a given classical Lagrangian $L_0:TM\to \R$, we define a function $A_0:TM\to \R$ by $A_0(v_x) = \mathbf FL_0(v_x)\cdot v_x$, and the \emph{classical energy} $E_0:TM\to \R$ by $E_0 = A_0-L_0$. Notice that in local coordinates, $A_0 = \dot x^i \frac{\pt L_0}{\pt \dot x^i}$ and $E_0 = \dot x^i \frac{\pt L_0}{\pt \dot x^i} - L_0$. %Similar as Definition \ref{adm-2-Lag}, we can lift $A_0$ and $E_0$ to $\mathcal T^S M$, denoted by $A$ and $L$, so that $A(x, Dx, Qx) = A_0(x, D_\nabla x)$ and $E(x, Dx, Qx) = E_0(x, D_\nabla x)$.

\begin{example}
It is easy to check that the $\nabla$-lift of the classical Lagrangian $L_0$ in \eqref{Lagrangian} is the second-order Legendre transform of the second-order Hamiltonian $H$ in \eqref{exmp-Hamiltonian}. And conversely, the latter is the $g$-canonical inverse 2nd-order Legendre transform of the former. The classical energy associated with this Lagrangian is given by
\begin{equation}\label{classical-energy}
  E_0(t,x,\dot x) = \frac{1}{2} |\dot x-b(t,x)|^2 + \langle \dot x-b(t,x), b(t,x) \rangle + F(t,x).
\end{equation}
Each term at RHS corresponds to a kinetic energy, a vector potential energy and a scalar potential energy, respectively.
\end{example}

\subsubsection{Stochastic Hamiltonian mechanics on Riemannian manifolds}\label{sec-7-4-2}

Given a reference metric tensor $g$, i.e., a geodesically complete Riemannian metric as in Section \ref{sec-7-2}, let $\nabla$ be the associated Levi-Civita connection. If a 2nd-order Hamiltonian $H$ is the $g$-canonical lift of a classical Hamiltonian $H_0$, namely, $H = \overline H_0^g$ as in \eqref{lift-Ham}, then the stochastic Hamilton's equations \eqref{stoch-Hamilton-eqns-2} can reduce to a simpler Hamilton-type system on $T^*M$, which is exactly equivalent to the stochastic Euler-Lagrange equation \eqref{stoch-EL} via the classical Legendre transform \eqref{Legendre-tsf} and \eqref{Legendre-tsf-2}.

Similarly to \eqref{diff-dot-x} and \eqref{diff-x}, we introduce, for a smooth function $f$ on $T^* M$, the vertical gradient $\nabla_p f$ and horizontal differential $d_x f$ which are given in local coordinates $(x,p)$ by
\begin{equation*}
  \nabla_p f = \frac{\pt f}{\pt p_i} \vf{x^i}, \quad d_x f = \left( \frac{\pt f}{\pt x^i} + \Gamma_{ij}^k p_k \frac{\pt f}{\pt p_j} \right) dx^i.
\end{equation*}
Both are invariant under change of coordinates. Still by the classical theory, the connection $\nabla$ can uniquely determine a $TT^*M$-valued 1-form on $T^*M$ horizontal over $M$, given by
\begin{equation*}
  \Gamma^* = dx^i \otimes\left( \frac{\pt}{\pt x^i} + \Gamma_{ij}^k p_k \frac{\pt}{\pt p_j} \right).
\end{equation*}
Hence, we have $d_x f = \Gamma^*(df)$. Given a 1-form $\eta$ on $M$, $f\circ \eta: q\mapsto f(\eta_q)$ is a smooth function on $M$. Then, it is easy to verify that
\begin{equation}\label{eqn-21}
  d(f\circ \eta) = d_x f \circ \eta + \nabla_{(\nabla_p f\circ \eta)} \eta.
\end{equation}

\begin{theorem}\label{canonical-reduction}
  Given a smooth function $H_0 : T^*M\times\R \to\R$.

  (i). Let $H = \overline H_0^g: \mathcal T^{S*} M\times\R \to\R$ be the $g$-canonical lift of $H$. Let $\mathbf X$ be the horizontal integral process of stochastic Hamilton's equations \eqref{stoch-Hamilton-eqns-2} corresponding to $H$ and $X=\tau_M^{S*}(\mathbf X)$. Define a $T^*M$-valued horizontal diffusion by $\mathbb X:= \hat\varrho^*(\mathbf X)$. Then $\mathbb X(t) = p(t,X(t))$ solves the following system on $T^*M$,
  \begin{equation}\label{canonical-stoch-Hamilton-eqns}\left\{
    \begin{aligned}
      D_\nabla X(t) &= \nabla_p H_0 (\mathbb X(t),t), \\
      \frac{\overline\D}{dt} p(t,X(t)) &= -d_x H_0 (\mathbb X(t),t),
    \end{aligned}\right.
  \end{equation}
  subject to $QX(t) = \check g(X(t))$, where $\frac{\overline\D}{dt}$ is the damped mean covariant derivative with respect to $X$. In this case, we refer to the system \eqref{canonical-stoch-Hamilton-eqns} as the $g$-canonical reduction of \eqref{stoch-Hamilton-eqns-2}, or global stochastic Hamilton's equations.

  (ii). If $H_0$ is hyperregular, then the global stochastic Hamilton's equations \eqref{canonical-stoch-Hamilton-eqns} are equivalent to the  stochastic Euler-Lagrange equation \eqref{stoch-EL} via the classical Legendre transform $p=d_{\dot x} L_0$ and $H_0(x,p,t) = p\cdot \dot x - L_0(t,x,\dot x)$.

  (iii). Let $S\in C^\infty(M\times\R)$. Then the following statements are equivalent: \\
  (a) for every $M$-valued diffusion $X$ satisfying
  \begin{equation}\label{cond-1}
    D_\nabla X(t) = \nabla_p H_0 (dS(t, X(t) ),t), \quad QX(t) = \check g(X(t)),
  \end{equation}
  the $T^* M$-valued process $dS\circ X$ solves the global stochastic Hamilton's equations \eqref{canonical-stoch-Hamilton-eqns}; \\
  (b) $S$ satisfies the following Hamilton-Jacobi-Bellman equation
  \begin{equation}\label{HJB-4}
    \frac{\pt S}{\pt t} + H_0(d S, t) + \frac{1}{2}\Delta S = f(t),
  \end{equation}
  for some function $f$ depending only on $t$.
\end{theorem}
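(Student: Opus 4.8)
The plan is to prove the three parts in order, with part (iii) resting on part (i) together with Theorem \ref{HJB-int-proc}. Throughout I would read off from the $g$-canonical lift \eqref{lift-Ham} the fiber derivatives of $H=\overline H_0^g$, namely $\frac{\pt H}{\pt o_{jk}}=\frac12 g^{jk}$ and $\frac{\pt H}{\pt p_i}=\frac{\pt H_0}{\pt p_i}-\frac12 g^{jk}\Gamma^i_{jk}$, and keep $H_0$ as a function on $T^*M$ whose horizontal differential $d_x H_0=\big(\frac{\pt H_0}{\pt x^i}+\Gamma^k_{ij}p_k\frac{\pt H_0}{\pt p_j}\big)dx^i$ is the object appearing on the right of \eqref{canonical-stoch-Hamilton-eqns}.

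For part (i), the easy half comes first. The second equation of \eqref{stoch-Hamilton-eqns-2} with $\frac{\pt H}{\pt o_{jk}}=\frac12 g^{jk}$ gives $QX=\check g$ at once; substituting this into the first equation and into the relation \eqref{relation-two-drvtv} between $DX$ and $D_\nabla X$, the Christoffel corrections cancel and yield $D_\nabla X=\nabla_p H_0(\mathbb X,t)$, the first global equation. The hard half is the conversion of the $p$-equation. Writing the third equation of \eqref{stoch-Hamilton-eqns-2} as $\big(\frac{\pt}{\pt t}+(D_\nabla X)^j\pt_j-\frac12 g^{kl}\Gamma^j_{kl}\pt_j+\frac12 g^{jk}\pt_{jk}\big)p_i=-\frac{\pt H}{\pt x^i}$, I would expand $\frac{\overline\D p}{dt}=\frac{\pt p}{\pt t}+\nabla_{D_\nabla X}p+\frac12\Delta p-\frac12\Ric(p)$ from \eqref{damped-mean-cov} (valid since $QX=\check g$) in coordinates. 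The bare terms $\frac{\pt p_i}{\pt t}+(D_\nabla X)^j\pt_j p_i+\frac12 g^{jk}\pt_{jk}p_i$ match directly, and the task is to show that the leftover Christoffel, Christoffel-derivative and curvature contributions reorganize, against the expansion of $\frac{\pt H}{\pt x^i}$ for the lift, into exactly $-d_x H_0(\pt_i)=-\frac{\pt H_0}{\pt x^i}-\Gamma^k_{ij}p_k\frac{\pt H_0}{\pt p_j}$. This curvature/Christoffel bookkeeping—commuting third covariant derivatives at the cost of a curvature term that assembles the Ricci contribution through the Weitzenb\"ock identity $\Delta_{\mathrm{LD}}=\Delta-\Ric$—is the main obstacle; it runs exactly parallel to the computation of $d(\Delta S)$ in the proof of Theorem \ref{SEL-HJB}, which I would reuse.

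For part (ii), hyperregularity makes $\mathbf FH_0$ a diffeomorphism with inverse $\mathbf FL_0$, so $p=d_{\dot x}L_0$ is equivalent to $\dot x=\nabla_p H_0$. A short computation comparing the definitions of the horizontal differential on $T^*M$ and on $TM$ shows that the connection corrections coincide (because $\frac{\pt H_0}{\pt p_j}=\dot x^j$ and $\frac{\pt L_0}{\pt\dot x^k}=p_k$), giving the Legendre identity $d_x H_0=-d_x L_0$ along the graph of the transform. Then the first global equation reads $D_\nabla X=\dot x$, whence $d_{\dot x}L_0(t,X,D_\nabla X)=p(t,X)$, and the second becomes $\frac{\overline\D}{dt}\big(d_{\dot x}L_0\big)=-d_x H_0=d_x L_0$, which is precisely \eqref{stoch-EL}; reversibility of each step uses that $\mathbf FH_0$ is a diffeomorphism.

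For part (iii), I would set $p_i=\pt_i S$, $o_{jk}=\pt_{jk}S$, so that $d^2S\circ X$ is the lift of $dS\circ X$ and $\hat\varrho^*(d^2S\circ X)=dS\circ X$. First I record the algebraic identity $H(d^2S)=\overline H_0^g(x,\pt S,\pt^2 S)=H_0(dS)+\frac12 g^{jk}\nabla^2_{\pt_j,\pt_k}S=H_0(dS)+\frac12\Delta S$, so that the HJB equation \eqref{HJB-3} for $H$ is literally \eqref{HJB-4}. Next, via \eqref{relation-two-drvtv} and $\frac{\pt H}{\pt o}=\frac12\check g$, the hypothesis \eqref{cond-1} is exactly the condition $(DX,QX)=d^2(\tau_M^*)_{d^2S}A_H$ of Theorem \ref{HJB-int-proc}. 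That theorem then gives $(b)\Leftrightarrow d^2S\circ X$ is a horizontal integral process of $A_H$; and part (i)—together with its converse, which holds here because for $\mathbf X=d^2S\circ X$ the conjugate diffusivities are pinned down by $o=\nabla^2 S$ so no information is lost under $\hat\varrho^*$—identifies the latter with the statement that $dS\circ X$ solves \eqref{canonical-stoch-Hamilton-eqns}, i.e.\ $(a)$. Chaining these equivalences yields $(a)\Leftrightarrow(b)$.
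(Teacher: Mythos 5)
Your parts (i) and (ii) are essentially the paper's argument: the paper also reads off $QX=\check g$ and $D_\nabla X=\nabla_pH_0$ from the fiber derivatives of $\overline H_0^g$ together with \eqref{relation-two-drvtv}, and then converts the $p$-equation by expanding the damped mean covariant derivative (the paper organizes this via the product rule of Lemma \ref{stoch-mean-derv-prop}.(iv) applied to $\mathbf D_t[p(\pt_i)]$ rather than a bare coordinate expansion of \eqref{damped-mean-cov}, but it is the same computation), and part (ii) is the identical two-line Legendre calculation. Your part (iii), however, takes a genuinely different route: the paper proves it by a direct one-line computation of $\frac{\overline\D}{dt}(dS)$ using \eqref{eqn-21} and the Weitzenb\"ock identity, obtaining $\frac{\overline\D}{dt}(dS)+d_xH_0\circ dS=d\bigl(\pt_tS+H_0\circ dS+\frac12\Delta S\bigr)$, whereas you chain the already-proved Theorem \ref{HJB-int-proc} (via the identity $\overline H_0^g(d^2S)=H_0(dS)+\frac12\Delta S$) with part (i) and its converse. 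Your route is valid and buys economy — it reuses existing results instead of redoing a covariant-derivative computation — but it does require part (i) to be an \emph{equivalence} between the third equation of \eqref{stoch-Hamilton-eqns-2} and the second global equation (given the other three equations), not merely one implication; fortunately the underlying identity
\begin{equation*}
  (D(p\circ\mathbf X))_i + \frac{\pt H}{\pt x^i} \;=\; \frac{\overline\D p}{dt}(\pt_i) + d_xH_0(\pt_i) + g^{jm}\Gamma^k_{im}\bigl(\pt_jp_k - o_{jk}\bigr)
\end{equation*}
is indeed an identity, so the converse you invoke does hold.

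One point you should make explicit in part (i): the claim that the leftover Christoffel and curvature terms reorganize into exactly $-d_xH_0(\pt_i)$ is \emph{not} an unconditional identity. The term $\frac12\,\pt_ig^{jk}\,o_{jk}$ inside $\frac{\pt H}{\pt x^i}$ produces $-g^{jm}\Gamma^k_{im}o_{jk}$ (using $\nabla g=0$), while the second-order part of the total mean derivative of $p_i$ produces $g^{jm}\Gamma^k_{im}\pt_jp_k$; these cancel only under the second-order Maxwell relation $o_{jk}=\pt_jp_k$, i.e.\ the fourth equation of \eqref{stoch-Hamilton-eqns-2}. Since that equation is part of the assumed system (and is automatic in part (iii), where $o=\pt^2S$ for $p=dS$), this is an omission rather than an error, but without invoking it the bookkeeping you describe does not close.
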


\begin{proof}
  (i). Since $H = \overline H_0^g = H_0 + \textstyle{\frac{1}{2}} g^{jk} (o_{jk} - \Gamma^i_{jk} p_i )$, $(QX)^{jk} = 2\frac{\pt H}{\pt o_{jk}}$ if and only if $QX(t) = \check g(X(t))$. Since,
  \begin{equation*}
    \frac{\pt H}{\pt p_i} = \frac{\pt H_0}{\pt p_i} - \frac{1}{2} g^{jk} \Gamma^i_{jk} = dx^i( \nabla_p H_0 ) - \frac{1}{2} (QX)^{jk} \Gamma^i_{jk},
  \end{equation*}
  we have $(DX)^i = \frac{\pt H}{\pt p_i}$ if and only if $D_\nabla X= \nabla_p H_0$, due to \eqref{relation-two-drvtv}. This proves the first equation of \eqref{canonical-stoch-Hamilton-eqns}. Furthermore,
  \begin{equation*}
    \frac{\pt H}{\pt x^i} = \frac{\pt H_0}{\pt x^i} + \frac{1}{2} \pt_i g^{jk} \left( o_{jk} - \Gamma^l_{jk} p_l \right) - \frac{1}{2} g^{jk} \pt_i \Gamma^l_{jk} p_l = \frac{\pt H_0}{\pt x^i} - g^{jm} \Gamma_{im}^k \left( o_{jk} - \Gamma^l_{jk} p_l \right) - \frac{1}{2} g^{jk} \pt_i \Gamma^l_{jk} p_l.
  \end{equation*}
  On the other hand, by applying Lemma \ref{stoch-mean-derv-prop} (ii) and (iv), and the equation $D_\nabla X= \nabla_p H_0$, we have
  \begin{equation}\label{eqn-23}
    \begin{split}
      (D (p\circ\textbf{X}))_i &= \mathbf D_t p_i = \mathbf D_t [p(\pt_i)] = \frac{\overline\D p}{dt} (\pt_i) + p \left( \frac{\overline\D \pt_i}{dt} \right) + (QX)^{jk} (\nabla_{\pt_j} p) (\nabla_{\pt_k} \pt_i) \\
      &= \frac{\overline\D p}{dt} (\pt_i) + p \left( \nabla^{}_{D_\nabla X}\pt_i + \frac{1}{2} g^{jk} \nabla^2_{\pt_j,\pt_k} \pt_i + \frac{1}{2} g^{jk} R(\pt_i,\pt_j)\pt_k \right) + g^{jk} (\nabla_{\pt_j} p) (\nabla_{\pt_k} \pt_i) \\
      &= \frac{\overline\D p}{dt} (\pt_i) + p_l \left( \frac{\pt H_0}{\pt p_j} \Gamma_{ij}^l + \frac{1}{2}g^{jk} \pt_i \Gamma_{jk}^l \right) + g^{jk} \Gamma_{ik}^m \left( \pt_j p_m - \Gamma_{jm}^l p_l \right).
    \end{split}
  \end{equation}
  Hence,
  \begin{equation*}
    \begin{split}
      (D (p\circ\textbf{X}))_i + \frac{\pt H}{\pt x^i} = \frac{\overline\D p}{dt} (\pt_i) + d_x H_0(\pt_i) + g^{jm} \Gamma_{im}^k \left( \pt_j p_k - o_{jk} \right).
    \end{split}
  \end{equation*}
  The second equation of \eqref{canonical-stoch-Hamilton-eqns} follows from \eqref{non-degenerate}.

  (ii). The equivalence between \eqref{canonical-stoch-Hamilton-eqns} and \eqref{stoch-EL} follows from the following calculations:
  \begin{gather*}
    \nabla_p H_0 = \nabla_p (p\cdot \dot x - L_0) = \dot x, \\
    d_x H_0 = \left( \frac{\pt H_0}{\pt x^i} + \Gamma_{ij}^k p_k \frac{\pt H_0}{\pt p_j} \right) dx^i = \left( -\frac{\pt L_0}{\pt x^i} + \Gamma_{ij}^k \frac{\pt L_0}{\pt \dot x^k} \dot x^j \right) dx^i = - d_x L_0.
  \end{gather*}

  (iii). By \eqref{damped-mean-cov}, conditions \eqref{cond-1} and \eqref{eqn-21},
  \begin{equation*}
    \begin{split}
      \frac{\overline\D}{dt} (dS) &= \left(\frac{\pt}{\pt t} + \nabla^{}_{D_\nabla X} + \frac{1}{2}\Delta_{\mathrm{LD}} \right) (dS) = d\frac{\pt S}{\pt t} + \nabla_{(\nabla_p H_0\circ dS)} dS - \frac{1}{2}( dd^*+d^* d)dS \\
      &= d\frac{\pt S}{\pt t} + d(H_0\circ dS) - d_x H_0 \circ dS - \frac{1}{2} dd^*dS = d\left( \frac{\pt S}{\pt t} + H_0\circ dS + \frac{1}{2} \Delta S \right) - d_x H_0 \circ dS.
    \end{split}
  \end{equation*}
  The result follows.
\end{proof}

\begin{remark}\label{remark-4}
  (i). Assertions (ii) and (iii) of Theorem \ref{canonical-reduction} generalize Theorem \ref{SEL-HJB}, since from the Legendre transform $p=d_{\dot x} L_0$ we observe that the S-EL equation \eqref{stoch-EL} is related to HJB equation \eqref{HJB-4} via equation \eqref{eqn-10}. However, assertion (iii) is a special case of Theorem \ref{HJB-int-proc}, since HJB equation \eqref{HJB-4} is just the one in \eqref{HJB-3} with $H = \overline H_0^g$ the $g$-canonical lift of $H_0$, due to the observation that $\overline H_0^g(d^2S, t) = H_0(d S, t) + \frac{1}{2}\Delta S$.

  (ii). The advantage of Theorem \ref{canonical-reduction} is that it formulates stochastic Hamiltonian mechanics in a global way similar to stochastic Lagrangian mechanics, while its disadvantage is that it depends on the choice of Riemannian structures. However, unlike stochastic Hamiltonian mechanics of Chapter \ref{sec-6}, neither global S-H equations \eqref{canonical-stoch-Hamilton-eqns} nor HJB equation \eqref{HJB-4} encodes any new symplectic or contact structures, as the Hamiltonian functions therein are still classical.

  (iii). By a direct calculation similar to \eqref{eqn-23}, one easily obtains the  following local version of stochastic Euler-Lagrange equation \eqref{stoch-EL}:
  \begin{equation}\label{eqn-34}
    \mathbf D_t \left( \frac{\pt L_0}{\pt \dot x^i} \right) = \frac{\pt L_0}{\pt x^i} + \frac{1}{2} g^{jk} \pt_i \Gamma^l_{jk} \frac{\pt L_0}{\pt \dot x^l} - \frac{1}{2} \pt_i g^{jk} \left( \frac{\pt^2 L_0}{\pt x^j \pt \dot x^k} - \Gamma^l_{jk} \frac{\pt L_0}{\pt \dot x^l} \right).
  \end{equation}
  This local version is related to stochastic Hamilton's equations \eqref{stoch-Hamilton-eqns} via the canonical 2nd-order Legendre transform \eqref{2-Legendre-inverse}.

  (iv). Similarly to Remark \ref{remark-3}, if we let $\tilde H = H - f$, then Theorem \ref{canonical-reduction} holds with $\tilde H$ and zero function in place of $H$ and $f$. %so that the HJB equation \eqref{HJB-2} has the same form as \eqref{exmp-HJB}.
  We will refer to equation \eqref{HJB-4} with $f\equiv 0$ as the HJB equation associated with Hamiltonian $H_0$, or the HJB equation associated with the Lagrangian $L_0$ related to $H_0$ via the Legendre transform (when $H_0$ is hyperregular).
\end{remark}

%\subsubsection{Stochastic canonical transformations on Riemannian manifolds}
%
%Let us continue with the settings of last subsection.

On Riemannian manifolds, canonical transformations of Section \ref{canonical-trans} can also be reduced to tangent bundles. We consider %the case $M=N$ %and $F^0 = \id_\R$,
%so that $\mathbf F$ is
a bundle isomorphism $\mathbf F$ from $\mathcal T^{S*} M\times\R$ to $\mathcal T^{S*} N\times\R$, projecting to a time-change map $F^0:\R\to\R$. %In this case,
The transformation $\mathbf F$ is a map from coordinates $(x^i, p_i, o_{jk},t)$ to $(y^i, P_i, O_{jk}, s)$ satisfying $s=F^0(t)$. Both base manifolds $M$ and $N$ are  equipped with some Riemannian metrics and the corresponding Levi-Civita connections.
%and there is no time-change involved.

By the inverse 2nd-order Legendre transform \eqref{2-Legendre-inverse-2} and the integrability condition \eqref{non-degenerate}, the action functional in \eqref{action} can be rewritten as
\begin{equation*}
  \begin{split}
    \mathcal S [X;0,T] %&= \E \int_0^T L_0\left(t, X(t), D_\nabla X(t) \right) dt \\
    &= \E \int_0^T \left[ p_i(t,X(t)) (DX)^i(t) + \frac{1}{2} \frac{\pt p_j}{\pt x^k}(t,X(t)) (QX)^{jk}(t) - \overline H^g_0(\mathbf X(t),t) \right] dt \\
    &= \E \int_0^T \left[ p_i(t,X(t)) \circ dX^i(t) - \overline H^g_0(\mathbf X(t),t) dt \right],
  \end{split}
\end{equation*}
where $\circ\,d$ denotes the Stratonovich stochastic differential and $\overline H^g_0 = H_0 + \frac{1}{2} g^{jk} (o_{jk} - \Gamma^i_{jk}p_i )$. We denote simply $x^i = x^i\circ \mathbf X$, $p_i = p_i\circ \mathbf X$ and $H= \overline H^g_0$. Then $\mathcal S = \E \int_0^T (p_i\circ dx^i(t) - H dt)$. Now we make a change of coordinates from $(x^i,p_i,t)$ to $(y^i,P_i,s)$ satisfying $s=F^0(t)$, and denote that $y^i = y^i\circ \mathbf X$ and $P_i = P_i\circ \mathbf X$. We have
\begin{equation*}
  \mathcal S = \E \int_0^T \left(P_i\circ dy^i(s) - K ds \right) = \E \int_0^T \left(P_i\circ d(y^i\circ F^0)(t) - K \dot F^0 dt \right),
\end{equation*}
where the function $K$ plays the role of the 2nd-order Hamiltonian in new coordinate system.

As in Section \ref{canonical-trans}, the general condition for a transformation to be canonical is to preserve the form of stochastic Hamilton's system \eqref{canonical-stoch-Hamilton-eqns}. This is equivalent to preserve the form of stochastic stationary-action principle \eqref{stationary}, according to Theorem \ref{canonical-reduction}.(ii). It follows from $\delta \mathcal S = 0$ that
\begin{equation*}
  \delta\,\E \int_0^T \left(p_i\circ dx^i(t) - H dt\right) = \delta\,\E \int_0^T \left(P_i\circ d(y^i\circ F^0)(t) - K \dot F^0 dt\right) =0.
\end{equation*}
Since the underlying process $X$ has zero variation at the endpoints, both equalities will be satisfied if the integrands are related by the following SDE:
\begin{equation}\label{stoch-generating-func}
  p_i\circ dx^i - H dt = P_i\circ dy^i - K \dot F^0 dt + dG,
\end{equation}
where $G$ is a function of phase space coordinates $(x,p,t)$ or $(y,P,s)$ or any mixture of them and called the generating function. Note that in contrast with the classical theory of canonical transformation and also \eqref{formal-transf}, here equation \eqref{stoch-generating-func} for canonical transformations is a stochastic differential equation, instead of equation for forms.

Consider the \emph{type one} generating function $G_1$, that is, $G = G_1(x,y,t)$ is given as a function of the old and new generalized position coordinates (cf. \cite[Section 9.1]{GPS02}). Then using It\^o's formula $dG_1 = \frac{\pt G_1}{\pt t}dt + \frac{\pt G_1}{\pt x^i} \circ dx^i + \frac{\pt G_1}{\pt y^i} \circ dy^i$, and vanishing the coefficients of every (stochastic) differentials $\circ dx$, $\circ dy$ and $dt$ in \eqref{stoch-generating-func}, we get
\begin{equation*}
  p_i = \frac{\pt G_1}{\pt x^i}, \quad P_i = - \frac{\pt G_1}{\pt y^i}, \quad K \dot F^0 - H = \frac{\pt G_1}{\pt t},
\end{equation*}
which recovers \eqref{formal-relation}. By taking $F^0 = \id_\R$ (i.e., no time-change) and requiring the new Hamiltonian $K_0$ to be identically zero, and writing $G_1$ as $S$ the last equation turns into the following HJB equation %\eqref{HJB-4} (with $f\equiv0$).
\begin{equation*}
    \frac{\pt S}{\pt t}(x,y,t) + H_0\left( x^i, \frac{\pt S}{\pt x^i}(x,y,t), t \right) + \frac{1}{2}\Delta_x S(x,y,t) + \frac{1}{2}\Delta_y S(x,y,t) = 0,
\end{equation*}
where $(x,y)$ are regarded as coordinates on the product manifold $M\times N$ equipped with the direct-sum Riemannian metric and its corresponding Levi-Civita connection, $\Delta_x$ and $\Delta_y$ are the Laplacian on $M$ and $N$, respectively, so that $\Delta_x + \Delta_y$ is the Laplacian on $M\times N$ under the aforementioned connection.

In contrast to the mixed-order contact approach to canonical transformations of Section \ref{canonical-trans}, since the changes of coordinates proceed on $T^* M$, one can easily formulate \emph{four types} of generating functions that are related to each other through classical Legendre transforms in the same way as in classical mechanics \cite[Section 9.1]{GPS02}. For example, the \emph{type two} generating function takes the form $G = G_2(x,P,t) - y^i P_i$, for which we have
\begin{equation}\label{canonical-trans-R}
  p_i = \frac{\pt G_2}{\pt x^i}, \quad y^i = \frac{\pt G_2}{\pt P_i}, \quad K \dot F^0 - H = \frac{\pt G_2}{\pt t}.
\end{equation}
In this case, since $(x^i)$ and $(y^i)$ are no longer independent variables, Riemannian structures on $M$ and $N$ should be related by the transformation. In view of this, we only consider point transformations, a subclass of canonical transformations. That is, we assume $G_2$ to be the form
\begin{equation*}
  G_2(x,P,t) = f^i(x,t) P_i + h(x,t)
\end{equation*}
for some diffeomorphisms $f: M\to N$'s and $h: M\to\R$. The second equation of \eqref{canonical-trans-R} implies
\begin{equation*}
  y^i = f^i(x,t).
\end{equation*}
So we equip $N$ with the (time-dependent) pushforward Riemannian metric of $g$ on $M$ by $f$, and with the Levi-Civita connection.

\begin{example}[Canonical transformations for one-dimensional Bernstein's reciprocal processes]
  Consider the scalar case of Example \ref{Brownian-bridge}, that is, the $\R$-valued Brownian reciprocal process with 2nd-order Hamiltonian $H(x,p,o) = H_0(x,p) + \frac{1}{2}o = \frac{1}{2}|p|^2+ \frac{1}{2}o$. %where $g$ is the standard Euclidean metric on $\R$.
  The equations of motion are $DDX=0$, $QX = 1$ (cf. \eqref{EQM-eqn-motion}). %In the following, we will consider two canonical transformations which lead to two different new 2nd-order Hamiltonian functions but with same equation of motion for projective process.
  In the following, we will consider two canonical transformations which transform Brownian reciprocal processes to reciprocal processes derived from diffusions with linear potentials and quadratic potentials, respectively.

  (i). Consider the time-dependent change of coordinates from $(x,p,t)$ to $(y,P,t)$ (without time-change) induced by $G_2(x,P,t)= (x + \ts{\frac{1}{2}} t^2) P -tx$. %which is induced by a point transformation and
  By \eqref{canonical-trans-R},
  \begin{equation}\label{eqn-28}
    y = x+ \ts{\frac{1}{2}} t^2, \quad p = P-t, \quad K = H + Pt-y + \ts{\frac{1}{2}} t^2.
  \end{equation}
  %Since the coordinate-change on base manifold $\R$ is a translation, the Riemannian metric remains.
  For the latent 2nd-order coordinates, we have
  $$O = \frac{\pt P}{\pt y} = \frac{\pt p}{\pt x} = o.$$
  Hence, by the last equation of \eqref{canonical-trans-R}, the new 2nd-order Hamiltonian is
  $$K(y,P,O,t) = K_0(y,P,t)+ \frac{1}{2}O= \frac{1}{2}|P|^2 - y + t^2 + \frac{1}{2}O,$$
  which is still of the form \eqref{exmp-Hamiltonian}, with $b\equiv0$ and $F(t,y) = -y+t^2$. The equations of motion under new coordinates are $DDY=1$ and $QY = 1$. By Remark \ref{remark-3}, $K$ share the same equations of motion with $\tilde K(y,P,O) = \frac{1}{2}|P|^2 - y + \frac{1}{2}O$. In other words, \eqref{eqn-28} transforms Brownian reciprocal processes to reciprocal processes derived from diffusions with linear potentials. %Since $\frac{\pt G}{\pt x}\equiv0$, the type one generating function is degenerate. So in order to reconstruct transformation \eqref{eqn-28}, it is better to use type three generating function, that is, $G= xp+ G_3(y,p,t)$. It follows that $G_3(y,p,t) = -(p+t)(y - \ts{\frac{1}{2}} t^2)$ and then equations \eqref{eqn-28} are related to $G_3$ by $x= -\frac{\pt G_3}{\pt p}$ and $P= -\frac{\pt G_3}{\pt y}$ (cf. \cite[Eq.~(9-20a)]{GPS02}). %One can find in \cite[Theorem 4.1.(1)]{LZ07} an extension of transformation \eqref{eqn-28} to transformations of heat equations from zero potentials to linear potentials.
  This example is taken from \cite[Theorem 4.1.(1)]{LZ07}, where the authors used \eqref{eqn-28} to transform free heat equations to heat equations with linear potentials. We refer readers to \cite{LZ07} for more applications of canonical transformations of contact Hamiltonian systems to Euclidean quantum mechanics in Example \ref{EQM}.

  (ii). Consider the following change of coordinates from $(x,p,t)$ to $(y,P,s)$ (with time-change)
  \begin{equation}\label{eqn-29}
    x = y\sqrt{1-t^2}, \quad P= p\sqrt{1-t^2} + yt, \quad s=\arctanh t.
  \end{equation}
  Clearly, the map $(x,p)\mapsto (y,P)$ is induced by the type three generating function $G_3(y,p,t) = -py\sqrt{1-t^2} -\frac{y^2}{2}t$ via relations $x= -\frac{\pt G_3}{\pt p}$ and $P= -\frac{\pt G_3}{\pt y}$. The relation between the latent coordinates $o$ and $O$ is
  \begin{equation}\label{eqn-30}
    O = \frac{\pt P}{\pt y} = \frac{\pt p}{\pt x} \frac{\pt x}{\pt y} \sqrt{1-t^2} + t = (1-t^2) o +t.
  \end{equation}
  %For the same reason as in (i), we can check that \eqref{eqn-29} is a canonical transformation but its type one generating function is degenerate. So here we still examine the type three  generating function $G_3(y,p,t)$. Using the relations $x= -\frac{\pt G_3}{\pt p}$ and $P= -\frac{\pt G_3}{\pt y}$, we have $G_3(y,p,t) = -py\sqrt{1-t^2} -\frac{y^2}{2}t$.
  The new 2nd-order Hamiltonian $K$ satisfies $K \frac{ds}{dt} - H = \frac{\pt G_3}{\pt t}$. Hence, combining with \eqref{eqn-29} and \eqref{eqn-30}, we obtain
  $$K (y,P,O,s) = (1-t^2) \left( \frac{1}{2} |p|^2 + \frac{py t}{\sqrt{1-t^2}} - \frac{1}{2} |y|^2 + \frac{1}{2} o \right) = \frac{1}{2} |P|^2 - \frac{1}{2} |y|^2 + \frac{1}{2} O - \frac{1}{2} \tanh s.$$
  This differs with the 2nd-order Hamiltonian of Euclidean harmonic oscillators in Example \ref{EQM}.(ii), i.e., $\tilde K (y,P,O) = \frac{1}{2} |P|^2 - \frac{1}{2} |y|^2 + \frac{1}{2} O$, by a term depending only on time. So by virtue of Remark \ref{remark-3}, $K$ and $\tilde K$ share the same equations of motion $DDY=Y$, $QY = 1$. Therefore, \eqref{eqn-29} transforms free reciprocal processes to Euclidean harmonic oscillators.
\end{example}

\begin{example}[Canonical transformations for vanishing potentials]
  Let $(M,g)$ be Riemannian. Take $G_2(x,P,t) = x^i P_i - S(x,t)$ for some function $S$. Then
  \begin{equation*}
    y = x, \quad p_i = P_i - \frac{\pt S}{\pt x^i}, \quad K = H - \frac{\pt S}{\pt t}.
  \end{equation*}
  Since the transformation on base manifold $M$ is identity, it does not change the Riemannian metric, and
  $$o_{ij} = \frac{\pt p_i}{\pt x^j} = \frac{\pt P_i}{\pt y^j} - \frac{\pt^2 S}{\pt x^i\pt x^j}.$$

  (i). We consider the Hamiltonian $H_0(x, p) \equiv b^i(x) p_i - F(x)$, whose corresponding 2nd-order Hamiltonian $H = \overline H^g_0$ has a diffusion with generator $\frac{1}{2} \Delta + b\cdot\nabla - F$ for solution process (see Subsection \ref{sec-6-3-1}). Then, the new Hamiltonian is
  $$K_0(y,P,t) = b^i(y) P_i - \langle b(y), \nabla S(y,t)\rangle - F(y) - \frac{1}{2} \Delta S(y,t) - \frac{\pt S}{\pt t}(y,t).$$
  If we choose $S$ solving the backward PDE \eqref{exmp-HJB-0}, then $K_0(y,P) = b^i(y) P_i$ has a diffusion process with generator $\frac{1}{2} \Delta + \nabla_b$ for solution. In particular, such a canonical transformation can transform a diffusion process with a scalar potential into a free motion.

  (ii). Consider the Hamiltonian $H_0(x,p,t) = \frac{1}{2}g^{ij}(x) p_i p_j + g^{ij}(x)p_i\frac{\pt S}{\pt x^j}(x,t) + b^i(x) p_i - F(x)$, whose corresponding 2nd-order Hamiltonian $H = \overline H^g_0$ has a Schr\"odinger's bridge with vector potential $(b+\nabla S)$ and scalar potential $-F$ for solution process.
  Then, the new Hamiltonian is
  $$K_0(y,P,t) = \frac{1}{2}g^{ij}(y) P_iP_j + b^i(y) P_i - \langle b(y), \nabla S(y,t)\rangle - \frac{1}{2}|\nabla S(y,t)|^2 - F(y) - \frac{1}{2}\Delta S(y,t) - \frac{\pt S}{\pt t}(y,t).$$
  To transform $K_0$ into the standard form $K_0(y,P,t) = \frac{1}{2}g^{ij}(y) P_iP_j + b^i(y) P_i$ whose solution is a Schr\"odinger's bridge with vector potential $b$, we only need to assume that $S$ solves HJB equation \eqref{exmp-HJB}. In particular, such a canonical transformation transforms a Schr\"odinger's bridge with a scalar potential into a free one.
\end{example}

Regarding the classical energy introduced in the end of Subsection \ref{sec-7-4-1}, for a given classical Lagrangian $L_0:\R\times TM\to \R$, we introduce its \emph{generalized (or deformed) energy} $E :\R\times TM\to \R$ by
\begin{equation*}%\label{generalized-energy}
  E(t,x,\dot x) = E_0(t,x,\dot x) + \textstyle{\frac{1}{2}} \Delta S(t,x),
\end{equation*}
where $S$ is the solution of the Hamilton-Jacobi-Bellman equation \eqref{HJB-4} associated with $L_0$ (with $f\equiv0$). The term $\frac{1}{2}\Delta S$ stands for the stochastic deformation.

\subsubsection{Small-noise limits}

In this part, we will see, informally, how our stochastic framework degenerates into classical mechanics as the noise goes to zero. Let $\e>0$ be a small parameter which we refer to as \emph{diffusivity}. The limit when $\e\to 0$ is called the \emph{small-noise limit}.
%Let $\hbar>0$ be a small parameter which stands for the reduced Planck constant. The limit when $\hbar\to 0$ is often called the semiclassical limit.

Let $\A^\e_g([0,T];q, \mu)$ be the small-noise version of the admissible class \eqref{diff-space}, that is, with the constraint $QX(t) = \e \check g(X(t))$. The $\e$-dependent stochastic variational problem is to minimize the action functional $\mathcal S [X;0,T]$ in \eqref{action} among all $X\in\A^\e_g([0,T];q, \mu)$. Then, the same procedure as Section \ref{sec-7-2} yields the following $\e$-dependent stochastic Euler-Lagrange equation,
\begin{equation}\label{stoch-EL-e}
    \frac{\overline\D^\e}{dt} \big( d_{\dot x} L_0\left(t, X_\e(t), D_\nabla X_\e(t) \right) \big) = d_x L_0\left(t, X_\e(t), D_\nabla X_\e(t) \right),
\end{equation}
which is an equivalent condition for $X_\e \in \A^\e_g([0,T];q, \mu)$ to be a stationary point of $\mathcal S$. Here $\frac{\overline\D^\e}{dt}$ is the damped mean covariant derivative with respect to $X_\e$ so that
$$\frac{\overline\D^\e}{dt} = \frac{\pt}{\pt t} + \nabla^{}_{D_\nabla X} + \frac{\e}{2} \Delta_{\mathrm{LD}}.$$
Now as $\e\to 0$, since $QX_\e \to 0$, $X_\e$ tends to some deterministic curve $\gamma=(\gamma(t))_{t\in[0,T]}$ (in a suitable probabilistic sense), and $D_\nabla X_\e(t)$ tends to $\dot \gamma(t)$. Thus, we can write informally
\begin{equation*}
  \A^\e_g([0,T];q, \mu) \to \A^0_g([0,T];q, \mu) := \left\{\gamma \text{ is adapted with paths in } C^2([0,T],M) : \gamma(0) = q, \P\circ(\gamma(T))^{-1} = \mu \right\}.
\end{equation*}
The $\e$-dependent stochastic variational problem tends to the following deterministic variational problem
\begin{equation}\label{variation-determ}
  \min_{\gamma\in\A^0_g([0,T];q, \mu)} \int_0^T L_0\left(t, \gamma(t), \dot\gamma(t) \right) dt.
\end{equation}
And the $\e$-dependent stochastic Euler-Lagrange equation \eqref{stoch-EL-e} tends to
\begin{equation}\label{stoch-EL-0}
    \frac{D}{dt} \big( d_{\dot x} L_0\left(t, \gamma(t), \dot \gamma(t) \right) \big) = d_x L_0\left(t, \gamma(t), \dot \gamma(t) \right),
\end{equation}
where, $\frac{D}{dt} = \frac{\pt}{\pt t} + \nabla_{\dot\gamma}$ is the material derivative along $\gamma$. This is the classical Euler-Lagrange equation in global form, cf. \cite[p. 153]{Vil09}.

We introduce the following $\e$-dependent version of the $g$-canonical lift \eqref{lift-Ham}:
\begin{equation*}
  H_\e(x,p,o,t) := H_0(x,p,t) + \textstyle{\frac{\e}{2}} g^{jk}(x) \left(o_{jk} - \Gamma^i_{jk}(x)p_i \right).
\end{equation*}
Let $\mathbf X_\e$ be a horizontal integral process of stochastic Hamilton's equations \eqref{stoch-Hamilton-eqns} corresponding to $H_\e$ and $X_\e=\tau_M^{S*}(\mathbf X_\e)$. Since $(Q (x\circ\textbf{X}_\e))^{jk}= 2\frac{\pt H_\e}{\pt o_{jk}} = \e\check g \to 0$ as $\e\to0$, $\mathbf X_\e$ converges to a $T^* M$-valued process. And since $\frac{\pt H_\e}{\pt p_i} \to \frac{\pt H_0}{\pt p_i}$ and $\frac{\pt H_\e}{\pt x^i} \to \frac{\pt H_0}{\pt x^i}$, the limit $T^* M$-valued process satisfies classical Hamilton's equations,
\begin{equation}\label{classical-H-E}\left\{
  \begin{aligned}
    \dot x^i(t) &= \ts{\frac{\pt H_0}{\pt p_i}} (x(t), p(t), t), \\
    \dot p_i(t) &= - \ts{\frac{\pt H_0}{\pt x^i}} (x(t), p(t), t).
  \end{aligned}\right.
\end{equation}
Let $\mathbb X_\e:= \hat\varrho^*(\mathbf X_\e)$. Then, $\mathbb X_\e(t) = p(t,X_\e(t))$ solves the system of global stochastic Hamilton's equations \eqref{canonical-stoch-Hamilton-eqns}, with $\mathbb X_\e$, $X_\e$ and $\frac{\overline\D^\e}{dt}$ in place of $\mathbb X$, $X$ and $\frac{\overline\D}{dt}$, respectively, subject to $QX_\e(t) = \e\check g(X_\e(t))$. As $\e$ goes to 0, this system tend to the following deterministic system,
\begin{equation}\label{canonical-stoch-Hamilton-eqns-0}\left\{
    \begin{aligned}
      \dot x(t) &= \nabla_p H_0 (x(t), p(t), t), \\
      \frac{D}{dt} p(t) &= -d_x H_0 (x(t), p(t),t),
    \end{aligned}\right.
  \end{equation}
This is indeed the global form of \eqref{classical-H-E} which is equivalent to the global Euler-Lagrange equation \eqref{stoch-EL-0} via the classical Legendre transform.

The corresponding $\e$-dependent Hamilton-Jacobi-Bellman equation is now
  \begin{equation*}
    \frac{\pt S}{\pt t} + H_\e(d^2 S, t) = \frac{\pt S}{\pt t} + H_0(d S, t) + \frac{\e}{2}\Delta S = f(t),
  \end{equation*}
which, as $\e\to0$, goes to the classical Hamilton-Jacobi equation
  \begin{equation*}
    \frac{\pt S}{\pt t} + H_0(d S, t) = f(t).
  \end{equation*}
The latter corresponds to \eqref{stoch-EL-0}--\eqref{canonical-stoch-Hamilton-eqns-0} via classical Hamilton-Jacobi theory (e.g., \cite[Chapter 5]{AM78}).

We list here some previous works that have independent interests in the above small-noise limits, in some special cases. The time-asymptotic large deviation for Brownian bridges of Example \ref{Brownian-bridge} was studied in \cite{Hsu90}. The second author of the present paper and his collaborator proved in \cite{PYZ16} a large deviation result for one-dimensional Bernstein bridges which are solution processes of Euclidean quantum mechanics in Example \ref{EQM}. The paper \cite{Leo12a} proved that the $\Gamma$-limit of Schr\"odinger's problem in Section \ref{sec-7-3} with small variance is the Monge-Kantorovich problem. The latter is the optimal transport problem associated with the classical variational problem \eqref{variation-determ} \cite[Chapter 7]{Vil09}. See \cite[Section 2.3]{Mik21} for more on small-noise limits of stochastic optimal transport.

\begin{remark}
  There are various terminologies in other areas related to the small-noise limit. In thermodynamics \cite{HZ23}, $\e$ stands for the Boltzmann constant which relates to the diffusion coefficient via Einstein relation, as consistent with Schr\"odinger's original statistical problem \cite{Sch32}; when applied to quantum mechanics as in Example \ref{EQM}, the small-noise limit is called the \emph{semiclassical limit} and the parameter $\e$ stands for the reduced Planck constant $\hbar$; when/if applied to hydrodynamics (cf. \cite{ACC14,CCR23}), it is often called the \emph{vanishing viscosity limit} and $\e$ stands for the kinematic viscosity $\nu$. The latter may be expected to solve Kolmogorov's conjecture that the ``stochastization'' of dynamical systems is related to hydrodynamic PDEs as viscosity vanishes \cite{AK21}. In physics, diffusivity, Planck constant and viscosity are indeed related to each other \cite{TB21}.
\end{remark}

\subsubsection{Relations to stochastic optimal control}\label{sec-7-4-4}

Following the way of converting problems of classical calculus of variations into optimal control problems (see \cite{FS06}), we can regard the stochastic variational problem of Section \ref{sec-7-2} as a stochastic optimal control problem.

Assume that $(M,g)$ is compact (for simplicity). Consider a stochastic control model in which the state evolves according to an $M$-valued diffusion $X$ governed by a system of MDEs on the time interval $[t,T]$, of the form
\begin{equation}\label{opt-cont}\left\{
  \begin{aligned}
    &D_\nabla X(s) = U(s), \\
    &Q X(s) = g(X(s)),
  \end{aligned} \right.
\end{equation}
or equivalently, by an It\^o SDE of the form
\begin{equation*}
  dX^i(s) = \left( U^i(s) - \frac{1}{2} g^{jk}(X(s)) \Gamma_{jk}^i(X(s)) \right) ds + \sigma_r^i(X(s)) dW^r(s), \quad s\in [t,T],
\end{equation*}
where %$b$ is a given time-dependent vector field on $M$,
$\sigma$ is the positive definite square root $(1,1)$-tensor of $g$, i.e., $\sum_{r=1}^d \sigma^i_r\sigma^j_r = g^{ij}$, $W$ is an $\R^d$-valued standard Brownian motion and, most importantly, $U$ is a $TM$-valued process called the control process. There are no control constraints for $U$ as it is admissible in the sense of \cite[Definition 2.1]{FS06}. As endpoint condition, we require that $X(t) = x$. %$X(0) = q$ and $\P\circ(X(T))^{-1} = \mu$.

The control problem on a finite time interval $s\in[t,T]$ is to choose $U$ to minimize
\begin{equation}\label{opt-cont-cost}
  J(t,x;X,U) := \E_{(t,x)} \left[ \int_t^T L_0\left(s, X(s), U(s) \right) ds - S_T(X(T)) \right], %= \E_{(t,x)} \left[ \int_t^T L_0\left(s, X(s), D_\nabla X(s) \right) ds + S_T(X(T)) \right],
\end{equation}
among all pairs $(X,U)$ satisfying the system \eqref{opt-cont} and the endpoint condition, where $S_T$ is a given smooth function on $M$. The real-valued smooth function $L_0$ on $\R\times TM$ is called running cost function and $J$ the payoff functional. The problem is called a \emph{stochastic Bolza problem}. In the case $S_T\equiv0$, this stochastic control problem is of the same form as our stochastic variational problem of Section \ref{sec-7-2}. For this reason, we call the latter stochastic control problem to be in \emph{Lagrange form}. By an  argument similar to Theorem \ref{stoch-Hamilton-prin}, one can derive the same equation as \eqref{stoch-EL}, but with boundary conditions $X(t) = x$ and $d_{\dot x}L_0(T,X(T),D_\nabla X(T)) = dS_T(X(T))$.

The starting point of \emph{dynamic programming} is to regard the infimum of $J$
being minimized as a function $S(t, x)$ of the initial data:
\begin{equation*}
  S(t,x) = -\inf_{(X,U)} J (t,x;X,U).
\end{equation*}
Then, Bellman's principle of dynamic programming \cite[Section III.7]{FS06} states that for $t\le t+\e\le T$,
\begin{equation*}
   0= \inf_{X\in I_0^T(M)} \E_{(t,x)} \left[ \int_t^{t+\e} L_0\left(s, X(s), D_\nabla X(s) \right) ds - S(t+\e,X(t+\e)) + S(t,x) \right].
\end{equation*}
Divide the equation by $\e$, let $\e\to 0^+$, and then use Dynkin's formula. We get the dynamic programming equation
\begin{equation}\label{dynamic-prog}
  0 = \inf \left[ L_0(t,x,D_\nabla x) - (\D_t S)(t,x, Dx, Qx) \right],
\end{equation}
subjected to terminal data $S(T,x) = S_T(x)$. By \eqref{local-rep-total-mean} and \eqref{opt-cont},
\begin{equation*}
  \D_t S = \pt_t S + D^i x \pt_i S + \ts{\frac{1}{2}} Q^{ij} x \pt_i \pt_j S = \pt_t S + \left( D^i_\nabla x - \ts{\frac{1}{2}} \Gamma^i_{jk} g^{jk} \right) \pt_i S + \ts{\frac{1}{2}} g^{ij} \pt_i \pt_j S.
\end{equation*}
We let
\begin{equation*}
  H(x,p,o,t) = \sup \left[\left( D^i_\nabla x - \ts{\frac{1}{2}} \Gamma^i_{jk}(x) g^{jk}(x) \right) p_i + \ts{\frac{1}{2}} g^{ij}(x) o_{ij} - L_0(t,x,D_\nabla x) \right]
\end{equation*}
where the supremum can be ignored if $L_0$ is convex, so that $H = \overline H^g_0$ is exactly the canonical inverse 2nd-order Legendre transform in \eqref{2-Legendre-inverse}. Then, the dynamic programming equation \eqref{dynamic-prog} can be written as the HJB equation \eqref{HJB},  cf. \cite[Section IV.3]{FS06}.

There is also a stochastic version of Pontryagin's maximum principle \cite[Theorem 3.3.2]{Yon99}. The crucial objects in stochastic Pontryagin's principle are first- and second-order adjoint processes, $p$ and $o$, respectively. Corresponding to the stochastic control problem \eqref{opt-cont}--\eqref{opt-cont-cost}, its adjoint processes $p$ and $o$ satisfy the following backward SDEs \cite[Section 3.3.2]{Yon99} (where ``backward'' is again in a different sense from ours in Chapter \ref{sec-2}),
\begin{equation}\label{BSDE-p}\left\{
  \begin{aligned}
    dp_i(t) &= \left[ \frac{1}{2} \left( \pt_i g^{kl} \Gamma_{kl}^j + g^{kl} \pt_i \Gamma_{kl}^j \right) (X(t)) p_j(t) - \sum_{r=1}^d \pt_i \sigma_r^j (X(t)) z_{jr}(t) + \frac{\pt L_0}{\pt x^i} \left(t, X(t), U(t) \right) \right] dt \\
    &\quad + z_{ir}(t) dW^r(t), \\
    p_i(T) &= \pt_i S_T(X(T)),
  \end{aligned}\right.
\end{equation}
and
\begin{equation}\label{BSDE-o}\left\{
  \begin{aligned}
    do_{ij}(t) &= - \Bigg[ \frac{\pt^2 \overline H^g_0}{\pt x^i \pt x^j} (X(t),p(t),o(t),t) \\
    &\qquad\ + o_{ik}(t) \left( o_{jl}(t) \frac{\pt^2 H_0}{\pt p_k \pt p_l} (X(t),p(t),t) %+ \frac{\pt o_{kl}}{\pt x^i} \frac{\pt o_{mn}}{\pt x^j} \frac{\pt^2 H}{\pt o_{kl} \pt o_{mn}}
    + 2\frac{\pt^2 H_0}{\pt x^j \pt p_k} (X(t),p(t),t) \right) \\
    &\qquad\ - o_{ik}(t) \left( \pt_j g^{lm} \Gamma_{lm}^k + g^{lm} \pt_j \Gamma_{lm}^k \right) (X(t)) \\
    &\qquad\ + \sum_{r=1}^d \left( \pt_j \sigma_r^k (X(t)) Z_{ikr}(t) + \pt_j \sigma_r^l (X(t)) Z_{ilr}(t) \right)%+ 2 \frac{\pt p_k}{\pt x^i} \frac{\pt o_{lm}}{\pt x^j} \frac{\pt^2 H}{\pt p_k \pt o_{lm}}
    \Bigg] dt + Z_{ijr}(t) dW^r(t), \\
    o_{ij}(T) &= \pt_i \pt_j S_T(X(T)),
  \end{aligned}\right.
\end{equation}
which are called first- and second-order adjoint equation, respectively. The unknowns in \eqref{BSDE-p} and \eqref{BSDE-o} are the pairs $(p,z)$ and $(o,Z)$, respectively. Suppose that $p_i(t) = p_i(t,X(t))$ and $o_{ij}(t) = o_{ij}(t,X(t))$ for time-dependent 2nd-order form $(p,o)$ that satisfies 2nd-order Maxwell relations \eqref{non-degenerate}. Then
\begin{equation*}
  z_{ir} = \frac{\pt p_i}{\pt x^j} \sigma^j_r, \quad Z_{ijr} = \frac{\pt o_{ij}}{\pt x^k} \sigma^k_r.
\end{equation*}
Plugging them into \eqref{BSDE-p} and \eqref{BSDE-o}, we get
\begin{align}
  D_i p &= \frac{1}{2} \left( \pt_i g^{kl} \Gamma_{kl}^j + g^{kl} \pt_i \Gamma_{kl}^j \right) p_j - \frac{1}{2} \pt_i g^{jk} \frac{\pt p_j}{\pt x^k} + \frac{\pt L_0}{\pt x^i} = -\frac{\pt \overline H^g_0}{\pt x^i}, \label{eqn-33} \\
  D_{ij} o &= - \left( \frac{\pt^2 \overline H^g_0}{\pt x^i \pt x^j} + \frac{\pt p_k}{\pt x^i} \frac{\pt p_l}{\pt x^j} \frac{\pt^2 \overline H^g_0}{\pt p_k \pt p_l} %+ \frac{\pt o_{kl}}{\pt x^i} \frac{\pt o_{mn}}{\pt x^j} \frac{\pt^2 H}{\pt o_{kl} \pt o_{mn}}
  + 2 \frac{\pt p_k}{\pt x^i} \frac{\pt^2 \overline H^g_0}{\pt x^j \pt p_k} + 2 \frac{\pt o_{kl}}{\pt x^i} \frac{\pt^2 \overline H^g_0}{\pt x^j \pt o_{kl}} %+ 2 \frac{\pt p_k}{\pt x^i} \frac{\pt o_{lm}}{\pt x^j} \frac{\pt^2 H}{\pt p_k \pt o_{lm}}
    \right). \notag
\end{align}
These coincide with the corresponding equations in the S-H system \eqref{stoch-Hamilton-eqns} for 2nd-order Hamiltonian $\overline H^g_0$. The first equality of \eqref{eqn-33} also recovers \eqref{eqn-34}.

\subsection{Stochastic variational symmetries}

\begin{definition}
  Given an action functional $\mathcal S$ as in \eqref{action}, a bundle automorphism $F$ on $(\R\times M, \pi, \R)$ projecting to $F^0$ is called a variational symmetry of $\mathcal S$ if, whenever $[t_1,t_2]$ is a subinterval of $[0,T]$, we have $\mathcal S[F\cdot X,F^0(t_1),F^0(t_2)] = \mathcal S[X,t_1,t_2]$. A $\pi$-projectable vector field $V$ on $\R\times M$ is called an infinitesimal variational symmetry of $\mathcal S$, if its flow consists of variational symmetries of $\mathcal S$.
\end{definition}

\begin{lemma}
  The $\pi$-projectable vector field $V$ of the form \eqref{general-vf} is an infinitesimal variational symmetry of $\mathcal S$ if and only if
  \begin{equation*}
    \left[(j^\nabla V)(L_0) + L_0\dot V^0\right] (j^\nabla_t X), \quad t\in[0,T]
  \end{equation*}
  is a martingale, for all $X\in I_0^T(M)$.
\end{lemma}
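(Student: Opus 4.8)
The plan is to compute the derivative $\frac{d}{d\e}\big|_{\e=0}\mathcal S[\psi_\e\cdot X, \psi_\e^0(t_1),\psi_\e^0(t_2)]$ for the flow $\psi=\{\psi_\e\}$ generated by $V$, and to show that its vanishing for all subintervals $[t_1,t_2]\subseteq[0,T]$ is equivalent to the stated martingale condition. First I would fix $X\in I_0^T(M)$ and write $\tilde X_\e = \psi_\e\cdot X$, which by Lemma \ref{push-diff-bh} is the diffusion given by \eqref{F-X}. Since $\mathcal S[\tilde X_\e,\psi_\e^0(t_1),\psi_\e^0(t_2)] = \E\int_{\psi_\e^0(t_1)}^{\psi_\e^0(t_2)} L_0(s,\tilde X_\e(s),D_\nabla\tilde X_\e(s))\,ds$, I would change variables $s=\psi_\e^0(u)$ in the time integral to pull everything back to the fixed interval $[t_1,t_2]$, producing a Jacobian factor $\dot\psi_\e^0(u) = \frac{d}{du}\psi_\e^0(u)$. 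This turns the action into
\begin{equation*}
  \mathcal S[\tilde X_\e,\psi_\e^0(t_1),\psi_\e^0(t_2)] = \E\int_{t_1}^{t_2} \big(L\circ j\psi_\e\big)\big(j_{(u,X(u))}X\big)\,\dot\psi_\e^0(u)\,du,
\end{equation*}
using that $L = L_0\circ(\id_\R\times\varrho_\nabla)$ is an admissible second-order Lagrangian and that $j\psi_\e$ is precisely the stochastic prolongation acting on the jet of $X$, as in \eqref{prog-morph} and Lemma \ref{push-mixed-coord}.

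Next I would differentiate under the expectation and integral sign at $\e=0$. The derivative of the integrand has two contributions by the product rule: the term where $\frac{d}{d\e}$ hits $L\circ j\psi_\e$, which by Definition \ref{prog-vf} of the stochastic prolongation $jV$ yields $(jV)(L)$ evaluated along $j_{(u,X(u))}X$; and the term where $\frac{d}{d\e}$ hits the Jacobian $\dot\psi_\e^0(u)$, which at $\e=0$ gives $L\cdot\dot V^0$ since $\psi_0^0=\id_\R$ and $\frac{d}{d\e}\big|_{\e=0}\dot\psi_\e^0 = \dot V^0$. Because $L$ is the $\nabla$-lift of $L_0$ and $jV$ restricts to the $\nabla$-prolongation $j^\nabla V$ on the tangent bundle (so that $(jV)(L) = (j^\nabla V)(L_0)$ along the $\nabla$-prolongation $j^\nabla X$), I would rewrite the integrand as $\big[(j^\nabla V)(L_0)+L_0\dot V^0\big](j^\nabla_u X)$. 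Thus
\begin{equation*}
  \frac{d}{d\e}\bigg|_{\e=0}\mathcal S[\tilde X_\e,\psi_\e^0(t_1),\psi_\e^0(t_2)] = \E\int_{t_1}^{t_2}\big[(j^\nabla V)(L_0)+L_0\dot V^0\big](j^\nabla_u X)\,du.
\end{equation*}

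Finally I would invoke the arbitrariness of $[t_1,t_2]$. Set $\Phi(u):=\big[(j^\nabla V)(L_0)+L_0\dot V^0\big](j^\nabla_u X)$. The variational-symmetry condition says $\E\int_{t_1}^{t_2}\Phi(u)\,du=0$ for every subinterval. I would phrase this through the $\{\Pred_t\}$-martingale property: $V$ is an infinitesimal variational symmetry precisely when the process $M_\Phi(t):=\Phi(t)-\int_0^t(\partial_s+A^X_s)\Phi(s)\,ds$ built from $\Phi$ corresponds to $\Phi$ itself being a martingale, i.e.\ the drift part of $\Phi$ vanishes. Concretely, $\E\int_{t_1}^{t_2}\Phi\,du=0$ for all intervals forces the conditional-expectation increments of $\Phi$ to vanish, which is exactly the statement that $\Phi$ is a $\{\Pred_t\}$-martingale; conversely a martingale has zero-mean increments, returning the symmetry condition. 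The main obstacle I anticipate is the careful bookkeeping in the second paragraph: justifying the interchange of $\frac{d}{d\e}$ with $\E$ and with the time integral (requiring suitable integrability/regularity of $L_0$ and of the variation $\tilde X_\e$), and confirming that the $\e$-derivative of the prolonged Lagrangian genuinely equals $(jV)(L)$ rather than merely a formal flow derivative — this rests on Definition \ref{prog-vf} together with the coordinate formulae \eqref{local-rep-jF-1}--\eqref{local-rep-jF-2} and Corollary \ref{prog-tang}, and on the commutation of $\frac{d}{d\e}$ with $\D_t$ already exploited in the proof of Theorem \ref{prog-proj-vf}.
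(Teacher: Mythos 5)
Your overall route coincides with the paper's: introduce the flow $\psi_\e$ generated by $V$, push $X$ forward to $\tilde X_\e=\psi_\e\cdot X$, change variables $s=\psi^0_\e(t)$ to pull the action back to $[t_1,t_2]$ with the Jacobian $\tfrac{d\psi^0_\e}{dt}$, and identify the $\e$-derivative of the integrand with $\left[(j^\nabla V)(L_0)+L_0\dot V^0\right](j^\nabla_t X)$ via the prolongation formulae. (The paper works directly with $L_0$ and $j^\nabla\psi_\e$ rather than passing through $L$ and $jV$ and then descending to $j^\nabla V$, but that is cosmetic.)

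The genuine problem is your last step. From the symmetry condition you obtain only the \emph{unconditional} identity $\E\int_{t_1}^{t_2}\Phi(u)\,du=0$ for all subintervals, and you then assert this ``forces the conditional-expectation increments of $\Phi$ to vanish.'' It does not: vanishing unconditional expectations give $\E\,\Phi(u)=0$ for each $u$, which is far weaker than the martingale property $\E[\Phi(t)\mid\Pred_s]=\Phi(s)$ (and, in the other direction, a martingale need not have zero mean, so your claimed equivalence fails both ways as stated). To recover conditional information one must exploit that the defining identity holds for \emph{every} $X\in I_0^T(M)$ --- in particular for diffusions restarted at $t_1$ from arbitrary $\Pred_{t_1}$-measurable data, which via the Markov property upgrades the unconditional statement to a conditional one. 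The paper's proof is organized precisely to sidestep your difficulty: it first asserts, for each \emph{fixed} $\e$, that the difference
$L_0\bigl(\psi^0_\e(t),\bar\psi_\e(t,X(t)),D_\nabla\tilde X_\e(\psi^0_\e(t))\bigr)\tfrac{d\psi^0_\e}{dt}(t)-L_0\bigl(t,X(t),D_\nabla X(t)\bigr)$
is a martingale (this is where the arbitrariness of $X$ is implicitly consumed), and only afterwards differentiates in $\e$ at $\e=0$, using that an $L^1$-limit of martingales is a martingale together with $j^\nabla V=\tfrac{d}{d\e}\big|_{\e=0}j^\nabla\psi_\e$. You should either adopt that order of operations or supply the restarting/conditioning argument explicitly; as written, your step 3 would fail.
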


\begin{proof}
  As in the proof of Theorem \ref{prog-proj-vf}, we let $\psi = \{(\psi^0_\e, \bar \psi_\e)\}_{\e\in\R}$ be the flow generated by $V$, and denote $\tilde X_\e = \psi_\e \cdot X$. Then, by a change of variable $s=\psi^0_\e(t)$,
  \begin{equation*}
    \begin{split}
      \mathcal S[\tilde X_\e, \psi^0_\e(t_1), \psi^0_\e(t_2)] &= \E \int_{\psi^0_\e(t_1)}^{\psi^0_\e(t_2)} L_0\left(s, \tilde X_\e(s), D_\nabla \tilde X_\e(s) \right) ds \\
      &= \E \int_{t_1}^{t_2} L_0\left(\psi^0_\e(t), \bar \psi_\e(t,X(t)), D_\nabla \tilde X_\e(\psi^0_\e(t)) \right) \frac{d\psi^0_\e}{dt}(t) dt.
    \end{split}
  \end{equation*}
  Since for all $[t_1,t_2]\subset [0,T]$ and each $\e$, $\mathcal S[\tilde X_\e, \psi^0_\e(t_1), \psi^0_\e(t_2)] = S[X,t_1,t_2]$, we have that the difference
  \begin{equation*}
    L_0\left(\psi^0_\e(t), \bar \psi_\e(t,X(t)), D_\nabla \tilde X_\e(\psi^0_\e(t)) \right) \frac{d\psi^0_\e}{dt}(t) - L_0\left(t, X(t), D_\nabla X(t) \right).
  \end{equation*}
  is a martingale (depending on $\e$).   Taking derivatives with respect to $\e$ and evaluating at $\e=0$ for the above equality, and recalling that $j^\nabla V = \frac{d}{d\e} \big|_{\e=0}j^\nabla\psi_\e$, we can obtain the desired result.
\end{proof}

\begin{definition}
  Given a smooth function $\Phi:\R\times M\to\R$. A $\pi$-projectable vector field $V$ on $\R\times M$ is called an infinitesimal $\Phi$-divergence symmetry of $\mathcal S$, if
  \begin{equation*}
    \left[(j^\nabla V)(L_0) + L_0\dot V^0\right] (j^\nabla_t X) = \D_t \Phi (j^\nabla_t X),
  \end{equation*}
  for all $X\in I_0^T(M)$ and $t\in[0,T]$.
\end{definition}

Recall that for the $\pi$-projectable vector field $V$ of the form \eqref{general-vf}, we denote $\bar V = V^i \vf{x^i}$, as in Corollary \ref{prog-tang}.

\begin{proposition}\label{div-symm}
  A vector field $V$ of the form \eqref{general-vf} is an infinitesimal $\Phi$-divergence symmetry of $\mathcal S$ if and only if
  \begin{equation*}
    V^0 \pt_t L_0 + d_x L_0(\bar V) + d_{\dot x} L_0 \left( \frac{\overline\D \bar V}{dt} \right) - \dot V^0 E_0 = \D_t \Phi.
  \end{equation*}
\end{proposition}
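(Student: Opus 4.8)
The statement is a criterion for $V$ to be an infinitesimal $\Phi$-divergence symmetry, so the plan is to start from the defining identity
\[
  \left[(j^\nabla V)(L_0) + L_0\dot V^0\right] (j^\nabla_t X) = \D_t \Phi (j^\nabla_t X)
\]
and unwind the left-hand side into geometric quantities using the $\nabla$-prolongation formula of Corollary \ref{prog-tang}. First I would write $V = V^0 + \bar V$ with $\bar V = V^i \pt_i$, and apply $j^\nabla V$ to $L_0$ coordinate-wise. Since $j^\nabla V$ acts on $\R\times TM$ with the coefficients $V^0,V^i,V^i_\nabla$, the term $(j^\nabla V)(L_0)$ splits into a time part $V^0 \pt_t L_0$, a horizontal/base part $V^i \frac{\pt L_0}{\pt x^i}$, and a vertical part $V^i_\nabla \frac{\pt L_0}{\pt \dot x^i}$. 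The main task is to reorganize these three pieces into the invariant objects $d_x L_0(\bar V)$, $d_{\dot x}L_0(\tfrac{\overline\D \bar V}{dt})$, and the energy term $\dot V^0 E_0$.

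\textbf{Key steps.} The crucial substitution is the explicit formula from Corollary \ref{prog-tang},
\[
  V^i_\nabla = \left(\pt_t + \dot x^j \pt_j\right) V^i + \ts{\frac12} Q^{jk}x\left[\nabla^2_{\pt_j,\pt_k}\bar V + R(\bar V,\pt_j)\pt_k\right]^i - \dot V^0 \dot x^i.
\]
Plugging this into the vertical part $V^i_\nabla \frac{\pt L_0}{\pt \dot x^i}$ produces exactly three groups of terms. The group $\left(\pt_t + \dot x^j\pt_j\right)V^i + \tfrac12 Q^{jk}x[\nabla^2_{\pt_j,\pt_k}\bar V + R(\bar V,\pt_j)\pt_k]^i$, paired against $d_{\dot x}L_0$, I would recognize as $d_{\dot x}L_0\big(\tfrac{\overline\D \bar V}{dt}\big)$ by invoking the expression for the damped mean covariant derivative of a vector field restricted on $X$ in Lemma \ref{stoch-mean-derv-prop}.(ii) together with the fact that along $X$ one has $D_\nabla X = \dot x$ in the $\nabla$-canonical coordinates; here one uses $QX(t)=\check g(X(t))$ implicitly or keeps $Q^{jk}x$ general as the lemma allows. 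The last group $-\dot V^0 \dot x^i \frac{\pt L_0}{\pt \dot x^i}$ combines with the leftover base term: recalling $A_0 = \dot x^i\frac{\pt L_0}{\pt \dot x^i}$ and $E_0 = A_0 - L_0$, the term $-\dot V^0 A_0$ together with the $L_0 \dot V^0$ already present on the left assembles into $-\dot V^0 E_0$. Meanwhile the genuinely horizontal base contribution $V^i\frac{\pt L_0}{\pt x^i}$, after correcting by the connection coefficients hidden in the covariant reorganization, becomes the horizontal differential $d_x L_0(\bar V)$ as defined in \eqref{diff-x}; the Christoffel terms $-\Gamma^k_{ij}\dot x^j \frac{\pt L_0}{\pt\dot x^k}$ that appear in $d_x L_0$ are precisely those generated when one passes from the flat $\dot x^j\pt_j V^i$ term to the covariant derivative $\nabla_{D_\nabla X}\bar V$ inside $\tfrac{\overline\D\bar V}{dt}$, so these must be tracked carefully to see they match. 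After these identifications the left side reads $V^0\pt_t L_0 + d_x L_0(\bar V) + d_{\dot x}L_0(\tfrac{\overline\D\bar V}{dt}) - \dot V^0 E_0$, and equating with $\D_t\Phi$ yields the claim.

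\textbf{Main obstacle.} The delicate point will be the bookkeeping of the Christoffel symbols and the curvature term when converting the coordinate-explicit prolongation coefficients into the intrinsic operators $d_xL_0$, $d_{\dot x}L_0$, and $\tfrac{\overline\D\bar V}{dt}$. In particular, the horizontal differential $d_x L_0$ of \eqref{diff-x} carries a $-\Gamma^k_{ij}\dot x^j\pt_{\dot x^k}$ correction, while the damped covariant derivative via Lemma \ref{stoch-mean-derv-prop}.(ii) carries $\nabla_{D_\nabla X}\bar V$ whose connection terms and the Hessian/curvature terms must be reconciled so that no spurious $\Gamma$ or $R$ pieces survive. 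The cleanest route is to test the identity against $d_{\dot x}L_0$ and $d_x L_0$ as invariant 1-forms on $TM$, using \eqref{eqn-22} to relate $d(f\circ V)$ to $d_x f\circ V$ and $d_{\dot x}f\circ V$, rather than brute-force index chasing; this lets the connection-dependent corrections cancel structurally rather than by explicit cancellation. Once this identification is verified, the equivalence is immediate, and the statement follows for arbitrary $X\in I_0^T(M)$ by the pointwise nature of the identity on $\R\times\mathcal T^S M$.
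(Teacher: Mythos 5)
Your proposal is correct and follows essentially the same route as the paper's proof: substitute the $\nabla$-prolongation coefficients from Corollary \ref{prog-tang} into the defining identity $(j^\nabla V)(L_0)+L_0\dot V^0=\D_t\Phi$ and regroup the terms into $d_xL_0(\bar V)$, $d_{\dot x}L_0\bigl(\tfrac{\overline\D\bar V}{dt}\bigr)$ via Lemma \ref{stoch-mean-derv-prop}.(ii), and $-\dot V^0E_0$. The Christoffel and curvature bookkeeping you flag as the main obstacle is exactly what the paper's three-line computation carries out, with $Q^{jk}x=g^{jk}$ turning the Hessian-plus-curvature term into $\tfrac12(\Delta\bar V+\Ric(\bar V))$.
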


\begin{proof}
  It follows from Corollary \ref{prog-tang} and \eqref{diff-dot-x}, \eqref{diff-x} that
  \begin{equation*}
    \begin{split}
      \D_t \Phi &= V^0 \pt_t L_0 + V^i \pt_i L_0 + \left[ \left( \pt_t + \dot x^j \pt_j \right) V^i + \ts{\frac{1}{2}} \left( \Delta \bar V + \Ric(\bar V) \right)^i - \dot V^0 \dot x^i \right] \pt_{\dot x^i} L_0 +\dot V^0 L_0 \\
      &= V^0 \pt_t L_0 + d_x L_0(\bar V) + d_{\dot x} L_0 \left( \left( \pt_t + \nabla_{\dot x} + \ts{\frac{1}{2}} \Delta_{\text{LD}} \right) \bar V \right) - \dot V^0 \left( \dot x^i \pt_{\dot x^i} L_0 - L_0 \right) \\
      &= V^0 \pt_t L_0 + d_x L_0(\bar V) + d_{\dot x} L_0 \left( \frac{\overline\D \bar V}{dt} \right) - \dot V^0 E_0.
    \end{split}
  \end{equation*}
  This concludes the proof.
\end{proof}

\begin{corollary}
  Let $L_0:\R\times TM\to\R$ be a hyperregular Lagrangian.
  %given in \eqref{Lagrangian} and $\mathcal S$ be the associated action functional.
  Let $V$ be a vector field of the form \eqref{general-vf}. Given a smooth function $\Phi:\R\times M\to\R$, define the $\Phi$-extension of $V$ by
  \begin{equation}\label{Phi-ext}
    V_\Phi = V + \Phi\vf{u},
  \end{equation}
  which is a vector field on $\R\times M\times\R$. Suppose that $V$ satisfies
  \begin{equation*}
    \frac{1}{2} \dot V^0 \Delta S = g^{ij} \nabla^2_{\pt_i, \nabla_{\pt_j} \bar V} S,
  \end{equation*}
  for $S$ the solution of the Hamilton-Jacobi-Bellman equation \eqref{HJB-4} associated with $L_0$ (for $f\equiv0$). Then, $V$ is an infinitesimal $\Phi$-divergence symmetry of $\mathcal S$ if and only if $V_\Phi$ is an infinitesimal symmetry of equation \eqref{HJB-4}.
\end{corollary}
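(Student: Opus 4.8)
The plan is to treat \eqref{HJB-4} with $f\equiv0$ as a genuine second-order PDE $E(u) := \pt_t u + H_0(x, d u, t) + \frac12\Delta u = 0$ for the scalar field $u$ on $\R\times M$, and to apply the classical Lie criterion: the point field $V_\Phi = V^0\pt_t + V^i\pt_{x^i} + \Phi\vf{u}$ of \eqref{Phi-ext} is an infinitesimal symmetry of \eqref{HJB-4} exactly when its second prolongation annihilates $E$ on the solution set, $\mathrm{pr}^{(2)}V_\Phi(E)\big|_{E=0}=0$. Since the coefficients $V^0(t)$, $V^i(t,x)$, $\Phi(t,x)$ carry no $u$-dependence, the total derivatives in the prolongation formula collapse, and I would first record $\phi^{(t)} = \pt_t\Phi - u_t\dot V^0 - u_{x^i}\pt_t V^i$, $\phi^{(x^i)} = \pt_i\Phi - u_{x^j}\pt_i V^j$ and $\phi^{(x^ix^j)} = \pt_i\pt_j\Phi - u_{x^k}\pt_i\pt_j V^k - u_{x^kx^j}\pt_i V^k - u_{x^kx^i}\pt_j V^k$. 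Pairing these with $\pt E/\pt u_t = 1$, $\pt E/\pt u_{x^i} = \pt_{p_i}H_0 - \frac12 g^{jk}\Gamma^i_{jk}$ and $\pt E/\pt u_{x^ix^j} = \frac12 g^{ij}$, the terms carrying $\Phi$ recombine, via the coordinate form of the Laplace--Beltrami operator, precisely into the total mean derivative $\D_t\Phi = \pt_t\Phi + \pt_{p_i}H_0\,\pt_i\Phi + \frac12\Delta\Phi$ of \eqref{local-rep-total-mean-2} evaluated along the extremal (where $D_\nabla X = \nabla_p H_0$ and $QX = \check g$).

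Second, I would translate the divergence-symmetry condition of Proposition \ref{div-symm} into the Hamiltonian variables by invoking the Legendre dictionary of Section \ref{sec-7-4}: along the extremal diffusion attached to $S$ one has $p = dS$, $d_{\dot x}L_0 = dS$ by \eqref{eqn-10}, $E_0 = p_i\dot x^i - L_0 = H_0(x,dS,t)$, and the classical identities $\pt_t L_0 = -\pt_t H_0$ and $d_x L_0 = -d_x H_0$ used in the proof of Theorem \ref{canonical-reduction}(ii). With these substitutions the condition of Proposition \ref{div-symm} becomes $\D_t\Phi = -V^0\pt_t H_0 - d_x H_0(\bar V) + dS\!\left(\frac{\overline\D\bar V}{dt}\right) - \dot V^0 H_0$. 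The goal is then to match this against the rearranged Lie criterion $\D_t\Phi = -\,[\text{non-}\Phi\ \text{terms of}\ \mathrm{pr}^{(2)}V_\Phi(E)]$.

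Third, I would carry out the comparison term by term. The explicit pieces $-V^0\pt_t H_0$ and $-\dot V^0 H_0$ match immediately; the latter arises because substituting the equation $u_t = -H_0 - \frac12\Delta S$ into $-u_t\dot V^0$ produces $\dot V^0 H_0 + \frac12\dot V^0\Delta S$, and $H_0=E_0$. The first-order-in-$S$ data combine with $-d_x H_0(\bar V)$ and the $\pt_t\bar V$, $\nabla_{D_\nabla X}\bar V$ parts of $dS(\frac{\overline\D\bar V}{dt})$, the Christoffel contributions cancelling by torsion-freeness. What is then left over on the two sides are exactly the residual second-derivative terms: from the Lie side, $\frac12\dot V^0\Delta S$ (from the equation substitution) together with the Hessian contractions $g^{ij}S_{x^kx^j}\pt_i V^k$ and $\frac12 g^{ij}S_{x^k}\pt_i\pt_j V^k$ of $\phi^{(x^ix^j)}$; from the divergence side, the connection-Laplacian and curvature pieces $\frac12 dS(\Delta\bar V) + \frac12 dS(\Ric(\bar V))$. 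Reorganizing the bare partials covariantly, exactly as in the Ricci-identity computation of Theorem \ref{SEL-HJB} (where $\Delta_{\mathrm{LD}}=\Delta-\Ric$) and using $\nabla g=0$, collapses this difference to precisely $\frac12\dot V^0\Delta S - g^{ij}\nabla^2_{\pt_i,\nabla_{\pt_j}\bar V}S$, which vanishes under the standing hypothesis. Hence the Lie criterion holds if and only if the condition of Proposition \ref{div-symm} holds, which is the asserted equivalence.

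The main obstacle is the covariant bookkeeping of the second-order terms: the prolongation is generated in flat partial derivatives, whereas Proposition \ref{div-symm} is phrased through $\nabla^2$, $\Delta$, $\Ric$ and the damped covariant derivative $\frac{\overline\D}{dt}$, so one must convert the $g$-contracted third derivatives of $S$ into curvature terms and verify that the Ricci contribution hidden in $\frac{\overline\D\bar V}{dt}$ cancels the one produced by commuting covariant derivatives --- the same delicate cancellation already met in Theorem \ref{SEL-HJB}. A secondary point to treat with care is the dictionary itself: the matching is a pointwise identity on $\R\times M$ only once $\dot x = \nabla_p H_0(dS)$ is fixed, so the equivalence must be read along the extremal diffusion determined by the given solution $S$, which is exactly where the hypothesis $\frac12\dot V^0\Delta S = g^{ij}\nabla^2_{\pt_i,\nabla_{\pt_j}\bar V}S$ is anchored.
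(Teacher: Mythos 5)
Your proposal is correct and follows essentially the same route as the paper: the classical Lie prolongation criterion for the scalar PDE \eqref{HJB-4} with the standard coefficients $V_t, V_i, V_{ij}$, the Legendre dictionary $p=dS$, $d_{\dot x}L_0=dS$, $E_0=H_0$, $\pt_t H_0=-\pt_t L_0$, $d_x H_0 = -d_x L_0$ to translate into the terms of Proposition \ref{div-symm}, and the covariant reorganization of the $g$-contracted second and third derivatives of $S$ that isolates exactly the residual $\frac{1}{2}\dot V^0\Delta S - g^{ij}\nabla^2_{\pt_i,\nabla_{\pt_j}\bar V}S$, which vanishes by hypothesis. This matches the paper's computation \eqref{criterion-HJB-symm} step for step.
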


\begin{proof}
  By the classical jet bundle theory, we know that $V$ is an infinitesimal symmetry of Hamilton-Jacobi-Bellman equation \eqref{HJB-4} if and only if \cite[Theorem 2.31]{Olv98}
  \begin{equation}\label{symm-HJB}
    j^{1,2} V \left( u_t + %b^i(t,x) u_i + \ts{\frac{1}{2}} g^{ij}(x) u_i u_j +
    H_0(x, (u_i), t) + \ts{\frac{1}{2}} g^{ij}(x) u_{ij} - \ts{\frac{1}{2}} g^{ij}(x) \Gamma_{ij}^k(x) u_k %+ F(t,x)
    \right) = 0,
  \end{equation}
  where
  \begin{equation*}
    j^{1,2} V = V^0 \vf t + V^i \vf{x^i} + \Phi \vf{u} + V_t \frac{\pt}{\pt u_t} + V_i \frac{\pt}{\pt u_i} + V_{ij} \frac{\pt}{\pt u_{ij}},
  \end{equation*}
  with coefficients given by \cite[Theorem 2.36 or Example 2.38]{Olv98}
  \begin{equation*}
    V_t = \frac{\pt \Phi}{\pt t} - \dot V^0 u_t - \frac{\pt V^i}{\pt t} u_i, \quad V_i = \frac{\pt \Phi}{\pt x^i} - \frac{\pt V^j}{\pt x^i} u_j, \quad V_{ij} = \frac{\pt^2 \Phi}{\pt x^i\pt x^j} - \frac{\pt^2 V^k}{\pt x^i\pt x^j} u_k - \frac{\pt V^k}{\pt x^i} u_{jk} - \frac{\pt V^k}{\pt x^j} u_{ik}.
  \end{equation*}
  Moreover, the jet coordinates $(u_t, u_i, u_{ij})$ satisfy
  $$(u_t, u_i, u_{ij}) = (\pt_t S, \pt_i S, \pt_{ij} S) = (-E_0- \ts{\frac{1}{2}}\Delta S, \pt_{\dot x^i} L_0, \pt_{ij} S),$$
  where we recall $dS = d_{\dot x}L_0$ from equation \eqref{eqn-10} and Remark \ref{remark-4}, and also that $\pt_t S = - H_0(dS,t) - \frac{1}{2} \Delta S = - E_0- \frac{1}{2} \Delta S$.
%  and
%  \begin{equation*}
%    u_t + H_0(x,(u_i),t)%b^i u_i + \ts{\frac{1}{2}} g^{ij} u_i u_j
%    + \ts{\frac{1}{2}} g^{ij} u_{ij} - \ts{\frac{1}{2}} g^{ij} \Gamma_{ij}^k u_k %+ F
%    = 0.
%  \end{equation*}
  Plugging these into \eqref{symm-HJB} and using the fact that $\pt_t H_0 = - \pt_t L_0$ and $\pt_{x^i} H_0 = - \pt_{x^i} L_0$ due to classical Legendre transform, we have
  \begin{equation}\label{criterion-HJB-symm}
    \begin{split}
      0=&\ V^0 \pt_t H_0 %\left( \pt_t b^i u_i + \pt_t F \right)
      + V^i \left( \pt_{x^i} H_0 %\pt_i b^j u_j + \ts{\frac{1}{2}} \pt_i g^{jk} u_j u_k
      + \ts{\frac{1}{2}} \pt_i g^{jk} u_{jk} - \ts{\frac{1}{2}} \pt_i g^{jk} \Gamma_{jk}^l u_l - \ts{\frac{1}{2}} g^{jk} \pt_i \Gamma_{jk}^l u_l %+ \pt_i F
      \right) + \left( \pt_t \Phi - \dot V^0 u_t - \pt_t V^i u_i \right) \\
      &\ + \left( \pt_i \Phi - \pt_i V^l u_l \right) \left( \pt_{p_i} H_0 %b^i + g^{ij} u_j
      - \ts{\frac{1}{2}} g^{jk} \Gamma_{jk}^i\right) + \ts{\frac{1}{2}} g^{ij} \left( \pt_i \pt_j \Phi - \pt_i \pt_j V^k u_k - \pt_i V^k u_{jk} - \pt_j V^k u_{ik} \right) \\
      =&\ V^0 \pt_t H_0 %\left( \pt_t b^i u_i + \pt_t F \right)
      + V^i \pt_{x^i} H_0 %\left( \pt_i b^j u_j + \ts{\frac{1}{2}} \pt_i g^{jk} u_j u_k + \pt_i F \right)
      - \left( \pt_t + \pt_{p_i} H_0 %\left( b^j + g^{jk} u_j \right)
      \pt_j \right) V^i u_i - \ts{\frac{1}{2}} g^{ij} \left( \pt_i \pt_j V^k - \Gamma_{ij}^l \pt_l V^k + 2 \Gamma_{il}^k \pt_j V^l + \pt_l \Gamma_{ij}^k V^l \right) u_k \\
      &\ - \dot V^0 u_t - g^{ij} \left( \pt_j V^k + \Gamma_{jm}^k V^m \right) \left( u_{ik} - \Gamma_{ik}^l u_l \right) + \left[ \pt_t \Phi + \left( %b^i + g^{ij} u_j
      \pt_{p_i} H_0 - \ts{\frac{1}{2}} g^{jk} \Gamma_{jk}^i\right) \pt_i \Phi + \ts{\frac{1}{2}} g^{ij} \pt_i \pt_j \Phi \right] \\
      =&\ - V^0 \pt_t L_0 - V^i \pt_{x^i} L_0 - \left( \pt_t + \dot x^j \pt_j \right) V^i \pt_{\dot x^i} L_0 - \ts{\frac{1}{2}} \left[ \Delta \bar V + \Ric(\bar V) \right]^k \pt_{\dot x^k} L_0 \\
      &\ + \dot V^0 \left(E_0 + \ts{\frac{1}{2}} \Delta S\right) - g^{ij} \nabla^2_{\pt_i, \nabla_{\pt_j} \bar V} S + \left( \pt_t \Phi + \dot x^i \pt_i \Phi + \ts{\frac{1}{2}} \Delta \Phi \right) \\
      =&\ - \left[ V^0 \pt_t L_0 + d_x L_0(\bar V) + d_{\dot x} L_0 \left( \frac{\overline\D \bar V}{dt} \right) - \dot V^0 E_0 \right] + \left( \ts{\frac{1}{2}} \dot V^0 \Delta S - g^{ij} \nabla^2_{\pt_i, \nabla_{\pt_j} \bar V} S \right) + \D_t \Phi,
    \end{split}
  \end{equation}
  where, in the last equality, we used the fact that $(QX)^{ij}(t) = g^{ij}(X(t))$ to derive $\D_t \Phi$. The result then follows from Proposition \ref{div-symm}.
\end{proof}

\begin{theorem}[Stochastic Noether's theorem]
  Let $L_0:\R\times TM\to\R$ be a hyperregular Lagrangian. %given in \eqref{Lagrangian}.
  Suppose that the vector field $V_\Phi$ in \eqref{Phi-ext} is an infinitesimal symmetry of the Hamilton-Jacobi-Bellman equation \eqref{HJB-4} associated with $L_0$ (with $f\equiv0$). Then the following stochastic conservation law holds for the stochastic Euler-Lagrange equation \eqref{stoch-EL},
  \begin{equation}\label{cond-noether}
    \D_t \left[ V^i \pt_{\dot x^i} L_0 - V^0 E - \Phi \right] = 0.
  \end{equation}
\end{theorem}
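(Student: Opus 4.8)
The plan is to combine the variational characterization of the symmetry with the equation of motion, exactly as in the classical Noether argument but with mean derivatives and the damped mean covariant derivative replacing ordinary time derivatives. The hypothesis that $V_\Phi$ is an infinitesimal symmetry of the HJB equation \eqref{HJB-4} gives, through the preceding corollary, that $V$ is an infinitesimal $\Phi$-divergence symmetry of $\mathcal S$; so by Proposition \ref{div-symm} I have the pointwise identity
\begin{equation*}
  V^0 \pt_t L_0 + d_x L_0(\bar V) + d_{\dot x} L_0 \left( \frac{\overline\D \bar V}{dt} \right) - \dot V^0 E_0 = \D_t \Phi,
\end{equation*}
evaluated along any diffusion, and in particular along a solution $X$ of the stochastic Euler-Lagrange equation \eqref{stoch-EL}. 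This identity will be the starting point; the goal is to massage its left-hand side into $\D_t[V^i\pt_{\dot x^i}L_0 - V^0 E] $ so that the whole equation collapses to \eqref{cond-noether}.

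First I would expand the conserved quantity using the Leibniz-type rule for the total mean derivative together with Lemma \ref{stoch-mean-derv-prop}.(iv), which governs how $\D_t$ distributes over the pairing of a $1$-form and a vector field restricted on $X$. Writing $\eta = d_{\dot x}L_0$ (a $1$-form restricted on $X$) and pairing it against $\bar V$, the term $\D_t[V^i \pt_{\dot x^i}L_0] = \D_t[\eta(\bar V)]$ splits, by that lemma, into $\eta(\tfrac{\overline\D \bar V}{dt}) + \tfrac{\overline\D \eta}{dt}(\bar V) + (QX)^{ij}(\nabla_{\pt_i}\eta)(\nabla_{\pt_j}\bar V)$. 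The first summand is precisely the $d_{\dot x}L_0(\tfrac{\overline\D \bar V}{dt})$ appearing in the divergence-symmetry identity. The second summand, $\tfrac{\overline\D \eta}{dt}(\bar V)$, is exactly $\tfrac{\overline\D}{dt}(d_{\dot x}L_0)(\bar V)$, which on a solution of \eqref{stoch-EL} equals $d_x L_0(\bar V)$; this is where the equation of motion enters, and it matches the $d_x L_0(\bar V)$ term in the identity. The remaining step is to account for $\D_t[V^0 E]$, where $E = E_0 + \tfrac12\Delta S$ is the generalized energy, and to reconcile the energy derivative with the $V^0\pt_t L_0$ and $-\dot V^0 E_0$ terms.

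The energy bookkeeping is where I expect the real work to be, and it is the main obstacle. I would compute $\D_t E_0$ along the solution using $E_0 = \dot x^i \pt_{\dot x^i}L_0 - L_0$ (with $\dot x = D_\nabla X$), the local form of the stochastic Euler-Lagrange equation \eqref{eqn-34}, and the coordinate expression \eqref{local-rep-total-mean-2} for $\D_t$; the classical part should reproduce $\dot V^0 E_0 - V^0\pt_t L_0$ after differentiating the product $V^0 E_0$, while the extra curvature and metric-dependent terms arising from \eqref{eqn-34} and from the $\tfrac12\Delta S$ correction must cancel against the cross term $(QX)^{ij}(\nabla_{\pt_i}\eta)(\nabla_{\pt_j}\bar V)$ together with the constraint $(QX)^{ij} = g^{ij}$ and the hypothesis relating $\dot V^0 \Delta S$ to $g^{ij}\nabla^2_{\pt_i,\nabla_{\pt_j}\bar V}S$ that was imposed in the corollary. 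The delicate point is that the term $\tfrac12\Delta S$ in the generalized energy is what upgrades the \emph{classical} Noether conservation into the correct \emph{stochastic} one, and verifying that its total mean derivative supplies exactly the missing second-order contributions (using $dS = d_{\dot x}L_0$ from \eqref{eqn-10} and the HJB relation $\pt_t S = -E_0 - \tfrac12\Delta S$) is the calculation that must be done carefully.

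Assembling these pieces, I would add $\D_t[V^i\pt_{\dot x^i}L_0]$ and $-\D_t[V^0 E]$ and show that their sum equals the left-hand side of the divergence-symmetry identity, namely $\D_t\Phi$; subtracting $\D_t\Phi$ from both sides then yields $\D_t[V^i\pt_{\dot x^i}L_0 - V^0 E - \Phi] = 0$, which is \eqref{cond-noether}. The conclusion is a stochastic conservation law in the sense that the quantity inside the bracket has vanishing total mean derivative along solutions, so that, by Dynkin's formula, its expectation is constant in time and the bracketed process is a martingale — the stochastic analogue of a conserved Noether charge.
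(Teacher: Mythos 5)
There is a genuine gap, and it sits in your very first step. You deduce from the hypothesis that $V_\Phi$ is an infinitesimal symmetry of the HJB equation \eqref{HJB-4} that $V$ is an infinitesimal $\Phi$-divergence symmetry of $\mathcal S$, ``through the preceding corollary,'' and then take Proposition \ref{div-symm}'s identity as your starting point. But that corollary's equivalence between the two notions of symmetry holds only under the additional assumption $\frac{1}{2}\dot V^0 \Delta S = g^{ij}\nabla^2_{\pt_i,\nabla_{\pt_j}\bar V} S$, which the theorem does not make. Without it, the HJB-symmetry hypothesis gives you only the raw criterion \eqref{criterion-HJB-symm}, which differs from the divergence-symmetry identity precisely by the term $\frac{1}{2}\dot V^0 \Delta S - g^{ij}\nabla^2_{\pt_i,\nabla_{\pt_j}\bar V} S$. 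Your own plan then confirms the circularity: when you come to the ``energy bookkeeping'' you say the leftover curvature and metric terms must cancel ``together with \dots the hypothesis relating $\dot V^0\Delta S$ to $g^{ij}\nabla^2_{\pt_i,\nabla_{\pt_j}\bar V}S$ that was imposed in the corollary'' --- i.e.\ you need that unassumed condition a second time. If you carry out your computation honestly, you end up with $\D_t[V^i\pt_{\dot x^i}L_0 - V^0E - \Phi] = g^{ij}\nabla^2_{\pt_i,\nabla_{\pt_j}\bar V}S - \frac{1}{2}\dot V^0\Delta S$, which vanishes only under that extra condition.

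The fix is to bypass Proposition \ref{div-symm} entirely and work directly with \eqref{criterion-HJB-symm}: the two ``extra'' terms are not an obstruction to be cancelled by an additional hypothesis but are exactly the terms that the stochastic calculus produces on its own. Expanding $\D_t[d_{\dot x}L_0(\bar V)]$ via Lemma \ref{stoch-mean-derv-prop}.(iv) and the S-EL equation \eqref{stoch-EL} yields the cross term $(QX)^{ij}(\nabla_{\pt_i}(d_{\dot x}L_0))(\nabla_{\pt_j}\bar V) = g^{ij}\nabla^2_{\pt_i,\nabla_{\pt_j}\bar V}S$ (using $dS=d_{\dot x}L_0$), while $-\dot V^0 E = -\dot V^0(E_0+\frac{1}{2}\Delta S)$ supplies the $-\frac{1}{2}\dot V^0\Delta S$; these match term-for-term the combination $\frac{1}{2}\dot V^0\Delta S - g^{ij}\nabla^2_{\pt_i,\nabla_{\pt_j}\bar V}S$ appearing in \eqref{criterion-HJB-symm}, and the whole expression collapses to zero with no further assumption. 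Separately, your plan to compute $\D_t E_0$ from the local Euler--Lagrange equation \eqref{eqn-34} is unnecessarily heavy: since $E = -\pt_t S$, one has $\D_t E = -\pt_t \D_t S = -\pt_t(dS\cdot\dot x + (\pt_t + \frac{1}{2}\Delta)S) = -\pt_t(d_{\dot x}L_0\cdot\dot x - H_0) = -\pt_t L_0$ directly from the HJB equation and the Legendre transform, which is the clean identity you actually need.
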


\begin{proof}
  Recall that $dS = d_{\dot x}L_0$ and $\pt_t S = - E_0- \frac{1}{2} \Delta S = -E$. By applying Lemma \ref{stoch-mean-derv-prop}.(iv) and \eqref{stoch-EL}, as well as the fact that $(QX)^{ij}(t) = g^{ij}(X(t))$, we have
  \begin{equation*}
    \begin{split}
      \D_t \left[ d_{\dot x} L_0(\bar V) \right] &= d_{\dot x}L_0 \left( \frac{\overline\D \bar V}{dt} \right) + \frac{\overline\D (d_{\dot x}L_0)}{dt} (\bar V) + (QX)^{ij} (\nabla_{\pt_i} (d_{\dot x} L_0) ) (\nabla_{\pt_j} \bar V) \\
      &= d_{\dot x}L_0 \left( \frac{\overline\D \bar V}{dt} \right) + d_x L_0 (\bar V) + g^{ij} \nabla^2_{\pt_i,\nabla_{\pt_j} \bar V} S.
    \end{split}
  \end{equation*}
%  Recall that $\dot x = \nabla S+ b$, as in \eqref{x-S}. It follows from \eqref{classical-energy} and \eqref{generalized-energy} that
%  \begin{equation*}
%    E = E_0 + \frac{1}{2} \Delta S = \langle b, \nabla S\rangle + \frac{1}{2} |\nabla S|^2 + \frac{1}{2} \Delta S + F = - \frac{\pt S}{\pt t}.
%  \end{equation*}
  Then, we use HJB equation \eqref{HJB-4} (with $f\equiv0$) and the classical Legendre transform $H_0 = d_{\dot x}L_0\cdot \dot x - L_0$ to derive
  \begin{equation*}
    \begin{split}
      \D_t E &= - \D_t \pt_t S = - \pt_t \left( \pt_t + \nabla_{\dot x} + \ts{\frac{1}{2}} \Delta \right) S = - \pt_t \left[ dS \cdot \dot x + \left( \pt_t + \ts{\frac{1}{2}} \Delta \right) S \right] \\
%      &= - \pt_t \left( \pt_t S + |\nabla S|^2 + \langle b, \nabla S \rangle + \ts{\frac{1}{2}} \Delta S \right) = - \pt_t \left( \ts{\frac{1}{2}} |\nabla S|^2 -F \right) \\
      & = - \pt_t \left( d_{\dot x}L_0\cdot \dot x - H_0 \right) = - \pt_t L_0.
    \end{split}
  \end{equation*}
  Combining these with the S-EL equation \eqref{stoch-EL} and the criterion \eqref{criterion-HJB-symm} for symmetries of the HJB equation \eqref{HJB-2}, we have
  \begin{equation*}
    \begin{split}
      \D_t \left[ V^i \pt_{\dot x^i} L_0 - V^0 E - \Phi \right] &= \D_t \left[ d_{\dot x} L_0(\bar V) \right] - \dot V^0 E - V^0 \D_t E - \D_t \Phi \\
      &= d_{\dot x}L_0 \left( \frac{\overline\D \bar V}{dt} \right) + d_x L_0 (\bar V) + g^{ij} \nabla^2_{\pt_i,\nabla_{\pt_j} \bar V} S - \dot V^0 \left(E_0 + \ts{\frac{1}{2}} \Delta S\right) + V^0 \pt_t L_0 - \D_t \Phi \\
      &= 0.
    \end{split}
  \end{equation*}
  The result follows.
\end{proof}

\begin{remark}
  (i). In stochastic Hamiltonian formalism, \eqref{cond-noether} reads as $\D_t \left[ V^i p_i - V^0 H - \Phi \right] = 0$.

  (ii). The stochastic conservation law \eqref{stoch-cons-energy} of a time-independent $g$-canonical 2nd-order Hamiltonian $H = \overline H_0^g$ can be regarded as a special case of the above stochastic Noether's theorem. Indeed, consider the infinitesimal unit time translation $V = \frac{\pt}{\pt t}$, i.e., $V^0 = 1$, $\bar V = 0$, $\Phi =0$. Then, the criterion \eqref{criterion-HJB-symm} reduces to $0 = \pt_t L_0 = -\pt_t H_0$, which means that $H = \overline H_0^g$ is time-independent. The resulting stochastic conservation law is $\D_t E = \D_t H =0$.
\end{remark}

Applying stochastic Noether's theorem to Schr\"odinger's problem of Section \ref{sec-7-3}, we have the following corollary. Its Euclidean case with zero vector potential (i.e., $b\equiv0$) has already been formulated in \cite{TZ97}.

\begin{corollary}[Stochastic Noether's theorem for Schr\"odinger's problem]
  Let $L_0$ be the Lagrangian given in \eqref{Lagrangian}. Suppose that the vector field $V_\Phi$ in \eqref{Phi-ext} is an infinitesimal symmetry of Hamilton-Jacobi-Bellman equation \eqref{HJB-2} with $f\equiv0$. Then the following stochastic conservation law holds for the coordinate process of the solution of Schr\"odinger's problem in \eqref{Benamou-Brenier},
  \begin{equation*}
    \D_t \left[ g_{ij} \left( D_\nabla^j x - b^j \right) V^i - V^0 \left( E_0 + \ts{\frac{1}{2}} \Delta S \right) - \Phi \right] = 0,
  \end{equation*}
  where $E_0$ is the classical energy given in \eqref{classical-energy} and $S$ is the solution of \eqref{HJB-2}.
\end{corollary}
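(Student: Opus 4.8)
The plan is to recognize this corollary as a direct specialization of the Stochastic Noether's theorem just established, applied to the concrete Lagrangian \eqref{Lagrangian}, once its hypotheses are checked for the Schr\"odinger solution process. First I would verify that $L_0(t,x,\dot x) = \frac{1}{2}|\dot x - b|^2 - F$ is hyperregular: its fiber derivative is $p_i = \pt_{\dot x^i} L_0 = g_{ij}(\dot x^j - b^j)$, which is a bundle diffeomorphism because $g$ is nondegenerate, with inverse $\dot x^j = g^{jk}p_k + b^j$. Computing the classical inverse Legendre transform \eqref{Legendre-tsf-2} then identifies the associated classical Hamiltonian: substituting the inverse relation yields $H_0(x,p,t) = \frac{1}{2}|p|^2 + \langle b, p\rangle + F$. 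Plugging this into the HJB equation \eqref{HJB-4} associated with $L_0$ produces precisely \eqref{HJB-2}, so the assumed symmetry of $V_\Phi$ for \eqref{HJB-2} is exactly the symmetry hypothesis required by the Stochastic Noether's theorem (for \eqref{HJB-4}).

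Next I would confirm that the coordinate process $X$ of the Schr\"odinger-problem minimizer genuinely solves the stochastic Euler--Lagrange equation \eqref{stoch-EL}, since the Noether conservation law is derived along solutions of \eqref{stoch-EL} rather than for arbitrary diffusions. This identification was already carried out at the end of Section \ref{sec-7-3}: the Benamou--Brenier action \eqref{Benamou-Brenier} coincides with the action functional $\mathcal S$ in \eqref{action} for this $L_0$, and by Theorems \ref{stoch-Hamilton-prin} and \ref{SEL-HJB} the minimizer satisfies both the drift condition \eqref{SEL-HJB-cond}, namely $D_\nabla X = \nabla S + b$, and the HJB equation \eqref{HJB-2}. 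Specializing to the boundary data $\mu_0 = \delta_q$, $\mu_T = \mu$ places $X$ in the admissible class $\A_g([0,T];q,\mu)$ of \eqref{diff-space}, so $X$ is a stationary point of $\mathcal S$ and hence solves \eqref{stoch-EL}.

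Finally I would substitute the explicit data into the conserved quantity $\D_t[V^i \pt_{\dot x^i} L_0 - V^0 E - \Phi]$ furnished by the Stochastic Noether's theorem. Evaluating $\pt_{\dot x^i} L_0 = g_{ij}(\dot x^j - b^j)$ along $X$, where $\dot x^j = D_\nabla^j x$, turns the first term into $g_{ij}(D_\nabla^j x - b^j)V^i$; the generalized energy is $E = E_0 + \frac{1}{2}\Delta S$ with $E_0$ given by \eqref{classical-energy}; and the stated conservation law then follows verbatim. The only point requiring real care, and the main obstacle, is the third-paragraph identification that the Schr\"odinger minimizer is a bona fide critical point over $\A_g$ with the correct boundary data; everything else is a routine translation of the general theorem through the Legendre transform.
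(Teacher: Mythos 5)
Your proposal is correct and follows exactly the route the paper intends: the corollary is stated as an immediate application of the stochastic Noether's theorem, and your three steps (identifying $H_0=\frac{1}{2}|p|^2+\langle b,p\rangle+F$ via the Legendre transform so that \eqref{HJB-4} becomes \eqref{HJB-2}, invoking the Section \ref{sec-7-3} identification of the Schr\"odinger minimizer as a stationary point of $\mathcal S$ satisfying \eqref{stoch-EL}, and substituting $\pt_{\dot x^i}L_0=g_{ij}(D_\nabla^j x-b^j)$ and $E=E_0+\frac{1}{2}\Delta S$ into the general conserved quantity) are precisely the specializations required. The paper gives no separate proof, so there is nothing to compare beyond noting your argument supplies the details it leaves implicit.
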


\begin{appendices}

\section{Mixed-order tangent and cotangent bundles}\label{app-1}

\subsection{Mixed-order tangent and cotangent maps}\label{sec-A-1}

Clearly, the mixed-order tangent bundle $T \R \times \mathcal T^S M$ is a subbundle of the totally second-order tangent bundle $\mathcal T^S (\R\times M)$, and contains the tangent bundle $T (\R\times M) \cong T \R \times T M$ as a subbundle. Similar properties hold for the mixed-order cotangent bundle. %Moreover, the mixed-order tangent bundle and the mixed-order cotangent bundle are dual to each other.

It is easy to verify that the mixed-order tangent bundle can be characterized as follows:
\begin{equation*}
  T \R \times \mathcal T^S M = \{A\in \mathcal T^S (\R\times M): \pi^S_*(A) \in T \R \}.
\end{equation*}
We also define the stochastic analog of the vertical bundle as
\begin{equation*}
  V^S\pi = \{ A\in T \R \times \mathcal T^S M: \pi^S_*(A) = 0 \}.
\end{equation*}
Then, it is easy to see that $V^S\pi \cong \R\times \mathcal T^S M$.

Given a smooth map $F: \R\times M \to \R\times N$, we can define its second-order pushforward $F_*^S$ as in Definitions \ref{push-pull-map-point} and \ref{push-pull-map}, so that $F_*^S$ is a bundle homomorphism from $\tau^S_{\R\times M}$ to $\tau^S_{\R\times N}$. In general, $F_*^S$ neither maps the mixed-order tangent bundle to the mixed-order tangent bundle, nor maps the vertical bundle to the vertical bundle. But if $F$ is projectable, then it does.

\begin{lemma}\label{bd-morph-mixed-vertical}
  Let $M$ and $N$ be two smooth manifolds and $M$ be connected. Let $F: \R\times M\to \R\times N$ be a smooth map. Then the following statements are equivalent: \\
  (i) $F$ is a bundle homomorphism from $(\R\times M, \pi, \R)$ to $(\R\times N, \rho, \R)$; \\
  (ii) $F^S_* (T \R \times \mathcal T^S M ) \subset T \R \times \mathcal T^S N$; \\
  (iii) $F^S_* (V^S \pi ) \subset V^S \rho$.
%  (iv) for each $p\in M$, $F^*$ maps $T^* \R \times \mathcal T^{S*} N|_{F(t,p)}$ to $T^* \R \times \mathcal T^{S*} M|_{(t,p)}$.
\end{lemma}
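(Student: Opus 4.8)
The plan is to prove the chain of equivalences (i)$\Rightarrow$(ii)$\Rightarrow$(iii)$\Rightarrow$(i) by working entirely in adapted local coordinates and exploiting the coordinate transformation rule \eqref{2-tangent-tsfm} for second-order operators together with the explicit coordinate formula for $d^2 F$ from Lemma \ref{push-pull-map-prop}. Write $F = (F^0, \bar F)$ only \emph{provisionally}; the whole point of (ii) and (iii) is that we do \emph{not} yet know $F$ is projectable, so I would start from a general smooth map $F:\R\times M\to\R\times N$ with components $(s,y)\circ F = (F^0(t,x), \bar F(t,x))$, where both $F^0$ and $\bar F$ may depend on the space variable $x$. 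The computation of $F^S_*$ on a mixed-order tangent vector $A = A^0\pt_t + A^i\pt_i + A^{ij}\pt_i\pt_j$ (with no pure-time second-order part, reflecting that $A\in T\R\times\mathcal T^S M$) via $d^2 F_{(t,q)}(A) g = A(g\circ F)$ is then the engine of the proof.

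First I would carry out (i)$\Rightarrow$(ii). If $F$ is a bundle homomorphism, then $F^0 = F^0(t)$ depends on $t$ alone, so $\pt_i F^0 = 0$ and $\pt_i\pt_j F^0 = 0$. By Lemma \ref{push-pull-map-prop}, the $\pt_s$-component of $d^2 F(A)$ is $A F^0 = A^0 \dot F^0(t)$, and the coefficient of the pure-time second-order operator $\pt_s\pt_s$ is $\Gamma_A(F^0, F^0) = A^{ij}\pt_i F^0 \pt_j F^0 = 0$, while the mixed coefficient $\Gamma_A(F^0, \bar F^k) = A^{ij}\pt_i F^0\,\pt_j\bar F^k$ also vanishes. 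Hence the image has no $\pt_s\pt_s$ part and no $\pt_s\pt_{y^k}$ part, so $\pi^S_*(F^S_*(A)) = A^0\dot F^0 \pt_s \in T\R$, which is exactly the membership characterization $F^S_*(A)\in T\R\times\mathcal T^S N$. The implication (ii)$\Rightarrow$(iii) is immediate and essentially formal: if $A\in V^S\pi$ then $\pi^S_*(A)=0$; since $\rho\circ F = F^0\circ\pi$ is not yet available, I would instead argue that $V^S\pi\subset T\R\times\mathcal T^S M$, apply (ii) to land in $T\R\times\mathcal T^S N$, and then show $\rho^S_*\circ F^S_* = (\rho\circ F)^S_*$ annihilates $A$ using functoriality of second-order pushforwards (Lemma \ref{prop-push-pull}(i)) — so $\rho^S_*(F^S_* A) = (\rho\circ F)^S_*(A) = A(\,\cdot\circ(\rho\circ F))$, which vanishes on functions of the single variable because $A$ is vertical.

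The genuinely substantive step, and the main obstacle, is the converse (iii)$\Rightarrow$(i): deducing that $F$ must be projectable from the purely infinitesimal condition $F^S_*(V^S\pi)\subset V^S\rho$. Here is where connectedness of $M$ enters. Take the vertical second-order vectors $\pt_i|_{(t,q)}$ and $\pt_i\pt_j|_{(t,q)}$, which span $V^S\pi$ at each point. Computing $F^S_*(\pt_i\pt_j)$ by Lemma \ref{push-pull-map-prop}, its $\pt_s$-component equals $\pt_i\pt_j F^0$ and its mixed-time second-order coefficient $\Gamma_{\pt_i\pt_j}(F^0,\bar F^k) = \pt_i F^0\,\pt_j\bar F^k + \pt_j F^0\,\pt_i\bar F^k$ must lie in $V^S\rho$, forcing all these to vanish; similarly $F^S_*(\pt_i)$ has $\pt_s$-component $\pt_i F^0$, which the vertical condition forces to be zero. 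From $\pt_i F^0 = 0$ for every $i$ at every point we conclude that $\partial F^0/\partial x^i \equiv 0$, and the hard part will be passing from this pointwise vanishing of spatial derivatives to the global statement that $F^0$ is genuinely independent of $x$: this is precisely where \emph{connectedness} of $M$ is indispensable, guaranteeing that a smooth function with vanishing $M$-gradient on each slice $\{t\}\times M$ is constant along that slice, so that $F^0 = F^0(t)$. Once $F^0$ depends on $t$ only, $\rho\circ F = F^0\circ\pi$ holds, which is the definition of a bundle homomorphism, completing the cycle. I would close by remarking that the consistency of the two vertical generators $\pt_i$ and $\pt_i\pt_j$ both yielding the same conclusion $\pt_i F^0 = 0$ is automatic, so no separate integrability check is needed.
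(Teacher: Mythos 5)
Your implications (i)$\Rightarrow$(ii) and (iii)$\Rightarrow$(i) are correct and essentially reproduce the paper's computations (the paper does (i)$\Rightarrow$(ii) more abstractly via $\rho^S_*\circ F^S_* = (F^0)^S_*\circ\pi^S_*$, but your coordinate version is equivalent), and you locate the role of connectedness correctly. The broken link is (ii)$\Rightarrow$(iii). You assert that for vertical $A$ the quantity $(\rho\circ F)^S_*(A)f = A(f\circ\rho\circ F)$ ``vanishes on functions of the single variable because $A$ is vertical.'' But $f\circ\rho\circ F$ is the function $(t,x)\mapsto f(F^0(t,x))$, and at this stage you have explicitly declined to assume $F^0$ is independent of $x$. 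A vertical $A = A^i\pt_i + A^{jk}\pt_j\pt_k$ differentiates exactly in the $x$-directions, so
\begin{equation*}
  A\bigl(f\circ\rho\circ F\bigr) = \left(A^i\,\pt_i F^0 + A^{jk}\,\pt_j\pt_k F^0\right) f'(F^0) + A^{jk}\,\pt_jF^0\,\pt_kF^0\, f''(F^0),
\end{equation*}
which has no reason to vanish; verticality of $A$ only kills functions pulled back through $\pi$, i.e.\ functions of $t$ alone, and $f\circ\rho\circ F$ is such a function only if $F$ is already known to be projectable — which is precisely (i). Note also that (ii) gives no constraint at all on first-order vertical vectors such as $\pt_1$ (for which $\Gamma_{\pt_1}=0$ automatically), yet $F^S_*(\pt_1)$ has $\pt_s$-coefficient $\pt_1 F^0$, so (iii) is genuinely stronger information than what your step extracts. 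With only (i)$\Rightarrow$(ii) and (iii)$\Rightarrow$(i) established, the cycle does not close and the three statements are not yet proved equivalent.

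The repair is short and is what the paper does: replace the cycle by a hub centered on (i). Prove (ii)$\Rightarrow$(i) directly: for arbitrary $A = A^0\pt_t + A^i\pt_i + A^{jk}\pt_j\pt_k$, condition (ii) forces the coefficients $\Gamma_A(F^0,F^0) = A^{jk}\pt_jF^0\,\pt_kF^0$ and $\Gamma_A(F^0,\bar F^i) = A^{jk}\pt_jF^0\,\pt_k\bar F^i$ of $\pt_s\pt_s$ and $\pt_s\pt_{y^i}$ to vanish; arbitrariness of $A^{jk}$ gives $\pt_jF^0\equiv 0$, and connectedness of $M$ gives $F^0 = F^0(t)$. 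Then (i)$\Rightarrow$(iii) follows by the same coordinate computation as your (i)$\Rightarrow$(ii), or by the functoriality identity $\rho^S_*\circ F^S_* = (F^0)^S_*\circ\pi^S_*$, which is legitimate once $\rho\circ F = F^0\circ\pi$ is available. Your (iii)$\Rightarrow$(i) can be kept verbatim.
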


\begin{proof}
  We first prove that (i) implies both (ii) and (iii). Suppose that $F$ is a bundle homomorphism projecting to $F^0$. Then, $\rho\circ F = F^0\circ \pi$ and hence, for any $A\in \mathcal T^S (\R\times M)$,
  \begin{equation*}
    \rho^S_*(F^S_* (A)) = (F^0)^S_*\pi^S_*(A).
  \end{equation*}
  If $A\in T \R \times \mathcal T^S M$, then $\pi^S_*(A) \in T \R$ and thus $\rho^S_*(F^S_* (A)) \in (F^0)^S_*(T \R) = (F^0)_*(T \R) \subset T \R$. This implies $F^S_* (A)\in T \R \times \mathcal T^S N$. If $A\in V^S \pi$, then $\pi^S_*(A) = 0$, it follows $\rho^S_*(F^S_* (A))=0$ and therefore $F^S_* (A)\in V^S \rho$.

  Next we prove either (ii) or (iii) implies (i). Choose local coordinates $(t,x^i)$ around $(t_0,q) \in \R\times M$ and $(s,y^j)$ around $F(t_0,q)$. Suppose $F$ has a local expression $F=(F^0, \bar F^j)$. Let $A\in T \R \times \mathcal T^S M|_{(t_0,q)}$ having the following local expression:
  \begin{equation}\label{eqn-3}
    A = A^0 \vf t\bigg|_{t_0} + A^i \vf{x^i}\bigg|_q + A^{jk} \frac{\pt^2}{\pt x^j \pt x^k}\bigg|_q.
  \end{equation}
  Then, Lemma \ref{push-pull-map-prop} yields
  \begin{equation*}
    \begin{split}
      F^S_* A =&\ (A F^0) \frac{\partial}{\partial s}\bigg|_{F^0(t_0,q)} + (A \bar F^i) \frac{\partial}{\partial y^i}\bigg|_{\bar F(t_0,q)} + \Gamma_A(F^0, F^0) \frac{\partial^2}{\partial s^2}\bigg|_{F^0(t_0,q)} \\
      &\ + \Gamma_A(\bar F^i, \bar F^j) \frac{\partial^2}{\partial y^i\partial y^j}\bigg|_{\bar F(t_0,q)} + 2 \Gamma_A(F^0, \bar F^i) \frac{\partial^2}{\partial s\partial y^i}\bigg|_{F(t_0,q)}.
    \end{split}
  \end{equation*}
  If (ii) holds, then $F^S_*(A) \in T \R \times \mathcal T^S N|_{F(t_0,q)}$. It then follows
  \begin{equation}\label{eqn-4}
    \Gamma_A(F^0, F^0) = A^{jk} \frac{\pt F^0}{\pt x^j} \frac{\pt F^0}{\pt x^k} = 0 \quad\text{and}\quad \Gamma_A(F^0, \bar F^i) = A^{jk} \frac{\pt F^0}{\pt x^j} \frac{\pt \bar F^i}{\pt x^k} = 0.
  \end{equation}
  Since $A$ is arbitrary, we know that $\frac{\pt F^0}{\pt x^i} = 0$ for all $i$. Then, by the connectness of $M$, $F^0$ is independent of $q\in M$. This implies that $F$ is a bundle homomorphism. Now assume that $A\in V^S M|_{(t_0,q)}$ has a local expression in \eqref{eqn-3} with $A^0 = 0$. If (iii) holds, then $F^S_*(A) \in V^S N|_{F(t_0,q)}$. This amounts to \eqref{eqn-4} together with
  \begin{equation*}
    A F^0 = A^i \frac{\pt F^0}{\pt x^i} + A^{jk} \frac{\pt^2 F^0}{\pt x^j \pt x^k} = 0.
  \end{equation*}
  Again, the arbitrariness of $A$ yields that $\frac{\pt F^0}{\pt x^i} = 0$ for all $i$. Thus, $F$ is a bundle homomorphism.
\end{proof}

It is easy to deduce from the proof that if $F = (F^0, \bar F)$ is a bundle homomorphism from $\pi$ to $\rho$, then $F^S_*|_{T \R \times \mathcal T^S M}$ is a bundle homomorphism from $\tau_R\times\tau^S_{M}$ to $\tau_\R\times \tau^S_{N}$.

When $F: \R\times M \to \R\times N$ is a diffeomorphism, we can also consider the second-order pullback map $F^{S*}$ which is a bundle homomorphism from $\tau^{S*}_{\R\times M}$ to $\tau^{S*}_{\R\times N}$. But when we restrict $F^{S*}$ to the mixed-order cotangent bundle $T^* \R \times \mathcal T^{S*} M$, there are difficulties. We can check that even if $F$ is a bundle homomorphism, $F^{S*}$ does not necessarily map $T^* \R \times \mathcal T^{S*} M$ into $T^* \R \times \mathcal T^{S*} M$. The reason is basically that the restrictions of second-order pullbacks to the cotangent bundle do not coincide with usual pullbacks. To overcome this, we consider the dual map of $F^S_*|_{T \R \times \mathcal T^S M}$. This motivates the following definition, which contrasts with Definitions \ref{push-pull-map-point} and \ref{push-pull-map}.

\begin{definition}[Mixed-order pushforward and pullback]
  Let $F$ be a bundle homomorphism from $(\R\times M, \pi, \R)$ to $(\R\times N, \rho, \R)$. The mixed-order tangent map of $F$ at $(t,q)\in\R\times M$ is the linear map $d^\circ F_{(t,q)}: T \R\times \mathcal T^S M|_{(t,q)} \to T \R\times\mathcal T^S N|_{F(t,q)}$ defined by
  \begin{equation*}
    d^\circ F_{(t,q)} = d^2 F_{(t,q)}|_{T_t \R\times \mathcal T^S_q M}.
  \end{equation*}
  The mixed-order cotangent map of $F$ at $(t,q)\in\R\times M$ is the linear map $d^\circ F^*_{(t,q)}: T^* \R\times\mathcal T^{S*} N|_{F(t,q)} \to T^* \R\times\mathcal T^{S*} M|_{(t,q)}$ dual to $d^\circ F_{(t,q)}$, that is,
  \begin{equation*}
    d^\circ F^*_{(t,q)} (\alpha) (A) = \alpha (d^\circ F_{(t,q)}( A)), \quad\text{for } A\in T_t \R\times \mathcal T^S_q M, \alpha\in T^* \R\times\mathcal T^{S*} N|_{F(t,q)}.
  \end{equation*}
  The mixed-order pushforward by $F$ is the bundle homomorphism $F^R_*: (T \R \times \mathcal T^S M, \tau_\R\times \tau^S_M, \R\times M) \to (T \R \times \mathcal T^S N, \tau_\R\times \tau^S_N, \R\times N)$ defined by
  $$F^R_*|_{T_t \R \times \mathcal T_q^S M} = d^\circ F_{(t,q)}.$$
  Given a mixed-order form $\alpha$ on $\R\times N$, the mixed-order pullback of $\alpha$ by $F$ is the mixed-order form $F^{R*}\alpha$ on $\R\times M$ defined by
  \begin{equation*}
    (F^{R*}\alpha)_{(t,q)} = d^\circ F^*_{(t,q)} \left( \alpha_{F(t,q)} \right), \quad (t,q)\in \R\times M.
  \end{equation*}
  If, moreover, $F$ is a bundle isomorphism, then the mixed-order pullback by $F$ is the bundle isomorphism $F^{R*}: (T \R \times \mathcal T^{R*} N, \tau_\R\times \tau^{S*}_N, \R\times N) \to (T \R \times \mathcal T^{S*} M, \tau_\R\times \tau^{S*}_M, \R\times M)$ defined by
  $$F^{R*}|_{\mathcal T_s \R \times \mathcal T_{q'}^{S*} N} = d^\circ F^*_{F^{-1}(s,q')}.$$
  Given a mixed-order vector field $A$ on $\R\times M$, the mixed-order pushforward of $A$ by $F$ is the mixed-order vector field $F^R_*A$ on $\R\times N$ defined by
  \begin{equation*}
    (F^R_*A)_{(s,q')} = d^\circ F_{F^{-1}(s,q')} \left( A_{F^{-1}(s,q')} \right), \quad (s,q')\in\R\times N.
  \end{equation*}
\end{definition}

Clearly, the mixed-order pushforward $F^R_*$ is nothing but $F^S_*|_{T \R \times \mathcal T^S M}$. Write $F = (F^0, \bar F)$. Then, in local coordinates, $F^R_*$ acts on $A$ of \eqref{eqn-3} as follows:
\begin{equation}\label{local-rep-push-mixed}
  \begin{split}
    F^R_* A =&\ A^0 \frac{dF^0}{dt}(t_0) \frac{\partial}{\partial s}\bigg|_{F^0(t_0)} + (A \bar F^i)(t_0,q) \frac{\partial}{\partial y^i}\bigg|_{\bar F(t_0,q)} + A^{kl} \frac{\pt \bar F^i}{\pt x^k} \frac{\pt \bar F^j}{\pt x^l} (t_0,q) \frac{\partial^2}{\partial y^i\partial y^j}\bigg|_{\bar F(t_0,q)} \\
    =&\ A^0 \frac{dF^0}{dt}(t_0) \frac{\partial}{\partial s}\bigg|_{F^0(t_0)} + \left[ A^0 \frac{\pt\bar F^i}{\pt t}(t_0,q) + A^j \frac{\pt\bar F^i}{\pt x^j}(t_0,q) + A^{jk} \frac{\pt^2 \bar F^i}{\pt x^j \pt x^k} (t_0,q)\right] \frac{\partial}{\partial y^i}\bigg|_{\bar F(t_0,q)} \\
    &\ + A^{kl} \frac{\pt \bar F^i}{\pt x^k} \frac{\pt \bar F^j}{\pt x^l} (t_0,q) \frac{\partial^2}{\partial y^i\partial y^j}\bigg|_{\bar F(t_0,q)}.
  \end{split}
\end{equation}
And $F^{R*}$ acts on the mixed-order cotangent vector $\alpha = \alpha_0 ds|_{F^0(t_0)} + \alpha_i d^2 y^i|_{\bar F(t_0,q)} + \alpha_{ij}dy^i\cdot dy^j|_{\bar F(t_0,q)} \in T \R \times \mathcal T^{S*} N|_{F(t_0,q)}$ by
\begin{equation}\label{local-rep-pull-mixed}
  \begin{split}
    F^{R*}\alpha =&\ \alpha_0 \frac{dF^0}{dt}(t_0) dt|_{t_0} + \alpha_i d^\circ F^i|_{(t_0,q)} + \alpha_{ij}\frac{\pt\bar F^i}{\pt x^k}\frac{d\bar F^i}{dx^l} (t_0,q) dx^k \cdot dx^l|_q \\
    =&\ \left[ \alpha_0 \frac{dF^0}{dt}(t_0) + \alpha_i \frac{\pt\bar F^i}{\pt t}(t_0,q) \right] dt|_{t_0} + \alpha_i \frac{\pt\bar F^i}{\pt x^j}(t_0,q) d^2x^j|_q \\
    &\ + \left[ \frac{\alpha_i}{2} \frac{\pt^2\bar F^i}{\pt x^k\pt x^l} (t_0,q) + \alpha_{ij}\frac{\pt\bar F^i}{\pt x^k}\frac{d\bar F^j}{dx^l} (t_0,q) \right] dx^k \cdot dx^l|_q.
  \end{split}
\end{equation}
By virtue of these local expressions, one easily deduce that
\begin{equation*}
  F^R_*|_{T_t \R \times \mathcal T^S_q M} = F(q)_*|_{T_t \R} \times \bar F(t)^S_*|_{\mathcal T^S_q M}, \quad F^{R*}|_{\mathcal T^*_s \R \times \mathcal T^{S*}_{q'} N} = F(q)^*|_{T^*_s \R} \times \bar F(t)^{S*}|_{\mathcal T^{S*}_{q'} N}.
\end{equation*}
And in turn, these verify the linearity of $F^R_*$ and $F^{R*}$. The following property is easy to check.

\begin{lemma}\label{push-mixed}
  Let $F$ be a bundle isomorphism from $(\R\times M, \pi, \R)$ to $(\R\times N, \rho, \R)$ and $A$ be a mixed-order vector field. Let $f$ be a smooth functions on $\R\times N$. Then $((F^R_*A)f)\circ F = A(f\circ F)$.
\end{lemma}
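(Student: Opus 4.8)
The plan is to prove the intertwining identity $((F^R_*A)f)\circ F = A(f\circ F)$ for a bundle isomorphism $F=(F^0,\bar F)$ by reducing it to the analogous property already established for the (totally) second-order pushforward in Lemma~\ref{prop-push-pull}.(ii). Recall that the mixed-order pushforward is literally the restriction $F^R_* = F^S_*|_{T\R\times \mathcal T^S M}$, as noted immediately after Lemma~\ref{bd-morph-mixed-vertical}, and that $F$, being a bundle isomorphism, is in particular a diffeomorphism of $\R\times M$ onto $\R\times N$. Since $A$ is a mixed-order vector field, at each point $(t,q)$ we have $A_{(t,q)}\in T_t\R\times\mathcal T^S_q M\subset \mathcal T^S_{(t,q)}(\R\times M)$, so $A$ may be regarded as a genuine second-order vector field on $\R\times M$ whose pushforward $F^S_* A$ agrees with $F^R_* A$. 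Thus the statement is exactly Lemma~\ref{prop-push-pull}.(ii) applied to the diffeomorphism $F$ and the second-order vector field $A$, with $N$ replaced by $\R\times N$.

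The key steps, in order, would be: first, observe that both sides of the claimed identity are evaluated pointwise, so it suffices to verify it at an arbitrary point $(s,q')=F(t,q)\in\R\times N$. Second, unwind the definition of the mixed-order pushforward: by definition $(F^R_*A)_{(s,q')} = d^\circ F_{(t,q)}(A_{(t,q)}) = d^2 F_{(t,q)}(A_{(t,q)})$, the last equality holding because $A_{(t,q)}$ lies in the mixed-order subspace $T_t\R\times\mathcal T^S_q M$ on which $d^2F$ and $d^\circ F$ coincide. Third, apply the defining property of the second-order tangent map (Definition~\ref{push-pull-map-point}), namely $d^2F_{(t,q)}(A_{(t,q)})f = A_{(t,q)}(f\circ F)$ for $f\in C^\infty(\R\times N)$. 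Reading this at $(s,q')=F(t,q)$ gives precisely $((F^R_*A)f)(F(t,q)) = A_{(t,q)}(f\circ F)$, which is the desired identity since $(t,q)$ was arbitrary.

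Alternatively, one can give a transparent coordinate verification using the explicit local expression \eqref{local-rep-push-mixed}. Writing $A$ as in \eqref{eqn-3} with coefficients $A^0,A^i,A^{jk}$ and applying $F^R_*A$ to $f$, the chain rule and the product structure of $F^R_*$ (the displayed factorization $F^R_*|_{T_t\R\times\mathcal T^S_q M} = F(q)_*|_{T_t\R}\times \bar F(t)^S_*|_{\mathcal T^S_q M}$) reproduce exactly the terms obtained by letting $A$ act on $f\circ F$ via It\^o's second-order chain rule, i.e.\ $A^0\,\partial_t(f\circ F) + A^i\,\partial_{x^i}(f\circ F) + A^{jk}\,\partial^2_{x^jx^k}(f\circ F)$. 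Matching the first-order and second-order contributions term by term, using $\Gamma_A(\bar F^i,\bar F^j)=A^{kl}\tfrac{\partial \bar F^i}{\partial x^k}\tfrac{\partial\bar F^j}{\partial x^l}$, completes the check.

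There is essentially no serious obstacle here: the content is entirely bookkeeping, and the only point requiring a moment's care is the identification $d^\circ F = d^2 F|_{T\R\times\mathcal T^S M}$ together with the fact that $A$ takes values in this mixed-order subbundle, so that restricting from the second-order to the mixed-order setting loses no information. The one genuinely necessary hypothesis is that $F$ be a bundle \emph{isomorphism} (not merely a homomorphism), since $F^{-1}$ is implicitly used to make $F^R_*A$ a well-defined mixed-order vector field on all of $\R\times N$; this is exactly the hypothesis under which Lemma~\ref{prop-push-pull}.(ii) was stated, so the reduction is clean.
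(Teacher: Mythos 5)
Your proof is correct and is exactly the argument the paper leaves implicit when it says the lemma is ``easy to check'': since $F^R_*=F^S_*|_{T\R\times\mathcal T^S M}$ and a mixed-order vector field takes values in the subbundle $T\R\times\mathcal T^S M\subset\mathcal T^S(\R\times M)$, the identity is just Definition~\ref{push-pull-map-point} (equivalently Lemma~\ref{prop-push-pull}.(ii)) applied to the diffeomorphism $F$, and the coordinate check via \eqref{local-rep-push-mixed} confirms the same thing. No gaps.
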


\subsection{Pushforwards of generators}

A smooth map $F: M\to N$ can be associated naturally with a bundle homomorphism $\id_\R\times F: (\R\times M, \pi, \R) \to (\R\times N, \rho, \R)$ that projects to the identity on $\R$. In this case, the pushforward of a diffusion $X$ by $\id_\R\times F$ is just $(\id_\R\times F)\cdot X = F(X)$. The stochastic prolongations of the bundle homomorphism $\id_\R\times F$ is then
\begin{equation*}
  j (\id_\R\times F) (j_{(t,q)} X) = j_{(t,F(q))} (F(X)).
\end{equation*}

\begin{corollary}\label{push-generator}
  Let $F: M\to N$ be a diffeomorphism. If a diffusion $X$ on $M$ has a generator $A=(A_t)$, then the process $F(X)$ is a diffusion on $N$, with generator $F^S_*A = (F^S_*A_t)$.
\end{corollary}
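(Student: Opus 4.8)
The plan is to prove the statement by combining the martingale characterization of diffusions from Definition \ref{mtgl-prob} with the pushforward identity in Lemma \ref{prop-push-pull}.(ii). Since $F$ is a diffeomorphism, $F(X)$ is automatically an $N$-valued semimartingale (indeed $g(F(X)) = (g\circ F)(X)$ is a real semimartingale for every $g\in C^\infty(N)$ because $g\circ F\in C^\infty(M)$), so the substance of the claim is identifying the generator. First I would fix $g\in C^\infty(\R\times N)$ and write down the martingale
\begin{equation*}
  M^{g,F(X)}(t) = g(t,F(X(t))) - g(t_0,F(X(t_0))) - \int_{t_0}^t \left(\vf{s}+A_s^{F(X)}\right) g(s,F(X(s)))\,ds,
\end{equation*}
where $A_s^{F(X)}$ is the (as yet unknown) generator of $F(X)$. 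The goal is to show this process coincides with $M^{g\circ(\id_\R\times F),X}(t)$ when $A_s^{F(X)}$ is taken to be $F^S_* A_s$, which is already known to be a $\{\Pred_t\}$-martingale because $X$ is an $A$-diffusion and $g\circ(\id_\R\times F)\in C^\infty(\R\times M)$.

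The key step is the pointwise operator identity. For each fixed $s$, $F^S_* A_s$ is a second-order vector field on $N$ defined by $(F^S_* A_s)_{q'} = d^2 F_{F^{-1}(q')}(A_{s,F^{-1}(q')})$, and Lemma \ref{prop-push-pull}.(ii) gives exactly
\begin{equation*}
  \left((F^S_* A_s)\, g(s,\cdot)\right)\circ F = A_s\left(g(s,\cdot)\circ F\right),
\end{equation*}
i.e. $\left((F^S_* A_s) g(s,\cdot)\right)(F(X(s))) = A_s\left((g\circ(\id_\R\times F))(s,\cdot)\right)(X(s))$. Since the time derivative commutes with the diffeomorphism in space, $\vf{s} g(s,F(X(s)))$ matches the time-derivative term of $g\circ(\id_\R\times F)$ evaluated along $X$. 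Substituting these two observations into the integrand shows that the integral term for $F(X)$ with candidate generator $F^S_* A_s$ equals the integral term for $X$ applied to $g\circ(\id_\R\times F)$, and the boundary terms match trivially. Hence $M^{g,F(X)}(t) = M^{g\circ(\id_\R\times F),X}(t)$, which is a martingale, so $F(X)$ solves the martingale problem associated with $(F^S_* A_t)_t$.

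I would then close by invoking Definition \ref{mtgl-prob}: since this holds for every $g\in C^\infty(\R\times N)$, the process $F(X)$ is by definition an $N$-valued $\{\Pred_t\}$-diffusion with generator $F^S_* A$; uniqueness of the generator follows from the coordinate-free characterization of second-order operators recalled after Definition \ref{2-operator}. The main obstacle I anticipate is purely bookkeeping rather than conceptual: one must be careful that $F^S_* A_t$ is genuinely a \emph{smooth} time-dependent second-order operator on $N$ (so that the martingale problem is well posed), which follows from smoothness of $F$, $F^{-1}$ and the fiberwise-smooth dependence of $d^2 F$; and one should confirm that ellipticity is preserved, which is immediate from the coordinate formula in Lemma \ref{push-pull-map-prop} since $\Gamma_{A_t}(F^i,F^j) = \frac{\pt F^i}{\pt x^k}\frac{\pt F^j}{\pt x^l}A_t^{kl}$ is a congruence transformation of the positive semi-definite matrix $(A_t^{kl})$ and hence remains positive semi-definite. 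Everything else is a direct transcription of the martingale-problem definition through the pushforward identity.
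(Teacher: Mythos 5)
Your proposal is correct and follows essentially the same route as the paper: the paper's proof is precisely the observation that $f\circ F(X(t)) - f\circ F(X(t_0)) - \int_{t_0}^t A_s(f\circ F)(X(s))\,ds$ equals $f(F(X(t))) - f(F(X(t_0))) - \int_{t_0}^t ((F^S_*A_s)f)(F(X(s)))\,ds$ via Lemma \ref{prop-push-pull}.(ii), and the left-hand side is a martingale by assumption. Your additional checks (smoothness of $F^S_*A_t$, preservation of ellipticity via the congruence transformation, and the use of time-dependent test functions consistent with Definition \ref{mtgl-prob}) are sound bookkeeping that the paper leaves implicit.
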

\begin{proof}
  Assume $X\in I_{t_0}(M)$. For every $f\in C^\infty(N)$, $f\circ F\in C^\infty(M)$, by the assumption, we have
  \begin{equation*}
    \begin{split}
      &\ f\circ F(X(t)) - f\circ F(X(t_0)) - \int_{t_0}^t A_s (f\circ F)(X(s)) ds \\
      =&\ f(F(X(t))) - f(F(X(t_0))) - \int_{t_0}^t\left((F^S_*A_s)f\right)(F(X(s))) ds
    \end{split}
  \end{equation*}
  is a real-valued continuous $\{\Pred_t\}$-martingale. This proves that $F(X)\in I_{t_0}(N)$ has generator $F^S_*A$.
\end{proof}

This corollary together with the identification between $\R\times \mathcal T^S M$ and $\R\times \mathcal T^E M$ in \eqref{diff-1} and \eqref{diff-2}, give rise to the relation between prolongations and pushforwards as follows:
\begin{equation*}
  \begin{split}
    j (\id_\R\times F) (t, A_q) &= j (\id_\R\times F) (j_{(t,q)} X^A) = j_{(t,F(q))} (F\circ X^A) = \left( t, (F^S_*A_t)_{F(q)} \right) \\
    &= \left( t, d^2 F_q(A_{(t,q)})\right ) = \left( t, d^2 F_q(A_q) \right) = (\id_\R\times F^S_*) (t, A_q),
  \end{split}
\end{equation*}
so that $j (\id_\R\times F) = \id_\R\times F^S_*$.

The following corollary is an extension of Corollary \ref{push-generator} and a straightforward consequence of Lemma \ref{push-mixed-coord}. Here, we will present another proof, using notions of Appendix \ref{sec-A-1}.

\begin{corollary}\label{push-ext-generator}
  Let $F$ be a bundle isomorphism from $(\R\times M, \pi, \R)$ to $(\R\times N, \rho, \R)$ projecting to $F^0$. If $X$ is a diffusion on $M$ with respect to $\{\Pred_t\}$ and has a extended generator $\vf t + A$ where $A$ is a time-dependent second-order vector field, then the pushforward $F\cdot X$ is a diffusion on $N$ with respect to $\{\F_{(F^0)^{-1}(s)}\}$, with extended generator
  $$\frac{d(F^0)^{-1}}{ds} F^R_*\left( \vf t + A \right).$$
\end{corollary}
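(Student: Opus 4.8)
The plan is to verify the martingale characterization of diffusions (item (iii) of Definition \ref{mtgl-prob}) for the pushforward process $F\cdot X$ under the \emph{time-changed} filtration $\{\F_{(F^0)^{-1}(s)}\}$, by combining the defining martingale property of $X$ with the time-change theorem for Brownian motion already invoked in Lemma \ref{push-mixed-coord}. Recall that $F\cdot X$ is defined in \eqref{F-X} by $\tilde X(s) = \bar F((F^0)^{-1}(s), X((F^0)^{-1}(s)))$, so that every quantity attached to $\tilde X$ is evaluated at the time argument $u = (F^0)^{-1}(s)$. The key structural fact to exploit is Lemma \ref{push-mixed} (with $F$ a bundle isomorphism), which gives the pushforward identity $((F^R_* A)f)\circ F = A(f\circ F)$ for mixed-order vector fields $A$ and $f\in C^\infty(\R\times N)$; this is precisely the tool that transports the generator of $X$ to a generator for $\tilde X$.

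First I would fix $f\in C^\infty(\R\times N)$ and apply the extended-generator martingale characterization of $X$ to the test function $f\circ F\in C^\infty(\R\times M)$: by hypothesis, $f\circ F(t,X(t)) - f\circ F(t_0,X(t_0)) - \int_{t_0}^t (\pt_t + A)(f\circ F)(u,X(u))\,du$ is a $\{\Pred_t\}$-martingale. Next I would perform the deterministic change of variables $u = (F^0)^{-1}(s)$ in the integral, exactly as in the proof of Lemma \ref{push-mixed-coord}, which introduces the Jacobian factor $\frac{d(F^0)^{-1}}{ds}(s)$ into the drift term and converts $(\pt_t + A)(f\circ F)$ evaluated at $(u,X(u))$ into the corresponding expression along $\tilde X(s)$. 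Then I would use Lemma \ref{push-mixed}, applied to the mixed-order vector field $\pt_t + A$, to rewrite $(\pt_t + A)(f\circ F)$ as $((F^R_*(\pt_t + A))f)\circ F$, so that after the change of variables the integrand becomes $\frac{d(F^0)^{-1}}{ds}\,(F^R_*(\pt_t + A))f$ evaluated along $\tilde X$; this identifies the claimed extended generator $\frac{d(F^0)^{-1}}{ds}F^R_*(\pt_t + A)$.

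The remaining and genuinely delicate point is the filtration and the preservation of the martingale property under the time change. The process $\tilde X$ is adapted to $\{\F_s\} := \{\Pred_{(F^0)^{-1}(s)}\}$, and I would argue that the deterministic, strictly monotone (assuming $F^0$ orientation-preserving; the reversing case is symmetric) reparametrization $s\mapsto (F^0)^{-1}(s)$ maps $\{\Pred_t\}$-martingales to $\{\F_s\}$-martingales, since optional stopping and the tower property are invariant under a deterministic monotone time substitution. This is exactly where the Brownian time-change theorem \cite[Theorem 8.5.7]{Oks10} used in Lemma \ref{push-mixed-coord} enters at the level of the driving noise, ensuring that $\tilde X$ is a bona fide diffusion rather than merely a semimartingale. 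I expect the main obstacle to be bookkeeping the Jacobian factor consistently: one must check that $\pt_s$ in the new time variable combines with the spatial part of $F^R_*(\pt_t+A)$ so that the whole operator carries a single overall factor $\frac{d(F^0)^{-1}}{ds}$, which requires that $F^R_*$ correctly transports the $\pt_t$-component through $F^0$ (yielding $\frac{dF^0}{dt}\pt_s$ by \eqref{local-rep-push-mixed}) and that its reciprocal $\frac{d(F^0)^{-1}}{ds}$ then normalizes the time derivative to $\pt_s$ while scaling the genuinely second-order spatial part. I would close the argument by comparing with the direct local-coordinate computation of Lemma \ref{push-mixed-coord}, which already produced the mean derivatives $(D\tilde X)^j$ and $(Q\tilde X)^{kl}$ carrying precisely this factor, thereby confirming that the coefficients of $\frac{d(F^0)^{-1}}{ds}F^R_*(\pt_t+A)$ agree with the drift and diffusivity of $\tilde X$.
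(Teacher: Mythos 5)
Your proposal is correct and follows essentially the same route as the paper's own proof: apply the martingale characterization of the extended generator to $f\circ F$, invoke Lemma \ref{push-mixed} to rewrite $(\pt_t+A)(f\circ F)$ as $((F^R_*(\pt_t+A))f)\circ F$, and perform the deterministic change of variables $u=(F^0)^{-1}(v)$ to produce the Jacobian factor and the time-changed filtration. Your additional remarks on why a deterministic monotone time substitution preserves the martingale property, and the cross-check against the coordinate computation of Lemma \ref{push-mixed-coord}, merely make explicit what the paper leaves implicit.
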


\begin{proof}
  Assume that $X\in I_{t_0}(M)$ and $F = (F^0, \bar F)$. For every $f\in C^\infty(\R\times N)$, Lemma \ref{push-mixed} yields that the process
  \begin{equation*}
    \begin{split}
      &\ f\circ F(t, X(t)) - f\circ F(t_0, X(t_0)) - \int_{t_0}^t \left( \vf t + A \right) (f\circ F) (u, X(u)) du \\
      =&\ f\left(F^0(t), \bar F(t, X(t))\right) - f\left(F^0(t_0), \bar F(t_0, X(t_0))\right) - \int_{t_0}^t F^R_*\left( \vf t + A \right)f \left(F^0(u), \bar F(u, X(u))\right) du
    \end{split}
  \end{equation*}
  is a continuous $\{\Pred_t\}$-martingale. Denote $s_0 = F^0(t_0)$. By substituting $t = (F^0)^{-1}(s)$ which can be done because $F^0$ is an isomorphism, and using the change of variable $u = (F^0)^{-1}(v)$, and recalling that $F\cdot X(s) = \bar F\left( (F^0)^{-1}(s), X((F^0)^{-1}(s)) \right)$, the process
  \begin{equation*}
    \begin{split}
      &\ f(s,F\cdot X(s)) - f(s_0,F\cdot X(s_0)) - \int_{(F^0)^{-1}(s_0)}^{(F^0)^{-1}(s)} F^R_*\left( \vf t + A \right)f \left(F^0(u), \bar F(u, X(u))\right) du \\
      =&\ f(s,F\cdot X(s)) - f(s_0,F\cdot X(s_0)) - \int_{s_0}^s \frac{d(F^0)^{-1}}{ds}(v) F^R_*\left( \vf t + A \right)f (v, F\cdot X(v)) dv
    \end{split}
  \end{equation*}
  is a continuous $\{\F_{(F^0)^{-1}(s)}\}$-martingale. The result follows.
\end{proof}

\begin{remark}\label{remark-1}
  (i) As a consequence, the generator of the pushforward $F\cdot X$ is given in local coordinates by
  \begin{equation*}
    \frac{d(F^0)^{-1}}{ds}\left[ \left( \vf t + A \right) \bar F^i \circ F^{-1}\right] \frac{\partial}{\partial y^i} + \frac{d(F^0)^{-1}}{ds} \left[ \left( A^{kl} \frac{\pt \bar F^i}{\pt x^k} \frac{\pt \bar F^j}{\pt x^l} \right) \circ F^{-1} \right] \frac{\partial^2}{\partial y^i\partial y^j}.
  \end{equation*}
  This coincides with Lemma \ref{push-mixed-coord}.

  (ii) This corollary together with Lemma \ref{bd-morph-mixed-vertical} indicates that the bundle homomorphisms from $\R\times M$ to $\R\times N$ are the only (deterministic) smooth maps between them that map diffusions to diffusions. Indeed, if a smooth map $F$ from $\R\times M$ to $\R\times N$ pushes forward a diffusion to another diffusion, then a similar argument as in Corollary \ref{push-ext-generator} implies that $F^S_*$ would map the extended generator of the former diffusion to that of the latter, whereas Lemma \ref{bd-morph-mixed-vertical} says such $F^S_*$ must be the second-order pushforward of some bundle homomorphism.

  (iii) In particular, if $F$ is a smooth map from $M$ to $N$ and $X$ is a diffusion on $M$ with generator $A$, then $F(X)$ is a diffusion on $N$ with respect to the same filtration, with generator $F^S_*( A )$.
\end{remark}

\subsection{Pushforwards and pullbacks by diffusions}

\begin{definition}[Pushforwards and pullbacks by diffusions]\label{push-pull-diff}
  Let $X$ be an $M$-valued diffusion process. Let $(\R\times U, (t,x^i))$ be a coordinate chart on $\R\times M$. The pushforward map $X_*$ from $T_t \R$ to $T_t \R \times \mathcal T^S_{X(t)} M$ is defined in the local coordinate by
  \begin{equation}\label{pushforward}
    X_*\left(\tau \frac{d}{dt} \bigg|_{t_0} \right) = \tau \left( \vf t\bigg|_{t_0} + (DX)^i(t_0) \vf{x^i}\bigg|_{X(t_0)} + \frac{1}{2} (QX)^{jk} (t_0) \frac{\pt^2}{\pt x^j \pt x^k} \bigg|_{X(t_0)} \right).
  \end{equation}
  The pullback map $X^*$ from $\mathcal T^*_t \R \times \mathcal T^{S*}_{X(t)} M$ to $\mathcal T^*_t \R$ is defined by
  \begin{equation}\label{pullback}
    X^*\left(\alpha_0 dt|_{t_0} + \alpha_i d^2 x^i|_{X(t_0)} + \ts{\frac{1}{2}} \alpha_{jk}dx^j\cdot dx^k |_{X(t_0)} \right) = \left( \alpha_0 + \alpha_i (DX)^i(t_0) + \ts{\frac{1}{2}} \alpha_{jk} (QX)^{jk} (t_0) \right) dt|_{t_0}.
  \end{equation}
\end{definition}

\begin{remark}
  Recall that in classical differential geometry, the pushforward by a smooth curve $\gamma = (\gamma(t))_{t\in[-1,1]}$ on $M$ is a map $\gamma_*: T\R \to T M$ given by $\gamma_*(\frac{d}{dt}|_{t_0}) = \dot\gamma^i(t_0)\vf{x^i}|_{\gamma(t_0)}$. While if we look at the graph of $\gamma$ as a section of the trivial bundle $(\R\times M, \pi, \R)$, denoted by $\bar\gamma$, then the pushforward map by $\bar\gamma$ is $\bar\gamma_*(\frac{d}{dt}|_{t_0}) = \frac{d}{dt}|_{t_0} + \dot\gamma^i(t_0)\vf{x^i}|_{\gamma(t_0)}$. For this reason, it would be more appropriate to call $X_*$ and $X^*$ in Definition \ref{push-pull-diff} the pushforward and pullback by graph of $X$, or by random section corresponding to $X$, instead of by $X$ itself. But we avoid that for simplicity.
\end{remark}

One can see from the definition that the pushforward $X_*$ maps the time vector $\frac{d}{dt}|_{t_0}$ to the value of the extended generator of $X$ at $(t_0,X(t_0))$. There is an informal way to look at the pullback map $X^*$: one first replace all $x$'s by $X$'s in the brackets at LHS of \eqref{pullback} and obtain
\begin{equation*}
  \alpha_0 dt + \alpha_i d X^i + \ts{\frac{1}{2}} \alpha_{jk}dX^j\cdot dX^k;
\end{equation*}
then substituting $d X^i$ and $dX^j\cdot dX^k$, and following It\^o's calculus,
\begin{equation*}
  d X^i = (DX)^i dt + \text{martingale part}, \qquad dX^j\cdot dX^k = (QX)^{jk} dt,
\end{equation*}
and getting rid of the martingale part, we get the RHS of \eqref{pullback}.

The following corollary is straightforward. We will see that pushforward and pullback maps by diffusions are also closely related to the concept of ``total derivatives''.

\begin{corollary}\label{push-pull-prop}
  (i). Let $X$ be an $M$-valued diffusion process. For all $\tau \frac{d}{dt}|_{t_0}\in \mathcal T_{t_0}\R$ and $\alpha\in \mathcal T_{t_0}^* \R \times \mathcal T_{X(t_0)}^{S*} M$,
  \begin{equation}\label{dual-pp}
    \left\langle X^*\left( \alpha \right), \tau \textstyle{\frac{d}{dt}|_{t_0}} \right\rangle = \left\langle\alpha, X_* ( \tau \textstyle{\frac{d}{dt}|_{t_0}} ) \right\rangle.
  \end{equation}
  (ii). If $X\in I_{(t_0,q)}(M)$, $f$ is a smooth function on $\R\times M$ and $g$ a smooth function on $M$, then
  \begin{align*}
    \left\langle X^*(d^\circ f), \textstyle{\frac{d}{dt}} \right\rangle \big|_{t_0} &= X_*( \textstyle{\frac{d}{dt}} )(f)\big|_{(t_0,q)} = (\D_t f)(j_{(t_0,q)}X) = \langle \ts{\frac{\pt}{\pt t}} + A^X, d^\circ f \rangle (t_0,q), \\
    \left\langle X^*(dg\cdot dg), \textstyle{\frac{d}{dt}} \right\rangle \big|_{t_0} &= \left\langle dg\cdot dg, X_*( \textstyle{\frac{d}{dt}} )\right\rangle \big|_{(t_0,q)} = (\Q_t g)(j_{(t_0,q)}X).
  \end{align*}
  (iii). Let $X, Y$ be $M$-valued diffusion processes satisfying $X(t) = Y(t)$ a.s.. Then, $j_t X = j_t Y$ a.s. if and only if $X_*( \frac{d}{dt}|_t) = Y_*( \frac{d}{dt}|_t)$ a.s.. In particular, if $X, Y \in I_{(t,q)}(M)$, then $j_{(t,q)} X = j_{(t,q)} Y$ if and only if $X_*( \frac{d}{dt}|_t) = Y_*( \frac{d}{dt}|_t)$. \\
  (iv). Let $F$ be a bundle homomorphism from $(\R\times M, \pi, \R)$ to $(\R\times N, \rho, \R)$ projecting to $F^0$, and $X$ be an $M$-valued diffusion process. Then $F^R_* \circ X_* = (F\cdot X)_*\circ (F^0)_*$. \\
  (v). Let $F$ be a smooth function from $M$ to $M$, and $X$ be an $M$-valued diffusion process. Then $(\id_{T\R}\times F^S_*) \circ X_* = (F\circ X)_*$.
\end{corollary}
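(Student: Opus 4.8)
The plan is to prove the five assertions of Corollary \ref{push-pull-prop} in order, since the later ones rest on the first. First, for (i), I would simply pair the explicit local expression \eqref{pushforward} for $X_*$ against the local expression of a generic $\alpha = \alpha_0 dt + \alpha_i d^2x^i + \frac{1}{2}\alpha_{jk}dx^j\cdot dx^k$, and verify that the resulting scalar $\tau(\alpha_0 + \alpha_i(DX)^i + \frac{1}{2}\alpha_{jk}(QX)^{jk})$ coincides with $\langle X^*(\alpha), \tau\frac{d}{dt}\rangle$ computed from \eqref{pullback}. This is purely a matching of coefficients using the duality between the frame $\{\partial_t, \partial_i, \partial_j\partial_k\}$ of the mixed-order tangent bundle and the dual frame $\{dt, d^2x^i, dx^j\cdot dx^k\}$ of the mixed-order cotangent bundle, as recorded in \eqref{forms}; no real obstacle arises here.

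For (ii), I would first observe that $X_*(\frac{d}{dt})$ equals the value of the extended generator $\partial_t + A^X$ at $(t_0,q)$, which is exactly what \eqref{pushforward} says when compared with the generator expression \eqref{generator}. Applying this second-order operator to $f$ gives $\frac{\partial f}{\partial t} + \mathfrak b^i\frac{\partial f}{\partial x^i} + \frac{1}{2}(\sigma\circ\sigma^*)^{jk}\frac{\partial^2 f}{\partial x^j\partial x^k}$, which is precisely the local representation \eqref{local-rep-total-mean} of $(\D_t f)(j_{(t_0,q)}X)$ once we substitute $(DX)^i = \mathfrak b^i$ and $(QX)^{jk} = (\sigma\circ\sigma^*)^{jk}$. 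The equality with $\langle X^*(d^\circ f), \frac{d}{dt}\rangle$ then follows from part (i) applied to $\alpha = d^\circ f$, using $\langle d^\circ f, A\rangle = Af$; the quadratic identity is handled identically with $\langle dg\cdot dg, A\rangle = \Gamma_A(g,g)$ and the formula for $\Q_t g$ in \eqref{local-rep-total-mean}.

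For (iii), the key point is that $X_*(\frac{d}{dt})$ depends on $X$ only through $(DX(t), QX(t))$, by the defining formula \eqref{pushforward}. Under the hypothesis $X(t) = Y(t)$ a.s., the stochastic jet equality $j_t X = j_t Y$ is by definition the statement that $(DX(t), QX(t)) = (DY(t), QY(t))$ a.s.; hence this holds if and only if the two pushforwards of $\frac{d}{dt}$ agree, which is the claim. The specialization to $X,Y\in I_{(t,q)}(M)$ is immediate. Parts (iv) and (v) I would prove by chasing local coordinates through the definitions: for (iv), I would compute $F^R_*(X_*(\frac{d}{dt}))$ via the local push formula \eqref{local-rep-push-mixed} applied to the second-order tangent vector \eqref{pushforward}, and compare it term by term with $(F\cdot X)_*((F^0)_*\frac{d}{dt})$ computed from \eqref{pushforward} using the drift and quadratic-derivative transformation laws of Lemma \ref{push-mixed-coord}; the factor $\frac{d(F^0)^{-1}}{ds}$ appearing there cancels against $(F^0)_*\frac{d}{dt} = \dot F^0\frac{d}{ds}$. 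Part (v) is the special case $F^0 = \id_\R$ of this computation, and follows more directly since no time change is involved.

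The main obstacle I anticipate is the bookkeeping in (iv): one must carefully align the two appearances of $F^0$-derivatives — the Jacobian factor $\frac{d(F^0)^{-1}}{ds}$ built into the mean derivatives of $F\cdot X$ in Lemma \ref{push-mixed-coord} against the pushforward $(F^0)_*\frac{d}{dt}$ of the base time vector — and confirm they combine to give exactly the undecorated $F^R_* \circ X_*$. The second-order (Hessian) term of $\bar F$ in \eqref{local-rep-push-mixed}, which contributes to the $\partial/\partial y^i$ component through $A^{jk}\frac{\partial^2\bar F^i}{\partial x^j\partial x^k}$, must be matched against the It\^o correction already folded into $(D(F\cdot X))^j$ via the total mean derivative $\D_t\bar F^j$ in Lemma \ref{push-mixed-coord}; verifying this consistency is where the argument is most delicate, though it is ultimately a direct consequence of It\^o's formula as encoded in that lemma.
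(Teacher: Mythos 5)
Your proposal is correct and follows essentially the same route as the paper: parts (i)--(iii) are read off directly from the defining formulae \eqref{pushforward}--\eqref{pullback}, and part (iv) is verified by the same local-coordinate computation, matching the $\partial/\partial y^i$ coefficient of $F^R_*\circ X_*(\frac{d}{dt})$ from \eqref{local-rep-push-mixed} against $\frac{dF^0}{dt}(t)\,(D\tilde X)^i(F^0(t))$ via Lemma \ref{push-mixed-coord}, with the Jacobian factors cancelling exactly as you describe. You correctly locate the only delicate point (the alignment of the Hessian term of $\bar F$ with the It\^o correction inside $\D_t\bar F^j$), and your treatment of (v) as the no-time-change case is consistent with the paper, which leaves that part implicit.
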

\begin{proof}
  Assertions (i), (ii) and (iii) are easy to deduce from the definitions. We prove (iv) using local expressions. Assume that $F = (F^0, \bar F)$ and denote $\tilde X = F\cdot X$. Recall that $\tilde X(F^0(t)) = \bar F(t,X(t))$. Then
  \begin{equation*}
    \begin{split}
      F^R_* \circ X_* \left( \frac{d}{dt} \bigg|_t \right) =&\ \frac{dF^0}{dt}(t) \frac{\partial}{\partial s}\bigg|_{F^0(t)} + \bigg[ \frac{\pt\bar F^i}{\pt t}(t,X(t)) + (DX)^j(t) \frac{\pt\bar F^i}{\pt x^j}(t,X(t)) \\
      &\ +\frac{1}{2}(QX)^{jk}(t) \frac{\pt^2 \bar F^i}{\pt x^j \pt x^k} (t,X(t))\bigg] \frac{\partial}{\partial y^i}\bigg|_{\bar F(t,X(t))} + \frac{1}{2}(QX)^{kl}(t) \frac{\pt \bar F^i}{\pt x^k} \frac{\pt \bar F^j}{\pt x^l} (t,X(t)) \frac{\partial^2}{\partial y^i\partial y^j}\bigg|_{\bar F(t,X(t))} \\
      =&\ \frac{dF^0}{dt}(t) \Bigg[ \frac{\partial}{\partial s}\bigg|_{F^0(t)} + (D\tilde X)^i(F^0(t)) \frac{\partial}{\partial y^i}\bigg|_{\tilde X(F^0(t))} + \frac{1}{2} (Q\tilde X)^{ij}(F^0(t)) \frac{\partial^2}{\partial y^i\partial y^j}\bigg|_{\tilde X(F^0(t))} \Bigg] \\
      =&\ \frac{dF^0}{dt}(t) (F\cdot X)_* \left( \frac{\partial}{\partial s}\bigg|_{F^0(t)} \right) \\
      =&\ (F\cdot X)_* \circ (F^0)_* \left( \frac{d}{dt} \bigg|_t \right).
    \end{split}
  \end{equation*}
  The result follows.
\end{proof}

\subsection{Lie derivatives}

\begin{definition}[Lie derivatives]\label{Lie-def}
  Let $V$ be a vector field on $M$ and $\psi = \{\psi_\e\}_{\e\in\R}$ be its flow. Let $A$ be a second-order vector field and $\alpha$ be a second-order form on $M$. The Lie derivative of $A$ with respect to $V$ is a second-order vector field on $M$, denoted by $\L_V A$, and defined by
  \begin{equation*}
    (\L_V A)_q = \frac{d}{d\e}\bigg|_{\e=0} (\psi_{-\e})^S_* (A_{\psi_\e(q)}) = \lim_{\e\to0} \frac{(\psi_{-\e})^S_* (A_{\psi_\e(q)}) - A_q}{\e}.
  \end{equation*}
  The Lie derivative of $\alpha$ with respect to $V$ is a second-order form on $M$, denoted by $\L_V \alpha$, and defined by
  \begin{equation*}
    (\L_V \alpha)_q = \frac{d}{d\e}\bigg|_{\e=0} (\psi_{\e})^{S*} (\alpha_{\psi_\e(q)}) = \lim_{\e\to0} \frac{(\psi_{\e})^{S*} (\alpha_{\psi_\e(q)}) - \alpha_q}{\e}.
  \end{equation*}
\end{definition}

For sufficient small $\e\ne0$, $\psi_\e$ is defined in a neighborhood of $q\in M$ and $\psi_{-\e}$ is the inverse of $\psi_\e$. So the difference quotients in the above definitions of Lie derivatives make sense. It is easy to verify that the derivatives exist for each $q\in M$, and $\L_V A$ is a smooth second-order vector field, $\L_V \alpha$ is a smooth second-order covector field. Likewise, the restrictions of $\L_V$ to $\mathcal T_q M$ and $\mathcal T^{*}_{F(q)} N$ coincide with the classical Lie derivatives. In the following, we will seek properties of $\L$. Some of them can be found in \cite[Section 6.(d)]{Mey81a}.

\begin{lemma}\label{Lie-derivative-prop}
  Let $V$ be a vector field and $f$ be a smooth function. Let $A$ and $\alpha$ be a second-order vector field and second-order form, respectively. Then \\
  (i) $\L_V A = [V, A]$, where the RHS denotes the commutator of $V$ and $A$ as linear operators; \\
  (ii) $\L_V (f A) = (Vf) A + f \L_V A$; \\
  (iii) $\langle \L_V \alpha, A \rangle = V (\langle \alpha, A \rangle) - \langle \alpha, \L_V A \rangle$; \\
  (iv) $\L_V (f \alpha) = (Vf) \alpha + f \L_V \alpha$; \\
  (v) $\L_V(d^2 f) = d^2(Vf)$.
\end{lemma}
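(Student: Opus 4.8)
The plan is to prove the five assertions of Lemma \ref{Lie-derivative-prop} in sequence, since each builds on the previous ones and on the coordinate-free characterization of second-order operators and the pushforward/pullback machinery of Definitions \ref{push-pull-map-point} and \ref{push-pull-map}. For (i), the key observation is that for a classical vector field $V$ with flow $\psi$, the second-order pushforward $(\psi_{-\e})^S_*$ acts on the operator $A$ by conjugation: from Lemma \ref{prop-push-pull}(ii) we have $((\psi_{-\e})^S_* A) f \circ \psi_{-\e} = A (f\circ \psi_{-\e})$, i.e. $(\psi_{-\e})^S_* A = \psi_{-\e}^* \circ A \circ \psi_\e^*$ as operators on $C^\infty(M)$, where $\psi_\e^* f = f\circ\psi_\e$. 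Differentiating this conjugation at $\e=0$ and using $\frac{d}{d\e}|_{\e=0}\psi_\e^* f = Vf$ gives $\L_V A = V\circ A - A\circ V = [V,A]$. One must check that $[V,A]$ is again a genuine second-order operator (no third-order terms survive), which is immediate because $V$ is first-order and the top-order symbols satisfy the Leibniz-type cancellation — this is where the ``second-order derivation'' description in the red paragraph after Definition \ref{2-operator} is handy.

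Assertion (ii) then follows directly from (i) and the algebra of commutators: $\L_V(fA) = [V, fA] = V(fA) - fAV$; expanding $V(fA)g = V(f\cdot Ag) = (Vf)(Ag) + f V(Ag)$ and regrouping yields $(Vf)A + f[V,A] = (Vf)A + f\L_V A$. This is a short formal manipulation once (i) is in hand. For (iii), the natural route is to differentiate the defining duality. Since $(\psi_\e)^{S*}$ is dual to $(\psi_\e)^S_*$, I would write $\langle (\psi_\e)^{S*}\alpha_{\psi_\e(q)}, A_q\rangle = \langle \alpha_{\psi_\e(q)}, (\psi_\e)^S_* A_q\rangle$ and differentiate at $\e=0$. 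The left side differentiates to $\langle \L_V\alpha, A\rangle$. On the right, applying the product rule to the $\e$-dependence splits into the derivative of $\alpha$ along the flow, contributing $V(\langle\alpha,A\rangle)$ through the function $q\mapsto \langle\alpha,A\rangle(q)$, minus the term coming from differentiating the pushforward of $A$, which by (i) is $\langle\alpha, \L_V A\rangle$. The sign bookkeeping here — matching $(\psi_\e)^S_*$ against $(\psi_{-\e})^S_*$ in the two definitions — is the one place to be careful. Assertion (iv) is proved exactly as (ii): combine (iii) with (ii), or differentiate the pullback of $f\alpha$ directly and use the Leibniz rule for $V$ acting on the scalar $f$.

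Finally, (v) is the cleanest: by (iii) applied to $\alpha = d^2 f$ together with the definition $\langle d^2 f, A\rangle = Af$ from \eqref{forms}, I would compute $\langle \L_V(d^2 f), A\rangle = V(Af) - \langle d^2 f, \L_V A\rangle = V(Af) - (\L_V A)f = V(Af) - [V,A]f = V(Af) - V(Af) + A(Vf) = A(Vf) = \langle d^2(Vf), A\rangle$, for every second-order vector field $A$; since $A$ is arbitrary, $\L_V(d^2 f) = d^2(Vf)$. I expect the main obstacle to be assertion (iii): verifying that the derivative of the pullback duality splits cleanly into the two advertised terms requires justifying that the map $\e\mapsto (\psi_\e)^S_* A_q$ is differentiable with derivative $-\L_V A$ (note the flow direction) and reconciling this with the opposite-direction flow used in the definition of $\L_V\alpha$; once the conjugation formula from (i) is established and the sign conventions are pinned down, the remaining computations in (ii), (iv), (v) are routine algebra.
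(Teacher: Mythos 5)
Your proposal is correct and follows essentially the same route as the paper: (i) by differentiating the action of $(\psi_{-\e})^S_*$ on $A$ as an operator, (ii) and (iv) by the commutator/Leibniz algebra, (iii) by differentiating the duality pairing and splitting via the product rule, and (v) by combining (iii) with $\langle d^2f, A\rangle = Af$ — the last computation is word-for-word the paper's. One slip worth fixing in (i): from $((\psi_{-\e})^S_*A)f\circ\psi_{-\e} = A(f\circ\psi_{-\e})$ the conjugation identity is $(\psi_{-\e})^S_*A = \psi_\e^*\circ A\circ\psi_{-\e}^*$, not $\psi_{-\e}^*\circ A\circ\psi_\e^*$ as you wrote; the version you wrote differentiates to $A\circ V - V\circ A = -[V,A]$, so it is inconsistent with your (correct) final answer, whereas the corrected conjugation gives $V\circ A - A\circ V = [V,A]$ as claimed. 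Your additional remark that $[V,A]$ must be checked to be a genuine second-order operator is a point the paper handles by displaying the explicit coordinate formula in a following remark rather than inside the proof.
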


\begin{remark}
  Note that the commutator $[V, A]$ is a second-order vector field. Indeed, if $V$ and $A$ have coordinate expressions $V = V^i \vf{x^i}$ and $A = A^i \frac{\partial}{\partial x^i} + A^{ij} \frac{\partial^2}{\partial x^i\partial x^j}$, then the following local expression for $[V, A]$ is easy to verify,
  \begin{equation*}
    \begin{split}
      [V, A] =&\ \left( V^j \frac{\pt A^i}{\pt x^j} - A^j \frac{\pt V^i}{\pt x^j} - A^{jk} \frac{\pt^2 V^i}{\pt x^j \pt x^k} \right) \vf{x^i} + V^i \frac{\pt A^{jk}}{\pt x^i} \frac{\pt^2}{\pt x^j \pt x^k} - A^{jk} \left( \frac{\pt V^i}{\pt x^j} \frac{\pt^2}{\pt x^i \pt x^k} + \frac{\pt V^i}{\pt x^k} \frac{\pt^2}{\pt x^i \pt x^j} \right).
    \end{split}
  \end{equation*}
\end{remark}

\begin{proof}
  (i) For a function $f \in C^\infty(M)$,
  \begin{equation*}
    \begin{split}
      (\L_V A)_q f &= \lim_{\e\to0} \frac{(\psi_{-\e})^S_* (A_{\psi_\e(q)})f - A_q f}{\e} = \lim_{\e\to0} \frac{(A_{\psi_\e(q)})(f\circ \psi_{-\e}) - A_q f}{\e} \\
      &= \lim_{\e\to0} \frac{(A_{\psi_\e(q)})(f\circ \psi_{-\e} - f)}{\e} + \lim_{\e\to0} \frac{(A_{\psi_\e(q)})f - A_q f}{\e}.
    \end{split}
  \end{equation*}
  Then, a similar argument to the derivation of classical Lie derivatives yields
  \begin{equation*}
    (\L_V A)_q f = -A_q(Vf) + V_q (Af) = [V, A]_q f.
  \end{equation*}

  (ii) $\L_V (f A)g = [V, fA] g = V(fAg) - fA Vg = Vf Ag + f VAg - fA Vg = Vf Ag + f (\L_V A) g$.

  (iii) For a second-order vector field $A$,
  \begin{equation*}
    \begin{split}
      \langle \L_V \alpha, A \rangle &= \lim_{\e\to0} \frac{\langle (\psi_{\e})^{S*} (\alpha_{\psi_\e(q)}), A \rangle - \langle \alpha_q, A \rangle }{\e} = \lim_{\e\to0} \frac{\langle \alpha_{\psi_\e(q)}, (\psi_{\e})^S_* A \rangle - \langle \alpha_q, A \rangle }{\e} \\
      &= \lim_{t\to0} \frac{\langle \alpha_{\psi_\e(q)} - \alpha_q, (\psi_{\e})^S_* A \rangle}{\e} + \lim_{\e\to0} \frac{\langle \alpha_q, (\psi_{\e})^S_* A - A \rangle }{t} \\
      &= \lim_{\e\to0} \frac{\langle \alpha_{\psi_\e(q)} - \alpha_q, A \rangle}{\e} - \lim_{\e\to0} \frac{\langle \alpha_q, (\psi_{-\e})^S_* A - A \rangle }{\e} \\
      &= V (\langle \alpha, A \rangle) - \langle \alpha, \L_V A \rangle.
    \end{split}
  \end{equation*}

  (iv) Use (iii) to derive
  \begin{equation*}
    \begin{split}
      \langle \L_V (f \alpha), A \rangle &= V (f \langle \alpha, A \rangle) - f \langle \alpha, \L_V A \rangle = (Vf) \langle \alpha, A \rangle + f V (\langle \alpha, A \rangle) - f \langle \alpha, \L_V A \rangle \\
      &= (Vf) \langle \alpha, A \rangle + f \langle \L_V \alpha, A \rangle.
    \end{split}
  \end{equation*}

  (v) Again using (iii) we have $\langle \L_V (d^2 f), A \rangle = V (\langle d^2 f, A \rangle) - \langle d^2 f, \L_V A \rangle = V A f - [V, A] f = AVf = \langle d^2(Vf), A \rangle$.
\end{proof}

\begin{corollary}
  (i) $\L_V (df\cdot dg) = d(Vf)\cdot dg + df\cdot d(Vg)$. \\
  (ii) $\L_V (\omega\cdot\eta) = \L_V \omega\cdot\eta + \omega\cdot\L_V\eta$. \\
  (iii) $\L_V$ commutes with the symmetric product operator $\bullet$.
\end{corollary}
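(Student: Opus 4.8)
The plan is to derive all three identities from the Lie derivative properties already established in Lemma \ref{Lie-derivative-prop}, since the corollary is essentially a collection of Leibniz-type rules for the symmetric product. The key observation is that $df\cdot dg$ is, by the definition in \eqref{forms}, a second-order form built from the squared field operator, so I expect each statement to reduce to manipulations of $\L_V$ acting on the basic building blocks $d^2 f$ together with the defining relation for the symmetric product.

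For part (i), I would first recall that by \eqref{forms} the symmetric product satisfies $\langle df\cdot dg, A\rangle = A(fg) - fAg - gAf$ for every second-order vector field $A$. The cleanest route is to pair $\L_V(df\cdot dg)$ against an arbitrary second-order vector field $A$ and apply Lemma \ref{Lie-derivative-prop}(iii) to get $\langle \L_V(df\cdot dg), A\rangle = V(\langle df\cdot dg, A\rangle) - \langle df\cdot dg, \L_V A\rangle$. Expanding the first term using the definition and the product rule for the vector field $V$, and expanding the second term using the same definition with $\L_V A = [V,A]$ in place of $A$, I would then collect terms and recognize the result as $\langle d(Vf)\cdot dg + df\cdot d(Vg), A\rangle$, again via \eqref{forms}. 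This is a direct but slightly bookkeeping-heavy computation; the arbitrariness of $A$ then yields the identity. An alternative, perhaps shorter, path uses the relation $df\cdot dg = d^2(fg) - f\,d^2 g - g\,d^2 f$ (which follows from \eqref{forms} by comparing pairings), together with Lemma \ref{Lie-derivative-prop}(v) ($\L_V d^2 f = d^2(Vf)$) and (iv) (the Leibniz rule $\L_V(f\alpha) = (Vf)\alpha + f\L_V\alpha$); substituting and simplifying should land on the same answer.

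Part (ii) is the generalization of (i) to arbitrary one-forms $\omega,\eta$ rather than exact ones. Since the symmetric product of one-forms was defined pointwise by writing $\omega = df_q$, $\eta = dg_q$ locally and setting $\omega\cdot\eta := (df\cdot dg)_q$, I would argue that (ii) follows from (i) by this localization, provided I check that the Lie derivative of a general one-form is compatible with the representation by exact forms at a point. Concretely, writing $\omega = \omega_i\,dx^i$ and $\eta = \eta_j\,dx^j$ in local coordinates and using \eqref{product-general}, I would compute both sides and match coefficients; the computation parallels (i) but carries the extra coefficient functions. Part (iii) is then a restatement: the bundle homomorphism $\bullet$ sends $\omega\otimes\eta$ to $\omega\cdot\eta$, and (ii) says precisely that $\L_V(\bullet(\omega\otimes\eta)) = \bullet(\L_V\omega\otimes\eta + \omega\otimes\L_V\eta) = \bullet(\L_V(\omega\otimes\eta))$, where the Lie derivative on $T^*M\otimes T^*M$ is the classical tensorial one. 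So (iii) is obtained by reinterpreting (ii) through the definition of $\bullet$ and the classical Leibniz rule for $\L_V$ on tensor products.

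I expect the main obstacle to be part (ii), specifically the justification that the pointwise definition of $\omega\cdot\eta$ via local exact representatives is genuinely compatible with differentiation along $V$. The subtlety is that to differentiate one must vary the base point, so the exact one-forms $df$, $dg$ chosen to represent $\omega$, $\eta$ at $q$ need not represent the same forms at nearby points $\psi_\e(q)$. I would handle this either by arguing in local coordinates directly using \eqref{product-general} and the coordinate expression for $\L_V$ on one-forms (bypassing exact representatives in the actual computation), or by noting that since $\bullet$ is a fiber-linear bundle homomorphism and $\L_V$ is a first-order differential operator satisfying the tensorial Leibniz rule, the identity on decomposable tensors $\omega\otimes\eta$ extends by linearity and the derivation property. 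Once the coordinate computation for (ii) is in hand, (i) becomes its special case and (iii) is immediate, so concentrating the rigor on the coordinate form of (ii) is the most economical strategy.
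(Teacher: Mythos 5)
Your proposal is correct and follows essentially the same route as the paper: part (i) by pairing against an arbitrary second-order vector field and invoking Lemma \ref{Lie-derivative-prop}(iii) with $\L_V A=[V,A]$, part (ii) by the local-coordinate computation via \eqref{product-general} together with the Leibniz rules of Lemma \ref{Lie-derivative-prop}, and part (iii) as an immediate reformulation of (ii) through the definition of $\bullet$. The only caution about your closing suggestion to prove (ii) first and recover (i) as a special case: the coordinate computation for (ii) requires $\L_V(dx^i\cdot dx^j)$, which is itself an instance of (i), so (i) cannot be fully bypassed — but since your primary plan establishes (i) first, this does not affect the validity of the argument.
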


\begin{proof}
  For the first assertion,
  \begin{equation*}
    \begin{split}
      &\ \langle \L_V (df\cdot dg), A \rangle = V (\langle df\cdot dg, A \rangle) - \langle df\cdot dg, \L_V A \rangle = V(\Gamma_A(f,g)) - \Gamma_{[V,A]}(f,g) \\
      =&\ V( A(fg) - fAg - gAf ) - ([V,A](fg) - f[V,A]g - g[V,A]f) \\
      =&\ VA(fg) - VfAg - fVAg - VgAf - gVAf \\
      &\ - ( VA(fg)- AV(fg) - fVAg + fAVg - gVAf + gAVf ) \\
      =&\ AV(fg) - VfAg - VgAf - fAVg - gAVf \\
      =&\ [A(Vfg) - VfAg - gAVf] - [A(fVg) - VgAf - fAVg] \\
      =&\ \langle d(Vf)\cdot dg, A \rangle + \langle df\cdot d(Vg), A \rangle.
    \end{split}
  \end{equation*}
  We use the local expressions to prove the second assertion. Assume, locally, that $\omega = \omega_i dx^i$ and $\eta = \eta_i dx^i$. Then, by \eqref{product-general}, Lemma \ref{Lie-derivative-prop}.(ii) and Corollary \ref{Lie-derivative-prop}.(iv),
  \begin{equation*}
    \begin{split}
      \L_V(\omega\cdot\eta) &= \L_V(\omega_i \eta_j dx^i\cdot dx^j) = V(\omega_i \eta_j) dx^i\cdot dx^j + \omega_i \eta_j \L_V(dx^i\cdot dx^j) \\
      &= ( \eta_j V\omega_i + \omega_i V\eta_j ) dx^i\cdot dx^j + \omega_i \eta_j (dV^i\cdot dx^j + dx^i\cdot dV^j) \\
      &= ( V\omega_i dx^i + \omega_i dV^i ) \cdot (\eta_j dx^j) + (\omega_i dx^i) \cdot ( V\eta_j dx^j + \eta_j dV^j ) \\
      &= \L_V \omega\cdot\eta + \omega\cdot\L_V\eta.
    \end{split}
  \end{equation*}
  The last assertion is a consequence of the second one. Indeed,
  \begin{equation*}
    \L_V (\bullet(\omega\otimes\eta)) = \L_V (\omega\cdot\eta) = \L_V \omega\cdot\eta + \omega\cdot\L_V\eta = \bullet( \L_V \omega\otimes\eta + \omega\otimes\L_V\eta ) = \bullet (\L_V (\omega\otimes\eta)).
  \end{equation*}
\end{proof}

Given a vector field $V$ on $\R\times M$, the Lie derivative $\L_V$ can also be defined for second-order vector fields and second-order forms on $\R\times M$, as in Definition \ref{Lie-def}, without any changes. But when restricting to the mixed-order vector fields and mixed-order forms, it is necessary that the flow in Definition \ref{Lie-def} consists of bundle homomorphisms on $(\R\times M, \pi, \R)$, so that its mixed-order pushforwards and pullbacks are well defined. This feeding back to the vector field $V$ amounts to $V$ is $\pi$-\emph{projectable}. In this case, we just replace the second-order pushforwards and pullbacks in Definition \ref{Lie-def} by mixed-order pushforwards and pullbacks, to define the Lie derivative $\L_V$ for mixed-order vector fields and mixed-order forms on $\R\times M$.

Now let $V$ be a $\pi$-projectable vector field on $\R\times M$. Then, Lemma \ref{Lie-derivative-prop}.(i)--(iv) still holds for smooth functions $f$ on $\R\times M$, mixed-order vector fields $A$ and mixed-order forms $\alpha$ on $\R\times M$. The assertion (v) will hold with the mixed differential instead of the second-order differential, that is, $\L_V(d^\circ f) = d^\circ(Vf)$. Moreover, if $V$ and $A$ have coordinate expressions $V = V^0 \vf{t} + V^i \vf{x^i}$ and $A = A^0 \vf{t} + A^i \frac{\partial}{\partial x^i} + A^{ij} \frac{\partial^2}{\partial x^i\partial x^j}$ where $V^0$ only depends on time, then the Lie derivative $\L_V A$ has the following expression:
  \begin{equation*}
    \begin{split}
      \L_V A = [V, A] =&\ \left( V^0 \frac{\pt A^0}{\pt t}+ V^j \frac{\pt A^0}{\pt x^j} - A^0 \frac{\pt V^0}{\pt t} \right) \vf{t} \\
      &\ + \left( V^0 \frac{\pt A^i}{\pt t} + V^j \frac{\pt A^i}{\pt x^j} - A^0 \frac{\pt V^i}{\pt t} - A^j \frac{\pt V^i}{\pt x^j} - A^{jk} \frac{\pt^2 V^i}{\pt x^j \pt x^k} \right) \vf{x^i} \\
      &\ + \left( V^0 \frac{\pt A^{jk}}{\pt t} + V^i \frac{\pt A^{jk}}{\pt x^i} \right) \frac{\pt^2}{\pt x^j \pt x^k} - A^{jk} \left( \frac{\pt V^i}{\pt x^j} \frac{\pt^2}{\pt x^i \pt x^k} + \frac{\pt V^i}{\pt x^k} \frac{\pt^2}{\pt x^i \pt x^j} \right).
    \end{split}
  \end{equation*}

\section{The mixed-order contact structure on $\R\times \mathcal T^S M$}\label{app-2}

\subsection{Mixed-order total derivatives and mixed-order contact forms}

We denote by $\pi_{1,0}^*(T \R \times \mathcal T^S M)$ the pullback bundle (see \cite[Definition 1.4.5]{Sau89}) of $\tau_\R \times \tau^S_M$ by $\pi_{1,0}$. It is a fiber bundle over $\R\times \mathcal T^S M$. %We proceed similarly for $\pi_{1,0}^*(T^* \R \times \mathcal T^{S*} M)$. %and $\tau^*_\R\times \tau^{S*}_M$

\begin{definition}[Mixed-order holonomic lift]
  Let $t\in\R$, $q\in M$, $X\in I_{(t,q)}(M)$ and $\tau \frac{d}{dt}|_t\in T_t \R$. The mixed-order holonomic lift of $\tau \vf t|_t$ by $X$ is defined to be
  \begin{equation*}
    \left(X_*(\tau \textstyle{\frac{d}{dt}}|_t), j_{(t,q)}X \right) \in \pi_{1,0}^*(T \R \times \mathcal T^S M).
  \end{equation*}
  The set of all mixed-order holonomic lifts is denoted by $H^R\pi_{1,0}$, that is,
  \begin{equation*}
    H^R \pi_{1,0}:= \left\{ \left(X_*(\tau \textstyle{\frac{d}{dt}}|_t), j_{(t,q)}X \right) \in \pi_{1,0}^*(T \R \times \mathcal T^S M) : j_{(t,q)}X \in \R\times \mathcal T^S M, \tau \textstyle{\frac{d}{dt}}|_t \in T_t \R \right\}.
  \end{equation*}
\end{definition}

Since $X_*$ depends only upon the mean derivatives of $X$ at $t$, the holonomic lift of a tangent vector is completely determined by $j_{(t,q)}X$ and does not depend on the choice of the representative diffusion $X$. In particular, the set $H^R \pi_{1,0}$ is well defined and is clearly a subbundle of $\pi_{1,0}^*(T \R \times \mathcal T^S M)$.

\begin{lemma}\label{dcpt-1}
  The fiber bundle $(\pi_{1,0}^*(T \R \times \mathcal T^S M), \pi_{1,0}^*(\tau_\R \times \tau^S_M), \R\times \mathcal T^S M)$ can be written as the Whitney sum of two subbundles
  \begin{equation*}
    \pi_{1,0}^* (V^S\pi) \times_{\R\times \mathcal T^S M} H^R\pi_{1,0}.
  \end{equation*}
\end{lemma}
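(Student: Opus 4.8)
The plan is to establish the Whitney-sum decomposition fiberwise, by producing a canonical splitting of the total space $\pi_{1,0}^*(T\R\times\mathcal T^S M)$ over each point $j_{(t,q)}X\in\R\times\mathcal T^S M$, and then verifying that the splitting varies smoothly (hence is a genuine direct sum of subbundles). The fiber of the pullback bundle over $j_{(t,q)}X$ is $(T_t\R\times\mathcal T^S_q M)$, the fiber of $\tau_\R\times\tau^S_M$ over $\pi_{1,0}(j_{(t,q)}X)=(t,q)$. Inside it sit two distinguished subspaces: the fiber of $\pi_{1,0}^*(V^S\pi)$, which is $V^S\pi|_{(t,q)}\cong\{0\}\times\mathcal T^S_q M$ (using $V^S\pi\cong\R\times\mathcal T^S M$ from Appendix \ref{sec-A-1}), and the fiber of $H^R\pi_{1,0}$, which is the one-dimensional line spanned by the mixed-order holonomic lift $X_*(\frac{d}{dt}|_t)$.

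First I would record these two fibers explicitly in the canonical coordinates $(t,x^i)$. By Definition \ref{push-pull-diff}, formula \eqref{pushforward}, the holonomic lift is
\begin{equation*}
  X_*(\ts{\frac{d}{dt}}|_t) = \vf t\bigg|_t + (DX)^i(t)\vf{x^i}\bigg|_q + \ts{\frac12}(QX)^{jk}(t)\frac{\pt^2}{\pt x^j\pt x^k}\bigg|_q,
\end{equation*}
while the vertical fiber consists of those elements of $T_t\R\times\mathcal T^S_q M$ whose $\vf t$-component vanishes, i.e. of the form $A^i\vf{x^i}+A^{jk}\frac{\pt^2}{\pt x^j\pt x^k}$. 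The key algebraic observation is then immediate: any element $B=B^0\vf t+B^i\vf{x^i}+B^{jk}\frac{\pt^2}{\pt x^j\pt x^k}$ of the fiber decomposes uniquely as
\begin{equation*}
  B = B^0\,X_*(\ts{\frac{d}{dt}}|_t) + \Big(B-B^0\,X_*(\ts{\frac{d}{dt}}|_t)\Big),
\end{equation*}
where the first summand lies in $H^R\pi_{1,0}$ and the second has vanishing $\vf t$-component and hence lies in $\pi_{1,0}^*(V^S\pi)$. Uniqueness follows because the only element common to both subspaces is $0$: a vertical element has no $\vf t$-component, whereas a nonzero multiple of $X_*(\frac{d}{dt}|_t)$ has $\vf t$-component equal to that multiple. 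This shows the two subbundles have complementary fibers with trivial intersection, which is exactly the pointwise Whitney-sum condition.

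Next I would verify the smoothness/subbundle structure. Both $\pi_{1,0}^*(V^S\pi)$ and $H^R\pi_{1,0}$ have already been identified in the excerpt as subbundles of $\pi_{1,0}^*(T\R\times\mathcal T^S M)$: the former as a pullback of the subbundle $V^S\pi$, the latter because $X_*$ depends smoothly only on $j_{(t,q)}X$ through the coordinates $(DX(t),QX(t))=(D^ix,Q^{jk}x)$, so the spanning section $j_{(t,q)}X\mapsto X_*(\frac{d}{dt}|_t)$ is smooth and nowhere zero (its $\vf t$-component is identically $1$). Since the fiber dimensions add correctly (the vertical fiber has dimension $d+\frac{d(d+1)}{2}$ and the holonomic line has dimension $1$, summing to $\dim(T_t\R\times\mathcal T^S_q M)$) and the two subbundles meet only in the zero section, the decomposition $\pi_{1,0}^*(V^S\pi)\times_{\R\times\mathcal T^S M}H^R\pi_{1,0}$ is a genuine Whitney sum. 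The main obstacle, and the only point requiring real care, is confirming that the projection onto the holonomic factor — namely $B\mapsto B^0$, reading off the $\vf t$-coefficient — is coordinate-independent; this follows from the transformation rule \eqref{2-tangent-tsfm} together with the fact that the base change is a bundle homomorphism fixing the time direction, so that the $\vf t$-component is intrinsic. This guarantees the splitting is canonical rather than coordinate-dependent, completing the argument.
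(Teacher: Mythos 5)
Your proposal is correct and is essentially the paper's own argument: the paper decomposes $A$ as $X_*(\pi^R_*(A)) + \bigl(A - X_*(\pi^R_*(A))\bigr)$, and since $\pi^R_*(A) = A^0\,\frac{d}{dt}\big|_t$ this is exactly your splitting $B = B^0\,X_*(\frac{d}{dt}|_t) + (B - B^0\,X_*(\frac{d}{dt}|_t))$, with the coordinate-independence you worry about at the end handled automatically by phrasing the time-component extraction as the intrinsic pushforward $\pi^R_*$.
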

\begin{proof}
  Suppose that $( A, j_{(t,q)}X) \in \pi_{1,0}^*(T \R \times \mathcal T^S M)$. Then $A \in T \R \times \mathcal T^S M$, and
  \begin{equation*}
    \left( X_*(\pi^R_*(A)), j_{(t,q)}X \right) \in H^R\pi_{1,0}.
  \end{equation*}
  It follows easily from the definition of pushforward \eqref{pushforward} that $\pi^R_*(A - X_*(\pi^R_*(A))) = 0$. Hence, $A - X_*(\pi^R_*(A))\in V^S\pi$ and
  \begin{equation*}
    \left( A - X_*(\pi^R_*(A)), j_{(t,q)}X \right) \in \pi_{1,0}^* (V^S\pi).
  \end{equation*}
  The result follows.
\end{proof}

The decomposition of $( A, j_{(t,q)}X) \in \pi_{1,0}^*(T \R \times \mathcal T^S M)$ may then be found by letting
\begin{equation*}
  \begin{split}
    A =&\ A^0 \vf t\bigg|_t + A^i \vf{x^i}\bigg|_q + A^{jk} \frac{\pt^2}{\pt x^j \pt x^k}\bigg|_q \\
    =&\ \left(A^i- A^0 D^i x(j_{(t,q)}X) \right) \vf{x^i}\bigg|_q + \left(A^{jk} - A^0 Q^{jk} x(j_{(t,q)}X) \right) \frac{\pt^2}{\pt x^j \pt x^k}\bigg|_q \\
    &\  + A^0 \left( \vf t\bigg|_t + D^i x(j_{(t,q)}X) \vf{x^i}\bigg|_q + \frac{1}{2} Q^{jk} x(j_{(t,q)}X) \frac{\pt^2}{\pt x^j \pt x^k}\bigg|_q \right).
  \end{split}
\end{equation*}

\begin{definition}
  A section of the bundle $(H^R\pi_{1,0}, \pi_{1,0}^*(\tau_\R \times \tau^S_M)|_{H^R\pi_{1,0}}, \R\times \mathcal T^S M)$ is called a mixed-order total derivative. The specific section
  \begin{equation*}
    \vf t + D^i x \vf{x^i} + \frac{1}{2} Q^{jk} x \frac{\pt^2}{\pt x^j \pt x^k}
  \end{equation*}
  is called the coordinate mixed-order total derivative, and is denoted by $\mathbf D_t$.
\end{definition}

The coordinate mixed-order total derivative is just the total mean derivative  in Definition \ref{total-mean-d}. The dual construction is the mixed-order contact cotangent vector, which may be described as being in the kernel of $X^*$.
\begin{definition}
  An element $(\alpha, j_{(t,q)}X) \in \pi_{1,0}^*(T^* \R \times \mathcal T^{S*} M)$ is called a mixed-order contact cotangent vector if $X^*(\alpha) = 0$. The set of all mixed-order contact cotangent vectors is denoted by $C^{R*}\pi_{1,0}$, that is,
  \begin{equation*}
    C^{R*}\pi_{1,0}:= \left\{ (\alpha, j_{(t,q)}X) \in \pi_{1,0}^*(T^* \R \times \mathcal T^{S*} M) : j_{(t,q)}X \in \R\times \mathcal T^S M, X^*(\alpha) = 0 \right\}.
  \end{equation*}
\end{definition}

It is straightforward to check that the vanishing of $X^*$ does not depend on the particular choice of the representative diffusion $X$. The dual relation between $X^*$ and $X_*$ in \eqref{dual-pp} implies that the mixed-order contact and holonomic elements annihilate each other.

To express a mixed-order contact cotangent vector $(\alpha, j_{(t,q)}X)$ in coordinates, let us consider
\begin{equation}\label{contact-cot}
  \alpha = \alpha_0 dt|_t + \alpha_i d^2 x^i|_q + \alpha_{jk} dx^j\cdot dx^k|_q.
\end{equation}
Using the definition \eqref{pullback} we get
\begin{equation*}
  0 = X^*(\alpha) = \left( \alpha_0 + \alpha_i (DX)^i + \alpha_{jk} (QX)^{jk} \right) dt|_t.
\end{equation*}
There are two basic nontrivial solutions of the above equation, say,
\begin{equation*}\left\{
  \begin{aligned}
    &\alpha_0 = - \alpha_i (DX)^i, \\
    &\alpha_{jk} = 0,
  \end{aligned}\right. \qquad\text{and}\qquad \left\{
  \begin{aligned}
    &\alpha_0 = - \alpha_{jk} (QX)^{jk}, \\
    &\alpha_i = 0.
  \end{aligned}\right.
\end{equation*}
Plugging these solutions in \eqref{contact-cot}, we get two basic types of mixed-order contact cotangent vectors
\begin{equation*}
  (d^2x^i - D^i x dt)|_{j_{(t,q)}X} \qquad\text{and}\qquad (dx^j\cdot dx^k - Q^{jk} x dt)|_{j_{(t,q)}X}.
\end{equation*}
Thus, every mixed-order contact cotangent vector in ${(C^{R*}\pi_{1,0})}_{j_{(t,q)}X}$ is a linear combination of these basic mixed-order contact cotangent vectors.

\begin{lemma}
  The fiber bundle $(\pi_{1,0}^*(T^* \R \times \mathcal T^{S*} M), \pi_{1,0}^*(\tau^*_\R \times \tau^{S*}_M), \R\times \mathcal T^S M)$ can be written as the Whitney sum of two subbundles
  \begin{equation*}
    \pi_1^* (T^* \R) \times_{\R\times \mathcal T^S M} C^{R*}\pi_{1,0}.
  \end{equation*}
\end{lemma}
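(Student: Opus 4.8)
The plan is to prove the Whitney sum decomposition
\[
  \pi_{1,0}^*(T^* \R \times \mathcal T^{S*} M) = \pi_1^* (T^* \R) \times_{\R\times \mathcal T^S M} C^{R*}\pi_{1,0}
\]
by exhibiting, at each point $j_{(t,q)}X$, an explicit direct-sum decomposition of the fibre of the pullback bundle into the fibre of $\pi_1^*(T^*\R)$ and the fibre of $C^{R*}\pi_{1,0}$. This is precisely dual to Lemma \ref{dcpt-1}, so the strategy mirrors that proof but works with pullbacks $X^*$ in place of pushforwards $X_*$, using the duality relation \eqref{dual-pp} to transfer the argument.

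First I would take an arbitrary element $(\alpha, j_{(t,q)}X) \in \pi_{1,0}^*(T^* \R \times \mathcal T^{S*} M)$ with the local coordinate expression
\[
  \alpha = \alpha_0\, dt|_t + \alpha_i\, d^2 x^i|_q + \textstyle{\frac{1}{2}}\alpha_{jk}\, dx^j\cdot dx^k|_q,
\]
and split off its ``time part''. Concretely, I would write $\alpha = \alpha^{\mathrm{hor}} + \alpha^{\mathrm{con}}$ where
\[
  \alpha^{\mathrm{hor}} = \left\langle \alpha, X_*(\textstyle{\frac{d}{dt}}|_t) \right\rangle dt|_t, \qquad \alpha^{\mathrm{con}} = \alpha - \alpha^{\mathrm{hor}}.
\]
The first summand manifestly lies in $\pi_1^*(T^*\R)$ (it is a multiple of $dt|_t$, which is pulled back from $T^*_t\R$, and this projection is independent of the representative diffusion since $X_*(\frac{d}{dt}|_t)$ depends only on $j_{(t,q)}X$). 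For the second summand I would compute, using the definition \eqref{pullback} of $X^*$ together with the duality \eqref{dual-pp},
\[
  X^*(\alpha^{\mathrm{con}}) = X^*(\alpha) - X^*(\alpha^{\mathrm{hor}}) = \left\langle \alpha, X_*(\textstyle{\frac{d}{dt}}|_t) \right\rangle dt|_t - \left\langle \alpha, X_*(\textstyle{\frac{d}{dt}}|_t) \right\rangle dt|_t = 0,
\]
where the last equality uses that $X^*(dt|_t) = dt|_t$ and $X_*(\frac{d}{dt}|_t)$ has time-component $1$. Hence $(\alpha^{\mathrm{con}}, j_{(t,q)}X) \in C^{R*}\pi_{1,0}$, establishing that the sum of the two subbundles is the whole bundle.

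Next I would verify that the decomposition is direct, i.e.\ that the intersection of the two subbundles is the zero section. An element of $\pi_1^*(T^*\R)$ has the form $c\, dt|_t$; if it also lies in $C^{R*}\pi_{1,0}$ then $0 = X^*(c\,dt|_t) = c\, dt|_t$, forcing $c = 0$. Combined with the surjectivity just shown, this gives a fibrewise direct sum, and since both $\pi_1^*(T^*\R)$ and $C^{R*}\pi_{1,0}$ are subbundles of the pullback bundle (the former trivially, the latter being cut out as the kernel of the bundle map $X^*$, whose rank is locally constant as seen from the explicit basic contact covectors $d^2x^i - D^i x\, dt$ and $dx^j\cdot dx^k - Q^{jk}x\, dt$), the decomposition is a genuine Whitney sum over $\R\times\mathcal T^S M$. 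I anticipate the only mild obstacle is the bookkeeping needed to confirm that the splitting map $\alpha \mapsto (\alpha^{\mathrm{hor}}, \alpha^{\mathrm{con}})$ is smooth and representative-independent; this follows because $X_*(\frac{d}{dt}|_t)$ is expressed through the canonical coordinates $(D^i x, Q^{jk}x)$ on $\mathcal T^S M$ via \eqref{pushforward}, which are globally well-defined smooth functions on $\R\times\mathcal T^S M$, so no genuine difficulty arises — the content is entirely the dualization of Lemma \ref{dcpt-1}.
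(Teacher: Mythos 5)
Your proof is correct and is essentially the paper's own argument: since $\alpha^{\mathrm{hor}} = \langle \alpha, X_*(\frac{d}{dt}|_t)\rangle\, dt|_t = X^*(\alpha)$ by the duality relation, your splitting $\alpha = \alpha^{\mathrm{hor}} + (\alpha - \alpha^{\mathrm{hor}})$ coincides with the paper's decomposition $\alpha = X^*(\alpha) + (\alpha - X^*(\alpha))$, with $X^*(\alpha - X^*(\alpha)) = 0$ placing the second summand in $C^{R*}\pi_{1,0}$. Your additional explicit check that the intersection of the two subbundles is the zero section is left implicit in the paper but is a harmless (and correct) addition.
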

\begin{proof}
  Suppose that $(\alpha, j_{(t,q)}X) \in \pi_{1,0}^*(T^* \R \times \mathcal T^{S*} M)$. Then, $\alpha \in T^* \R \times \mathcal T^{S*} M$, and the definition of pullback yields
  \begin{equation*}
    \left( X^*(\alpha), j_{(t,q)}X \right) \in \pi_1^* (T^* \R).
  \end{equation*}
  Since $X^*(\alpha - X^*(\alpha)) = 0$, it follows that
  \begin{equation*}
    \left( \alpha - X^*(\alpha), j_{(t,q)}X \right) \in C^{R*}\pi_{1,0}.
  \end{equation*}
  This ends the proof.
\end{proof}

The decomposition of $( \alpha, j_{(t,q)}X) \in \pi_{1,0}^*(T \R \times \mathcal T^S M)$ may then be found by letting
\begin{equation*}
  \begin{split}
    \alpha =&\ \alpha_0 dt|_t + \alpha_i d^2 x^i|_q + \alpha_{jk}dx^j\cdot dx^k|_q \\
    =&\ \left( \alpha_0 + \alpha_i D^i x(j_{(t,q)}X) + \alpha_{jk} Q^{jk} x(j_{(t,q)}X) \right) dt|_t \\
    &\  + \alpha_i \left( d^2x^i - D^i x(j_{(t,q)}X) dt \right)\big|_{(t,q)} + \alpha_{jk} \left( dx^j\cdot dx^k - Q^{jk} x(j_{(t,q)}X) dt \right)\Big|_{(t,q)}.
  \end{split}
\end{equation*}

\begin{definition}
  A section of the bundle $(C^{R*}\pi_{1,0}, \pi_{1,0}^*(\tau^*_\R \times \tau^{S*}_M)|_{C^{R*}\pi_{1,0}}, \R\times \mathcal T^S M)$ is called a mixed-order contact form. The following specific sections
  \begin{equation*}
    d^2x^i - D^i x dt, \quad dx^j\cdot dx^k - Q^{jk} x dt, \qquad 1\le i,j,k \le d,
  \end{equation*}
  are called basic mixed-order contact forms.
\end{definition}

It follows from the construction that the set of basic mixed-order contact forms defines a local frame of the bundle $\pi_{1,0}^*(\tau^*_\R \times \tau^{S*}_M)|_{C^{R*}\pi_{1,0}}$.

\begin{remark}
  In contrast, we recall the classical contact forms on the first-order jet bundle $J^1 \pi = \R\times TM$. Using the coordinates $(t,x^i,\dot x^i)$, the classical basic contact forms are $dx^i - \dot x^i dt$, $1\le i \le d$. See \cite[Section 4.3]{Sau89} and \cite[Theorem 4.23]{Olv95}, also cf. \cite[p.~9]{Gei08}, for a one-dimensional example.
\end{remark}

\begin{corollary}%\label{push-pull-prog}
  Let $(\R\times U, (t,x^i))$ be a coordinate chart on $\R\times M$. Let $\mathbf X$ be a $\mathcal T^S M$-valued diffusion process. In local coordinates, the pushforward map $\mathbf X_*$ from $T \R$ to $T \R \times \mathcal T^S \mathcal T^S M$ is given by
  \begin{equation*}
    \begin{split}
      \mathbf X_*\left(\tau \frac{d}{dt}\bigg|_t \right) =&\ \tau \bigg( \vf t + D^i (x\circ \mathbf X) \vf{x^i} + D^i (Dx\circ \mathbf X) \vf{D^ix} + D^{jk} (Qx\circ \mathbf X) \vf{Q^{jk}x} \\
      &\ + \frac{1}{2} Q^{jk} (x\circ \mathbf X) \frac{\pt^2}{\pt x^j \pt x^k} + \frac{1}{2} Q^{jk} (Dx\circ \mathbf X) \frac{\pt^2}{\pt D^j x \pt D^k x} + \frac{1}{2} Q^{jklm} (Qx\circ \mathbf X) \frac{\pt^2}{\pt Q^{jk}x \pt Q^{lm}x} \\
      &\ + \frac{1}{2} Q^{jk} (x\circ \mathbf X, Dx\circ \mathbf X) \frac{\pt^2}{\pt x^j \pt D^kx} + \frac{1}{2} Q^{jkl} (x\circ \mathbf X, Qx\circ \mathbf X) \frac{\pt^2}{\pt x^j \pt Q^{kl}x} \\
      &\ + \frac{1}{2} Q^{jkl} (Dx\circ \mathbf X, Qx\circ \mathbf X) \frac{\pt^2}{\pt D^j x \pt Q^{kl}x}\bigg)\bigg|_{(t,\mathbf X(t))}.
    \end{split}
  \end{equation*}
  The pullback map $\mathbf X^*$ from $T^* \R \times \mathcal T^{S*} \mathcal T^S M$ to $T^* \R$ is given by
  \begin{equation*}
    \begin{split}
      &\ \mathbf X^*\Big(\alpha_0 dt + \alpha_i d^2 x^i + \alpha^1_i d^2 D^ix + \alpha^2_{jk} d^2 Q^{jk}x + \alpha_{jk}dx^j\cdot dx^k + \alpha^1_{jk}dD^jx\cdot dD^kx + \alpha^2_{jklm}dQ^{jk}x\cdot dQ^{lm}x \\
      & \qquad + \alpha^{01}_{jk}dx^j\cdot dD^k x + \alpha^{02}_{jkl}dx^j\cdot dQ^{kl}x + \alpha^{12}_{jkl}dD^jx\cdot dQ^{kl}x \Big) \Big|_{(t,\mathbf X(t))} \\
      =&\ \Big( \alpha_0 + \alpha_i D^i (x\circ \mathbf X) + \alpha^1_i D^i (Dx\circ \mathbf X) + \alpha^2_{jk} D^{jk} (Qx\circ \mathbf X) \\
      &\ + \alpha_{jk} Q^{jk} (x\circ \mathbf X) + \alpha^1_{jk} Q^{jk} (Dx\circ \mathbf X) + \alpha^2_{jklm} Q^{jklm} (Qx\circ \mathbf X) \\
      &\ + \alpha^{01}_{jk} Q^{jk} (x\circ \mathbf X, Dx\circ \mathbf X) + \alpha^{02}_{jkl} Q^{jkl} (x\circ \mathbf X, Qx\circ \mathbf X) + \alpha^{12}_{jkl} Q^{jkl} (Dx\circ \mathbf X, Qx\circ \mathbf X) \Big) dt|_t.
    \end{split}
  \end{equation*}
%  Let $X$ be an $M$-valued diffusion process. The pushforward map $(jX)_*$ from $T \R$ to $T \R \times \mathcal T^S \mathcal T^S M$ is given by
%  \begin{equation*}
%    \begin{split}
%      (jX)_*\left(\tau \vf t\right) =&\ \tau \bigg( \vf t + D^i X \vf{x^i} + D^i D X \vf{D^ix} + D^{jk}Q X \vf{D^{jk}x} \\
%      &\ + \frac{1}{2} Q^{jk} X \frac{\pt^2}{\pt x^j \pt x^k} + \frac{1}{2} Q^{jk} DX \frac{\pt^2}{\pt D^j x \pt D^k x} + \frac{1}{2} Q^{jklm}Q X \frac{\pt^2}{\pt Q^{jk}x \pt Q^{lm}x}\bigg).
%    \end{split}
%  \end{equation*}
%  The pullback map $(jX)^*$ from $T^* \R \times \mathcal T^{S*} \mathcal T^S M$ to $T^* \R$ is given by
%  \begin{equation*}
%    \begin{split}
%      &\ (jX)^*\left(\alpha dt + \alpha_i d^2 x^i + \alpha_{jk}dx^j\cdot dx^k + \alpha^1_i d^2 D^ix + \alpha^1_{jk}dD^jx\cdot dD^kx + \alpha^2_{jk} d^2 Q^{jk}x + \alpha^2_{jklm}dQ^{jk}x\cdot dQ^{lm}x \right) \\
%      =&\ \left( \alpha + \alpha_i D^i X + \alpha_{jk} Q^{jk} X + \alpha^1_i D^i D X + \alpha^1_{jk} Q^{jk} DX + \alpha^2_{jk} D^{jk}Q X + \alpha^2_{jklm} Q^{jklm}Q X \right) dt.
%    \end{split}
%  \end{equation*}
\end{corollary}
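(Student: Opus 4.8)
The plan is to observe that the statement is nothing more than Definition \ref{push-pull-diff} applied verbatim to $\mathbf X$ regarded as a diffusion process valued in the smooth manifold $\mathcal T^S M$ (identified with $\mathcal T^E M$ via Proposition \ref{TS-TO}), equipped with its canonical coordinate system $(x^i, D^i x, Q^{jk} x)$. Indeed, since $\mathbf X$ is an honest diffusion whose base manifold happens to be $\mathcal T^S M$, the pushforward $\mathbf X_*: T_t\R \to T_t\R\times\mathcal T^S_{\mathbf X(t)}(\mathcal T^S M)$ and the pullback $\mathbf X^*: T^*_t\R\times\mathcal T^{S*}_{\mathbf X(t)}(\mathcal T^S M)\to T^*_t\R$ are already defined by \eqref{pushforward} and \eqref{pullback}, with the single manifold $M$ there replaced by $\mathcal T^S M$ and its local coordinates by the canonical ones. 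No new construction is therefore needed; the corollary is a coordinate expansion.

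First I would write $z = (z^A) := (x^i, D^i x, Q^{jk} x)$ for the collection of canonical coordinates on $\mathcal T^S M$, so that the general formula \eqref{pushforward} reads
\begin{equation*}
  \mathbf X_*\left(\tau\, \frac{d}{dt}\bigg|_t\right) = \tau\left(\vf t + (D\mathbf X)^A(t)\,\vf{z^A} + \ts{\frac{1}{2}}(Q\mathbf X)^{AB}(t)\,\frac{\pt^2}{\pt z^A\pt z^B}\right)\bigg|_{(t,\mathbf X(t))},
\end{equation*}
where $(D\mathbf X)^A = D(z^A\circ\mathbf X)$ and $(Q\mathbf X)^{AB} = Q(z^A\circ\mathbf X, z^B\circ\mathbf X)$ are the mean and quadratic mean derivatives of the real-valued It\^o processes $z^A\circ\mathbf X$, which exist because $\mathbf X$ is a diffusion. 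Expanding the single index $A$ over the three families $x^i$, $D^ix$, $Q^{jk}x$ produces the three first-order terms, while expanding the pair $(A,B)$ over the six unordered combinations of these families yields exactly the six quadratic terms in the statement; the labels $Q^{jk}(x\circ\mathbf X)$, $Q^{jk}(Dx\circ\mathbf X)$, $Q^{jklm}(Qx\circ\mathbf X)$, $Q^{jk}(x\circ\mathbf X,Dx\circ\mathbf X)$, $Q^{jkl}(x\circ\mathbf X,Qx\circ\mathbf X)$, $Q^{jkl}(Dx\circ\mathbf X,Qx\circ\mathbf X)$ are precisely the components of $(Q\mathbf X)^{AB}$ for those combinations. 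The pullback formula follows identically from \eqref{pullback}, pairing each canonical cobasis element ($dt$, the second differentials $d^2z^A$, and the symmetric products $dz^A\cdot dz^B$) against the corresponding mean or quadratic mean derivative.

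The only genuine work, and the point requiring care, is the bookkeeping of the symmetric index structure. The fiber coordinates $Q^{jk}x$ are themselves symmetric in $j,k$, so the second-order operators that involve them — e.g.\ $\frac{\pt^2}{\pt Q^{jk}x\,\pt Q^{lm}x}$ — carry four indices with two independent symmetrizations, and the coefficient $Q^{jklm}(Qx\circ\mathbf X)$ must be read as $Q(Q^{jk}x\circ\mathbf X,\, Q^{lm}x\circ\mathbf X)$ under the convention $\pt_k\pt_j=\pt_j\pt_k$ of \eqref{frame} extended to the fiber coordinates. I would then verify that the factor $\ts{\frac{1}{2}}$ in front of each diagonal-type quadratic term, together with the off-diagonal cross terms, correctly reproduces the symmetric-product expansion, in the same manner as the passage from \eqref{pushforward} to the frame \eqref{frame}. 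Once this symmetrization convention is fixed, the identity is a direct substitution; I expect no analytic difficulty beyond the existence of the relevant mean derivatives, which is automatic since $\mathbf X$ is a diffusion on $\mathcal T^S M$.
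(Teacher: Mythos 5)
Your proposal is correct and matches the paper's (implicit) reasoning exactly: the paper states this corollary without proof, treating it as an immediate instance of Definition \ref{push-pull-diff} applied to the manifold $\mathcal T^S M$ with its canonical coordinates $(x^i, D^i x, Q^{jk} x)$, which is precisely your argument of collecting the coordinates into a single multi-index and expanding the first- and second-order terms over the three coordinate families. Your added care about the symmetrization conventions for the fiber indices and the placement of the $\frac{1}{2}$ factors on the diagonal versus cross terms is exactly the bookkeeping the paper leaves to the reader.
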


\begin{corollary}
   Let $\alpha$ be a section of $(T^* \R \times \mathcal T^{S*} \mathcal T^S M, \tau^*_\R \times \tau^{S*}_{T^S M}, \R \times \mathcal T^S M)$. Then $\alpha$ is a mixed-order contact form if and only if for every $t\in\R$ and every $X\in \cup_{q\in M}I_{(t,q)}(M)$,
  \begin{equation*}
    (jX)^*(\alpha|_{j_{(t,q)}X}) = 0.
  \end{equation*}
\end{corollary}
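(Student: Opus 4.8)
The plan is to prove both implications by working in local coordinates, using the explicit coordinate expression for the pullback map $\mathbf X^*$ established in the preceding corollary together with the coordinate description of mixed-order contact forms (those lying in $C^{R*}\pi_{1,0}$). The statement asserts that a section $\alpha$ of $(T^*\R\times\mathcal T^{S*}\mathcal T^S M,\tau^*_\R\times\tau^{S*}_{T^S M},\R\times\mathcal T^S M)$ is a mixed-order contact form if and only if $(jX)^*(\alpha|_{j_{(t,q)}X})=0$ for every $t$ and every representative diffusion $X\in\cup_q I_{(t,q)}(M)$. The natural strategy is to recognize that $\mathbf X=jX$ is precisely the prolongation diffusion (Definition \ref{prog-diff}), a $\mathcal T^S M$-valued diffusion, so that the pullback $(jX)^*$ is exactly the map $\mathbf X^*$ whose coordinate form we already have in hand.

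First I would unwind the definition. By definition $\alpha$ is a mixed-order contact form iff, at each point $j_{(t,q)}X\in\R\times\mathcal T^S M$, the value $\alpha|_{j_{(t,q)}X}$ lies in the fiber of $C^{R*}\pi_{1,0}$, i.e. it annihilates the holonomic lift, which by the dual relation $\langle X^*(\alpha),\tau\frac{d}{dt}\rangle=\langle\alpha,X_*(\tau\frac{d}{dt})\rangle$ in Corollary \ref{push-pull-prop}.(i) is equivalent to $X^*(\alpha|_{j_{(t,q)}X})=0$. The subtlety is that here $\alpha$ lives over $\mathcal T^S M$ rather than over $M$, so the relevant pullback is not $X^*$ but the pullback $(jX)^*$ by the prolonged diffusion $jX$, which takes values in $\mathcal T^S M$. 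Thus the key observation I would make explicit is that the mixed-order contact condition on $\alpha$ (as a section over $\R\times\mathcal T^S M$) is, fiberwise and by the very construction of $C^{R*}\pi_{1,0}$, the vanishing of the pullback of $\alpha$ by the canonical representative; and the canonical representative of the point $j_{(t,q)}X$ in the total space $\R\times\mathcal T^S M$ is precisely the prolongation $jX$ of $X$.

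Next I would carry out the coordinate computation. Writing $\alpha$ in the basis $dt, d^2x^i, d^2D^ix, d^2Q^{jk}x$ and the various symmetric products as in the second displayed formula of the preceding corollary, I apply the explicit expression for $\mathbf X^*$ with $\mathbf X=jX$. The condition $(jX)^*(\alpha|_{j_{(t,q)}X})=0$ becomes the vanishing of the single coefficient of $dt|_t$, which reads as $\alpha_0$ plus a sum of the remaining coefficients each multiplied by the corresponding mean or quadratic mean derivative of the components $x\circ jX$, $Dx\circ jX$, $Qx\circ jX$ evaluated along the prolongation. For the forward direction, membership in $C^{R*}\pi_{1,0}$ means $\alpha$ is a linear combination of the basic mixed-order contact forms $d^2x^i-D^ix\,dt$, etc. (and their analogues in the jet variables $D^ix$, $Q^{jk}x$), each of which pulls back to zero under $\mathbf X^*$ by the defining solutions of the contact equation; hence the combination pulls back to zero. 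For the converse, I would exploit the freedom in choosing $X$: since for any prescribed values of the mean and quadratic mean derivatives there exists a representative diffusion realizing them (Proposition \ref{TS-TO} and the integral-process constructions), the vanishing of the $dt$-coefficient for all $X$ forces each structural relation among the coefficients $\alpha_0,\alpha_i,\alpha_{jk},\dots$ that characterizes $C^{R*}\pi_{1,0}$, and therefore $\alpha$ is a contact form.

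The main obstacle I anticipate is bookkeeping rather than conceptual: the total space $\mathcal T^S M$ carries the enlarged coordinate system $(x^i,D^ix,Q^{jk}x)$, so $\alpha$ has nine families of coefficients and $\mathbf X^*$ involves nine families of mean and quadratic mean derivatives of $jX$. I must be careful that the arbitrariness argument in the converse genuinely produces enough independent realizations to separate all these coefficients; this relies on the fact that the prolongation $jX$ of a diffusion is itself a $\mathcal T^S M$-valued diffusion (noted after Definition \ref{prog-diff}) whose own generator can be prescribed with sufficient freedom. Once this independence is verified, both directions follow by matching coefficients, and I would close by remarking that the independence of the contact condition from the choice of representative $X$ is exactly the well-definedness already observed for $C^{R*}\pi_{1,0}$.
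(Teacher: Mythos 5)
Your proposal is correct in outline and follows essentially the same route as the paper: the forward implication comes down to the fact that a mixed-order contact form has no components along the jet coordinates, so $(jX)^*$ collapses to $X^*$ and vanishes by the defining contact condition; the converse is the coefficient-separation argument, varying the higher-order data of the representative diffusion while holding $j_{(t,q)}X$ fixed so as to kill each jet-variable coefficient in turn, exactly as the paper does by taking $Y$ with $D^iDY = D^iDX+\delta^i_{i_0}$ and all other second-order quantities unchanged. Two cautions. First, your parenthetical claiming the basic contact forms have ``analogues in the jet variables $D^ix$, $Q^{jk}x$'' is wrong: by definition a mixed-order contact form is a section of $C^{R*}\pi_{1,0}\subset\pi_{1,0}^*(T^*\R\times\mathcal T^{S*}M)$, whose local frame consists only of $d^2x^i-D^ix\,dt$ and $dx^j\cdot dx^k-Q^{jk}x\,dt$; the absence of jet-variable components is precisely what the converse must establish, and if such components were permitted the forward implication would fail (e.g.\ $d^2D^ix$ does not pull back to zero under $(jX)^*$). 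Second, Proposition \ref{TS-TO} only gives freedom in the first-order data $(DX(t),QX(t))$; to separate the nine families of coefficients you must prescribe the derivatives of the generator of $X$ near $(t,q)$, so that the quantities $D^i(Dx\circ jX)$, $Q^{jk}(Dx\circ jX)$, and so on vary independently while $j_{(t,q)}X$ stays fixed --- this is the point you rightly flag as the main obstacle, and it does go through.
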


\begin{proof}
  We first let $\alpha = \alpha_0 dt + \alpha_i d^2 x^i + \alpha_{jk}dx^j\cdot dx^k$ be a mixed-order contact form and let $X\in I_{(t,q)}(M)$. Then
  \begin{equation}\label{eqn-2}
    (jX)^*(\alpha|_{j_{(t,q)}X}) = \left( \alpha_0 + \alpha_i D^i x + \alpha_{jk} Q^{jk} x \right)(j_{(t,q)}X) dt|_t = X^*(\alpha|_{j_{(t,q)}X}) = 0.
  \end{equation}
  To prove the converse, we suppose
  \begin{equation*}
    \begin{split}
      \alpha =&\ \alpha_0 dt + \alpha_i d^2 x^i + \alpha^1_i d^2 D^ix + \alpha^2_{jk} d^2 Q^{jk}x + \alpha_{jk}dx^j\cdot dx^k + \alpha^1_{jk}dD^jx\cdot dD^kx + \alpha^2_{jklm}dQ^{jk}x\cdot dQ^{lm}x \\
      &\ + \alpha^{01}_{jk}dx^j\cdot dD^k x + \alpha^{02}_{jkl}dx^j\cdot dQ^{kl}x + \alpha^{12}_{jkl}dD^jx\cdot dQ^{kl}x
    \end{split}
  \end{equation*}
  Fix a particular index $i_0$ with $1\le i_0 \le d$. Let $Y\in I_{(t,q)}(M)$ such that $j_{(t,q)}X = j_{(t,q)}Y$, $D^i D Y = D^i D X + \delta_{i_0}^i$ and
  \begin{equation*}
    \begin{split}
      &\ \left( D^{jk}Q Y, Q^{jk} DY, Q^{jklm}Q Y, Q^{jk}(Y,DY), Q^{jkl}(Y,QY), Q^{jkl}(DY,QY) \right) \\
    =&\ \left( D^{jk}Q X, Q^{jk} DX, Q^{jklm}Q X, Q^{jk}(X,DX), Q^{jkl}(X,QX), Q^{jkl}(DX,QX) \right).
    \end{split}
  \end{equation*}
  Then,
  \begin{equation*}
    0 = (jY)^*(\alpha|_{j_{(t,q)}Y}) = (jX)^*(\alpha|_{j_{(t,q)}X}) + \alpha^1_i \delta_{i_0}^i = \alpha^1_{i_0}.
%      &= \alpha_0 + \alpha_i D^i Y + \alpha^1_i D^i D Y + \alpha^2_{jk} D^{jk}Q Y + \alpha_{jk} Q^{jk} Y + \alpha^1_{jk} Q^{jk} DY + \alpha^2_{jklm} Q^{jklm}Q Y \\
  \end{equation*}
  It follows from the arbitrariness of $i_0$ that $\alpha^1_i = 0$ for all $1\le i\le d$. Similarly, all $\alpha_{jk}^1$, $\alpha_{jk}^2$ and $\alpha_{jklm}^2$ vanish. Consequently, $\alpha = \alpha_0 dt + \alpha_i d^2 x^i + \alpha_{jk}dx^j\cdot dx^k$. As in \eqref{eqn-2}, we have $(jX)^*(\alpha|_{j_{(t,q)}X}) = X^*(\alpha|_{j_{(t,q)}X}) = 0$. Hence, $\alpha$ is a mixed-order contact form.
\end{proof}

\begin{corollary}
  Let $\mathbf X$ be a $\mathcal T^S M$-valued diffusion process. Then $\mathbf X = j X$, with $X$ an $M$-valued diffusion process, if and only if $\mathbf X^*(\alpha) = 0$ for every mixed-order contact form $\alpha$ on $\R\times \mathcal T^S M$.
\end{corollary}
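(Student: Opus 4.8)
The plan is to test the vanishing condition against the basic mixed-order contact forms only, compute the resulting pullbacks explicitly in the canonical coordinates $(x^i, D^i x, Q^{jk}x)$ on $\mathcal T^S M$, and then reconstruct the underlying diffusion as the projection $\tau^S_M\circ\mathbf X$. First I would reduce the quantifier ``for every mixed-order contact form'' to the finite family of basic contact forms $d^2 x^i - D^i x\,dt$ and $dx^j\cdot dx^k - Q^{jk}x\,dt$. By the construction preceding this corollary these sections frame the bundle $C^{R*}\pi_{1,0}$, so every mixed-order contact form is a $C^\infty(\R\times\mathcal T^S M)$-linear combination of them. Since the pullback $\mathbf X^*$ of Definition \ref{push-pull-diff} enters each coefficient linearly, evaluated at $\mathbf X(t)$, one has $\mathbf X^*(f\alpha)(t) = f(t,\mathbf X(t))\,\mathbf X^*(\alpha)(t)$, so $\mathbf X^*(\alpha)=0$ for all contact forms $\alpha$ if and only if $\mathbf X^*$ annihilates each basic contact form (the arbitrary, in particular constant, coefficients force each $dt$-term to vanish separately).

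Next I would compute those pullbacks. Reading $x^i$ as a coordinate function on the base $\mathcal T^S M$ and using the It\^o-calculus interpretation of $\mathbf X^*$ recorded in Corollary \ref{push-pull-prop}(ii), I expect
\begin{align*}
  \mathbf X^*(d^2 x^i - D^i x\,dt) &= \big[ D(x^i\circ\mathbf X)(t) - D^i x(\mathbf X(t)) \big]\, dt, \\
  \mathbf X^*(dx^j\cdot dx^k - Q^{jk}x\,dt) &= \big[ Q(x^j\circ\mathbf X, x^k\circ\mathbf X)(t) - Q^{jk}x(\mathbf X(t)) \big]\, dt.
\end{align*}
Consequently $\mathbf X^*$ kills every mixed-order contact form precisely when, almost surely for all $t$,
\begin{equation*}
  D(x^i\circ\mathbf X)(t) = D^i x(\mathbf X(t)), \qquad Q(x^j\circ\mathbf X, x^k\circ\mathbf X)(t) = Q^{jk}x(\mathbf X(t)).
\end{equation*}

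With this equivalence in hand the two implications follow quickly. For the forward direction, if $\mathbf X = jX$ then the preceding corollary already yields $(jX)^*(\alpha)=0$ for every contact form, which is exactly $\mathbf X^*(\alpha)=0$; alternatively, the two displayed identities hold because the canonical coordinates of $jX(t)$ are $(X^i(t), (DX)^i(t), (QX)^{jk}(t))$ by the very definition of the prolongation. For the converse, I set $X := \tau^S_M\circ\mathbf X$; since $\tau^S_M$ is smooth, $X$ is an $M$-valued diffusion by Corollary \ref{push-generator} (cf. Remark \ref{remark-1}(iii)), and $x^i\circ X = x^i\circ\mathbf X$, whence $(DX)^i = D(x^i\circ\mathbf X)$ and $(QX)^{jk} = Q(x^j\circ\mathbf X, x^k\circ\mathbf X)$. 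The contact conditions then read $(DX)^i(t) = D^i x(\mathbf X(t))$ and $(QX)^{jk}(t) = Q^{jk}x(\mathbf X(t))$, so $\mathbf X(t)$ and $jX(t)$ carry the same canonical coordinates $(x, Dx, Qx)$ and therefore coincide, $(x^i, D^i x, Q^{jk}x)$ being a coordinate system on $\mathcal T^S M$ (Proposition \ref{TS-TO}).

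The main obstacle I anticipate lies in the pullback computation itself: with $\mathcal T^S M$ now playing the role of the base manifold, I must justify carefully that $\mathbf X^*(d^2 x^i)$ equals $D(x^i\circ\mathbf X)\,dt$ and that $\mathbf X^*(dx^j\cdot dx^k)$ equals $Q(x^j\circ\mathbf X, x^k\circ\mathbf X)\,dt$, by unwinding the definition of $\mathbf X^*$ after Definition \ref{push-pull-diff} (or invoking the explicit coordinate formula for $\mathbf X^*$ established in the corollary just above), and that the $C^\infty$-linearity together with the arbitrariness of the frame coefficients genuinely reduces the universal statement to the basic forms. Once these coordinate facts are pinned down, the reconstruction of $X$ and the verification $\mathbf X=jX$ are routine.
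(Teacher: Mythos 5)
Your proposal is correct and follows essentially the same route as the paper: the forward direction is the identity $\mathbf X^*(\alpha)=(jX)^*(\alpha)=X^*(\alpha)=0$, and the converse is obtained by testing against the basic mixed-order contact forms $d^2x^i-D^ix\,dt$ and $dx^j\cdot dx^k-Q^{jk}x\,dt$ to force $D^i x(\mathbf X)=D^i(x\circ\mathbf X)$ and $Q^{jk}x(\mathbf X)=Q^{jk}(x\circ\mathbf X)$, which identifies $\mathbf X$ with $jX$ for $X=\tau^S_M\circ\mathbf X$. The only difference is that you spell out the reduction to the basic frame and the diffusion property of the projection, which the paper leaves implicit.
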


\begin{proof}
  We first suppose $\mathbf X = j X$ with $X$ an $M$-valued diffusion process. Then, for a mixed-order contact form $\alpha$,
  \begin{equation*}
    \mathbf X^*(\alpha) = (jX)^*(\alpha) = X^*(\alpha) = 0.
  \end{equation*}
  To prove the converse, it suffices to show, in local coordinates, that
  \begin{equation*}
    D^i x(\mathbf X) = D^i (x\circ \mathbf X), \qquad Q^{jk} x(\mathbf X) = Q^{jk} (x\circ \mathbf X).
  \end{equation*}
  This can be done as soon as we let $\alpha$ be a basic mixed-order contact form. For example, let $\alpha = d^2x^i - D^i x dt$, then
  \begin{equation*}
    0 = \mathbf X^*(\alpha) = \left( D^i (x\circ \mathbf X) - D^i x \circ\mathbf X \right) dt,
  \end{equation*}
  which leads to $D^i x(\mathbf X) = D^i (x\circ \mathbf X)$.
\end{proof}

\subsection{The mixed-order Cartan distribution and its symmetries}

The model bundle $\R\times \mathcal T^S M$ is a trivial bundle over $\R$ in its own right, and so we may consider its mixed-order tangent bundle $(T \R \times \mathcal T^S \mathcal T^S M, \tau_\R \times \tau^S_{\mathcal T^S M}, \R\times \mathcal T^S M)$.

\begin{definition}
  The bundle endomorphisms $(v,\id_E)$ of $\pi_{1,0}^*(\tau_\R \times \tau^S_M)$ is defined by
  \begin{equation*}
    v(A^h + A^v) = A^v,
  \end{equation*}
  where $A^h \in H^R\pi_{1,0}$ and $A^v \in \pi_{1,0}^* (V^S\pi)$.
\end{definition}

\begin{definition}[Mixed-order Cartan distribution]
  The mixed-order Cartan distribution is the kernel of the vector bundle homomorphism over $\id_{\R \times \mathcal T^S M}$
  \begin{equation*}
    v\circ (\pi_{1,0*}, \tau_\R \times \tau^S_{\mathcal T^S M}): T \R \times \mathcal T^S \mathcal T^S M \to \pi_{1,0}^*(\tau_\R \times \tau^S_M)
  \end{equation*}
  and is denoted by $C^R\pi_{1,0}$.
\end{definition}

Note that $C^R\pi_{1,0}$ is a subbundle of $\tau_\R \times \tau^S_{\mathcal T^S M}$. It follows from the above two definitions that
\begin{equation*}
  C^R\pi_{1,0} = (\pi_{1,0*}, \tau_\R \times \tau^S_{\mathcal T^S M})^{-1} H^R\pi_{1,0}.
\end{equation*}
Hence, for each $X \in I_{(t,q)}(M)$,
\begin{equation*}
  C^R\pi_{1,0}|_{j_{(t,q)}X} = (jX)_*(T_t \R) \oplus V^S \pi_{1,0}|_{j_{(t,q)}X}.
\end{equation*}
Similarly to the proof of Lemma \ref{dcpt-1}, we can decompose an element $\mathbf A \in C^R\pi_{1,0}|_{j_{(t,q)}X}$ as
\begin{equation}\label{Cartan-dcpt}
  \mathbf A = (jX)_*((\pi_1)^R_*(\mathbf A)) + \left[ \mathbf A - (jX)_*((\pi_1)^R_*(\mathbf A)) \right],
\end{equation}
where $(jX)_*((\pi_1)^R_*(\mathbf A)) \in (jX)_*(T_t \R)|_{j_{(t,q)}X}$ and $\mathbf A - (jX)_*((\pi_1)^R_*(\mathbf A)) \in V^S_{j_{(t,q)}X} \pi_{1,0}$.

From the duality relations it also follows that $(\tau^*_\R \times \tau^{S*}_{\mathcal T^S M})|_{C^{R*}\pi_{1,0}}$ is the annihilator of $(\tau_\R \times \tau^S_{\mathcal T^S M})|_{C^R\pi_{1,0}}$, or in other words, the basic mixed-order contact forms are local defining forms for the mixed-order contact distribution $C^R\pi_{1,0}$.
%The fiber dimension of the mixed-order Cartan distribution at each point of $\R \times \mathcal T^S M$ is
%\begin{equation*}
%  1+ \left( d+ \frac{d(d+1)}{2} \right) \times 2 + \frac{d(d+1)}{2} + \frac{d(d+1)}{4} \left( \frac{d(d+1)}{2} +1 \right)?
%\end{equation*}
A typical element $\mathbf A\in C^R\pi_{1,0}|_{j_{(t,q)}X}$ may be written in coordinates as
\begin{equation}\label{local-rep-cartan-dist}
  \begin{split}
    \mathbf A =&\ \mathbf A^0 \left( \vf t\bigg|_{j_{(t,q)}X} + D^i x(j_{(t,q)}X) \vf{x^i}\bigg|_{j_{(t,q)}X} + \frac{1}{2} Q^{jk} x(j_{(t,q)}X) \frac{\pt^2}{\pt x^j \pt x^k}\bigg|_{j_{(t,q)}X} \right) \\
    &\ + \mathbf A^i_1 \vf{D^ix}\bigg|_{j_{(t,q)}X} + \mathbf A^{jk}_{2} \vf{Q^{jk}x}\bigg|_{j_{(t,q)}X} + \mathbf A^{jk}_{11} \frac{\pt^2}{\pt D^j x \pt D^k x}\bigg|_{j_{(t,q)}X} + \mathbf A^{jklm}_{22} \frac{\pt^2}{\pt Q^{jk}x \pt Q^{lm}x}\bigg|_{j_{(t,q)}X} \\
    &\ + \mathbf A^{jk}_{01} \frac{\pt^2}{\pt x^j \pt D^k x}\bigg|_{j_{(t,q)}X} + \mathbf A^{jkl}_{02} \frac{\pt^2}{\pt x^j \pt Q^{kl} x}\bigg|_{j_{(t,q)}X} + \mathbf A^{jkl}_{12} \frac{\pt^2}{\pt D^j x \pt Q^{kl} x}\bigg|_{j_{(t,q)}X}.
  \end{split}
\end{equation}
From this it is easy to deduce $(\pi_{1,0})_*^R \mathbf A \in H^R\pi_{1,0}$.

\begin{definition}
  A symmetry of the mixed-order Cartan distribution on $\R \times \mathcal T^S M$ is a bundle automorphism $\mathbf F$ of $\R \times \mathcal T^S M$ which satisfies $\mathbf F^R_*(C^R\pi_{1,0}) = C^R\pi_{1,0}$.
\end{definition}

It follows by duality that symmetries of the mixed-order Cartan distribution are those bundle automorphisms which satisfy $\mathbf F^{R*}(C^{R*}\pi_{1,0}) = C^{R*}\pi_{1,0}$. For this reason, $\mathbf F$ is also called a mixed-order contact transformation. Similarly, $\mathbf F$ may be characterized by the fact that whenever $\alpha$ is a mixed-order contact form then so is $\mathbf F^{R*}(\alpha)$.

\begin{proposition}\label{bd-endo}
  Let $\mathbf F$ be a bundle homomorphism from $(\R\times \mathcal T^S M, \pi_1, \R)$ to $(\R\times \mathcal T^S N, \rho_1, \R)$ that projects to a diffeomorphism $F^0:\R\to\R$. Then $\mathbf F^R_*(C^R \pi_{1,0}) \subset C^R \rho_{1,0}$ if and only if $\mathbf F = j F$ where $F$ is a bundle homomorphism from $(\R\times M, \pi, \R)$ to $(\R\times N, \rho, \R)$ that projects to $F^0$.
\end{proposition}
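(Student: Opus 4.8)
The plan is to work throughout from the local description \eqref{local-rep-cartan-dist} of the mixed-order Cartan distribution, according to which $C^R\pi_{1,0}$ is spanned by the coordinate total derivative $\mathbf D_t = \pt_t + D^ix\,\pt_{x^i} + \tfrac12 Q^{jk}x\,\pt^2_{x^jx^k}$ together with the vertical fields $\pt_{D^ix}$, $\pt_{Q^{jk}x}$ and the second-order fields carrying at least one fibre index, and from the dual membership test for $C^R\rho_{1,0}$: a mixed-order vector with $\pt_s$-component $b^0$ lies in $C^R\rho_{1,0}$ iff its $\pt_{y^a}$-component equals $b^0\,D^ay$ and its pure base second-order component (that of $\pt^2_{y^ay^b}$) equals $b^0\,\tfrac12 Q^{ab}y$, the remaining fibre components being free. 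Both implications then reduce to pushing the spanning fields forward by $\mathbf F^R_*$ and reading off these conditions.

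For the "if" direction I would assume $\mathbf F = jF$ and use the splitting $C^R\pi_{1,0}|_{j_{(t,q)}X} = (jX)_*(T_t\R)\oplus V^S\pi_{1,0}$. By Corollary \ref{jf-bd-morph}, $jF$ is a bundle homomorphism $\pi_1\to\rho_1$ over $F^0$ and $\pi_{1,0}\to\rho_{1,0}$ over $F$; since it covers $F$ with respect to $\pi_{1,0},\rho_{1,0}$, its mixed-order pushforward maps $\pi_{1,0}$-vertical vectors to $\rho_{1,0}$-vertical ones, so $(jF)^R_*(V^S\pi_{1,0})\subset V^S\rho_{1,0}$. For the holonomic summand I would combine $jF\cdot jX = j(F\cdot X)$ (Corollary \ref{push-diff-bd}) with $F^R_*\circ X_* = (F\cdot X)_*\circ(F^0)_*$ (Corollary \ref{push-pull-prop}.(iv)), applied to the bundle homomorphism $jF$ and the diffusion $jX$, to get $(jF)^R_*\circ(jX)_* = (j(F\cdot X))_*\circ(F^0)_*$; hence $(jF)^R_*$ sends $(jX)_*(T_t\R)$ into the holonomic part of $C^R\rho_{1,0}$ at $jF(j_{(t,q)}X) = j_{F(t,q)}(F\cdot X)$. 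Linearity then gives $(jF)^R_*(C^R\pi_{1,0})\subset C^R\rho_{1,0}$.

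For the "only if" direction I would write $\mathbf F = (F^0,\bar{\mathbf F})$ with components $y^a = Y^a$, $D^ay = \Phi^a$, $Q^{bc}y = \Psi^{bc}$, all functions of $(t,x,Dx,Qx)$, and compute $\mathbf F^R_*$ from \eqref{local-rep-push-mixed} applied to $\mathbf F$ as a homomorphism between the jet bundles (with $\mathcal T^S M,\mathcal T^S N$ as fibre manifolds). First I would feed in the vertical second-order fields $\pt^2_{D^ixD^jx}$ and $\pt^2_{Q^{jk}xQ^{lm}x}$, which lie in $C^R\pi_{1,0}$ and have vanishing $\pt_t$-component; requiring their images to be vertical forces the $\pt^2_{y^ay^b}$-components $\pt_{D^ix}Y^a\,\pt_{D^jx}Y^b$ and $\pt_{Q^{jk}x}Y^a\,\pt_{Q^{lm}x}Y^b$ to vanish, whence (taking $a=b$ with equal fibre indices) $\pt_{D^ix}Y^a = \pt_{Q^{jk}x}Y^a = 0$. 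Thus $Y^a = Y^a(t,x)$ depends only on the base, so $\mathbf F$ descends to a bundle homomorphism $F(t,x) = (F^0(t),Y(t,x))$ of $(\R\times M,\pi,\R)$ into $(\R\times N,\rho,\R)$ over $F^0$. Next I would feed in $\mathbf D_t$: using $\pt_{Dx}Y = \pt_{Qx}Y = 0$ and the total-mean-derivative formula \eqref{local-rep-total-mean}, the $\pt_{y^a}$-component of $\mathbf F^R_*\mathbf D_t$ is $\mathbf D_t Y^a = \D_t Y^a$, the $\pt^2_{y^ay^b}$-component is $\tfrac12 Q^{jk}x\,\pt_{x^j}Y^a\,\pt_{x^k}Y^b$, and the $\pt_s$-component is $\dot F^0$; imposing membership in $C^R\rho_{1,0}$ yields $\Phi^a = (\dot F^0)^{-1}\D_t Y^a$ and $\Psi^{ab} = (\dot F^0)^{-1}Q^{jk}x\,\pt_{x^j}Y^a\,\pt_{x^k}Y^b$ at every jet point. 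These are exactly the coordinate formulae \eqref{local-rep-jF-1}–\eqref{local-rep-jF-2} defining $jF$, so together with $s = F^0(t)$ and $y^a = \bar F^a$ they give $\mathbf F = jF$.

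I expect the "only if" direction to be the main obstacle, and within it the routine but error-prone bookkeeping: selecting the right vertical test vectors to isolate $\pt_{Dx}Y = \pt_{Qx}Y = 0$, and tracking the $\tfrac12$-factors that relate the second-order coefficients in \eqref{local-rep-push-mixed} to the conjugate-diffusivity coordinates in the membership test for $C^R\rho_{1,0}$. Once projectability over the base is established, the final identification with the prolongation formulae is a direct comparison, and the ``if'' direction is essentially formal, resting on the compatibility relations already recorded in Corollaries \ref{jf-bd-morph}, \ref{push-diff-bd} and \ref{push-pull-prop}.
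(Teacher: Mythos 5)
Your proof is correct and follows essentially the same route as the paper: the ``if'' direction uses the identical holonomic/vertical splitting together with Corollaries \ref{jf-bd-morph}, \ref{push-diff-bd} and \ref{push-pull-prop}.(iv), and the ``only if'' direction first forces the base components of $\mathbf F$ to depend only on $(t,x)$ and then matches the image of the Cartan distribution (via $\mathbf D_t$) against the prolongation formulae \eqref{local-rep-jF-1}--\eqref{local-rep-jF-2}. The only cosmetic difference is that you establish projectability by direct coordinate tests on vertical second-order fields, whereas the paper deduces it abstractly from $\mathbf F^R_*(V^S\pi_{1,0})\subset V^S\rho_{1,0}$ via Lemma \ref{bd-morph-mixed-vertical}.
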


\begin{proof}
  First, we prove the sufficiency. Let $\mathbf A\in C^R\pi_{1,0}|_{j_{(t,q)}X}$. According to \eqref{Cartan-dcpt}, we decompose $\mathbf A$ by $\mathbf A = \mathbf A_1 + \mathbf A_2$ with $\mathbf A_1 = (jX)_*((\pi_1)^R_*(\mathbf A)) \in (jX)_*(T_t \R)$ and $\mathbf A_2 \in V^S_{j_{(t,q)}X} \pi_{1,0}$. Then, since by Corollaries \ref{push-diff-bd} and \ref{push-pull-prop}.(iv), $(jF)^R_* \circ (jX)_* = (jF\cdot jX)_* \circ (F^0)_* = (j\tilde X)_* \circ (F^0)_*$ where $\tilde X = F\cdot X$ is the pushforward of $X$ by $F$, we have
  \begin{equation*}
    \mathbf F^R_*(\mathbf A_1) = (jF)^R_*(\mathbf A_1) = (jF)^R_* (jX)_* (\pi_1)^R_* \mathbf A = (j\tilde X)_* (F^0)_* (\pi_1)^R_* \mathbf A \in (j\tilde X)_*(\mathcal T_{F^0(t)}\R).
  \end{equation*}
  Besides, since $jF: \pi_{1,0} \to \rho_{1,0}$ is a bundle homomorphism projecting to $F$ by Corollary \ref{jf-bd-morph}.(ii), we have $\rho_{1,0} \circ jF = F \circ \pi_{1,0}$. Then,
  \begin{equation*}
    (\rho_{1,0})^S_* (\mathbf F^R_*(\mathbf A_2)) = (\rho_{1,0})^S_* ((jF)^R_*(\mathbf A_2)) = F^S_* (\pi_{1,0})^S_* (\mathbf A_2) = 0,
  \end{equation*}
  which yields $\mathbf F^R_*(\mathbf A_2) \in V^S \rho_{1,0}$. This proves $\mathbf F^R_*(C^R\pi_{1,0}) \subset C^R\rho_{1,0}$.

  For the necessity, we first prove that $\mathbf F$ is bundle homomorphism from $\pi_{1,0}$ to $\rho_{1,0}$ by showing $\mathbf F^S_*(V^S \pi_{1,0}) \subset V^S \rho_{1,0}$, by virtue of Lemma \ref{bd-morph-mixed-vertical}. Let $\mathbf A \in V^S \pi_{1,0}$. Set $\mathbf F^R_* \mathbf A = \mathbf A_1 + \mathbf A_2$, where $\mathbf A_1\in (j Y)_*(\mathcal T_{F^0(t)} \R)$ and $\mathbf A_2\in V^S \rho_{1,0}$ for some diffusion $Y$. Since $\mathbf F$ projects to $F^0$,
  \begin{equation*}
    (\rho_1)^S_*(\mathbf F^S_* \mathbf A) = (F^0)^S_* (\pi_1)^S_* \mathbf A = (F^0)^S_* \pi^S_* (\pi_{1,0})^S_* \mathbf A = 0,
  \end{equation*}
  while $(\rho_1)^S_* \mathbf A_2 = \rho^S_* (\rho_{1,0})^S_* \mathbf A_2 = 0$. Thus, $(\rho_1)^S_* \mathbf A_1 = 0$. Since $\mathbf A_1\in (j Y)_*(\mathcal T_{F^0(t)} \R)$, we set $\mathbf A_1 = (jY)_* (\tau \vf s|_{F^0(t)} )$. Then $(\rho_1)^S_* \mathbf A_1 = \tau \vf s|_{F^0(t)} = 0$. Hence, $\tau = 0$ and so $\mathbf A_1 = 0$. This leads to $\mathbf F^R_*(V^S \pi_{1,0}) \subset V^S \rho_{1,0}$ and so that $\mathbf F$ is bundle homomorphism from $\pi_{1,0}$ to $\rho_{1,0}$. Denote the projection of $\mathbf F$ onto a map from $\R\times M$ to $\R\times N$ by $F$. It follows that \begin{equation*}
    \rho\circ F\circ \pi_{1,0} = \rho\circ\rho_{1,0} \circ \mathbf F = \rho_1\circ \mathbf F = F^0 \circ \pi_1 = F^0 \circ\pi\circ\pi_{1,0}.
  \end{equation*}
  Since $\pi_{1,0}$ is surjective, we obtain $\rho\circ F = F^0 \circ\pi$, so that $F$ is a bundle homomorphism from $\pi$ to $\rho$ projecting to $F^0$. We shall write $F = (F^0, \bar F)$ and $\mathbf F = (F^0, \bar{\mathbf F})$. %Since $\mathbf F$ projects to $F$, we also write $\mathbf F = (F, \hat{\mathbf F})$, so that $\bar{\mathbf F} = (\bar F, \hat{\mathbf F})$. To display the independent variables, it reads $\mathbf F(t,p, A) = (F^0(t), \bar F(t,p), \hat{\mathbf F}(t,p,A) )$.

  Next, we will show $\mathbf F =jF$. Fix a $j_{(t,q)}X \in \R\times \mathcal T^S M$. Let $\mathbf F(j_{(t,q)}X) = j_{(s,q')}Y$. Then, $s = F^0(t)$ and $(s,q') = F(t,q)$.
%  By applying Corollary \ref{push-pull-prop}.(iii), it suffices to show $Y_*( \frac{d}{ds}|_{s_0}) = \tilde X_*( \frac{d}{ds}|_{s_0})$. Since $F(j_{(t_0,p)}X) = j_{f(t_0,p)}Y$, we have $F^R_* (jX)_* = (jY)_* (f^0)_*$ on $\mathcal T_{t_0}\R$. It then follows that
%  \begin{equation*}
%    \begin{split}
%      Y_*\left( \frac{d}{ds}\bigg|_{s_0} \right) &= (\rho_{1,0})^S_*(jY)_* \left( \frac{d}{ds}\bigg|_{s_0} \right) = (\rho_{1,0})^S_* F^R_* (jX)_* (f^0)_*^{-1} \left( \frac{d}{ds}\bigg|_{s_0} \right) \\
%      &= f^R_* (\pi_{1,0})^S_* (jX)_* (f^0)_*^{-1} \left( \frac{d}{ds}\bigg|_{s_0} \right) = f^R_* X_* (f^0)_*^{-1} \left( \frac{d}{ds}\bigg|_{s_0} \right) = \tilde X_* \left( \frac{d}{ds}\bigg|_{s_0} \right).
%    \end{split}
%  \end{equation*}
  For an element $\mathbf A\in C^R\pi_{1,0}|_{j_{(t,q)}X}$ with local expression in \eqref{local-rep-cartan-dist}, we have from \eqref{local-rep-push-mixed} that
  \begin{equation*}
    \begin{split}
      \mathbf F^R_* \mathbf A =&\ \mathbf A^0 \frac{dF^0}{dt}(t) \frac{\partial}{\partial s}\bigg|_{j_{(s,q')}Y} + (\mathbf A \bar F^i)(j_{(t,q)}X) \frac{\partial}{\partial y^i}\bigg|_{j_{(s,q')}Y} + \frac{\mathbf A^0}{2} Q^{jk} x(j_{(t,q)}X) \frac{\pt \bar F^i}{\pt x^k} \frac{\pt \bar F^j}{\pt x^l} (t,q) \frac{\partial^2}{\partial y^i\partial y^j}\bigg|_{j_{(s,q')}Y} \\
      &\ + \text{terms}\left( \frac{\partial}{\partial D^iy}\bigg|_{j_{(s,q')}Y}, \frac{\partial}{\partial Q^{ij}y}\bigg|_{j_{(s,q')}Y}, \cdots \right).
    \end{split}
  \end{equation*}
  Since $\bar F$ only depends on the variables on $\R\times M$, we have
  \begin{equation*}
    \begin{split}
      (\mathbf A \bar F^i)(j_{(t,q)}X) &= \left( (\pi_{1,0})_*^R \mathbf A \right) \bar F^i(j_{(t,q)}X) \\
      &= \mathbf A^0 \left[ \frac{\pt \bar F^i}{\pt t}(t,q) + D^j x(j_{(t,q)}X) \frac{\pt \bar F^i}{\pt x^j}(t,q) + \frac{1}{2} Q^{jk} x(j_{(t,q)}X) \frac{\pt^2 \bar F^i}{\pt x^j \pt x^k}(t,q) \right].
    \end{split}
  \end{equation*}
  Then, the local expressions for $jF$ in \eqref{local-rep-jF-1} and \eqref{local-rep-jF-2} yield
  \begin{equation*}
    \mathbf F^R_* \mathbf A = \mathbf A^0 \frac{dF^0}{dt}(t) \left[ \frac{\partial}{\partial s}\bigg|_{j_{(s,q')}Y} + D^i y \circ jF(j_{(t,q)}X) \frac{\partial}{\partial y^i}\bigg|_{j_{(s,q')}Y} + \frac{1}{2} Q^{ij} y \circ jF(j_{(t,q)}X) \frac{\partial^2}{\partial y^i\partial y^j}\bigg|_{j_{(s,q')}Y} \right].
  \end{equation*}
  Since $\mathbf F^R_* \mathbf A\in C^R\pi_{1,0}|_{j_{(s,q')}Y}$ by the assumption, it follows that $jF(j_{(t,q)}X) = j_{(s,q')}Y = \mathbf F(j_{(t,q)}X)$. This proves that $\mathbf F = jF$.
\end{proof}

\begin{corollary}\label{bd-endo-2}
  Let $\mathbf F$ be a bundle automorphism on $(\R\times \mathcal T^S M, \pi_1, \R)$ projecting to a diffeomorphism $F^0:\R\to\R$. Then $\mathbf F$ is a symmetry of $C^R\pi_{1,0}$ if and only if $\mathbf F = j F$ where $F$ is a bundle automorphism on $(\R\times M, \pi, \R)$ that projects to $F^0$.
\end{corollary}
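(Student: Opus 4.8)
The plan is to deduce the corollary from Proposition~\ref{bd-endo} by exploiting invertibility, applying that proposition both to $\mathbf F$ and to its inverse $\mathbf F^{-1}$. Recall that, by definition, $\mathbf F$ is a symmetry of $C^R\pi_{1,0}$ precisely when it is a bundle automorphism of $\R\times\mathcal T^S M$ satisfying the \emph{equality} $\mathbf F^R_*(C^R\pi_{1,0}) = C^R\pi_{1,0}$, whereas Proposition~\ref{bd-endo} only delivers a one-sided inclusion and only identifies the base map $F$ as an endomorphism. Thus the two things to upgrade are: from inclusion to equality, and from a base \emph{endomorphism} $F$ to a base \emph{automorphism}. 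Both upgrades will come from running the proposition on $\mathbf F^{-1}$, which is again a bundle automorphism projecting to $(F^0)^{-1}$, a diffeomorphism since $F^0$ is.

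For the forward implication, suppose $\mathbf F$ is a symmetry. Then in particular $\mathbf F^R_*(C^R\pi_{1,0})\subset C^R\pi_{1,0}$, so Proposition~\ref{bd-endo} (with $N=M$) yields $\mathbf F = jF$ for a bundle endomorphism $F$ of $(\R\times M,\pi,\R)$ projecting to $F^0$. To see that $F$ is invertible, I would first note that $\mathbf F^{-1}$ is also a symmetry: applying $(\mathbf F^{-1})^R_*$ to both sides of $\mathbf F^R_*(C^R\pi_{1,0})=C^R\pi_{1,0}$ and using the composition law $(\mathbf F^{-1})^R_*\circ\mathbf F^R_* = (\mathbf F^{-1}\circ\mathbf F)^R_* = \id$ (the mixed-order analogue of Lemma~\ref{prop-push-pull}.(i), valid since $F^R_* = F^S_*|_{T\R\times\mathcal T^S M}$) gives $(\mathbf F^{-1})^R_*(C^R\pi_{1,0}) = C^R\pi_{1,0}$. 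Proposition~\ref{bd-endo} applied to $\mathbf F^{-1}$ then produces a bundle endomorphism $G$ with $\mathbf F^{-1} = jG$. By the functoriality of prolongation, Corollary~\ref{jf-bd-morph}.(iii), $j(F\circ G) = jF\circ jG = \mathbf F\circ\mathbf F^{-1} = \id_{\R\times\mathcal T^S M} = j(\id_{\R\times M})$, and likewise $j(G\circ F) = j(\id_{\R\times M})$. Since $jF$ projects to $F$ under $\rho_{1,0}$ by Corollary~\ref{jf-bd-morph}.(ii), the prolongation map $j$ is injective on bundle homomorphisms, so $F\circ G = G\circ F = \id_{\R\times M}$; hence $F$ is a bundle automorphism.

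For the converse, suppose $\mathbf F = jF$ with $F$ a bundle automorphism projecting to $F^0$. Then $\mathbf F$ is itself a bundle automorphism with inverse $jF^{-1}$, because Corollary~\ref{jf-bd-morph}.(iii) gives $jF\circ jF^{-1} = j(F\circ F^{-1}) = j(\id) = \id$. Proposition~\ref{bd-endo} applied to $F$ gives $\mathbf F^R_*(C^R\pi_{1,0})\subset C^R\pi_{1,0}$, and the same proposition applied to the bundle automorphism $F^{-1}$ (projecting to the diffeomorphism $(F^0)^{-1}$) gives $(\mathbf F^{-1})^R_*(C^R\pi_{1,0}) = (jF^{-1})^R_*(C^R\pi_{1,0})\subset C^R\pi_{1,0}$. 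Pushing this latter inclusion forward by $\mathbf F^R_*$ and using functoriality yields the reverse inclusion $C^R\pi_{1,0}\subset\mathbf F^R_*(C^R\pi_{1,0})$, so the two inclusions combine to the required equality, and $\mathbf F$ is a symmetry.

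I expect the only real obstacle to be bookkeeping rather than substance: one must confirm that the mixed-order pushforward $(\cdot)^R_*$ is a covariant functor on bundle isomorphisms (so that $(\mathbf F^{-1})^R_*\circ\mathbf F^R_* = \id$), which is immediate from Lemma~\ref{prop-push-pull}.(i) once one recalls $F^R_* = F^S_*|_{T\R\times\mathcal T^S M}$, and that $j$ is injective, which follows from the projection property in Corollary~\ref{jf-bd-morph}.(ii). Given Proposition~\ref{bd-endo}, no new geometric computation is needed.
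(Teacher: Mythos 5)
Your proposal is correct and follows essentially the same route as the paper: apply Proposition~\ref{bd-endo} to both $\mathbf F$ and $\mathbf F^{-1}$, then use the functoriality of prolongation (Corollary~\ref{jf-bd-morph}.(iii)) to conclude that the base map is invertible, and run the same two inclusions in reverse for the converse. Your write-up is in fact slightly more explicit than the paper's (in justifying the injectivity of $j$ via the projection property and in completing the converse to a genuine equality of distributions), but there is no substantive difference.
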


\begin{proof}
  If $\mathbf F$ is a symmetry, then $\mathbf F^R_*(C^R\pi_{1,0}) \subset C^R\pi_{1,0}$ and $(\mathbf F^{-1})^R_*(C^R\pi_{1,0}) \subset C^R\pi_{1,0}$. By Proposition \ref{bd-endo}, $\mathbf F = j F$ and $\mathbf F^{-1} = j G$ for some bundle endomorphisms $F$ and $G$ on $(\R\times M, \pi_1, \R)$ that projects to $F^0$ and $(F^0)^{-1}$, respectively. Then, Corollary \ref{jf-bd-morph}.(iii) implies that $j(F\circ G) = jF \circ jG = F\circ F^{-1} = \id_{\R\times \mathcal T^S M}$ and hence $F\circ G = \id_{\R\times M}$. For the same reason, $G\circ F = \id_{\R\times M}$. Thus, $F$ is a bundle automorphism on $\pi$. Conversely, if $\mathbf F = j F$ and $F$ is a bundle automorphism, then $\mathbf F \circ j F^{-1} = j F^{-1}\circ F = \id_{\R\times \mathcal T^S M}$, which yields $\mathbf F^{-1} = j F^{-1}$ and hence $\mathbf F$ is a bundle automorphism on $\pi_1$.
\end{proof}

\subsection{Infinitesimal symmetries}

\begin{definition}
  An infinitesimal symmetry of the mixed-order Cartan distribution is a $\pi_1$-projectable vector field $\mathbf V$ on $\R \times \mathcal T^S M$ with the property that, whenever the mixed-order vector field $\mathbf A$ belongs to $C^R\pi_{1,0}$, then so does the mixed-order vector field $\L_{\mathbf V}\mathbf A$.
\end{definition}

Like in the classical case, an infinitesimal symmetry of the mixed-order Cartan distribution may also be called an infinitesimal mixed-order contact transformation. By duality, $\mathbf V$ is such an infinitesimal symmetry precisely when $\L_{\mathbf V} \alpha$ is a contact form for every mixed-order contact form $\alpha$.

The following lemma is a consequence of the definition of Lie derivatives.
\begin{lemma}\label{Cartan-symm}
  Let $\mathbf V$ be a $\pi_1$-projectable vector field on $\R \times \mathcal T^S M$ with flow $\Psi=\{\Psi_\e\}_{\e\in\R}$. Then, $\mathbf V$ is an infinitesimal symmetry of the mixed-order Cartan distribution if and only if for each $\e$, the diffeomorphism $\Psi_\e$ is a symmetry of the mixed-order Cartan distribution.
\end{lemma}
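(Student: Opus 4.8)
The plan is to prove the equivalence by the standard flow-versus-infinitesimal-generator argument, leveraging the naturality of the Lie derivative with respect to the flow. Recall that $\mathbf{V}$ being $\pi_1$-projectable means its flow $\Psi = \{\Psi_\e\}$ consists of bundle endomorphisms of $(\R\times\mathcal{T}^S M, \pi_1, \R)$, so that mixed-order pushforwards $(\Psi_\e)^R_*$ and pullbacks $(\Psi_\e)^{R*}$ are well-defined. By the remark following Definition \ref{Lie-def} (as extended to the mixed-order setting for $\pi$-projectable vector fields), the Lie derivative of a mixed-order vector field $\mathbf{A}$ is
\begin{equation*}
  \L_{\mathbf{V}} \mathbf{A} = \frac{d}{d\e}\bigg|_{\e=0} (\Psi_{-\e})^R_* \big( \mathbf{A}\circ\Psi_\e \big).
\end{equation*}

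First I would prove the backward direction (``if each $\Psi_\e$ is a symmetry, then $\mathbf{V}$ is an infinitesimal symmetry''). Suppose each $\Psi_\e$ is a symmetry of the mixed-order Cartan distribution, i.e., $(\Psi_\e)^R_*(C^R\pi_{1,0}) = C^R\pi_{1,0}$. Fix a mixed-order vector field $\mathbf{A}$ taking values in $C^R\pi_{1,0}$. Then for every $\e$, the field $(\Psi_{-\e})^R_*(\mathbf{A}\circ\Psi_\e)$ again takes values in $C^R\pi_{1,0}$, because $\Psi_{-\e}$ is a symmetry. Since $C^R\pi_{1,0}$ is a (smooth) subbundle, hence a closed linear subspace in each fiber, the difference quotient $\e^{-1}\big[(\Psi_{-\e})^R_*(\mathbf{A}\circ\Psi_\e) - \mathbf{A}\big]$ stays within the subbundle, and so does its limit as $\e\to 0$. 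Therefore $\L_{\mathbf{V}}\mathbf{A}$ is a section of $C^R\pi_{1,0}$, which is exactly the defining property of an infinitesimal symmetry. The one point needing care here is that the value of $\L_{\mathbf V}\mathbf A$ at a point lies in the fiber of $C^R\pi_{1,0}$ over that point; this follows because the subbundle is vector (linear in fibers) and the limit of vectors in a fixed closed subspace remains in that subspace.

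For the forward direction, I would use the dual characterization via contact forms together with the flow-derivative formula for Lie derivatives. The key algebraic fact is the identity
\begin{equation*}
  \frac{d}{d\e} (\Psi_\e)^{R*}\alpha = (\Psi_\e)^{R*}(\L_{\mathbf{V}}\alpha),
\end{equation*}
valid for any mixed-order form $\alpha$, which follows from the group property $\Psi_{\e+s} = \Psi_\e\circ\Psi_s$ and the chain rule for mixed-order pullbacks. Assuming $\mathbf{V}$ is an infinitesimal symmetry, I would argue that for any mixed-order contact form $\alpha$ the pullback $(\Psi_\e)^{R*}\alpha$ remains a contact form for all $\e$: define $\beta(\e) := (\Psi_\e)^{R*}\alpha$; then $\beta(0) = \alpha$ is contact, and the displayed identity shows $\frac{d}{d\e}\beta(\e) = (\Psi_\e)^{R*}(\L_{\mathbf V}\alpha)$. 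Using that the space of sections of $C^{R*}\pi_{1,0}$ is preserved by $\L_{\mathbf V}$ (the dual formulation of infinitesimal symmetry) and by each $(\Psi_\e)^{R*}$ for $\e$ in a maximal interval where $\beta$ stays contact, a connectedness/Gr\"onwall argument on the parameter $\e$ shows $\beta(\e)$ stays in the closed subbundle of contact forms for all $\e$. Hence each $\Psi_\e$ maps contact forms to contact forms, i.e., $\Psi_\e$ is a symmetry of the mixed-order Cartan distribution.

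The main obstacle I anticipate is making the ``stays in the subbundle for all $\e$'' step rigorous in the forward direction, since a priori $\L_{\mathbf V}$ preserving the contact subbundle only controls the infinitesimal (first-order in $\e$) behavior, whereas we need preservation for finite $\e$. The clean way to handle this is to observe that the contact forms form a linear subbundle $C^{R*}\pi_{1,0}$ and to consider the complementary projection $P$ onto a chosen complement (e.g., the $\pi_1^*(T^*\R)$ summand of the Whitney decomposition); then $P\beta(\e)$ satisfies a linear ODE in $\e$ with $P\beta(0) = 0$ and, using $\L_{\mathbf V}\alpha$ being contact so that $P(\L_{\mathbf V}\alpha) = 0$ wherever $\beta$ is contact, one gets $\frac{d}{d\e}(P\beta) = 0$, forcing $P\beta(\e) \equiv 0$. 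This localizes the argument to a linear-algebra computation fiberwise and avoids any global analytic subtlety. I would write out this projection argument carefully, as it is the conceptual heart of the equivalence, while treating the chain rule identity and the subbundle-closedness as routine consequences of earlier definitions and Lemma \ref{Lie-derivative-prop}.
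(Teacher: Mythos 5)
The paper offers no written proof of this lemma --- it is dismissed with the sentence ``The following lemma is a consequence of the definition of Lie derivatives'' --- so your proposal supplies an argument the authors left implicit. Your overall strategy is the standard flow-versus-generator equivalence for distributions and it is sound. The backward direction is complete as written: each difference quotient $\e^{-1}\bigl[(\Psi_{-\e})^R_*(\mathbf A\circ\Psi_\e)-\mathbf A\bigr]$ lies in the fixed fiber of $C^R\pi_{1,0}$, a linear subspace, hence so does the limit $\L_{\mathbf V}\mathbf A$.

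In the forward direction there is one step that, as literally written, is circular and needs the standard repair. You claim that $P(\L_{\mathbf V}\alpha)=0$ lets you conclude $\frac{d}{d\e}(P\beta)=0$. But the identity gives $\frac{d}{d\e}\beta(\e)=(\Psi_\e)^{R*}(\L_{\mathbf V}\alpha)$: the Lie derivative is contact, yet it is then pulled back by $(\Psi_\e)^{R*}$, and whether that pullback is again contact is precisely the statement being proved, so $P$ of it is not known to vanish. The fix is to run the argument simultaneously for a local frame $\{\alpha^a\}$ of the annihilator subbundle $C^{R*}\pi_{1,0}$: the infinitesimal symmetry hypothesis gives $\L_{\mathbf V}\alpha^a=\sum_b c^a_b\,\alpha^b$ for smooth functions $c^a_b$, whence at a fixed point $p$ the curves $\gamma^a(\e):=\bigl((\Psi_\e)^{R*}\alpha^a\bigr)_p$ satisfy the closed linear system $\frac{d}{d\e}\gamma^a=\sum_b c^a_b(\Psi_\e(p))\,\gamma^b$ in a fixed finite-dimensional space. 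Uniqueness for linear ODEs then forces $\gamma^a(\e)=\sum_b A^a_b(\e)\,\alpha^b_p$ with $A$ invertible, so each $(\Psi_\e)^{R*}$ maps the annihilator onto the annihilator; dually $(\Psi_\e)^R_*$ preserves $C^R\pi_{1,0}$. With this replacement of your projection/Gr\"onwall step (whose intent is the same), the proof is correct and is exactly the content the authors summarize as ``a consequence of the definition of Lie derivatives.''
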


The following result is the infinitesimal version of Corollary \ref{bd-endo-2}. It can be deduced directly from Lemma \ref{Cartan-symm} and Corollary \ref{bd-endo-2}. But here we give a computational proof based on the Lie derivative of mixed-order contact forms.

\begin{theorem}
  Let $\mathbf V$ be a $\pi_1$-projectable vector field on $\R \times \mathcal T^S M$. Then, $\mathbf V$ is an infinitesimal symmetry of the mixed-order Cartan distribution if and only if $\mathbf V$ is the prolongation of a $\pi$-projectable vector field $V$ on $\R \times M$.
\end{theorem}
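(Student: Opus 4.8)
The plan is to prove both directions by combining the finite (flow-level) characterization from Corollary \ref{bd-endo-2} with the relation between prolongation of vector fields and prolongation of their flows, established in Definition \ref{prog-vf}. The key observation is that being an infinitesimal symmetry of the mixed-order Cartan distribution is, by Lemma \ref{Cartan-symm}, equivalent to saying that every element $\Psi_\e$ of the flow $\Psi=\{\Psi_\e\}$ of $\mathbf V$ is a (finite) symmetry of $C^R\pi_{1,0}$. So the entire statement can be reduced to the finite version already proved.

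First I would assume $\mathbf V$ is a $\pi_1$-projectable infinitesimal symmetry and let $\Psi=\{\Psi_\e\}_{\e\in\R}$ be its flow. By Lemma \ref{Cartan-symm}, each $\Psi_\e$ is a symmetry of the mixed-order Cartan distribution. Since $\mathbf V$ is $\pi_1$-projectable, each $\Psi_\e$ is a bundle automorphism on $(\R\times \mathcal T^S M, \pi_1, \R)$ projecting to a diffeomorphism $F^0_\e:\R\to\R$ (here $\{F^0_\e\}$ is the flow of the $\R$-component of $\mathbf V$). Corollary \ref{bd-endo-2} then applies to each $\Psi_\e$, giving a bundle automorphism $\psi_\e$ on $(\R\times M, \pi, \R)$ projecting to $F^0_\e$ such that $\Psi_\e = j\psi_\e$. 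The main technical point is to check that the family $\{\psi_\e\}$ is itself a flow, i.e. a one-parameter group of bundle automorphisms: using Corollary \ref{jf-bd-morph}.(iii), the identity $\Psi_{\e_1}\circ\Psi_{\e_2} = \Psi_{\e_1+\e_2}$ gives $j(\psi_{\e_1}\circ\psi_{\e_2}) = j\psi_{\e_1}\circ j\psi_{\e_2} = j\psi_{\e_1+\e_2}$, and since $j(\cdot)$ is injective on bundle automorphisms (again by Corollary \ref{bd-endo-2}, or directly because $\psi_\e$ is recovered as the projection of $j\psi_\e = \Psi_\e$), we get $\psi_{\e_1}\circ\psi_{\e_2} = \psi_{\e_1+\e_2}$; smoothness in $\e$ follows from smoothness of $\Psi_\e$. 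Hence $\{\psi_\e\}$ is the flow of a $\pi$-projectable vector field $V$ on $\R\times M$, and by Definition \ref{prog-vf} its prolongation $jV$ is the infinitesimal generator of $\{j\psi_\e\}=\{\Psi_\e\}$, which is exactly $\mathbf V$. Thus $\mathbf V = jV$.

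For the converse, suppose $\mathbf V = jV$ is the prolongation of a $\pi$-projectable vector field $V$ with flow $\psi=\{\psi_\e\}$. Then by Definition \ref{prog-vf}, the flow of $\mathbf V$ is $\{j\psi_\e\}$. Since each $\psi_\e$ is a bundle automorphism on $(\R\times M,\pi,\R)$ projecting to a diffeomorphism of $\R$, Corollary \ref{bd-endo-2} shows each $j\psi_\e$ is a symmetry of the mixed-order Cartan distribution; Lemma \ref{Cartan-symm} then yields that $\mathbf V$ is an infinitesimal symmetry. (The alternative computational route promised in the statement would instead compute $\L_{\mathbf V}$ applied to the basic mixed-order contact forms $d^2x^i - D^i x\,dt$ and $dx^j\cdot dx^k - Q^{jk}x\,dt$ directly, using the coordinate formula for $\L_V$ on mixed-order forms from Appendix \ref{sec-A-1} together with the prolongation formulae \eqref{prog-1}, \eqref{prog-2}, and verify that each image is again a combination of basic contact forms; this recovers precisely the stochastic prolongation formulae of Theorem \ref{prog-proj-vf}.)

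The part I expect to be the main obstacle is the group-law verification for $\{\psi_\e\}$, i.e. upgrading the pointwise-in-$\e$ conclusion of Corollary \ref{bd-endo-2} to a genuine smooth one-parameter group of bundle automorphisms on $\R\times M$. This requires that the projection map sending a prolonged automorphism $\Psi_\e=j\psi_\e$ back to $\psi_\e$ is canonical and compatible with composition — which is guaranteed by the injectivity of prolongation on bundle automorphisms together with the functoriality $j(F\circ G)=jF\circ jG$ from Corollary \ref{jf-bd-morph}.(iii) — and that smoothness and the generator identification pass correctly through Definition \ref{prog-vf}. Once these flow-compatibility facts are in place, the equivalence is immediate from the finite-symmetry characterization, and everything else is bookkeeping.
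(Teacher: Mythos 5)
Your proof is correct, but it takes a genuinely different route from the one in the paper. The paper itself remarks, just before its proof, that the theorem ``can be deduced directly from Lemma \ref{Cartan-symm} and Corollary \ref{bd-endo-2}'' --- which is exactly your strategy --- but then deliberately gives a computational proof instead: it writes $\mathbf V$ in the induced coordinates on $\R\times\mathcal T^S M$, computes $\L_{\mathbf V}$ applied to the basic mixed-order contact forms $d^2x^i - D^i x\,dt$ and $dx^j\cdot dx^k - Q^{jk}x\,dt$, and observes that the resulting forms are again mixed-order contact forms precisely when (a) the coefficients $\mathbf V^i$ are independent of the fiber coordinates $D^jx$, $Q^{jk}x$ (so $\mathbf V$ is $\pi_{1,0}$-projectable) and (b) the coefficients $\mathbf V^i_1$, $\mathbf V^{jk}_2$ satisfy exactly the prolongation formulae \eqref{prog-1}--\eqref{prog-2} of Theorem \ref{prog-proj-vf}. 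That computation buys an explicit link between the infinitesimal contact condition and the stochastic prolongation formulae, which is part of the point the authors want to make. Your flow-level reduction is cleaner and avoids all coordinate work, at the cost of leaning on the finite result (Corollary \ref{bd-endo-2}) and requiring the group-law upgrade for $\{\psi_\e\}$; your handling of that step is sound, since $\psi_\e$ is canonically recovered as the projection of $\Psi_\e = j\psi_\e$ (Corollary \ref{jf-bd-morph}.(ii)) and the functoriality $j(F\circ G)=jF\circ jG$ (Corollary \ref{jf-bd-morph}.(iii)) transfers the one-parameter group property. The only point worth flagging is that the flows involved are in general only local (the vector fields need not be complete), so the group law and the generator identification should be stated for local one-parameter groups; the paper works in this local setting throughout the chapter, so this is a matter of phrasing rather than a gap.
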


\begin{proof}
  Let the vector field $\mathbf V$ having the following local expression:
  \begin{equation*}
    \mathbf V = \mathbf V^0 \vf t + \mathbf V^i \vf{x^i} + \mathbf V_1^i \vf{D^i x} + \mathbf V_2^i \vf{Q^{jk} x},
  \end{equation*}
  where $\mathbf V^0$ only depends on time due to the projectability of $\mathbf V$. We then derive the Lie derivative $\L_{\mathbf V}$ of the basic mixed-order contact forms $d^2x^i - D^i x dt$ and $dx^j\cdot dx^k - Q^{jk} x dt$ as follows:
  \begin{equation*}
    \begin{split}
      &\ \L_{\mathbf V}(d^2x^i - D^i x dt) \\
      =&\ d^\circ \mathbf V^i - \mathbf V_1^i dt - D^i x d\mathbf V^0 \\
      =&\ \frac{\pt \mathbf V^i}{\pt t} dt + \frac{\pt \mathbf V^i}{\pt x^j} d^2 x^j + \frac{1}{2} \frac{\pt^2 \mathbf V^i}{\pt x^j \pt x^k} dx^j\cdot dx^k + \text{terms}\left( \frac{\pt \mathbf V^i}{\pt D^j x}, \frac{\pt \mathbf V^i}{\pt Q^{jk} x}, \cdots \right) \\
      &\ - \mathbf V_1^i dt - D^i x \frac{d\mathbf V^0}{dt} dt \\
      =&\ \frac{\pt \mathbf V^i}{\pt x^j} (d^2 x^j - D^j x dt) + \frac{1}{2} \frac{\pt^2 \mathbf V^i}{\pt x^j \pt x^k} (dx^j\cdot dx^k - Q^{jk} x dt) \\
      &\ + \left( \frac{\pt \mathbf V^i}{\pt t} + \frac{\pt \mathbf V^i}{\pt x^j} D^j x + \frac{1}{2} \frac{\pt^2 \mathbf V^i}{\pt x^j \pt x^k} Q^{jk} x - \mathbf V_1^i - D^i x \frac{d\mathbf V^0}{dt} \right) dt + \text{terms}\left( \frac{\pt \mathbf V^i}{\pt D^j x}, \frac{\pt \mathbf V^i}{\pt Q^{jk} x}, \cdots \right),
    \end{split}
  \end{equation*}
  and
  \begin{equation*}
    \begin{split}
      &\ \L_{\mathbf V}(dx^j\cdot dx^k - Q^{jk} x dt) \\
      =&\ d \mathbf V^j \cdot d x^k + d x^j \cdot d \mathbf V^k - \mathbf V_2^{jk} dt - Q^{jk} x d\mathbf V^0 = \frac{\pt \mathbf V^j}{\pt x^i} d x^i \cdot d x^k + \frac{\pt \mathbf V^k}{\pt x^i} d x^j \cdot d x^i - \mathbf V_2^{jk} dt - Q^{jk} x d\mathbf V^0 \\
      =&\ \frac{\pt \mathbf V^j}{\pt x^i} (d x^i \cdot d x^k - Q^{ik} x dt) + \frac{\pt \mathbf V^k}{\pt x^i} (d x^i \cdot d x^j - Q^{ij} x dt) + \left( \frac{\pt \mathbf V^j}{\pt x^i} Q^{ik} x + \frac{\pt \mathbf V^k}{\pt x^i} Q^{ij} x - \mathbf V_2^{jk} - Q^{jk} x \frac{d\mathbf V^0}{dt} \right) dt.
    \end{split}
  \end{equation*}
  Thus, the mixed-order forms $\L_{\mathbf V}(d^2x^i - D^i x dt)$ and $\L_{\mathbf V}(dx^j\cdot dx^k - Q^{jk} x dt)$ are mixed-order contact forms if and only if
  \begin{gather}
    \text{terms } \frac{\pt \mathbf V^i}{\pt D^j x}, \frac{\pt \mathbf V^i}{\pt Q^{jk} x}, \text{ etc, vanish and} \label{eqn-5} \\
    \frac{\pt \mathbf V^i}{\pt t} + \frac{\pt \mathbf V^i}{\pt x^j} D^j x + \frac{1}{2} \frac{\pt^2 \mathbf V^i}{\pt x^j \pt x^k} Q^{jk} x - \mathbf V_1^i - D^i x \frac{d\mathbf V^0}{dt} = 0, \label{eqn-6} \\
    \frac{\pt \mathbf V^j}{\pt x^i} Q^{ik} x + \frac{\pt \mathbf V^k}{\pt x^i} Q^{ij} x - \mathbf V_2^{jk} - Q^{jk} x \frac{d\mathbf V^0}{dt} = 0. \label{eqn-7}
  \end{gather}
  Now \eqref{eqn-5} means that $\mathbf V^i$'s only depend on the variables on $\R\times M$, so that the vector field $\mathbf V$ is also $\pi_{1,0}$-projectable. The two equations \eqref{eqn-6} and \eqref{eqn-7} are just restatements of the prolongation formulae in Theorem \ref{prog-proj-vf}.
\end{proof}

\section{Stochastic Maupertuis's principle}\label{app-3}

Based on Definition \ref{variation}, if we further consider the variation caused by time-change, as in classical mechanics (cf. \cite[Definition 3.8.4]{AM78} or the so called $\Delta$-variation in \cite[Section 8.6]{GPS02}), then we need to impose the constraint of constant
energy. So the path space $\A_g([0,T];q, \mu)$ in \eqref{diff-space} is modified to
\begin{equation*}
  \begin{split}
    \A_g([0,T];q, \mu;e) := \Big\{ (X, \tau): & \tau\in C^2([0,T],\R), \tau' > 0, X\in I_{(\tau(0),q)}^{(\tau(T),\mu)}(M), \\
    & QX(t) = \check g(X(t)), \forall t\in [\tau(0),\tau(T)], \text{a.s.}, \\
    & \E E_0(t, X(t), D_\nabla X(t)) = e, \forall t\in [\tau(0),\tau(T)] \Big\},
  \end{split}
\end{equation*}
where $e\in\R$ is a regular value of $E_0$.

\begin{definition}\label{variation-2}
  Given $v\in \mathcal H([0,T];q)$ and $\varsigma\in\C^1([0,T],\R)$, by a variation of the pair $(X,\tau)\in \A_g([0,T];q, \mu;e)$ along $(v,\varsigma)$, we mean a family of pairs $\{(X_\e^{v,\varsigma},\tau^{\varsigma}_\e)\}_{\e\in(-\varepsilon,\varepsilon)}$ where $\tau^{\varsigma}_0 = \tau$, $\frac{\pt}{\pt t}\tau^{\varsigma}_\e >0$, such that for each $\e$, $\frac{\pt}{\pt\e}\tau^{\varsigma}_\e|_{\e=0} =\varsigma$, $X_\e^{v,\varsigma}\in I_{(\tau^{\varsigma}_\e(0),q)}^{(\tau^{\varsigma}_\e(T),\mu)}(M)$, and for each $t\in[\tau^{\varsigma}_\e(0),\tau^{\varsigma}_\e(T)]$, $\E E_0(t, X_\e^{v,\varsigma}(t), D_\nabla X_\e^{v,\varsigma}(t)) = e$, $X^{v,\varsigma}_\e(t)$ satisfies the ODE
  \begin{equation}\label{variation-diff-2}
    \frac{\pt}{\pt\e}X^{v,\varsigma}_\e(t) = \Gamma(X^{v,\varsigma}_\e)_{\tau^{\varsigma}_\e(0)}^t v(t), \quad X^{v,\varsigma}_0(t) = X(t).
  \end{equation}
  Define a functional $\mathcal I: \A_g([0,T];q, \mu;e) \to \R$ by
  \begin{equation*}
    \mathcal I[X,\tau] := %\E \int_{\tau(0)}^{\tau(T)} A\left(t, X(t), D X(t), Q X(t) \right) dt =
    \E \int_{\tau(0)}^{\tau(T)} A_0\left(t, X(t), D_\nabla X(t) \right) dt.
  \end{equation*}
  The pair $(X,\tau)\in \A_g([0,T];q, \mu;e)$ is called a stationary point of $\mathcal I$, if
  \begin{equation*}
    \frac{d}{d\e}\bigg|_{\e=0} \mathcal I[X_\e^{v,\varsigma},\tau^{\varsigma}_\e] = 0, \quad \text{for all } v\in \mathcal H([0,T];q) \text{ and } \varsigma\in\C^1([0,T],\R).
  \end{equation*}
\end{definition}

As in Lemma \ref{variation-aspt}, it is easy to deduce from \eqref{variation-diff-2} that $QX^{v,\varsigma}_\e(t) = \check g(X^{v,\varsigma}_\e(t))$ for each $t\in[\tau^{\varsigma}_\e(0),\tau^{\varsigma}_\e(T)]$ so that $X^{v,\varsigma}_\e\in \A_g([0,T];q, \mu;e)$. Moreover, formula \eqref{variation-deriv} still holds for all $t\in[\tau(0),\tau(T)]$, with $X^{v,\varsigma}_\e$ in place of $X^v_\e$.

\begin{lemma}\label{variation-formula}
  Keep the notations in Definition \ref{variation-2}. Then, in normal coordinates $(x^i)$ we have
  \begin{equation*}
    \frac{\pt}{\pt\e}\bigg|_{\e=0} \E\left[ (X^{v,\varsigma}_\e)^i(\tau^{\varsigma}_\e(s)) \big| \Pred_{\tau(s)} \right] = \left( \Gamma(X)_{\tau(0)}^{\tau(s)} v(\tau(s)) \right)^i + \varsigma(s) (D_\nabla X)^i (\tau(s)).
  \end{equation*}
\end{lemma}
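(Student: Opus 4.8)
The plan is to compute the $\e$-derivative at $\e=0$ of the conditional expectation $\E[(X^{v,\varsigma}_\e)^i(\tau^{\varsigma}_\e(s)) \mid \Pred_{\tau(s)}]$ by splitting the variation into two independent effects: the spatial deformation governed by the curve $v$ (via the parallel-transport equation \eqref{variation-diff-2}) and the time reparametrization governed by $\varsigma$. Since the two variation parameters enter through $X^{v,\varsigma}_\e$ and through the time argument $\tau^{\varsigma}_\e(s)$ respectively, the chain rule produces two additive terms, and I expect each to match one summand on the right-hand side. Concretely, I would write
\begin{equation*}
  \frac{\pt}{\pt\e}\bigg|_{\e=0} \E\left[ (X^{v,\varsigma}_\e)^i(\tau^{\varsigma}_\e(s)) \big| \Pred_{\tau(s)} \right]
  = \underbrace{\frac{\pt}{\pt\e}\bigg|_{\e=0} \E\left[ (X^{v,\varsigma}_\e)^i(t) \big| \Pred_{\tau(s)} \right]\bigg|_{t=\tau(s)}}_{\text{spatial part}}
  + \underbrace{\E\left[ \tfrac{d}{dt}(X)^i(t)\big|_{t=\tau(s)}\big| \Pred_{\tau(s)} \right]\frac{\pt}{\pt\e}\bigg|_{\e=0}\tau^{\varsigma}_\e(s)}_{\text{time part}},
\end{equation*}
working in normal coordinates at $X(\tau(s))$ so that covariant and ordinary derivatives agree at the base point.

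For the spatial part, I would invoke the defining ODE \eqref{variation-diff-2}, $\frac{\pt}{\pt\e}X^{v,\varsigma}_\e(t) = \Gamma(X^{v,\varsigma}_\e)_{\tau^{\varsigma}_\e(0)}^t v(t)$, evaluate at $\e=0$, and use that in normal coordinates the parallel transport $\Gamma(X)_{\tau(0)}^{\tau(s)} v(\tau(s))$ has coordinate components equal to the Euclidean ones at the base point. This directly yields the first summand $(\Gamma(X)_{\tau(0)}^{\tau(s)} v(\tau(s)))^i$; the conditional expectation passes through because the parallel displacement solving a Stratonovich SDE is $\Pred$-adapted and the expression is evaluated pathwise up to time $\tau(s)$. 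For the time part, I would use that $\frac{\pt}{\pt\e}|_{\e=0}\tau^{\varsigma}_\e(s) = \varsigma(s)$ by the hypothesis of Definition \ref{variation-2}, together with the fact that the conditional mean rate of change of the underlying diffusion at time $\tau(s)$ is, by the definition of the $\nabla$-mean derivative (Definition \ref{def-vector-deriv}) in normal coordinates, precisely $(D_\nabla X)^i(\tau(s))$. This produces the second summand $\varsigma(s)(D_\nabla X)^i(\tau(s))$.

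The main obstacle I anticipate is making the interchange of $\frac{\pt}{\pt\e}$ with the conditional expectation and with the time-evaluation rigorous, and in particular justifying that the ``time part'' is captured by the mean derivative rather than by the full (martingale-carrying) time derivative. The subtlety is that $\frac{d}{dt}X^i(t)$ does not exist pathwise for a diffusion, so the heuristic chain-rule term $\frac{d}{dt}(X)^i \cdot \frac{\pt}{\pt\e}\tau^{\varsigma}_\e$ must be interpreted through conditioning: after taking $\E[\,\cdot\mid\Pred_{\tau(s)}]$ and the limit defining the derivative, the martingale increments vanish in $L^1$ and only the drift, i.e. the mean derivative $D_\nabla X$, survives. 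I would handle this by returning to the $\e$-difference-quotient form, inserting the time shift $\tau^{\varsigma}_\e(s) = \tau(s) + \e\varsigma(s) + o(\e)$, expanding $\E[(X^{v,\varsigma}_\e)^i(\tau^{\varsigma}_\e(s))\mid\Pred_{\tau(s)}]$ to first order in $\e$ using the increment estimate $\E[X^i(\tau(s)+h) - X^i(\tau(s))\mid\Pred_{\tau(s)}] = (D_\nabla X)^i(\tau(s))\,h + o(h)$ valid in normal coordinates (the same estimate underlying Definition \ref{def-vector-deriv}), and collecting the coefficient of $\e$. The remaining steps — controlling the $o(\e)$ and $o(h)$ remainders uniformly enough to extract the derivative, and confirming that the spatial variation contributes no extra drift at order $\e$ because $v$ enters linearly through parallel transport — are routine once the decomposition is set up, and together they yield the claimed formula.
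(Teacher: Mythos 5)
Your proposal is correct and follows essentially the same route as the paper: the paper likewise splits the $\e$-difference quotient of $\E[(X^{v,\varsigma}_\e)^i(\tau^{\varsigma}_\e(s))\mid\Pred_{\tau(s)}]$ into a spatial increment $(X^{v,\varsigma}_\e)^i(\tau^{\varsigma}_\e(s)) - X^i(\tau^{\varsigma}_\e(s))$ handled by the variation ODE \eqref{variation-diff-2}, and a time increment $X^i(\tau^{\varsigma}_\e(s)) - X^i(\tau(s))$ rescaled by $\tau^{\varsigma}_\e(s)-\tau(s)$ so that Definition \ref{def-vector-deriv} produces $(D_\nabla X)^i(\tau(s))\,\varsigma(s)$ in normal coordinates. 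Your concluding remark about returning to the difference-quotient form and using the first-order increment estimate for the conditional mean is precisely the argument the paper records.
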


\begin{proof}
  Without loss of generality, we assume $\tau^{\varsigma}_\e(s)\ge \tau(s)$. It follows from \eqref{variation-diff-2} and Definition \ref{def-vector-deriv} that
  \begin{equation*}
    \begin{split}
      \text{LHS} &= \lim_{\e\to 0} \E\left[ \frac{(X^{v,\varsigma}_\e)^i(\tau^{\varsigma}_\e(s)) - X^i(\tau(s))}{\e} \bigg| \Pred_{\tau(s)} \right] \\
      &= \lim_{\e\to 0} \E\left[ \frac{(X^{v,\varsigma}_\e)^i(\tau^{\varsigma}_\e(s)) - X^i(\tau^{\varsigma}_\e(s))}{\e} \bigg| \Pred_{\tau(s)} \right] + \lim_{\e\to 0} \E\left[ \frac{X^i(\tau^{\varsigma}_\e(s)) - X^i(\tau(s))}{\tau^{\varsigma}_\e(s)- \tau(s)} \bigg| \Pred_{\tau(s)} \right] \varsigma(s) \\
      &= \text{RHS}.
    \end{split}
  \end{equation*}
  Done.
\end{proof}

\begin{theorem}[Stochastic  Maupertuis's principle]%\label{stoch-Maupertuis-prin}
  Let $L_0$ be a regular Lagrangian on $\R\times TM$. Let $X\in I_{(0,q)}^{(T,\mu)}(M)$ such that $(X,\id_{[0,T]})\in \A_g([0,T];q,\mu;e)$. Then, the pair $(X,\id_{[0,T]})$ is a stationary point of $\mathcal I$ if and only if $X$ satisfy the stochastic Euler-Lagrange equation \eqref{stoch-EL}.
\end{theorem}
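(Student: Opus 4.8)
The plan is to reduce the statement to the first-variation computation already carried out for the Stochastic Hamilton's principle (Theorem \ref{stoch-Hamilton-prin}), exploiting the constant-energy constraint built into $\A_g([0,T];q,\mu;e)$. First I would split the integrand of $\mathcal I$ as $A_0=L_0+E_0$ (recall $E_0=A_0-L_0$), so that
\begin{equation*}
  \mathcal I[X_\e^{v,\varsigma},\tau_\e^\varsigma] = \mathcal S[X_\e^{v,\varsigma};\tau_\e^\varsigma(0),\tau_\e^\varsigma(T)] + \E\int_{\tau_\e^\varsigma(0)}^{\tau_\e^\varsigma(T)} E_0\big(t,X_\e^{v,\varsigma}(t),D_\nabla X_\e^{v,\varsigma}(t)\big)\,dt.
\end{equation*}
Because every competitor lies in $\A_g([0,T];q,\mu;e)$, the energy constraint $\E E_0(t,\cdot)=e$ holds identically in $t$ and $\e$, so by Fubini the second term equals $e\,(\tau_\e^\varsigma(T)-\tau_\e^\varsigma(0))$, whose $\e$-derivative at $\e=0$ is $e(\varsigma(T)-\varsigma(0))$. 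Thus $\delta\mathcal I=\delta\mathcal S+e(\varsigma(T)-\varsigma(0))$, and no direct variation of $E_0$ is ever needed.

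Next I would compute $\delta\mathcal S$ for moving endpoints and non-vanishing boundary values of $v$. Differentiating under the expectation with the Leibniz rule for the $\e$-dependent limits contributes the time-boundary terms $\E[L_0(T)]\varsigma(T)-\E[L_0(0)]\varsigma(0)$, while the interior term is exactly the integrand of \eqref{eqn-18}, evaluated with \eqref{variation-deriv} (Lemma \ref{variation-aspt}). Repeating the integration-by-parts step \eqref{eqn-19} (Lemma \ref{integration-by-parts}) but now \emph{retaining} the endpoint contributions, and folding the curvature term into the damped mean covariant derivative via Lemma \ref{stoch-mean-derv-prop}.(iii), I obtain
\begin{equation*}
  \delta\mathcal S = \E\int_0^T\Big(d_x L_0 - \frac{\overline\D}{dt}(d_{\dot x}L_0)\Big)(V(t))\,dt + \E\big[d_{\dot x}L_0(V)\big]\big|_{t=T} - \E\big[d_{\dot x}L_0(V)\big]\big|_{t=0},
\end{equation*}
with $V(t)=\Gamma(X)_0^t v(t)$. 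Using $A_0=d_{\dot x}L_0(D_\nabla X)$ and $\E E_0=e$, the combinations $\big(\E[L_0(s)]+e\big)\varsigma(s)=\E[A_0(s)]\varsigma(s)$ merge with the integration-by-parts boundary terms into $\E[d_{\dot x}L_0(s)(V(s)+\varsigma(s)D_\nabla X(s))]$ at $s=0,T$. By Lemma \ref{variation-formula} the bracket $V(s)+\varsigma(s)D_\nabla X(s)$ is precisely the conditional mean of the full endpoint variation $\delta X(s)$, giving
\begin{equation*}
  \delta\mathcal I = \E\int_0^T\Big(d_x L_0 - \frac{\overline\D}{dt}(d_{\dot x}L_0)\Big)(V(t))\,dt + \E[d_{\dot x}L_0(T)(\delta X(T))] - \E[d_{\dot x}L_0(0)(\delta X(0))].
\end{equation*}

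It then remains to dispose of the two boundary terms. At $s=0$ the start is pinned at $q$ for every $\e$, so $\delta X(0)=0$ and that term drops. For the $s=T$ term I would argue by direction: for \emph{necessity}, restrict to the admissible subfamily with $v(T)=0$ and $\varsigma(T)=0$, which leaves $\delta X(T)=0$; stationarity then forces the interior integral to vanish for all such $v$, and the arbitrariness of $v$ on $(0,T)$ yields the S-EL equation \eqref{stoch-EL} exactly as in Theorem \ref{stoch-Hamilton-prin}. For \emph{sufficiency}, when \eqref{stoch-EL} holds the interior integral vanishes and, via \eqref{eqn-10} and Theorem \ref{SEL-HJB}, $d_{\dot x}L_0=dS$ is a gradient; since admissible terminal variations preserve the law $\mu$, the conditional mean of $\delta X(T)$ is $\mu$-divergence-free, so $\E[dS(T)(\delta X(T))]=\int\langle\nabla S,\,\E[\delta X(T)\mid X(T)]\rangle\,d\mu=0$ by integration by parts on $M$.

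The main obstacle is the bookkeeping of the terminal boundary term together with the constant-energy constraint. Two points need genuine care: first, the energy constraint is a continuum of scalar conditions on the admissible pairs $(v,\varsigma)$, and one must verify that the time-reparametrisation freedom $\varsigma$ supplies exactly the compensating degree of freedom so that the constraint does not obstruct the extraction of the interior equation — this is the stochastic shadow of the classical fact that Maupertuis's reduced principle with free time and Hamilton's principle with fixed time share the same Euler-Lagrange equations, and I expect a Lagrange-multiplier or constraint-qualification argument to formalise it; second, the law-preservation at $t=T$ must be translated into the divergence-free condition on $\E[\delta X(T)\mid X(T)]$ used above, which requires justifying that at first order only the conditional-mean part of the random terminal displacement contributes. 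Alongside these, the routine analytic justifications — differentiation under the expectation, the geodesic-approximation limit behind \eqref{variation-deriv} extended to the moving-endpoint setting, and the $L^1$-limits defining the mean derivatives — will be invoked but should follow the pattern already established for Lemma \ref{variation-aspt} and Theorem \ref{stoch-Hamilton-prin}.
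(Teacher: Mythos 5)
Your decomposition $A_0=L_0+E_0$, the reduction $\mathcal I=\mathcal S+e(\tau(T)-\tau(0))$ on the constant-energy shell, the moving-endpoint first variation via \eqref{eqn-18}, and the integration by parts \eqref{eqn-19} with retained boundary terms are exactly the paper's computation, and your merging of the time-boundary and momentum-boundary contributions into $\E\bigl[d_{\dot x}L_0\bigl(V(s)+\varsigma(s)D_\nabla X(s)\bigr)\bigr]\big|_0^T$ is the same algebra in a slightly different order. Where you diverge is in how the terminal boundary term is killed, and that is where there is a genuine gap. The paper reads the admissibility constraints through Lemma \ref{variation-formula} to conclude $V(s)+\varsigma(s)D_\nabla X(s)=0$ at \emph{both} $s=0$ and $s=T$ (equivalently, it substitutes this into the boundary term and uses $\E[L_0-d_{\dot x}L_0(D_\nabla X)]=-\E E_0=-e$), so the boundary contribution vanishes identically for every admissible variation and the equivalence with the interior Euler--Lagrange condition is immediate in both directions. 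You only use this at $s=0$.

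For the sufficiency direction you instead invoke $d_{\dot x}L_0=dS$ via \eqref{eqn-10} and Theorem \ref{SEL-HJB} to argue that the terminal term vanishes by a divergence-free/integration-by-parts argument on $(M,\mu)$. This does not work at the stated level of generality: \eqref{eqn-10} and Theorem \ref{SEL-HJB} are proved for the specific Lagrangian \eqref{Lagrangian} \emph{under the gradient-drift hypothesis} \eqref{SEL-HJB-cond}, whereas the theorem here concerns an arbitrary regular Lagrangian $L_0$ and an arbitrary solution of \eqref{stoch-EL}, for which there is no function $S$ with $d_{\dot x}L_0(t,X(t),D_\nabla X(t))=dS(t,X(t))$ in general. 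The fix is simply to use the paper's mechanism: the terminal-law constraint, fed through Lemma \ref{variation-formula}, already forces the conditional-mean endpoint displacement to vanish, and combined with $\E E_0=e$ this kills the boundary term for \emph{all} admissible $(v,\varsigma)$ with no gradient structure needed. Your restriction to $v(T)=\varsigma(T)=0$ in the necessity direction is fine, and your flagged concern about whether the energy constraint obstructs the arbitrariness of $v$ is legitimate (the paper glosses over it as well), but it is the sufficiency half that needs to be rebuilt along the paper's lines.
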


\begin{proof}
  Since all diffusions in $\A_g([0,T];q, \mu;e)$ have the same average energy $e$, we have
  \begin{equation*}
    \mathcal I[X,\tau] := \E \int_{\tau(0)}^{\tau(T)} [ L_0\left(t, X(t), D_\nabla X(t) \right) + e ] dt.
  \end{equation*}
  Denote $V(t) = \Gamma(X)_0^t v(t)$. As in \eqref{eqn-18},
  \begin{equation*}
    \begin{split}
      \frac{d}{d\e}\bigg|_{\e=0} I[X_\e^{v,\varsigma},\tau^{\varsigma}_\e] &= \E \int_0^T \frac{d}{d\e}\bigg|_{\e=0} L_0\left(t, X^{v,\varsigma}_\e(t), D_\nabla X^{v,\varsigma}_\e(t) \right) dt + \varsigma(t)\E[ L_0\left(t, X(t), D_\nabla X(t) \right) + e ]\big|_0^T \\
      &= \E \int_0^T \left[ d_x L_0 \left( V(t)\right) + d_{\dot x} L_0 \left( \Gamma(X)_0^t \dot v(t) \right) + \frac{1}{2} (QX)^{ij}(t) d_{\dot x} L_0 \left( R(V(t),\pt_i) \pt_j \right) \right] dt \\
      &\quad + \varsigma(t)\E[ L_0\left(t, X(t), D_\nabla X(t) \right) + e ]\big|_0^T.
    \end{split}
  \end{equation*}
  We apply \eqref{eqn-19} and notice that in the present situation we do not have $v(0) = v(T) = 0$ in general. Hence,
  \begin{equation*}
    \begin{split}
      \E \int_0^T d_{\dot x} L_0 \left( \Gamma(X)_0^t \dot v(t) \right) dt &= \E \int_0^T \Gamma(X)_t^0 (d_{\dot x} L_0) \left( \dot v(t) \right) dt \\
      &= \E[ d_{\dot x} L_0 \left( V(t) \right) ] \big|_0^T -\E \int_0^T \frac{\D}{dt} (d_{\dot x} L_0) \left( V(t) \right) dt.
    \end{split}
  \end{equation*}
  One the other hand, since for all $\e$, $X_\e^{v,\varsigma}(\tau^{\varsigma}_\e(0))=q$ and $\P\circ(X_\e^{v,\varsigma}(\tau^{\varsigma}_\e(T)))^{-1}=\mu$. It follows from Lemma \ref{variation-formula} that
  \begin{equation*}
    V(s) + \varsigma(s) D_\nabla X (s) = 0, \quad \text{for } s=0 \text{ or } s=T.
  \end{equation*}
  Therefore,
  \begin{equation*}
    \begin{split}
      \frac{d}{d\e}\bigg|_{\e=0} I[X_\e^{v,\varsigma},\tau^{\varsigma}_\e] &= \E \int_0^T \left( d_x L_0 - \frac{\overline\D}{dt} (d_{\dot x} L_0)\right) \left( V(t) \right) dt \\
      &\quad + \varsigma(t)\E\left[ L_0\left(t, X(t), D_\nabla X(t) \right) - (d_{\dot x} L_0) \left( D_\nabla X(t) \right) + e \right]\big|_0^T.
    \end{split}
  \end{equation*}
  By the definition of the energy $E_0$, we know that
  $$\E\left[ L_0\left(t, X(t), D_\nabla X(t) \right) - (d_{\dot x} L_0) \left( D_\nabla X(t) \right) \right] = -\E E_0\left(t, X(t), D_\nabla X(t) \right) = -e.$$
  The result follows.
\end{proof}

\end{appendices}

\section*{Data Availability}

Our manuscript has no associated data.

\paragraph{Acknowledgements.}
We would like to thank Prof.~Ana Bela Cruzeiro and Prof.~Marc Arnaudon for their careful reading and helpful discussions, which helped us a lot especially in improving Section \ref{sec-6-2} and \ref{sec-7-2}. We also would like to thank Prof.~Maosong Xiang for his helpful suggestions and kind experience-sharing.
%We would like to thank the reviewers for their thoughtful comments and efforts towards improving our manuscript.
This paper is supported by FCT, Portugal, project PTDC/MAT-STA/28812/2017, ``Schr\"odinger's problem and optimal transport: a multidisciplinary perspective (Schr\"oMoka)''.

\footnotesize{
\bibliographystyle{MyStyle-plainnat}
\bibliography{StochasticJets-ref}
}

\end{document}